\documentclass[10pt, english]{article}

\usepackage{bbm}

\usepackage[utf8]{inputenc}
\usepackage[T1]{fontenc}
\usepackage[a4paper, margin=1.985cm]{geometry}
\usepackage{libertine} 
\usepackage{inconsolata} 
\usepackage{amsmath, amsthm, amssymb}
\usepackage{graphicx}

\usepackage[shortlabels]{enumitem}
\usepackage{authblk}
\usepackage{thmtools}
\usepackage{thm-restate}
\usepackage{authblk}
\usepackage[hyphens]{xurl}
\usepackage{float}
\usepackage{xcolor}
\usepackage{mathtools}
\usepackage{setspace}
\usepackage{thm-restate}

\usepackage{algorithm}
\PassOptionsToPackage{noend}{algpseudocode}
\usepackage{algpseudocode}
\usepackage[colorlinks=true, linkcolor = bordeaux, citecolor = darkblue, urlcolor = darkblue]{hyperref}

\makeatletter
\@addtoreset{ALG@line}{algorithm}
\makeatother

\usepackage{etoolbox}
\usepackage{tikz}
\usetikzlibrary{tikzmark}
\usetikzlibrary{calc}

\errorcontextlines\maxdimen

\newcommand{\ALGtikzmarkcolor}{black}
\newcommand{\ALGtikzmarkextraindent}{4pt}
\newcommand{\ALGtikzmarkverticaloffsetstart}{-.5ex}
\newcommand{\ALGtikzmarkverticaloffsetend}{-.5ex}
\makeatletter
\newcounter{ALG@tikzmark@tempcnta}

\newcommand\ALG@tikzmark@start{%
    \global\let\ALG@tikzmark@last\ALG@tikzmark@starttext%
    \expandafter\edef\csname ALG@tikzmark@\theALG@nested\endcsname{\theALG@tikzmark@tempcnta}%
    \tikzmark{ALG@tikzmark@start@\csname ALG@tikzmark@\theALG@nested\endcsname}%
    \addtocounter{ALG@tikzmark@tempcnta}{1}%
}

\def\ALG@tikzmark@starttext{start}
\newcommand\ALG@tikzmark@end{%
    \ifx\ALG@tikzmark@last\ALG@tikzmark@starttext
    \else
        \tikzmark{ALG@tikzmark@end@\csname ALG@tikzmark@\theALG@nested\endcsname}%
        \tikz[overlay,remember picture] \draw[\ALGtikzmarkcolor] let \p{S}=($(pic cs:ALG@tikzmark@start@\csname ALG@tikzmark@\theALG@nested\endcsname)+(\ALGtikzmarkextraindent,\ALGtikzmarkverticaloffsetstart)$), \p{E}=($(pic cs:ALG@tikzmark@end@\csname ALG@tikzmark@\theALG@nested\endcsname)+(\ALGtikzmarkextraindent,\ALGtikzmarkverticaloffsetend)$) in (\x{S},\y{S})--(\x{S},\y{E});%
    \fi
    \gdef\ALG@tikzmark@last{end}%
}

\apptocmd{\ALG@beginblock}{\ALG@tikzmark@start}{}{\errmessage{failed to patch}}
\pretocmd{\ALG@endblock}{\ALG@tikzmark@end}{}{\errmessage{failed to patch}}
\makeatother

\setlength{\parskip}{\medskipamount}
\setlength{\parindent}{0pt}

\declaretheorem[name=Lemma, numberwithin = section]{lemma}
\declaretheorem[name=Theorem,sibling = lemma]{theorem}
\declaretheorem[name=Theorem,sibling = lemma]{thm}
\declaretheorem[name=Proposition, sibling=lemma]{proposition}
\declaretheorem[name=Observation, sibling=lemma]{observation}
\declaretheorem[name=Definition, sibling=lemma]{definition}
\declaretheorem[name=Corollary, sibling=lemma]{corollary}

\declaretheorem[name=Remark, sibling=lemma]{remark}
\declaretheorem[name=Claim]{claim}

\newtheorem*{claim*}{Claim}

\usepackage[noabbrev,capitalise,nameinlink]{cleveref}
\crefname{claim}{Claim}{Claims}
\crefname{lemma}{Lemma}{Lemmas}
\crefname{theorem}{Theorem}{Theorems}
\crefname{proposition}{Proposition}{Propositions}
\crefname{question}{Question}{Questions}
\crefname{definition}{Definition}{Definitions}
\crefname{conjecture}{Conjecture}{Conjectures}
\crefname{observation}{Observation}{Observations}
\crefname{corollary}{Corollary}{Corollaries}
\crefformat{equation}{(#2#1#3)}
\Crefformat{equation}{Equation #2(#1)#3}
\crefname{remark}{Remark}{Remarks}
\crefname{scenario}{Scenario}{Scenarios}

\def\cqedsymbol{\ifmmode$\lrcorner$\else{\unskip\nobreak\hfil
\penalty50\hskip1em\null\nobreak\hfil$\lrcorner$
\parfillskip=0pt\finalhyphendemerits=0\endgraf}\fi}


\interfootnotelinepenalty=10000

\thickmuskip=5mu plus 1mu minus 2mu


\newenvironment{subproof}[1][\proofname]{%
  \begin{proof}[#1]%
}{%
  \end{proof}%
}

\setlist[itemize]{topsep=0ex,itemsep=0ex,parsep=0.25ex}
\setlist[enumerate]{topsep=0ex,itemsep=0ex,parsep=0.25ex}

\definecolor{bordeaux}{RGB}{100,0,50}
\definecolor{darkblue}{RGB}{25, 25, 112}


\newcommand{\mcm}[3]{\newcommand{#1}[#2]{{\ensuremath{#3}}}} 

\mcm{\Acal}{0}{\mathcal A}
\mcm{\Bcal}{0}{\mathcal B}
\mcm{\Ccal}{0}{\mathcal C}
\mcm{\Dcal}{0}{\mathcal D}
\mcm{\Ecal}{0}{\mathcal E}
\mcm{\Fcal}{0}{\mathcal F}
\mcm{\Gcal}{0}{\mathcal G}
\mcm{\Hcal}{0}{\mathcal H}
\mcm{\Ical}{0}{\mathcal I}
\mcm{\Jcal}{0}{\mathcal J}
\mcm{\Kcal}{0}{\mathcal K}
\mcm{\Lcal}{0}{\mathcal L}
\mcm{\Mcal}{0}{\mathcal M}
\mcm{\Ncal}{0}{\mathcal N}
\mcm{\Ocal}{0}{{\mathcal O}}
\mcm{\Pcal}{0}{{\mathcal P}}
\mcm{\Qcal}{0}{{\mathcal Q}}
\mcm{\Rcal}{0}{{\mathcal R}}
\mcm{\Scal}{0}{{\mathcal S}}
\mcm{\Tcal}{0}{{\mathcal T}}
\mcm{\Ucal}{0}{{\mathcal U}}
\mcm{\Vcal}{0}{{\mathcal V}}
\mcm{\Wcal}{0}{{\mathcal W}}
\mcm{\Xcal}{0}{{\mathcal X}}
\mcm{\Ycal}{0}{{\mathcal Y}}
\mcm{\Zcal}{0}{{\mathcal Z}}

\newcommand*\samethanks[1][\value{footnote}]{\footnotemark[#1]}

\title{A Structural Linear-Time Algorithm for Computing the Tutte Decomposition}

\author[1]{Romain Bourneuf\thanks{Email: \texttt{romain.bourneuf@ens-lyon.fr, tim.planken@math.tu-freiberg.de}}}
\affil[1]{Univ. Bordeaux, CNRS, Bordeaux INP, LaBRI, UMR 5800, F-33400 Talence, France.}
\author[2]{Tim Planken\samethanks\thanks{Funded by DFG, project number 546892829.}}
\affil[2]{TU Freiberg, Germany.}

\date{}

\begin{document}
\maketitle

\begin{abstract}
    The block–cut tree decomposes a connected graph along its cutvertices, displaying its 2-connected components.
    The Tutte-decomposition extends this idea to 2-separators in 2-connected graphs, yielding a canonical tree-decomposition that decomposes the graph into its triconnected components.
    In 1973, Hopcroft and Tarjan introduced a linear-time algorithm to compute the Tutte-decomposition.
    Cunningham and Edmonds later established a structural characterization of the Tutte-decomposition via totally-nested 2-separations. 
    We present a conceptually simple algorithm based on this characterization, which computes the Tutte-decomposition in linear time.
    Our algorithm first computes all totally-nested 2-separations and then builds the Tutte-decomposition from them.
    
    Along the way, we derive new structural results on the structure of totally-nested 2-separations in 2-connected graphs using a novel notion of \emph{stability}, which may be of independent interest.
\end{abstract}

\section{Introduction}

A recurring theme in graph theory is to decompose graphs into simpler, well-structured pieces.
The most basic example is the decomposition of a graph into its connected components, which can be computed in linear time using depth-first search.
For connected graphs, a natural next step is the \emph{block-cut-tree} decomposition, which is roughly obtained by cutting the graph at its cutvertices into maximal 2-connected subgraphs, or \emph{blocks}.
Hopcroft and Tarjan \cite{Hopcroft-Tarjan-1-sep} proved that the block-cut-tree decomposition of a connected graph can be computed in linear time.
Decomposing 2-connected graphs into 3-connected parts by cutting at their 2-separators is slightly more complicated, as one 2-separator can separate the two vertices of another 2-separator.
From the structural perspective, this problem was solved in 1966 by Tutte \cite{Tutte}, who proved that there is a canonical way to decompose a graph in a tree-like way by cutting it at 2-separators, such that the basic parts, the \emph{triconnected components}\footnote{See \hyperref[para:Tutte]{later} for a formal definition. As an example, if $H$ is obtained from a 2-connected graph $G$ by subdividing every edge once then the triconnected components of $H$ are triangles, triple edges and $G$ itself.}, are either 3-connected, cycles or edges.
This decomposition is known as the \emph{Tutte-decomposition}.

A central motivation behind decomposing 2-connected graphs into 3-connected parts is a result of Mac Lane \cite{MacLane37}, stating that a graph is planar if and only if its triconnected components are planar.
Another key motivation is a theorem of Whitney \cite{Whitney}, which states that a 3-connected planar graph has a unique embedding embedding in the plane (up to mirroring). 
Therefore, an efficient algorithm to divide a graph into 3-connected parts can be very useful as a subroutine for planarity testing \cite{BSW70} or isomorphism testing in planar graphs \cite{HT71,HT74,HW74}.

For this reason, the problem of efficiently decomposing a 2-connected graph into 3-connected parts has been studied extensively. In 1973, Hopcroft and Tarjan  \cite{Hopcroft-Tarjan-2-sep} presented a linear-time algorithm to compute the Tutte-decomposition.
Their algorithm is not known to be efficiently parallelizable so in 1992, Miller and Ramachandran \cite{MR92} gave an efficient parallel algorithm for this problem, based on ear decompositions.
Building on this paper, in 1993, Fussell, Ramachandran and Thurimella \cite{FRT93} improved this parallel algorithm using the technique of local replacement. Another linear-time sequential algorithm for computing the Tutte-decomposition can be derived from their results.

For 2-connected planar graphs, the decomposition into 3-connected parts is often stored using a data structure known as the \emph{SPQR tree}. SPQR trees were first introduced in the context of planar graphs by Di Battista and Tamassia \cite{DT96}, in relation to planarity testing, and then for general graphs in \cite{DT90} for various on-line algorithms. Di~Battista and Tamassia \cite{DT962} further showed that they can be used to efficiently maintain the triconnected components of a dynamic graph. SPQR trees have found multiple other applications, notably in graph drawing, see for instance \cite{DL98,Gutwenger10,BHR19,BR20}.

In their seminal paper introducing SPQR trees \cite{DT96}, Di Battista and Tamassia suggest that the SPQR trees can be constructed in linear time using the algorithm of Hopcroft and Tarjan to compute the Tutte-decomposition \cite{Hopcroft-Tarjan-2-sep}. The first linear-time implementation of such an algorithm was provided in 2001 by Gutwenger and Mutzel \cite{Gutwenger-Mutzel}, who also corrected some errors in the work of Hopcroft and Tarjan. 

In 1980, Cunningham and Edmonds \cite{Cunningham_Edmonds_1980} gave an alternative, more structural, characterization of the Tutte-decomposition. Informally, two 2-separations of a graph are \emph{nested} if they do not cut through each other. A 2-separation of a graph $G$ is \emph{totally-nested} if it is nested with every other 2-separation of $G$.
Cunningham and Edmonds \cite{Cunningham_Edmonds_1980} prove that the Tutte-decomposition of a graph can be obtained by cutting it precisely along its totally-nested 2-separations.

In this paper, we show how to use the structural characterization of Cunningham and Edmonds to efficiently compute the Tutte-decomposition of a 2-connected graph. Our main result is an alternative algorithm to that of Hopcroft and Tarjan \cite{Hopcroft-Tarjan-2-sep}. We later provide a detailed comparison of the two algorithms.

\begin{restatable}{theorem}{main} \label{thm:main}
There is an $O(n+m)$-time algorithm which, given as input a 2-connected graph $G$, returns the Tutte-decomposition of $G$ and all its triconnected components.
\end{restatable}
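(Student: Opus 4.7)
The plan is to implement the structural characterization of Cunningham and Edmonds \cite{Cunningham_Edmonds_1980}: first compute the set $\Ncal$ of all totally-nested 2-separations of $G$, and then assemble the Tutte-decomposition by cutting $G$ simultaneously along every element of $\Ncal$. The two challenges are, first, to compute $\Ncal$ in linear time without enumerating the (potentially $\Omega(n^2)$) set of all 2-separations, and, second, to turn $\Ncal$ into the tree-decomposition and its bags, also in linear time.

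For the first part, I would begin with a DFS tree $T$ of $G$ together with the standard auxiliary data (DFS numbers, low-point values, subtree sizes), computable in $O(n+m)$ time. This tree gives a compact representation of the 2-separations of $G$: each 2-separation can be certified by a small piece of DFS information at one of its endpoints. The core of the proof, and the main obstacle, is to exploit the novel notion of \emph{stability} announced in the abstract in order to obtain a local combinatorial criterion certifying that a candidate 2-separation is totally-nested. Since totally-nested 2-separations are pairwise nested, $\Ncal$ is a laminar family and in particular $|\Ncal| = O(n)$, so the hope is to generate $\Ncal$ directly via a few DFS-like traversals in which the stability criterion is tested in amortized constant time. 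A careful case analysis is required, distinguishing the possible positions of a candidate 2-separation with respect to $T$ (two tree-related vertices; a vertex and an ancestor reached through a back edge; and so on), and the bulk of the technical work is to prove that a short list of explicitly computable candidates, filtered through stability, produces exactly $\Ncal$.

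Once $\Ncal$ is in hand, the second part is mostly combinatorial. The laminar family $\Ncal$ naturally induces a tree $\Tcal$: each bag corresponds to a maximal region of $G$ that is not cut by any 2-separation in $\Ncal$. I would construct $\Tcal$ in $O(n+m)$ time by the standard laminar-family-to-tree reduction, then distribute the edges of $G$ to the nodes of $\Tcal$ while inserting the appropriate virtual edges across each 2-separation. By \cite{Cunningham_Edmonds_1980}, the result is exactly the Tutte-decomposition, and its bags are the triconnected components; classifying each bag as 3-connected, a cycle, or a multi-edge is then a constant-time check per bag. Summing the cost of all phases yields the claimed $O(n+m)$ bound.
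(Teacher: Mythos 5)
Your proposal is correct and follows essentially the same roadmap as the paper: DFS/normal spanning tree plus the stability machinery to identify the totally-nested 2-separations in linear time, followed by the standard laminar-family-to-tree construction to assemble the Tutte-decomposition and its torsos. The only slight difference in emphasis is that the paper does not generate $\Ncal$ directly; it first produces a linear-size set of \emph{candidate} half-connected separations (all type-1 plus a maximal nested family of type-2 ones) and then applies a constant-time totally-nestedness test to each, but this is an implementation detail within the same overall strategy.
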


\paragraph{Overview of our Algorithm.}

Suppose that we are given a 2-connected graph $G$.
First, we compute a set $\mathcal{S}$ of 2-separations of $G$, which contains all totally-nested 2-separations of $G$.
Then, we test the separations in~$\mathcal{S}$ to determine which of them are actually totally-nested.
Finally, we build the Tutte-decomposition directly from the set of totally-nested 2-separations of $G$.
It that sense, our algorithm corresponds to the structural characterization of the Tutte-decomposition from Cunningham and Edmonds.
The high-level roadmap of our algorithm is very natural, however the complexity lies in computing efficiently all the totally-nested 2-separations, and only them.

More precisely, say that a separation $(A, B)$ of $G$ is \emph{half-connected} if $G[A \setminus B]$ or $G[B \setminus A]$ is connected.
It is straightforward to see that every totally-nested 2-separation is half-connected, so we simply focus on computing half-connected 2-separations of $G$.

\begin{figure}[h]
    \centering
    \includegraphics[width=0.75\linewidth]{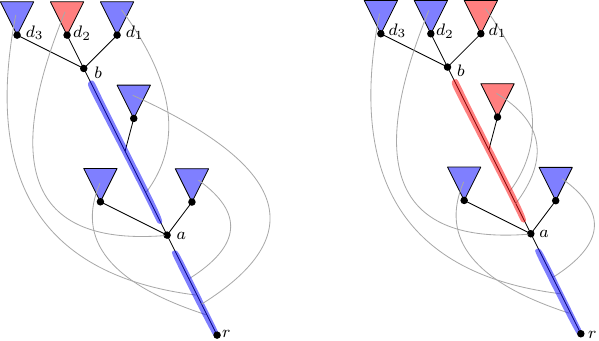}
    \caption{A type-1 separation (left) and a type-2 separation (right) of a graph~$G$ with a DFS tree with root~$r$.
    The separator is $\{a,b\}$ in both cases, while red and blue represent the proper sides of the separation.}
    \label{fig:types}
\end{figure}

Inspired by the algorithm of Hopcroft and Tarjan, our algorithm is based on a DFS tree~$T$ of $G$ and relies on a particular ordering of the vertices of $G$ according to~$T$. We consider two different types of 2-separations, called type-1 and type-2, like in the Hopcroft-Tarjan algorithm, see \cref{fig:types}.
After some easy precomputation, the half-connected type-1 separations of $G$ can be computed efficiently.
To compute type-2 separations efficiently, we introduce the concept of \emph{stability}, which we think could be useful in related problems.
Consider the type-2 separator $\{a,b\}$ in \cref{fig:types} (right) and let $a'$ be the child of~$a$ which is an ancestor of~$b$.
Then, this separator has the property that no vertex visited between~$a'$ and~$b$ in the DFS traversal sends an edge to a proper ancestor of~$a$.
Whenever this property holds for a pair $\{a,b\}$, we call the path $T[a,b]$ \emph{stable}.
We show that, after some linear-time precomputation, we can answer in constant time queries of the form: Given two vertices $a, b \in V(G)$ such that $a$ is an ancestor of $b$ in $T$, which is the ancestor $\alpha$ of $a$ which is closest to $a$ such that $T[\alpha, b]$ is stable? 
Using the concept of stability, we give a precise characterization of half-connected type-2 separations.
This characterization can be efficiently checked algorithmically.
Using this, we compute a maximal set of half-connected type-2 separations which are pairwise nested. Such a set must contain all totally-nested type-2 separations.
At that point of the algorithm, we computed a set $\mathcal{S}$ of half-connected 2-separations, which contains all totally-nested separations.

In \cref{sec:stru-tot-nest}, we give a characterization of which half-connected 2-separations are totally-nested, again algorithmically efficient. 
More precisely, using the concept of stability, after some precomputation we can compute in constant time for any given half-connected 2-separation $(A,B)$, a 2-separation that crosses $(A,B)$ if it exists, or otherwise decide that $(A, B)$ is totally-nested.
Therefore, we can extract from~$\mathcal{S}$ the set of all totally-nested 2-separations of $G$.

It is a folklore result that every set of nested separations induces a tree-decomposition.
The proof of this result is algorithmic, and can be turned into a linear-time algorithm. 
We believe that such a result already appears in the literature but we could not find it explicitly, so we prove it in detail for completeness.
Applying this result to the set of totally-nested separations of $G$, we finally obtain the Tutte-decomposition of $G$.

\paragraph{Tutte's definition of the Tutte-decomposition} \label{para:Tutte}

We now present Tutte's construction of the Tutte-decomposition.
Consider a 2-connected multigraph $G$ (meaning that there could be several edges connecting the same two vertices).
Consider two vertices $a \neq b \in V(G)$, and partition the edges of $G$ in such a way that two edges are in the same part if and only if they lie in a common path which contains neither $a$ nor $b$ as an internal vertex.
The parts of this partition are the \emph{separation classes}.
For instance, if $e \in E(G)$ connects $a$ and $b$ then $\{e\}$ is a separation class.
The pair $\{a, b\}$ is a \emph{separation pair} if there are at least two separation classes, unless one of the following holds:
\begin{itemize}
    \item there are exactly two separation classes, one of which is a single edge, or
    \item there are exactly three separation classes, each consisting of a single edge.
\end{itemize}
If $G$ has no separating pair, then $G$ is \emph{triconnected}.

Let $G$ be a 2-connected multigraph and $\{a, b\}$ be a separating pair of $G$, with separation classes $E_1, \ldots, E_s$.
Consider a bipartition $(E', E'')$ of $E$ which is a coarsening of the partition $(E_1, \ldots, E_s)$, and such that $|E'| \geq 2$ and $|E''| \geq 2$.
Let $G_1 = (V(E'), E' \,\dot\cup\, \{a, b\})$ and $G_2 = (V(E''), E'' \,\dot\cup\, \{a, b\})$.
Then, $G_1$ and $G_2$ are obtained by \emph{splitting} $G$ with respect to $\{a, b\}$, and are \emph{split multigraphs}.
Note that they are not uniquely defined, since they depend on the bipartition $(E', E'')$ of $E$.
The new edges $\{a, b\}$ of $G_1$ and $G_2$ are \emph{virtual edges}, and they are \emph{associated}.
If $G$ is 2-connected then all its split multigraphs are also 2-connected.
Suppose that $G$ is split, its split multigraphs are split, and so on until no more splits are possible.
The graphs constructed in this way are called \emph{split components} of $G$, and they are all triconnected by construction. 
Again, they are not uniquely defined.
Every split component is either a 3-connected graph, a $K_3$ or a triple edge between two vertices.

Let $G_1 = (V_1, E_1)$ and $G_2 = (V_2, E_2)$ be two split components of $G$, which contain two associated edges $e_1$ and $e_2$.
The graph $G = (V_1 \cup V_2, (E_1 \cup E_2) \setminus \{e_1, e_2\})$ is obtained by \emph{merging} $G_1$ and $G_2$.
Note that we can recover $G$ from its split components by merging them recursively.
The \emph{triconnected components} of $G$ are obtained from its split components by merging recursively the triangles into cycles, and the triple edges into multiple parallel edges.
The set of triconnected components is unique and canonical.
The graph $T$ whose vertex set is the set of triconnected components of $G$, and where two triconnected components are adjacent if they share associated virtual edges is a tree.
For every triconnected component~$t$, denote this component by $G_t = (V_t, E_t)$.
Then, the tree-decomposition $(T, (V_t)_{t \in V(T)})$ is the Tutte-decomposition of $G$, and the graphs $G_t$ are its torsos.

\paragraph{Comparison to the Hopcroft-Tarjan algorithm.}

The Hopcroft-Tarjan algorithm \cite{Hopcroft-Tarjan-2-sep} is based on Tutte's definition.
At a very high level, their algorithm builds the Tutte-decomposition following the construction of Tutte.
More precisely, starting from a 2-connected graph $G$, they recursively split $G$ to obtain split components of $G$.
Then, they perform the ``merging step'' recursively on these split components to obtain the triconnected components, and they finally build the Tutte-decomposition using the virtual edges.

Their algorithm to find split components of $G$ is based on a DFS tree~$T$ of $G$ and relies on the same particular ordering of the vertices of $G$ according to~$T$ as our algorithm. 
It is a recursive algorithm which traverses the tree~$T$ following the ordering of the vertices, and performs split operations whenever possible.
This algorithm is fairly long, and its analysis is nice but intricate, as witnessed by the mistake highlighted by Gutwenger and Mutzel \cite{Gutwenger-Mutzel}.

Our algorithm has the advantage of being much more modular: it can naturally be divided into subroutines which can be written and checked independently.
Our key subroutines are the computation of some values at each vertex of the graph, the computation of half-connected type-1 separations, the computation of a maximal nested set of half-connected type-2 separations, the tests whether they are totally-nested, and the final step to build the Tutte-decomposition from the set of totally-nested 2-separations.
Also, because of the difference in elegance between Tutte's definition and the characterization from Edmonds and Cunningham, our approach is arguably conceptually simpler than that of Hopcroft and Tarjan. It is also more direct, since it consists in finding the totally-nested separations and directly cutting at them, instead of splitting greedily and then having a merging step.
Furthermore, as it follows a more structural approach based on totally-nested separations, there is hope that our algorithm might be adapted to compute $(k+1)$-connected components by cutting at totally-nested $k$-separations, for $k \geq 3$.

\paragraph{Open problems and related work.}

More than 50 years ago, Aho, Hopcroft and Ullman \cite{AHU74} conjectured that for every integer $k \geq 1$, there exists a linear-time algorithm to compute the $k$-connected components of a graph.
For $k=1$, this can simply be done by a DFS, and for $k=2$, this is done by an algorithm of Hopcroft and Tarjan \cite{Hopcroft-Tarjan-1-sep}.
For $k \geq 3$, there is no clear notion of $k$-connected components, although arguably the most natural notion for $k=3$ is that of triconnected components, and computing them amounts to computing the Tutte-decomposition, which is done by the algorithm of Hopcroft and Tarjan \cite{Hopcroft-Tarjan-2-sep}.
For a long time, it was unclear what should be the correct notion of $4$-connected components and $5$-connected components.
One could expect that the characterization of the Tutte-decomposition by Cunningham and Edmonds extends to 3-separations, namely that the parts obtained after cutting a 3-connected graph at its totally-nested 3-separations are 4-connected, wheels or triangles.
Somewhat surprisingly, this is not true. 
In 2023, Carmesin and Kurkofka \cite{Carmesin-tridecomposition} found a way to define a canonical decomposition of 3-connected graphs whose basic parts are quasi 4-connected, wheels or so-called thickened $K_{3, m}$'s. This is done by introducing a notion of \emph{tri-separations}.
Even more recently, Kurkofka and Planken \cite{KP25} proved a similar result about canonically decomposing 4-connected graphs into basic parts which are either quasi-5-connected, double-wheels, thickened $K_{4,m}'s$ or other simple parts. They do it by generalizing the notion of tri-separations to \emph{tetra-separations}.
These works suggest natural candidates for the definition of $4$-connected components and $5$-connected components, and it would be interesting to see whether these canonical decompositions can be computed in linear time.
Note that even the problem of deciding in linear time whether a graph is $k$-connected for $k \geq 4$ was open until early 2025.
This problem was solved in a remarkable paper of Korhonen \cite{Korhonen25}.

Similar questions also hold for edge-connected components.
Again, for $k=1$, this can simply be done by a DFS, and for $k=2$, this follows from the algorithm of Hopcroft and Tarjan for $2$-connected components.
For $k=3$, Galil and Italiano \cite{GI93} observed that this also follows from the algorithm of Hopcroft and Tarjan for triconnected components.
For $k=4$, a linear-time algorithm was proposed independently by Nadara, Radecki, Smulewicz, and Sokołowski \cite{NRSS21}, and by Georgiadis, Italiano, and Kosinas \cite{GIK21}.
For $k=5$, Kosinas \cite{Kosinas23} gave a linear-time algorithm.
Finally, in the same remarkable paper \cite{Korhonen25}, Korhonen showed that for every integer $k \geq 1$, there is a linear-time algorithm to compute the $k$-edge-connected components of a graph.
More precisely, it is shown in \cite{Korhonen25} that there is a $k^{O(k^2)}m$ algorithm for computing a $k$-lean tree-decomposition of a given graph.

In the context of directed graphs, there are several standard algorithms based on DFS to compute strongly connected components in linear time.
For 2-vertex connectivity in directed graphs, Georgiadis, Italiano, Laura and Parotsidis \cite{GILP18} gave a linear-time algorithm which computes the 2-vertex-connected blocks.
To the best of our knowledge, there is no such result for $k$-vertex connectivity in directed graphs for any $k \geq 3$. 
It would be interesting to see whether a directed variant of stability can be a useful tool in this context.

\paragraph{Organization of the paper.}

We start by reviewing some background and setting up some notations in \cref{sec:prelim}.
We then investigate the structure of (half-connected) 2-separations in \cref{sec:2sep}.
In \cref{sec:4}, we show how this can be used to compute all half-connected type-1 separations, and a maximal nested set of half-connected type-2 separations.
In \cref{sec:stru-tot-nest}, we characterize totally-nested half-connected 2-separations.
Finally, we show how to implement these characterizations algorithmically in \cref{sec:build-tutte}, from where we can easily conclude the proof of \cref{thm:main}.
\section{Preliminaries} \label{sec:prelim}

\subsection{Separations}

A \emph{separation} of a graph $G$ is a pair $(A, B)$ such that $A \cup B = V(G)$, both $A \setminus B$ and $B \setminus A$ are nonempty and there is no $(A \setminus B)$--$(B\setminus A)$ edge.
The set $A \cap B$ is the \emph{separator} of the separation $(A, B)$, and its size is the \emph{order} of the separation. A separation of order $k$ is a \emph{$k$-separation}, and its separator is a \emph{$k$-separator}. Observe that if $(A, B)$ is a $k$-separation then $(B, A)$ is also a $k$-separation. Throughout this paper, we shall make no distinction between the separations $(A, B)$ and $(B, A)$.
A separation $(A, B)$ is \emph{minimal} if every vertex in the separator $A \cap B$ has a neighbor in both $A \setminus B$ and $B \setminus A$.
For $k \in \mathbb{N}$, a graph $G$ is \emph{$k$-connected} if $|V(G)| > k$ and $G$ has no separation of order at most $k-1$.

Two separations $(A, B)$ and $(C, D)$ of a graph $G$ are \emph{nested} if, after possibly switching $A$ and $B$, and possibly switching $C$ and $D$, we have $A \subseteq C$ and $D \subseteq B$. If they are not nested, $(A, B)$ and $(C, D)$ \emph{cross}.
A set $\mathcal{S}$ of separations of $G$ is \emph{nested} if the separations of $\mathcal{S}$ are pairwise nested.
A 2-separation of~$G$ is \emph{totally-nested} if it is nested with every 2-separation of~$G$. 

The way two 2-separations can cross in a 2-connected graph is very restricted, and can be described as follows.

\begin{lemma}[\protect{\cite[Lemma A.2.2]{Carmesin-tridecomposition}}] \label{lem:crossing-vertices}
    Two 2-separations $(A, B)$ and $(C, D)$ of a 2-connected graph $G$ cross if and only if one of the following assertions holds.
    \begin{itemize}
        \item $(A, B)$ separates the two vertices of $C \cap D$ while $(C, D)$ separates the two vertices of $A \cap B$.
        \item $A \cap B = C \cap D$ and there are four components $H_1, \ldots, H_4$ of $G - (A \cap B)$ such that $H_1, H_2 \subseteq G[A]$ and ${H_3, H_4 \subseteq G[B]}$, while $H_1, H_3 \subseteq G[C]$ and $H_2, H_4 \subseteq G[D]$.
    \end{itemize}
\end{lemma}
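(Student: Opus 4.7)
The plan is to prove both implications, with the bulk of the work in the ``only if'' direction via a submodularity-style counting argument.

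For the easy direction, I assume one of the two structural conditions holds and derive a contradiction from any nesting. If $(A,B)$ separates the two vertices of $C \cap D$ (with one in $A \setminus B$ and one in $B \setminus A$) and, symmetrically, $(C,D)$ separates the two vertices of $A \cap B$, then a putative nesting such as $A \subseteq C$, $D \subseteq B$ would force $C \cap D \subseteq D \subseteq B$, contradicting that one vertex of $C \cap D$ lies in $A \setminus B$; the other three nesting orientations fail symmetrically. In the second case, $H_2 \subseteq G[A] \cap G[D]$ would under the same nesting sit in $A \cap D \subseteq A \cap B$, contradicting that $H_2$ is a nonempty component of $G - (A \cap B)$.

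For the ``only if'' direction, I first use the standard observation that $(A,B)$ and $(C,D)$ cross if and only if each of the four ``proper corners'' $(A \setminus B) \cap (C \setminus D)$, $(A \setminus B) \cap (D \setminus C)$, $(B \setminus A) \cap (C \setminus D)$, $(B \setminus A) \cap (D \setminus C)$ is nonempty. I then pass to the four derived pairs $(A \cap C, B \cup D)$, $(A \cup C, B \cap D)$, $(A \cap D, B \cup C)$, $(A \cup D, B \cap C)$. Each is a genuine separation: no edge can leave, say, $(A \cap C) \setminus (B \cup D) = (A \setminus B) \cap (C \setminus D)$, since any neighbor of such a vertex lies in both $A$ and $C$ by the separation hypothesis. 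The crucial identity $|(A \cap C) \cap (B \cup D)| + |(A \cup C) \cap (B \cap D)| = |A \cap B| + |C \cap D| = 4$ together with 2-connectivity (each order is $\geq 2$) forces both derived separations to have order exactly $2$; the same holds for the other pair.

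Writing $x = |A \cap B \cap C \cap D|$ and splitting $A \cap B$ and $C \cap D$ according to which side of the other separation each element lies on, the four order-$2$ equations quickly yield that all four ``off-diagonal'' sizes are equal to $1 - x/2$, forcing $x \in \{0, 2\}$. If $x = 0$, then $A \cap B$ has one vertex in $C \setminus D$ and one in $D \setminus C$, so $(C,D)$ separates $A \cap B$, and symmetrically $(A,B)$ separates $C \cap D$, giving the first case of the lemma. If $x = 2$, then $A \cap B = C \cap D$, and since the four proper corners are nonempty, pairwise disconnected in $G - (A \cap B)$ (no edges cross either separation), each corner contains a component of $G - (A \cap B)$, providing the desired $H_1, H_2, H_3, H_4$.

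The main obstacle is carrying out the counting cleanly: verifying that the four derived pairs are indeed separations in the relevant regime and keeping track of the five relevant quantities $|A \cap B \cap C \cap D|$, $|A \cap B \cap (C \setminus D)|$, $|A \cap B \cap (D \setminus C)|$, $|C \cap D \cap (A \setminus B)|$, $|C \cap D \cap (B \setminus A)|$. Once this bookkeeping is set up, the dichotomy $x \in \{0,2\}$ and hence the two cases of the lemma drop out immediately.
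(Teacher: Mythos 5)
The paper does not prove this lemma; it cites it from Carmesin's work, so there is no in-paper argument to compare against. Assessing your proof directly: the ``if'' direction is fine, but the ``only if'' direction rests on a false opening claim. It is not true that crossing forces all four proper corners $(A\setminus B)\cap(C\setminus D)$, $(A\setminus B)\cap(D\setminus C)$, $(B\setminus A)\cap(C\setminus D)$, $(B\setminus A)\cap(D\setminus C)$ to be nonempty. In the $6$-cycle on $\{1,\ldots,6\}$ (edges $12,23,34,45,56,61$), take $A=\{1,2,3,4\}$, $B=\{1,4,5,6\}$, $C=\{2,3,4,5\}$, $D=\{1,2,5,6\}$. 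These are crossing $2$-separations with separators $\{1,4\}$ and $\{2,5\}$, yet $(A\setminus B)\cap(D\setminus C)=\{2,3\}\cap\{1,6\}=\emptyset$ and $(B\setminus A)\cap(C\setminus D)=\{5,6\}\cap\{3,4\}=\emptyset$. (The lemma itself is true here via the first bullet.) What is true is the converse, that all four corners nonempty implies crossing; the correct characterisation of nestedness is that some set among $A\cap C$, $A\cap D$, $B\cap C$, $B\cap D$ lies inside $(A\cap B)\cap(C\cap D)$, which is strictly stronger than one proper corner being empty. Neither gives you the implication your proof needs.

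Consequently, in general only two of your four derived pairs are genuine separations (those on the nonempty diagonal), so you only obtain two of the four order-$2$ equations. In your bookkeeping this yields $p=s$ and $q=r$ but not the dichotomy $x\in\{0,2\}$: the solution $x=1$, $p=s=1$, $q=r=0$ survives, as does $x=0$ with $(p,q)=(2,0)$, and both of these correspond to \emph{nested} pairs rather than crossing ones. Ruling them out needs a genuinely graph-theoretic use of $2$-connectivity beyond the lower bound on separation orders: for instance, when $x=1$ with $A\cap B=\{a,b\}$ and $C\cap D=\{a,c\}$, every component of $G-\{a,b\}$ meets the neighbourhoods of both $a$ and $b$, and a path inside such a component from any $v\in A\setminus B$ to $b$ avoids $\{a,c\}$, forcing $v$ onto $b$'s side of $(C,D)$; this gives $A\setminus B\subseteq C\setminus D$, symmetrically $D\setminus C\subseteq B\setminus A$, so $A\subseteq C$ and $D\subseteq B$, contradicting crossing. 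Supplying this argument (and the analogous treatment of the degenerate $x=0$ and $x=2$ sub-cases with empty corners) is the missing core of the ``only if'' direction; the counting scaffold around it is correct but cannot carry the proof on its own.
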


A separation $(A, B)$ of a graph $G$ is \emph{half-connected} if $G[A \setminus B]$ or $G[B \setminus A]$ is connected. The following result is folklore.

\begin{lemma} \label{lem:tot-nested-half-con}
    Every totally-nested 2-separation $(A, B)$ of a graph $G$ is half-connected.
\end{lemma}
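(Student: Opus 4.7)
I will argue the contrapositive: if a 2-separation $(A,B)$ of $G$ with separator $\{a,b\}$ is \emph{not} half-connected, then I can exhibit a 2-separation of $G$ that crosses $(A,B)$.

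The plan is as follows. Assuming $(A,B)$ is not half-connected, both $G[A\setminus B]$ and $G[B\setminus A]$ are disconnected, so I can pick components $C_1, C_2$ of $G[A\setminus B]$ and $D_1, D_2$ of $G[B\setminus A]$ with $C_1 \neq C_2$ and $D_1\neq D_2$ (letting $C_2$ denote the union of all components other than $C_1$, and similarly for $D_2$, so that everything on each side is used). I then define
\[
X \;=\; V(C_1) \cup V(D_1) \cup \{a,b\}, \qquad Y \;=\; V(C_2) \cup V(D_2) \cup \{a,b\}.
\]
The key step is to verify that $(X,Y)$ is a 2-separation: clearly $X\cup Y = V(G)$ and $X\cap Y = \{a,b\}$ (by disjointness of distinct components on each side and disjointness of the two sides of $(A,B)$), and a potential edge between $X\setminus Y$ and $Y\setminus X$ would have to go either between two different components of $G[A\setminus B]$, between two different components of $G[B\setminus A]$, or between $A\setminus B$ and $B\setminus A$; each of these is impossible. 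Both $X\setminus Y$ and $Y\setminus X$ are nonempty, so $(X,Y)$ really is a 2-separation.

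Finally, I need to check that $(X,Y)$ crosses $(A,B)$. This is immediate from the construction: $X$ meets both $A\setminus B$ (via $C_1$) and $B\setminus A$ (via $D_1$), and similarly for $Y$, so neither side of $(X,Y)$ is contained in a side of $(A,B)$; by definition this means the two separations are not nested. Hence $(A,B)$ is not totally-nested, completing the contrapositive.

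The main obstacle is really just setting up the bookkeeping so that $C_1, C_2, D_1, D_2$ partition the two proper sides in a way that makes both $X\setminus Y$ and $Y\setminus X$ automatically nonempty; beyond that the argument is a direct check from the definition of a separation.
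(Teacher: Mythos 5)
Your proof is correct and follows essentially the same route as the paper: both construct the crossing separation by placing one component from each proper side of $(A,B)$, together with the separator $\{a,b\}$, on one side and the rest on the other. The only difference is in how you conclude that the new separation crosses $(A,B)$: the paper invokes \cref{lem:crossing-vertices}, whereas you argue directly from the definition of nestedness, which is slightly more elementary and (unlike the cited lemma) does not tacitly presuppose that $G$ is $2$-connected.
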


\begin{proof}
    By contraposition, suppose that $(A, B)$ is not half-connected.
    Let $H_1, H_2$ be two connected components of ${G[A \setminus B]}$ and let $H_3, H_4$ be two connected components of $G[B \setminus A]$.
    Let $C = V(H_1) \cup V(H_3) \cup (A \cap B)$ and $D = V(G) \setminus (C \setminus (A \cap B))$.
    Then, $(C, D)$ is a 2-separation of $G$, and $(A, B)$ and $(C, D)$ cross by \cref{lem:crossing-vertices}.
\end{proof}


Let us recall some basic facts about tree decompositions and their relation to nested sets of separations.
Let ${(T, \mathcal{V})}$ be a tree-decomposition of a graph $G$  with $\mathcal{V}=(V_t)_{t \in V(T)}$ and $e = (t_1, t_2)$ be an edge of $T$. The graph $T - e$ is a forest with exactly two components, a component $T_1$ containing $t_1$ and a component $T_2$ containing $t_2$. 
Let $A_e = \bigcup_{t \in V(T_1)} V_t$ and $B_e = \bigcup_{t \in V(T_2)} V_t$. Then, $(A_e, B_e)$ is a separation of $G$, with separator $V_{t_1} \cap V_{t_2}$. The separation $(A_e, B_e)$ is \emph{induced} by the edge $e = (t_1, t_2)$ of $T$, and more generally is induced by the tree-decomposition $(T, \mathcal{V})$.
Conversely, a set $\mathcal{S}$ of separations \emph{induces} a tree-decomposition $(T, \mathcal{V})$ of $G$ if $e \mapsto (A_e, B_e)$ is a bijection from $E(T)$ to $\mathcal{S}$. That is, $\mathcal{S}$ induces a tree-decomposition $(T, \mathcal{V})$ if and only if $\mathcal{S}$ is the set of separations induced by the edges of $T$.

The following result is folklore, see e.g. \cite{GraphsMinorsX}.

\begin{lemma} \label{lem:nested-iff-tree-decomp}
    A set $\mathcal{S}$ of separations of a graph $G$ is nested if and only if there exists a tree-decomposition of $G$ which is induced by $\mathcal{S}$. If so, there is a unique such tree decomposition up to isomorphism.
\end{lemma}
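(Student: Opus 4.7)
The forward direction is standard. If $(T, \mathcal{V})$ is a tree-decomposition whose induced separations are exactly $\mathcal{S}$, consider two edges $e_1, e_2$ of $T$ with induced separations $(A_1, B_1)$ and $(A_2, B_2)$. Removing both edges cuts $T$ into three subtrees $T_1, T_{12}, T_2$, where $T_1$ is the component of $T - e_1$ not containing $e_2$, similarly for $T_2$, and $T_{12}$ sits in between. Orient so that $A_i = \bigcup_{t \in T_i} V_t$ and $B_i$ covers $T_{12} \cup T_{3-i}$. Then $A_1 \subseteq B_2$ and $A_2 \subseteq B_1$, which is exactly the nestedness of $(A_1, B_1)$ and $(A_2, B_2)$.

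For the reverse direction, I would proceed by induction on $|\mathcal{S}|$. For the base case $\mathcal{S} = \emptyset$, the single-node tree with bag $V(G)$ works. For the inductive step, pick a separation $(A, B) \in \mathcal{S}$ such that $A$ is minimal under inclusion among all sides of separations in $\mathcal{S}$ (finiteness guarantees existence). For any other $(C, D) \in \mathcal{S}' := \mathcal{S} \setminus \{(A, B)\}$, nestedness together with minimality of $A$ forces, after possibly swapping $C$ and $D$, the inclusions $A \subseteq C$ and $D \subseteq B$. Restricting such $(C, D)$ to $G[B]$ yields the pair $(C \cap B, D)$, which is a separation of $G[B]$ with the same separator $C \cap D$, and the resulting set $\mathcal{S}'|_B$ remains nested. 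Apply the induction hypothesis to $G[B]$ and $\mathcal{S}'|_B$ to obtain a tree-decomposition $(T', \mathcal{V}')$ of $G[B]$ inducing exactly $\mathcal{S}'|_B$.

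It remains to attach a new leaf node $t^\star$ with bag $A$ at an appropriate node of $T'$. The key observation is that no separation in $\mathcal{S}'|_B$ separates the two vertices $u, v$ of $A \cap B$ inside $G[B]$: if $(C \cap B, D)$ did, then $u \in C \cap B$ and $v \in D \setminus C$, but $v \in A \cap B \subseteq C$ contradicts $v \notin C$. Hence the subtrees $T'_u$ and $T'_v$ of $T'$ containing $u$ and $v$ respectively share a node $t$, whose bag $V_t \subseteq B$ contains $A \cap B$. Attach $t^\star$ to $t$; the new edge induces the separation $(A, B)$ since $V_t \cap A = A \cap B$, and for any edge $e' \in E(T')$ with induced separation $(C \cap B, D)$ in $G[B]$, the induced separation in the new tree becomes $((C \cap B) \cup A, D) = (C, D)$, as $A \subseteq C$. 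Thus the induced separations of the enlarged tree-decomposition are exactly $\mathcal{S}$.

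Uniqueness follows by a short extra argument: any tree $T$ inducing $\mathcal{S}$ has edge set in bijection with $\mathcal{S}$, and nestedness pins down the incidence structure (two edges are ``adjacent'' in the tree iff no third separation lies strictly between them), so the abstract tree $T$ is determined. Once $T$ is fixed, each bag $V_t$ is forced to equal the intersection of the ``inner'' sides of all separations at the edges incident to $t$, hence $\mathcal{V}$ is determined as well. The main subtlety I anticipate is the bookkeeping of the restriction step — checking that the restricted separations remain proper and that the new edge contributes exactly one extra induced separation without altering the others — but this is handled cleanly by the minimality of $A$.
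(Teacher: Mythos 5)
Your inductive step has a genuine gap, and here is a concrete counterexample. Take $G$ to be the path $1\text{--}2\text{--}3\text{--}4$ and $\mathcal{S}=\{(\{1,2,3\},\{3,4\}),\,(\{1,2,3\},\{2,3,4\})\}$, which is nested. The $\subseteq$-minimal sides are $\{1,2,3\}$ and $\{3,4\}$. If you take $(A,B)=(\{1,2,3\},\{3,4\})$, then neither orientation of the other separation satisfies $A\subseteq C$ and $D\subseteq B$: you would need $\{2,3,4\}\subseteq\{3,4\}$ or $\{1,2,3\}\subseteq\{2,3,4\}$, both false. If you instead take $(A,B)=(\{3,4\},\{1,2,3\})$, the orientation $(C,D)=(\{2,3,4\},\{1,2,3\})$ does satisfy $A\subseteq C$ and $D\subseteq B$, but then $(C\cap B,D)=(\{2,3\},\{1,2,3\})$ has $(C\cap B)\setminus D=\emptyset$, so it is not a separation of $G[B]$ and the induction cannot proceed. (The lemma itself is not in jeopardy: the path $t_1\text{--}t_2\text{--}t_3$ with bags $\{1,2,3\}$, $\{2,3\}$, $\{3,4\}$ induces exactly $\mathcal{S}$.) The root cause is that two distinct nested separations may have the same proper side $A\setminus B=C\setminus D$ when they are not assumed minimal, and then one cannot push $(C,D)$ into $G[B]$. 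A secondary imprecision: you speak of ``the two vertices $u,v$ of $A\cap B$'', implicitly assuming $2$-separations, whereas the lemma concerns separations of arbitrary order; that part is easy to repair (apply the Helly property of subtrees to all of $A\cap B$), but the restriction issue is not.

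For context, the paper does not prove this lemma; it cites it as folklore to \cite{GraphsMinorsX}. The strategy the paper does use for the algorithmic version (\cref{thm:build-tree+torsos}) is different from yours: it orients every separation so that a fixed reference vertex lies on the $A$-side, shows via \cref{lem:sep-nested} that the sets $B_i\setminus A_i$ then form a laminar family, and reads off $T$ as the Hasse diagram of that family. That route sidesteps the degeneracy entirely, and the paper additionally restricts to minimal separations, under which two distinct separations cannot share a proper side, so the pathology above never arises there. If you wish to salvage your induction for general separations, you would first need to group together the separations sharing a common proper side $P$ (their ``$A$''-sides form a chain, and they should become a path of nodes in $T$ rather than a single leaf plus a recursion on $G[V\setminus P]$), or else pass to the laminar-family argument directly.
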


Given an edge $(t_1, t_2)$ of a tree-decomposition $(T, \mathcal{V})$, the \emph{adhesion set} of $V_{t_1}$ and $V_{t_2}$ is the set $V_{t_1} \cap V_{t_2}$. The \emph{adhesion} of $(T, \mathcal{V})$ is the maximum size of an adhesion set. If $T$ only has one node then $(T, \mathcal{V})$ has adhesion 0.
The \emph{torso} at a node $t \in V(T)$ is the supergraph of $G[V_t]$ obtained by turning each adhesion set into a complete graph. More formally, the torso at $t$ has vertex set $V_t$, and $(u, v)$ is an edge of the torso at $t$ if and only if either $(u, v) \in E(G)$, or there exists an edge $(t, t')$ of $T$ such that $\{u, v\} \subseteq V_t \cap V_{t'}$. The edges of the torso that are not edges of $G$ are called \emph{torso edges}.

In 1961, Tutte \cite{Tutte} proved that every 2-connected graph admits a canonical tree-decomposition of adhesion 2 in which every torso is either 3-connected, a cycle or a $K_2$. In 1980, Cunnigham and Edmonds \cite{Cunningham_Edmonds_1980} gave various equivalent characterizations of this canonical tree-decomposition. Some of these characterizations also appear in the work of Hopcroft and Tarjan \cite{Hopcroft-Tarjan-2-sep}. We shall focus on the following characterization, due to Cunnigham and Edmonds.

\begin{theorem}[\cite{Cunningham_Edmonds_1980}]
If $G$ is $2$-connected, its totally-nested 2-separations induce a tree-decomposition of $G$ all whose torsos are minors of $G$ and are either 3-connected, cycles or $K_2$'s. This tree-decomposition is unique up to isomorphism.
\end{theorem}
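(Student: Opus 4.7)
The plan is to invoke \cref{lem:nested-iff-tree-decomp} to extract the tree-decomposition from the totally-nested 2-separations, and then separately verify that each torso is a minor of $G$ and that each torso is either 3-connected, a cycle, or a $K_2$. Let $\mathcal{T}$ denote the set of totally-nested 2-separations of $G$. Since any two elements of $\mathcal{T}$ are mutually nested by definition, $\mathcal{T}$ is a nested set, and \cref{lem:nested-iff-tree-decomp} produces a tree-decomposition $(T, \mathcal{V})$ of $G$ induced by $\mathcal{T}$ that is unique up to isomorphism, settling the existence and uniqueness part of the statement.

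Next, to prove each torso $G_t$ is a minor of $G$, I would fix a node $t \in V(T)$ and, for each edge $e = (t, t')$ with adhesion set $\{u_e, v_e\}$, denote by $B_e$ the side of the induced 2-separation opposite to $V_t$. Since $G$ is 2-connected, $G[B_e]$ contains a $u_e$--$v_e$ path whose internal vertices lie in $B_e \setminus V_t$, because otherwise one of $u_e, v_e$ would be a cutvertex of $G$. Contracting a connected subgraph of $B_e$ joining $u_e$ and $v_e$ inside $B_e$ realizes the virtual edge $u_e v_e$ as a minor edge. Because the sets $B_e \setminus V_t$ for distinct edges $e$ incident to $t$ are pairwise disjoint, all these contractions may be performed simultaneously, exhibiting $G_t$ as a minor of $G$.

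The main obstacle is showing that each torso $G_t$ is 3-connected, a cycle, or a $K_2$. I would argue by contradiction: suppose some torso $G_t$ is none of these, so that $G_t$ is 2-connected on at least four vertices, is not 3-connected, and is not a cycle. Then $G_t$ admits a 2-separation $(A', B')$ that is not the trivial one given by a single virtual edge, and one may further choose $(A', B')$ to be totally-nested within $G_t$, either by induction on $|V(G_t)|$ applied inside the torso, or by an uncrossing argument. Because $(A', B')$ does not coincide with any virtual edge, no virtual edge of $G_t$ has its endpoints on opposite sides, so $(A', B')$ lifts to a well-defined 2-separation $(A, B)$ of $G$ by assigning each branch $B_e$ opposite to $V_t$ to whichever side of $(A', B')$ contains its adhesion set. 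The target is to prove that $(A, B)$ lies in $\mathcal{T}$, which is a contradiction since $(A, B)$ strictly splits the bag $V_t$ and therefore cannot be induced by $(T, \mathcal{V})$.

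The bulk of the work is then to show $(A, B)$ is nested with every 2-separation $(X, Y)$ of $G$. I would split into cases: if $(X, Y) \in \mathcal{T}$, then $(X, Y)$ is induced by an edge of $T$ whose adhesion set lies wholly in $A'$ or wholly in $B'$, and nestedness with $(A, B)$ is immediate from the construction. Otherwise $(X, Y)$ must have vertices on both sides of the bag $V_t$, and using the minor structure from the previous paragraph one can project $(X, Y)$ to a 2-separation of the torso $G_t$, which is nested with $(A', B')$ by the total-nestedness of $(A', B')$ inside $G_t$. Translating this nestedness back through the lift, and ruling out the two crossing configurations allowed by \cref{lem:crossing-vertices}, yields that $(A, B)$ and $(X, Y)$ are nested, completing the argument.
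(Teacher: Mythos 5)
The paper cites this result from Cunningham and Edmonds without supplying a proof, so there is no paper argument to compare against. Your outline is the standard one — extract the tree-decomposition and its uniqueness from \cref{lem:nested-iff-tree-decomp}, show each torso is a minor of $G$ by contracting branches (sound: each branch $G[B_e]$ contains a $u_e$--$v_e$ path by 2-connectedness, and distinct branches are disjoint outside $V_t$), and then argue by contradiction that a degenerate torso would produce a further totally-nested 2-separation of $G$. The decisive gap is the step where you assert that one may "choose $(A', B')$ to be totally-nested within $G_t$, either by induction on $|V(G_t)|$ applied inside the torso, or by an uncrossing argument." This is not a side detail; it is essentially the content of the theorem, namely that every 2-connected graph which is not 3-connected, not a cycle and not $K_2$ admits at least one totally-nested 2-separation. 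The induction you sketch is ill-founded: in the critical case where $G$ itself has no totally-nested 2-separations, the tree has a single node, $G_t = G$, and "applying the theorem to $G_t$" merely restates the claim without reducing anything. The "uncrossing argument" is not given; it requires the fan structure of \cref{lem:crossing-vertices} and a careful analysis of which corners of two crossing 2-separations are again 2-separations, and working this out is effectively the substance of \cite{Cunningham_Edmonds_1980}.

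There is a secondary gap in the nestedness verification. You claim that a 2-separation $(X, Y) \notin \mathcal{T}$ "must have vertices on both sides of the bag $V_t$" and hence projects to a 2-separation of $G_t$. This fails in general: the separator of $(X, Y)$ can lie entirely inside one branch $B_e$ and be disjoint from $V_t$, in which case no 2-separation of $G_t$ is induced and a separate argument is needed to see that $(A, B)$ and $(X, Y)$ are nested. Likewise, "ruling out the two crossing configurations allowed by \cref{lem:crossing-vertices}" is stated without detail. So even granting the existence of a totally-nested $(A', B')$ in $G_t$, the claim that the lift $(A, B)$ is totally-nested in $G$ is not established, and the overall argument does not yet prove the theorem.
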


For a 2-connected graph $G$, we call this tree-decomposition the \emph{Tutte-decomposition} of $G$.

\subsection{Notions on trees}

For every graph $G$, we shall denote its number of vertices by $n$ and its number of edges by $m$.
Let $G$ be a connected graph and $T$ be a rooted spanning tree of $G$, with root $r \in V(G)$.
If $v \neq r$, we denote by $p(v)$ the \emph{parent} of $v$ in $T$.
If $u \neq v$ and $p(u) = p(v)$ then $u$ and $v$ are \emph{siblings} in $T$.
If $u, v \in V(T)$, we denote by $T[u, v]$ the unique simple path between $u$ and $v$ in $T$. We will also use $T[u, v)$, $T(u, v]$ and $T(u, v)$ to denote the path $T[u, v]$ minus the vertex on the side of the parenthesis.
If $v \in T[r, u]$ then $v$ is an \emph{ancestor} of $u$, and $u$ is a \emph{descendant} of $v$. 
If $v \in T[r, u)$ then $v$ is a \emph{proper ancestor} of $u$ and $u$ is a \emph{proper descendant} of $v$. 
We denote by $Desc(v)$ the set of descendants of $v$ in $T$.
For every vertex $v \in V(T)$, we let $T(v)$ denote the subtree of $T$ rooted in $v$. 
The vertex set of $T(v)$ is exactly $Desc(v)$. 
We denote by $ND(v)$ the number of descendants of $v$, so $ND(v) = |Desc(v)|$.
Two vertices $u$ and $v$ are \emph{comparable} in $T$ if one of them is an ancestor of the other. 
Note that for every vertex $v$, all ancestors of $v$ are pairwise comparable in $T$. 

We define an ancestor-descendant relation on the edges of $T$ in the natural way: an edge $e = (a, b)$ is an \emph{ancestor} of an edge $e' = (a', b')$ if $a$ and $b$ are both ancestors of $a'$ and $b'$ in $T$. 
We view $T$ as a rooted tree with its root at the bottom, hence if $v$ is an ancestor of $u$, we might say that $v$ is \emph{lower} than $u$ and that $u$ is \emph{higher} than $v$. 
Given a nonempty set $X$ of pairwise comparable vertices, the \emph{lowest} element of $X$ is the unique $x \in X$ that is an ancestor of all elements of $X$. 
We similarly define the \emph{highest} element of $X$ as the unique $x \in X$ that is a descendant of all elements of $X$.
The \emph{second lowest} element of $X$ is the lowest element of $X \setminus \{x\}$ if this set is non-empty, and is not defined otherwise. 
For any integer $k$ and any set $X$ of pairwise comparable vertices, we define the \emph{$k$-th lowest} element of $X$ analogously. 
If $|X| < k$, the \emph{$k$-th lowest} element of $X$ is not defined.

A spanning tree $T$ of a graph~$G$ is \emph{normal} if the endpoints of every edge of $G$ are comparable in $T$. The edges of $G$ that are not edges of $T$ are called \emph{back-edges}. When considering a back-edge $(x, y)$, we will always implicitly assume that $x$ is a descendant of $y$.
Observe that every DFS spanning tree is a normal spanning tree, and therefore every connected graph has a normal spanning tree. 

For a vertex $v \in V(G)$, let $L(v)$ be the set of proper ancestors of $v$ that are adjacent in $G$ to a descendant (not necessarily proper) of $v$. 
Since $L(v)$ only contains proper ancestors of $v$, all elements of $L(v)$ are pairwise comparable.
For any integer $k$, the \emph{$k$-th lowpoint} of $v$ is the $k$-th lowest element of $L(v)$ if $|L(v)| \geq k$, and is $v$ otherwise. 
We denote it by $lwpt_k(v)$. 
The first lowpoint of $v$ will also simply be called the \emph{lowpoint} of $v$ and denoted $lwpt(v)$.
The \emph{highpoint} of $v$ is the highest element of $L(v) \setminus \{p(v)\}$ if this set is nonempty, and is $v$ otherwise. We denote it by $hgpt(v)$.

Suppose that the vertices of $G$ are numbered from 1 to $n$.
We will identify the vertices with their number.
For every vertex $v$, we denote by $N_{\min}(v)$ the smallest neighbour of $v$ in $G$ and by $N_{\max}(v)$ its largest neighbour.
For every vertex $v$, the children of $v$ in $T$ are totally ordered by their numbering.
Let $v \in V(G)$ and $c$ be a child of $v$. We denote by $maxhgpt(c)$ the maximum of $hgpt(c')$ over all children $c'$ of $v$ such that $c' \geq c$.
Two children $c_1 \neq c_2$ of $v$ are \emph{consecutive} if there is no other child $c$ of $v$ such that $c_1 < c < c_2$ or $c_2 < c < c_1$. Note that this does not necessarily imply $|c_2 - c_1| = 1$.
For every vertex $v$, we shall think of the children of $v$ as being placed in the tree from right to left by increasing number.
If $v$ is not a leaf, the \emph{left child} of $v$ is the largest child of $v$. We shall denote it by $\ell(v)$.
A vertex $w$ is a \emph{leftmost descendant} of $v$ if there exists a sequence $(v = v_0, v_1, \ldots, v_k = w)$ such that $v_i$ is the left child of $v_{i-1}$ for every $i \in [k]$. 
If $a, a', b$ are three vertices such that $a'$ is a child of $a$ and $b$ is a leftmost descendant of $a'$ then the path $T[a, b]$ is a \emph{leftmost path}. Note that we don't require $a'$ to be the left child of $a$, thus $b$ might not be a leftmost descendant of $a$.
Observe that the maximal leftmost paths of $T$ partition of the edges of $T$.

We define a total ordering $\leq'$ on pairs of integers as follows. Given two pairs of integers $(x, y)$ and $(x', y')$, we set $(x, y) \leq' (x', y')$ if and only if $(-x, y) \leq_{lex} (-x', y')$, where $\leq_{lex}$ is the lexicographic order.

\begin{definition}
\label{dfn:compatible-numbering}
    Let $G$ be a connected graph and $T$ be a normal spanning tree of $G$.
    A numbering of the vertices of $G$ is \emph{compatible with $T$} if the following conditions hold. \begin{enumerate}[(i)]
        \item The vertices are numbered from 1 to $n$. \label{item:numbered1-n}
        \item\label{itm:compatible-2} For every $j \in [n]$, the descendants of $j$ form exactly the interval $[j, j + ND(j) - 1]$. \label{item:descendants-interval}
        \item\label{itm:compatible-3} If $j < k$ are siblings then $(lwpt_1(j), lwpt_2(j)) \leq' (lwpt_1(k), lwpt_2(k))$. \label{item:order-siblings}
    \end{enumerate}
\end{definition}

This definition might seem obscure at first, we now try to give some intuition for it. \cref{item:descendants-interval} basically says that the numbering corresponds to the order in which vertices are discovered during a DFS of $T$. The idea behind \cref{item:order-siblings} is that the children of every vertex $v$ are ordered by decreasing lowpoint, so that the child of $v$ with the lowest lowpoint is the left child of $v$. However, we will need that children of $v$ with the same lowpoint are ordered by increasing second lowpoint, hence the definition of $\leq'$. The rationale behind this choice of ordering should be made clear by \cref{cor:cyclic-intervals}.

\begin{observation} \label{obs:numbering}
    Let $G$ be a connected graph and $T$ be a normal spanning tree of $G$. Suppose that the vertices of $G$ are numbered with a numbering compatible with $T$. The following are immediate consequences of \cref{item:descendants-interval,,item:order-siblings}. \begin{itemize}
        \item If $j < k$ are siblings and $j'$ is a descendant of $j$ then $j' < k$.
        \item If $i \leq j \leq k$ and $k$ is a descendant of $i$ then $j$ is also a descendant of $i$. 
        \item If $i$ is a proper ancestor of $j$ then $i < j$. 
        \item The root of $T$ is numbered 1.
        \item If $j < k$ are siblings then $lwpt(j) \geq lwpt(k)$.
    \end{itemize}
\end{observation}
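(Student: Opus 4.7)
The plan is to verify each of the five bullets by direct unpacking of one of the three clauses of \cref{dfn:compatible-numbering}, as advertised by the statement itself. No auxiliary lemma is required; the whole argument amounts to reading off the interval structure supplied by \cref{item:descendants-interval} and the pair-ordering supplied by \cref{item:order-siblings}.

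For the first bullet, I apply \cref{item:descendants-interval} to both $j$ and $k$: distinct siblings are incomparable in $T$, so $Desc(j)$ and $Desc(k)$ are disjoint. These sets are the intervals $[j, j + ND(j)-1]$ and $[k, k + ND(k) - 1]$ respectively, and disjointness together with $j < k$ forces $j + ND(j) - 1 < k$; hence any descendant $j'$ of $j$ satisfies $j' < k$. The second bullet is even more direct: a descendant $k$ of $i$ lies in the interval $[i, i + ND(i)-1]$, so every $j$ with $i \leq j \leq k$ lies in the same interval and is itself a descendant of $i$. The third bullet is this interval description restricted to proper descendants: if $i$ is a proper ancestor of $j$, then $j \in [i, i+ND(i)-1]\setminus\{i\}$, hence $j > i$. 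For the fourth bullet, the root $r$ has $V(G)$ as its descendant set, so by \cref{item:descendants-interval} the interval $[r, r + n - 1]$ coincides with $[1, n]$ by \cref{item:numbered1-n}, forcing $r = 1$.

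The only bullet that is not a literal one-line application of \cref{item:descendants-interval} is the last one, which requires unpacking $\leq'$. For siblings $j < k$, \cref{item:order-siblings} gives $(-lwpt_1(j), lwpt_2(j)) \leq_{lex} (-lwpt_1(k), lwpt_2(k))$, and comparing first coordinates yields $-lwpt_1(j) \leq -lwpt_1(k)$, i.e., $lwpt(j) \geq lwpt(k)$. The mildly thought-provoking step, though still routine, is the first bullet, where one has to observe that the two descendant intervals are disjoint because $j$ and $k$ are siblings rather than comparable; everything else is an immediate read-off of the definition.
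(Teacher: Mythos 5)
Your proof is correct and matches the paper's treatment: the paper states this as an observation with no written proof, calling the bullets "immediate consequences" of \cref{item:descendants-interval,item:order-siblings}, and your argument is exactly the direct unpacking of those two clauses that the authors had in mind.
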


\begin{definition}
    A numbering of the vertices of a graph $G$ is \emph{compatible with $G$} if the vertices are numbered from 1 to $n$ and for every totally-nested 2-separation $(A, B)$ of $G$, both $A \setminus B$ and $B \setminus A$ are cyclic intervals of $[n] \setminus (A \cap B)$. 
\end{definition}

As we shall see later (see \cref{rem:compatible}), if $G$ is a 2-connected graph with a normal spanning tree $T$, every numbering of $V(G)$ that is compatible with $T$ is also compatible with $G$.

Given a connected graph~$G$ and a normal spanning tree~$T$ of~$G$ with numbering of the vertices of~$G$ that is compatible with~$T$, we define $LD(v)$ to be the largest descendant of~$v \in V(G)$.

\subsection{Background on 1-separations}

We now review some basic results about 1-separations.
The following is a classical result from Hopcroft and Tarjan.

\begin{lemma}[\cite{Hopcroft-Tarjan-1-sep}] \label{lem:algo-1-sep}
    There is an $O(n+m)$-time algorithm which, given a connected graph $G$, determines whether $G$ is 2-connected, and if not returns a 1-separator of $G$.
\end{lemma}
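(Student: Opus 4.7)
The plan is to use the classical depth-first search (DFS) based approach, which fits naturally into the framework already introduced in the preliminaries. I would compute a DFS spanning tree $T$ of $G$ rooted at an arbitrary vertex $r$, and during the DFS I would simultaneously compute, for every vertex $v$, the value $lwpt(v)$ (using the vertex numbering induced by the DFS). Since a DFS tree is a normal spanning tree, back-edges always go from a descendant to a proper ancestor, and the lowpoint values can be computed in a single bottom-up pass by taking, for each vertex $v$, the minimum over the numberings of $v$'s back-edge neighbours and the lowpoints of $v$'s children in $T$.

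The key structural fact I would then invoke is the Hopcroft--Tarjan cutvertex characterization: a vertex $v$ is a cutvertex of $G$ if and only if either (a) $v = r$ and $r$ has at least two children in $T$, or (b) $v \neq r$ and $v$ has a child $c$ in $T$ with $lwpt(c) \geq v$ (with respect to the DFS numbering). The forward direction for the root is immediate, since distinct children of $r$ lie in subtrees that cannot be connected except through $r$ (all non-tree edges are back-edges by normality). For a non-root $v$, if some child $c$ satisfies $lwpt(c) \geq v$, then no back-edge from the subtree $T(c)$ leaves $\{v\} \cup Desc(c)$ via a proper ancestor of $v$, so $v$ separates $c$ from $r$. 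Conversely, if no such $c$ exists, then every subtree below $v$ has a back-edge to a proper ancestor of $v$, which together with the tree edges yields a path avoiding $v$ between any two vertices of $G - v$.

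Given this characterization, the algorithm is: run the DFS (building $T$ and the compatible numbering) and compute all lowpoints in the same traversal, then in one final scan through $V(G)$ apply criteria (a) and (b) to find a cutvertex, returning the corresponding 1-separator $\{v\}$ if one is found and otherwise declaring $G$ to be 2-connected. Both the DFS and the lowpoint computation take $O(n+m)$ time, and the final scan is $O(n)$, giving total running time $O(n+m)$.

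The proof contains no real obstacle, since the lowpoint characterization of cutvertices is standard; the only delicate point is keeping the orientation conventions consistent (back-edges go to proper ancestors, the numbering respects the DFS discovery order) so that ``$lwpt(c) \geq v$'' captures exactly the condition that the subtree below $c$ cannot escape $v$. In the present paper this is already built into the definition of $lwpt$ and \cref{dfn:compatible-numbering}, so the whole argument reduces to a short appeal to the correctness of the DFS-based cutvertex test. Since this is a classical result, we simply cite \cite{Hopcroft-Tarjan-1-sep}.
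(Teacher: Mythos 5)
Your proposal is correct and matches the paper's approach: the paper gives this lemma as a citation to Hopcroft and Tarjan without proof, and the DFS-based cutvertex test you describe is exactly the structural characterization the paper records as \cref{lem:charac-1-sep}. Note that under the paper's definition of $lwpt$, a child $w$ of $v$ always satisfies $lwpt(w) \le p(w) = v$, so your condition ``$lwpt(c) \ge v$'' is equivalent to the paper's ``$lwpt(w) = v$''.
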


The main structural result in \cite{Hopcroft-Tarjan-1-sep} behind this linear-time algorithm can be stated as follows.

\begin{lemma} \label{lem:charac-1-sep}
    Let $G$ be a connected graph, $T$ a normal spanning tree of $G$ with root $r$, and $v$ be any vertex of $G$. Then, $\{v\}$ is a separator of $G$ if and only if one of the following holds.
    \begin{itemize}
        \item $v = 1$ and $v$ has at least two children in $T$.
        \item $v \neq 1$ and $v$ has a child $w$ such that $lwpt(w) = v$.
    \end{itemize}
\end{lemma}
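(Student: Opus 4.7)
My plan is to reduce the equivalence to a single structural observation about the components of $G - v$, then read off both directions. The key claim I would establish first is: for every child $c$ of $v$, the subtree $T(c)$ is disconnected from the rest of $G - v$ if and only if $lwpt(c) = v$. For the direction from lowpoint to disconnection, the proof uses normality of $T$: any edge of $G$ from $T(c)$ to a vertex outside $T(c) \cup \{v\}$ must join a descendant $x$ of $c$ to one of its ancestors $y$, and since $y \notin T(c) \cup \{v\}$, this $y$ must be a proper ancestor of $v$, contradicting $lwpt(c) = v$. Conversely, the tree edge $vc$ always puts $v$ into $L(c)$, so $lwpt(c) \neq v$ forces $lwpt(c)$ to be a proper ancestor of $v$, witnessing a back-edge from $T(c)$ to the ``upper'' part of the tree and thereby keeping $T(c)$ attached to $T - T(v)$ in $G - v$.

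With this observation in hand, both directions are quick. For the backward direction, if $v = 1$ has two children $c_1, c_2$, normality rules out edges between $T(c_1)$ and $T(c_2)$, so they fall into distinct components of $G - v$; if $v \neq 1$ and some child $w$ satisfies $lwpt(w) = v$, then the observation isolates $T(w)$ from $p(v)$ in $G - v$. For the forward direction, I would describe all components of $G - v$ explicitly: the ``big'' component contains $T - T(v)$ (which is connected, being the tree $T$ with a rooted subtree removed) together with every $T(c)$ for which $lwpt(c) \neq v$; the remaining components are precisely the subtrees $T(c)$ with $lwpt(c) = v$. Disconnection of $G - v$ then either (if $v = 1$) forces at least two children, or (if $v \neq 1$) forces a child $w$ with $lwpt(w) = v$.

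The main obstacle is the bookkeeping around the lowpoint definition: making sure the convention $lwpt(c) = c$ when $L(c)$ is empty does not sneak in as a loophole. The saving fact is that the tree edge $vc$ guarantees $v \in L(c)$, so $L(c)$ is nonempty and $lwpt(c)$ is always a proper ancestor of $c$ lying at level $v$ or lower in the tree. This makes the dichotomy ``$lwpt(c) = v$ versus $lwpt(c)$ a proper ancestor of $v$'' both clean and exhaustive, and reduces the separate handling of the root case $v = 1$ to a short check.
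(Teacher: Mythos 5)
Your proof is correct and it is the standard argument; the paper itself states this lemma as a cited result from Hopcroft and Tarjan without reproducing a proof, so there is nothing internal to compare against. The pivotal step — that $T(c)$ is a component of $G-v$ exactly when $lwpt(c)=v$, with normality ruling out edges between incomparable subtrees and the tree edge $vc$ guaranteeing $L(c)\neq\emptyset$ so that the dichotomy ``$lwpt(c)=v$ versus $lwpt(c)$ a proper ancestor of $v$'' is exhaustive — is exactly right, and your handling of the root case $v=1$ cleanly completes both directions.
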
 

\subsection{Representing separations}
\label{subsec:representing-separations}

Let $G$ be a 2-connected graph, $T$ be a normal spanning tree of $G$ and assume that the vertices of $G$ are numbered with a numbering compatible with~$T$. Let $(A, B)$ be a half-connected 2-separation of $G$ with separator $\{a, b\}$. 
Let $c_1 < \ldots < c_k$ be the children of $b$. 
By \cref{cor:cyclic-intervals}, the sets $A \setminus \{a, b\}$ and $B \setminus \{a, b\}$ are cyclic intervals of $[n] \setminus \{a, b\}$. Up to renaming $A$ and $B$, we can assume that $B \setminus \{a, b\}$ is an interval of $[n] \setminus \{a, b\}$. Let \textbf{f} be the first vertex of the interval of $B \setminus \{a, b\}$ and \textbf{l} be its last vertex. Then, the tuple $(a, b, \textbf{f}, \textbf{l})$ represents $(A, B)$ unambiguously. 
To simplify later computations, we shall also store the minimum child $c_m$ of $b$ in $B$ (we store 0 if there is no such child) and the maximum child $c_M$ of $b$ in $B$ (again, we store 0 if there is no such child).
Hence, we define the \emph{encoding} of the 2-separation $(A, B)$ to be the tuple $(a, b, \textbf{f}, \textbf{l}, c_m, c_M)$.
\section{The structure of 2-separations} \label{sec:2sep}

The algorithm introduced by Hopcroft and Tarjan to compute the Tutte-decomposition in linear time \cite{Hopcroft-Tarjan-2-sep} distinguishes between two types of 2-separations. Our algorithm will also consider these two types of separations separately, although they will handled differently from the Hopcroft-Tarjan algorithm. We first define these two types and prove that every 2-separation of a 2-connected graph is of one of these two types. We then introduce the notion of stability, before characterizing half-connected 2-separations of both types.

\subsection{A typography of 2-separations} \label{subsec:typo-2-sep}

In this section, we present the characterization of the 2-separations of a 2-connected graph due to Hopcroft and Tarjan~\cite{Hopcroft-Tarjan-2-sep}. We shall rephrase some of the statements to make them more suitable for later use.

We start with a simple lemma that we will use repeatedly.

\begin{lemma} \label{lem:follow-leftmost}
    Let $G$ be a 2-connected graph equipped with a normal spanning tree $T$ and a numbering of its vertices that is compatible with $T$.
    For every vertex $v \in V(G)$, there is a leftmost descendant of $v$ which is adjacent to $lwpt(v)$.
\end{lemma}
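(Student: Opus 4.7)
The plan is to follow the sequence of left children starting from $v$ and show that this operation preserves the lowpoint, halting as soon as one reaches a vertex actually adjacent to $lwpt(v)$ in $G$.

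First, I would dispose of the degenerate cases. If $lwpt(v) = v$ (the default when $L(v) = \emptyset$), then $v$ must be the root: otherwise $p(v)$ would be a 1-separator by \cref{lem:charac-1-sep}, contradicting the 2-connectedness of $G$. As the root of a 2-connected graph on at least three vertices, $v$ has a child, and $\ell(v)$ is then a leftmost descendant of $v$ adjacent to $v = lwpt(v)$. Otherwise, write $w := lwpt(v)$, a proper ancestor of $v$. If $v$ is itself adjacent to $w$ in $G$, then $v$, viewed as a leftmost descendant of itself (with $k = 0$ in the definition), is the vertex we seek.

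The main case is when $w < v$ is a proper ancestor and $v$ is not adjacent to $w$ in $G$. Then some proper descendant of $v$ must be adjacent to $w$, so there exists a child $c$ of $v$ with $w \in L(c)$, giving $lwpt(c) \leq w$. Since $c > v > w$, this forces $lwpt(c) \neq c$, so $lwpt(c)$ is a genuine proper ancestor of $v$ and hence lies in $L(v)$; by definition of $lwpt(v)$, we obtain $lwpt(c) \geq w$ and thus $lwpt(c) = w$. The key step is to push this equality onto the left child $\ell(v)$: since $\ell(v)$ is the largest child of $v$, property~\ref{itm:compatible-3} of \cref{dfn:compatible-numbering} gives $lwpt_1(\ell(v)) \leq lwpt_1(c) = w < \ell(v)$, and the same two-sided bound as above shows $lwpt(\ell(v)) = w$.

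Having established $lwpt(\ell(v)) = lwpt(v) = w$ whenever $v$ is not already adjacent to $w$, I would iterate: set $v_0 = v$ and $v_{i+1} = \ell(v_i)$ as long as $v_i$ is not adjacent to $w$. At each step the argument above applies to $v_i$ in place of $v$, so $v_i$ has a child and the recursion proceeds. Since the sequence descends strictly in the finite tree $T$, it must halt at some $v_k$ adjacent to $w$, and this $v_k$ is a leftmost descendant of $v$ by construction. The main subtlety I anticipate is the bookkeeping around the convention $lwpt(u) = u$ when $L(u) = \emptyset$, which would otherwise prevent a direct application of property~\ref{itm:compatible-3}; but the iteration maintains the strict inequality $lwpt(v_i) = w < v_i$, so the lowpoints involved are always genuine proper ancestors and the comparison under $\leq'$ reduces cleanly to an integer comparison on the first coordinate.
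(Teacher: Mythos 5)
Your proof follows essentially the same approach as the paper: you show that taking the left child preserves the lowpoint, and iterate down the leftmost path until you hit a vertex actually adjacent to $lwpt(v)$; the paper phrases the same idea as an induction on $P_v$. The main argument — using \cref{dfn:compatible-numbering}~\ref{itm:compatible-3} to transfer the equality $lwpt(c) = w$ from the witnessing child $c$ to the left child $\ell(v_i)$ — is correct and matches the paper's use of \cref{obs:numbering}.

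One small slip in your treatment of the degenerate case $lwpt(v) = v$: your appeal to \cref{lem:charac-1-sep} does not go through. That lemma would make $p(v)$ a $1$-separator if some child $c$ of $p(v)$ had $lwpt(c) = p(v)$; here you only have $lwpt(v) = v \neq p(v)$, which does not trigger the lemma. The correct (and much simpler) reason that $lwpt(v) = v$ forces $v = r$ is that $p(v)$ is always adjacent to $v$ via the tree edge, so $p(v) \in L(v)$ whenever $v \neq r$, making $L(v) = \emptyset$ impossible. This is a local repair and does not affect the rest of the argument.
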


\begin{proof}
    We prove it by induction on $P_v$, the path rooted at $v$ which contains the leftmost descendants of $v$.
    If $v$ is adjacent to $lwpt(v)$ then the property holds for $P_v$. 
    If not, there is a descendant $x$ of $v$ which is adjacent to $lwpt(v)$. Let $c$ be the unique child of $v$ that is an ancestor of $x$. Note that $lwpt(c) \leq N_{\min}(x) \leq lwpt(v)$.
    Let $w$ be the left child of $v$. Since the numbering of the vertices is compatible with $T$, then $lwpt(w) \leq lwpt(c) \leq lwpt(v)$ by \cref{obs:numbering}. However, we also have $lwpt(v) \leq lwpt(w)$ since $v$ is an ancestor of $w$, so $lwpt(w) = lwpt(v)$. By the induction hypothesis applied to $P_w$, there is a leftmost descendant of $w$ (hence of $v$) that is adjacent to $lwpt(w) = lwpt(v)$.
\end{proof}

\begin{lemma}[\protect{\cite[Lemma 12]{Hopcroft-Tarjan-2-sep}}] \label{lem:a-b-comp}
    Let $G$ be a 2-connected graph, $T$ a normal spanning tree of $G$. If $\{u, v\}$ is a 2-separator of $G$ then $u$ and $v$ are comparable in $T$.
\end{lemma}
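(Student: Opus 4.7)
The plan is to argue by contradiction: assume that $u$ and $v$ are incomparable in $T$, and show that $\{u,v\}$ cannot be a separator by proving $G - \{u,v\}$ is connected.

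First I would fix the root $r$ of $T$ and note that, since $u$ and $v$ are incomparable, neither is the root, and neither is an ancestor of the other. In particular, $v$ is not contained in the subtree $T(u)$ and $u$ is not contained in $T(v)$. Let $C_0$ be the connected component of $G - \{u,v\}$ that contains $r$. It suffices to show that every vertex of $G$ other than $u$ and $v$ lies in $C_0$.

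Next I would split the remaining vertices into three cases based on their position in $T$. If $x$ is not a (weak) descendant of $u$ and not a descendant of $v$, then neither $u$ nor $v$ lies on the tree path $T[x,r]$, so this path lies in $G - \{u,v\}$ and witnesses $x \in C_0$. In particular, every proper ancestor of $u$ (and of $v$) belongs to $C_0$, since none of them equals $v$ (resp.\ $u$) by incomparability. If $x$ is a proper descendant of $u$, let $c$ be the child of $u$ with $x \in T(c)$. Because $G$ is 2-connected, $u$ is not a cutvertex, so some vertex of $T(c)$ has a neighbor outside $T(u)$; by normality (applied via \cref{lem:charac-1-sep}, or directly from normality of $T$), this neighbor is a proper ancestor $a$ of $u$. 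Since $v \notin T(u)$ (incomparability), the whole subtree $T(c)$ survives intact in $G - \{u,v\}$ and is connected by its tree edges, and the edge to $a$ plugs it into $C_0$. The case of $x$ a proper descendant of $v$ is symmetric. This exhausts all vertices of $V(G) \setminus \{u,v\}$, contradicting the assumption that $\{u,v\}$ is a separator.

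I expect the main obstacle to be nothing deep but a bookkeeping subtlety: one must carefully verify, using incomparability, that (i)~$v$ is not an ancestor of $u$, so the proper ancestors of $u$ form a path in $T$ avoiding $v$ and land in $C_0$; and (ii)~$v$ is not a descendant of $u$, so removing $v$ does not disconnect the subtrees $T(c)$ for children $c$ of $u$. Both of these are immediate from $u,v$ being incomparable, but they are the two places where the assumption is actually used, and flipping the roles of $u$ and $v$ handles descendants of $v$.
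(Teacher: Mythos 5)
Your proof is correct and follows essentially the same strategy as the paper: assume incomparability, observe that the components of $T-\{u,v\}$ hanging off $u$ or $v$ are whole subtrees $T(c)$, and use 2-connectedness together with normality (via \cref{lem:charac-1-sep} or the lowpoint) to find an edge from each such subtree to a proper ancestor of $u$ or $v$, which lies in the component containing the root. The case analysis by tree position in your write-up is just a rephrasing of the paper's enumeration of the components $T_i$ of $T-\{u,v\}$.
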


\begin{proof}
    Let $r$ denote the root of $T$.
    If $r \in \{u, v\}$, the statement holds so we can assume $r \notin \{u, v\}$. 
    By contradiction, assume that $u$ and $v$ are not comparable.
    Let $T_1, \ldots, T_k$ be the connected components of $T - \{u, v\}$. 
    Without loss of generality, we can assume $r \in V(T_1)$. 
    The other $T_i$ are the subtrees of $T$ rooted in the children of $u$ and of $v$, and $T_1$ contains all the vertices which are not descendants of $u$ or $v$. 
    Consider any $T_i$ for $i \geq 2$ and denote its root by $r_i$. 
    Up to renaming $u$ and $v$, we can assume that $p(r_i) = u$. 
    Since $u$ and $v$ are not comparable then $v$ is not a descendant of $r_i$ so $T_i = T(r_i)$.
    Since $G$ is 2-connected, $\{u\}$ is not a separator, so $lwpt(r_i) \neq u$ by \cref{lem:charac-1-sep}. 
    However, $(r_i, u) \in E(G)$ so $lwpt(r_i)$ is an ancestor of $u$.
    Therefore, $lwpt(r_i)$ is a proper ancestor of $u$, so $lwpt(r_i) \in V(T_1)$. 
    By definition, there is a descendant $x$ of $r_i$ that is adjacent to $lwpt(r_i)$.
    Thus, $x$ is a vertex of $T(r_i) = T_i$ and is adjacent to $lwpt(r_i) \in V(T_1)$. 
    Thus, all vertices of $T_1$ and of $T_i$ are in the same component of $G - \{u, v\}$. 
    Since this holds for every $i \geq 2$, it follows that $G - \{u, v\}$ is connected, a contradiction. 
    Thus, $u$ and $v$ are comparable.
\end{proof}

We are now ready to define the two types of 2-separations of a 2-connected graph, following the names given in \cite{Hopcroft-Tarjan-2-sep}.

\begin{definition} \label{def:t1-t2}
    Let $G$ be a 2-connected graph equipped with a normal spanning tree $T$ with root $r$. Let $(A, B)$ be a 2-separation of $G$ with separator $\{a, b\}$.
    We say that the separation $(A, B)$ is \emph{type-2} if the following holds.
    \begin{itemize}
        \item $r \notin \{a, b\}$.
        \item $V(T(a, b)) \neq \emptyset$.
        \item Up to exchanging $A$ and $B$, we have $r \in A \setminus B$ and $V(T(a, b)) \subseteq B \setminus A$.
    \end{itemize}
    Otherwise, $(A, B)$ is \emph{type-1}.
\end{definition}

\subsection{Stability}

We now introduce the notion of \emph{stability}, a key concept to our algorithm which will be extremely useful at various points in the algorithm. The next definition is supported by \cref{fig:stability}.

\begin{definition}\label{def:stability}
    Let $G$ be a graph with normal spanning tree $T$. Suppose that the vertices of $G$ are numbered with a numbering compatible with $T$.
    Let $a < b \in V(G)$ be such that $T[a, b]$ is a leftmost path. Let $a'$ be the child of $a$ such that $b$ is a leftmost descendant of $a'$.
    We say that $T[a, b]$ is \emph{stable} if every back-edge $(x, y)$ with $a' \leq x < b$ satisfies $y \geq a$.
    If $T[a, b]$ is not a leftmost path, then $T[a, b]$ is not stable.

    If $a \leq b$ and $T[a, b]$ is a leftmost path, we denote by $stab(a, b)$ the largest ancestor $\alpha$ of $a$ such that $T[\alpha, b]$ is stable if there is one, and we set $stab(a, b) = 0$ otherwise. In particular, if $T[a, b]$ is stable then $stab(a, b) = a$.
\end{definition}

\begin{figure}[h]
    \centering
    \includegraphics[width=0.25\linewidth]{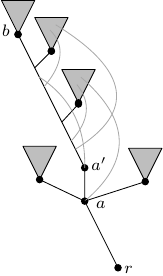}
    \caption{Every back-edge $(x,y)$ with $a' \leq x < b$ satisfies $y \geq a$. Hence, the path $T[a,b]$ is stable according to \cref{def:stability}.}
    \label{fig:stability}
\end{figure}

\begin{definition}
    Let $G$ be a graph with normal spanning tree $T$. Suppose that the vertices of $G$ are numbered with a numbering compatible with $T$.
    If $v \in V(G)$ has at least 2 children in $T$, and $w$ is its second largest child, we define $witn(v) = \min (N_{\min}(v), lwpt(w))$. Otherwise, we set $witn(v) = N_{\min}(v)$.
\end{definition}

 These $witn(\cdot)$ \emph{witness} the stability or non-stability of leftmost paths, as follows.

\begin{lemma} \label{lem:witn-stab}
    Let $G$ be a graph with normal spanning tree $T$. Suppose that the vertices of $G$ are numbered with a numbering compatible with $T$.
    Let $T[a, b]$ be a leftmost path in $T$ with $a < b$.
    Then, $T[a, b]$ is stable if and only if $witn(v) \geq a$ for every $v \in T(a, b)$.
\end{lemma}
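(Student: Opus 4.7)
The plan is to prove both implications of the equivalence, relying throughout on two consequences of compatible numbering: by \cref{item:descendants-interval} we have $Desc(u) = [u, u + ND(u) - 1]$ for every vertex $u$, and by \cref{obs:numbering} the children $c_1 < \cdots < c_k$ of any vertex satisfy $lwpt(c_1) \geq \cdots \geq lwpt(c_k)$, so $lwpt(w) \leq lwpt(c)$ for every non-largest child $c$, where $w = c_{k-1}$ is the second largest child.

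For the forward direction, I would assume that $T[a,b]$ is stable and fix $v \in T(a,b)$. To show $N_{\min}(v) \geq a$, any neighbour $y$ of $v$ with $y < a$ must be a proper ancestor of $v$ (since $T$ is normal and $y < v$) and cannot equal $p(v)$, as $p(v) \in T[a, v]$ forces $p(v) \geq a$. Therefore $(v, y)$ is a back-edge, and since $v \in [a', b)$, stability forces $y \geq a$, a contradiction. If $v$ has a second largest child $w$, I would show $lwpt(w) \geq a$ analogously: otherwise a descendant $x$ of $w$ would be joined by a back-edge to some $y < a$, and using that $Desc(w) = [w, \ell(v) - 1] \subseteq [a', b-1]$ (since $w > v \geq a'$ and $\ell(v) \leq b$), this back-edge would again violate stability. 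Combining these two bounds yields $witn(v) \geq a$.

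For the backward direction, I would assume $witn(v) \geq a$ for every $v \in T(a,b)$ but that some back-edge $(x, y)$ satisfies $a' \leq x < b$ and $y < a$. Since $b \in Desc(a')$, we have $[a', b-1] \subseteq Desc(a')$, so $x$ is a descendant of $a'$. Let $v$ be the highest ancestor of $x$ on $T[a, b]$; then $v \geq a' > a$ and $v \leq x < b$, so $v \in T(a,b)$. If $v = x$, then $N_{\min}(v) \leq y < a$ gives $witn(v) < a$, a contradiction. Otherwise $x$ is a proper descendant of $v$, lying in the subtree of some child $c$ of $v$; we must have $c \neq \ell(v)$, for otherwise $\ell(v)$ would be a higher ancestor of $x$ on $T[a, b]$. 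In particular $v$ has at least two children, and $lwpt(c) \leq y < a$.

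The main step, and the one that motivates the very definition of $witn$, is transferring the bound $lwpt(c) < a$ at an arbitrary non-largest child $c$ to the second largest child $w$. This is exactly what \cref{item:order-siblings} provides: since $c$ and $w$ are siblings with $c \leq w$, we have $lwpt(w) \leq lwpt(c) < a$, so $witn(v) \leq lwpt(w) < a$, contradicting the hypothesis. The only edge case to mention is $a' = b$, in which case $T(a,b)$ is empty and no back-edge $(x, y)$ can satisfy $a' \leq x < b$, so both sides of the equivalence hold vacuously.
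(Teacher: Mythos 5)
Your proof is correct and takes essentially the same approach as the paper's: both directions hinge on locating the vertex $v$ on $T(a,b)$ where the offending back-edge's source leaves the leftmost path, and then using the compatible numbering's monotonicity of $lwpt$ over siblings to transfer the bound from the relevant non-left child to the second largest child $w$, yielding $witn(v) < a$. The paper presents the ``stable $\Rightarrow$ all $witn \geq a$'' direction as its contrapositive (a case split on whether $witn(v) = N_{\min}(v)$ or $witn(v) = lwpt(w)$), whereas you argue directly by contradiction, but the content is identical; your explicit remark on the $a'=b$ edge case is a harmless extra.
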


\begin{proof}
    We denote by $a'$ the child of $a$ such that $b$ is a leftmost descendant of $a'$.
    Suppose first that $witn(v) \geq a$ for every $v \in T(a, b)$.
    By contradiction, suppose that $T[a, b]$ is not stable. Then, there exists a back-edge $(x, y)$ with $a' \leq x < b$ and $y < a$.
    Note that $witn(x) \leq N_{\min}(x) \leq y < a$ so $x \notin T(a, b)$. 
    Since $a' \leq x < b$ and $b$ is a descendant of $a'$ then $x$ is a descendant of $a'$ by \cref{obs:numbering}.
    Let $v$ be the largest vertex of $T(a, b)$ that is an ancestor of $x$ and let $u$ be the unique child of $v$ that is an ancestor of $x$ ($u$ exists since $x \notin T(a, b)$). 
    Then, $lwpt(u) \leq y < a$.
    By maximality of $v$, we have $u \notin T(a, b)$. 
    Furthermore, $u$ is an ancestor of $x$ so $u \leq x < b$. Thus, $u$ is not the left child of $v$. Let $w$ be the second largest child of $v$. 
    Note that $u$ and $w$ are siblings and $u \leq w$, so $lwpt(w) \leq lwpt(u) < a$ by \cref{obs:numbering}.
    Thus, $witn(v) \leq lwpt(w) < a$, a contradiction.

    Suppose now that $witn(v) < a$ for some $v \in T(a, b)$. If $witn(v) = N_{\min}(v)$ then there is a back-edge $(v, N_{\min}(v))$ with $a' \leq v < b$ and $N_{\min}(v) = witn(v) < a$ so $T[a, b]$ is not stable. 
    If $witn(v) = lwpt(w)$ where $w$ is the second largest child of $v$, let $v' \leq b$ be the left child of $v$.
    By definition of $lwpt(w)$, there is a descendant $w'$ of $w$ that is adjacent to $lwpt(w)$. By \cref{obs:numbering}, $v < w \leq w' < v' \leq b$ so $a' \leq w' < b$ and $w'$ is adjacent to $lwpt(w) = witn(v) < a$. Thus, $T[a, b]$ is not stable.
\end{proof}

We now state the characterization of 2-separations of a 2-connected graph from Hopcroft and Tarjan, which we rephrase using stability. See \cref{fig:types} for an illustration.

\begin{restatable}{proposition}{charactwosep}(\protect{\cite[Lemma 13]{Hopcroft-Tarjan-2-sep}}) \label{prop:charac-2-sep}
    Let $G$ be a 2-connected graph, and $T$ be a normal spanning tree of $G$ with root $r$. Suppose that the vertices of $G$ are numbered with a numbering compatible with $T$. 
    Then, $(A, B)$ is a 2-separation of $G$ with separator $\{a < b\}$ if and only if one of the following holds.
     
    \begin{enumerate}
        \item There exists a nonempty set $\mathcal{C}$ of children of $b$ such that, up to exchanging $A$ and $B$, we have $B = \{a, b\} \cup \bigcup_{c \in \mathcal{C}} Desc(c)$ and $A = V(G) \setminus (B \setminus \{a, b\})$. Furthermore, for every $c \in \mathcal{C}$, we have $lwpt(c) = a$, $lwpt_2(c) = b$ ; and some vertex $v \notin \{a,b\}$ is not a descendant of any $c \in \mathcal{C}$.
        \item There exists a child $a'$ of $a$ such that $b$ is a proper leftmost descendant of $a'$, and there exists a set $\mathcal{C}$ of children of $b$ such that, if $T_2$ denotes the connected component of $T - \{a, b\}$ containing $a'$, up to exchanging $A$ and $B$, we have ${B = \{a, b\} \cup V(T_2) \cup \bigcup_{c \in \mathcal{C}} Desc(c)}$ and $A = V(G) \setminus (B \setminus \{a, b\})$. Furthermore, $a \neq r$  ; $T[a, b]$ is stable ; and for every $c \in \mathcal{C}$, we have $lwpt(c) \geq a$ and for every child $d$ of $b$ which is not in $\mathcal{C}$, we have $hgpt(d) \leq a$. 
    \end{enumerate}
    Furthermore, $(A, B)$ is type-1 in the first case and type-2 in the second.
\end{restatable}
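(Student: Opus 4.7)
The setup: since $a < b$, \cref{lem:a-b-comp} and \cref{obs:numbering} together force $a$ to be an ancestor of $b$ in $T$. Let $a'$ denote the child of $a$ on the tree path to $b$ (with $a' = b$ if $b$ is a child of $a$). The components of $T - \{a, b\}$ are then: a subtree $T_1$ containing the root (empty if $a = r$), the subtrees $T(c)$ for each child $c$ of $a$ different from $a'$, the subtree $T_2 = T(a') \setminus Desc(b)$ (empty if $a' = b$), and the subtrees $T(c)$ for each child $c$ of $b$. Every 2-separation with separator $\{a, b\}$ corresponds to a partition of these components into $A \setminus \{a, b\}$ and $B \setminus \{a, b\}$ that is compatible with the adjacencies of $G - \{a, b\}$ induced by back-edges.

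For the forward direction, the first step is to observe that by 2-connectivity of $G$ and \cref{lem:charac-1-sep}, for every child $c$ of $a$ other than $a'$ we have $lwpt(c) < a$, so $T(c)$ is connected to $T_1$ in $G - \{a, b\}$ via a back-edge to a proper ancestor of $a$. Up to swapping $A$ and $B$, we may therefore assume that $T_1$, together with all such subtrees $T(c)$, lies in $A$, and we split according to the location of $T_2$. If $T_2 \subseteq A$ (including $T_2 = \emptyset$), then $B \setminus \{a, b\}$ is a union of subtrees $T(c)$ for $c$ in a nonempty set $\mathcal{C}$ of children of $b$; the absence of crossing back-edges forces $L(c) \subseteq \{a, b\}$, and combined with $b = p(c) \in L(c)$ and $a \in L(c)$ (otherwise $\{b\}$ would cut $T(c)$ off, contradicting 2-connectivity) this yields $L(c) = \{a, b\}$, i.e., $lwpt(c) = a$ and $lwpt_2(c) = b$, which is case 1. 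If instead $T_2 \subseteq B$, we aim for case 2. The conditions $a \neq r$ and $V(T(a, b)) \neq \emptyset$ follow from simple case analysis (if $a = r$, 2-connectivity forces $r$ to have a unique child, and the situation collapses back to case 1 with the sides swapped; if $b = a'$, then $T_2 = \emptyset$). The substantial step is showing that $b$ is a proper leftmost descendant of $a'$: since $a \neq r$, 2-connectivity gives $lwpt(a') < a$, so by \cref{lem:follow-leftmost} there exists a leftmost descendant $x$ of $a'$ adjacent to $lwpt(a')$; were $x$ in $T_2 \subseteq B$, the back-edge $(x, lwpt(a'))$ would connect $B$ to $T_1 \subseteq A$, contradicting the separation, so $x \in Desc(b)$ and the leftmost path from $a'$ passes through $b$. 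Once this is established, $T_2$ coincides with the interval $\{a', a'+1, \ldots, b-1\}$, and the remaining conditions (stability of $T[a, b]$, $lwpt(c) \geq a$ for $c \in \mathcal{C}$, and $hgpt(d) \leq a$ for children $d$ of $b$ not in $\mathcal{C}$) are each a direct translation of the requirement that no back-edge crosses between $A$ and $B$.

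The backward direction is a routine verification: the non-emptiness of $A \setminus B$ and $B \setminus A$ is immediate from the explicit hypotheses in each case, and the absence of crossing edges follows from a short case analysis on the possible back-edges between the components of $T - \{a, b\}$, using normality of $T$ (so that each such back-edge connects comparable vertices) together with the assumed lowpoint, highpoint, and stability conditions. The main obstacle is the forward direction in case 2, specifically the argument combining \cref{lem:follow-leftmost} with 2-connectivity and the separation property to force $b$ to be a proper leftmost descendant of $a'$. Everything else is essentially bookkeeping once the correct partition of $T - \{a, b\}$ and the allowed back-edges between its components have been identified.
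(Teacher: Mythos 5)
Your proof is correct and follows essentially the same strategy as the paper's: identify the components of $T - \{a,b\}$, use 2-connectivity and \cref{lem:charac-1-sep} to constrain lowpoints, and invoke \cref{lem:follow-leftmost} to force $b$ onto the leftmost path from $a'$ by contradiction. That last argument is word-for-word the same as the paper's.

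The one organizational difference: you split the forward direction by the location of $T_2$ relative to $T_1$, whereas the paper splits by whether one side of the separation consists entirely of descendants of $b$. The paper's split is slightly cleaner because the degenerate case $a = r$ is eliminated automatically in the "neither side" branch — if $a = r$ then $T_1$ is empty and there is only one component of $T - \{a,b\}$ that is not a subtree at a child of $b$, forcing one side to consist of descendants of $b$. In your version you instead patch this up with the remark that "$a = r$ collapses back to case~1 with sides swapped," which works but adds an exceptional case to your branch rather than excluding it at the outset. This is a matter of taste, not correctness. The backward direction really is the bookkeeping you describe, though the paper does spell out the case analysis over all back-edge types between the components; your claim that stability, $lwpt(c) \geq a$, and $hgpt(d) \leq a$ each control one family of back-edges is exactly the paper's verification.
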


The proof of \cref{prop:charac-2-sep} is merely a reformulation of the proof of Hopcroft and Tarjan with our definitions. We defer it to \cref{sec:proof-charac-2-sec}. We will later need the following result, whose proof we also defer to \cref{sec:proof-charac-2-sec}.

\begin{restatable}{lemma}{cctypetwo}\label{lem:cc-type-2}
    Let $G$ be a 2-connected graph and $T$ be a normal spanning tree of $G$ with root $r$.
    Suppose that the vertices of $G$ are numbered with a numbering compatible with $T$. 
    Let $a, b \in V(G)$ and let $\mathcal{D}$ be the set of children $d$ of $b$ such that $lwpt(d) = a$ and $lwpt_2(d) = b$. Suppose that the following holds.
    \begin{itemize}
        \item there exists a child $a'$ of $a$ such that $b$ is a proper leftmost descendant of $a'$;
        \item $a \neq r$;
        \item $T[a, b]$ is stable;
        \item Every child $c$ of $b$ satisfies either $lwpt(c) \geq a$ or $hgpt(c) \leq a$. 
    \end{itemize}
    Then, the connected components of $G - \{a, b\}$ are exactly the following (and they are pairwise distinct).
    \begin{itemize}
        \item $G[Desc(d)]$ for every $d \in \mathcal{D}$;
        \item The connected component that contains $a'$;
        \item The connected component that contains $r$.
    \end{itemize}
    
\end{restatable}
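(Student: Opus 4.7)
The plan is to analyze the connected components of $G - \{a, b\}$ by first partitioning $V(G) \setminus \{a, b\}$ into the connected components of $T - \{a, b\}$ and then tracking how back-edges merge these pieces. Using compatibility of the numbering with $T$ and the fact that $b$ is a leftmost descendant of $a'$, one observes that $Desc(b) = [b, a' + ND(a') - 1]$, so that $T_2 := Desc(a') \setminus Desc(b) = [a', b-1]$ is a connected subtree of $T$ containing $a'$. The connected components of $T - \{a, b\}$ are then $R := V(G) \setminus Desc(a)$ (which contains $r$ since $a \neq r$), $T_2$, the subtrees $V(T(c))$ for each child $c \neq a'$ of $a$, and the subtrees $V(T(d))$ for each child $d$ of $b$. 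I would first handle $\mathcal{D}$: the hypotheses $lwpt(d) = a$ and $lwpt_2(d) = b$ force $L(d) \subseteq \{a, b\}$, so every back-edge leaving $Desc(d)$ lands in $\{a, b\}$; combined with the fact that $(d, b)$ is the only tree edge out of $Desc(d)$, this shows $G[Desc(d)]$ is a connected component of $G - \{a, b\}$.

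Next I would show that every other $T$-component merges with either $R$ or $T_2$. Each child $c \neq a'$ of $a$ satisfies $lwpt(c) < a$ by \cref{lem:charac-1-sep} (since $G$ is 2-connected and $a \neq r$), giving a back-edge from $V(T(c))$ to $R$. For a child $c$ of $b$ with $c \notin \mathcal{D}$, I case-split on the hypothesis that $lwpt(c) \geq a$ or $hgpt(c) \leq a$. If $lwpt(c) \geq a$, then $L(c) \subseteq T[a, b]$; ruling out $L(c) = \{b\}$ (which would make $\{b\}$ a 1-separator via \cref{lem:charac-1-sep}, contradicting 2-connectivity) and $L(c) = \{a, b\}$ (which would put $c \in \mathcal{D}$) forces $L(c) \cap (a, b) \neq \emptyset$, hence a back-edge from $V(T(c))$ to $T[a', p(b)] \subseteq T_2$. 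If $hgpt(c) \leq a$, then $L(c) \setminus \{b\} \subseteq T[r, a]$ is nonempty (else $\{b\}$ is a 1-separator) and cannot equal $\{a\}$ (which would again put $c \in \mathcal{D}$), so some element of $L(c) \setminus \{b\}$ lies strictly below $a$, giving a back-edge from $V(T(c))$ to $R$.

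The main obstacle is verifying that the resulting ``top'' piece (containing $R$) and ``middle'' piece (containing $T_2$) are truly distinct in $G - \{a, b\}$, i.e.\ that no back-edge joins them. The key tool is stability: any back-edge $(x, y)$ with $x \in T_2 = [a', b-1]$ satisfies $y \geq a$, so its other endpoint is either $a$ (removed) or in $T_2$. Analogously, a back-edge from a Type~A subtree $V(T(c))$ with $c \notin \mathcal{D}$ has its other endpoint in $L(c) \subseteq T[a, b]$, hence in $T(a, b) \subseteq T_2$ after removing $\{a, b\}$. Conversely, back-edges from $R$ stay inside $R$ by tree ancestry, back-edges from $V(T(c'))$ for a non-$a'$ child $c'$ of $a$ can only reach $R$ (using disjointness of $V(T(c'))$ from $V(T(a'))$ together with the $lwpt$ bound), and back-edges from a Type~B subtree reach only $R$ in $G - \{a, b\}$ (by the $hgpt \leq a$ bound, which places all elements of $L(c) \setminus \{b\}$ in $T[r, a]$). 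Together these rulings out any back-edge between top and middle, completing the identification of the three types of connected components listed in the statement.
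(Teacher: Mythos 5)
Your proposal is correct and follows essentially the same route as the paper's proof: decompose $V(G) \setminus \{a,b\}$ into the $T$-components, show that $N(Desc(d)) \subseteq \{a,b\}$ isolates each $d \in \mathcal{D}$, route every remaining $T$-component into the $r$-component or the $a'$-component via the lowpoint/highpoint hypothesis, and use stability of $T[a,b]$ together with normality of the spanning tree to certify that no back-edge joins the two pieces. The only superficial differences are organizational (you argue forward by placing each $T$-component, whereas the paper argues that any component missing both $r$ and $a'$ must be $G[Desc(d)]$ with $d \in \mathcal{D}$); the key observations — $T_2 = [a', b-1]$ by the leftmost-descendant property, and stability confining back-edges from $T_2$ to $T_2 \cup \{a\}$ — are identical.
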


\subsection{Half-connected 2-separations}

To construct the Tutte-decomposition, we do not need to consider all 2-separations, but only those that are totally-nested. By \cref{lem:tot-nested-half-con}, every totally-nested 2-separation is half-connected.
In this section, we give a characterization of the half-connected 2-separations of a 2-connected graph.

We start with type-1 separations.

\begin{proposition} \label{prop:charac-t1-half-con}
    Let $G$ be a 2-connected graph and $T$ be a normal spanning tree of $G$. Suppose that the vertices of $G$ are numbered with a numbering compatible with $T$.
    Let $a, b \in V(G)$ be such that $a$ is a proper ancestor of $b$, and let $a' \in V(G)$ be the only child of $a$ which is an ancestor of $b$. Let $\mathcal{D}$ be the set of children $d$ of $b$ such that $lwpt(d) = a$ and $lwpt_2(d) = b$. Let $\alpha$ be the smallest element of $\mathcal{D}$ and $\omega$ be the maximum vertex which is the descendant of some $d \in \mathcal{D}$.
    Then $(A, B)$ is a half-connected type-1 separation of $G$ with separator $\{a, b\}$ if and only if one of the following holds:

    \begin{itemize}
        \item There exists $d \in \mathcal{D}$ such that, up to exchanging $A$ and $B$, we have $B = \{a, b\} \cup Desc(d)$ and $A = V(G) \setminus (B \setminus \{a, b\})$, and some vertex $v \notin \{a, b\}$ is not a descendant of $d$.
        \item Up to exchanging $A$ and $B$, we have $B = \{a, b\} \cup \bigcup_{d \in \mathcal{D}} Desc(d)$ and $A = V(G) \setminus (B \setminus \{a, b\})$, $|\mathcal{D}| > 1$, some vertex $v \notin \{a, b\}$ is not a descendant of any $d \in \mathcal{D}$ and either $a'=b$ or $a=1$ or some vertex $u \in [a', \alpha-1] \setminus \{b\}$ has a neighbor in $G$ in $[1, a-1] \cup [\omega + 1, n]$.
    \end{itemize}
\end{proposition}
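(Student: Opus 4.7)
The plan is to combine \cref{prop:charac-2-sep} with a case analysis of when each side of a type-1 separation is connected. By \cref{prop:charac-2-sep}, every type-1 2-separation $(A,B)$ with separator $\{a,b\}$ is, up to exchanging $A$ and $B$, of the form $B = \{a,b\} \cup \bigcup_{c \in \mathcal{C}} Desc(c)$ for some nonempty $\mathcal{C} \subseteq \mathcal{D}$. The key observation is that for every $c \in \mathcal{D}$ the conditions $lwpt(c)=a$ and $lwpt_2(c)=b$ imply $L(c) = \{a, b\}$, so every edge of $G$ leaving $Desc(c)$ has its other endpoint in the separator; in particular, for any $\mathcal{S} \subseteq \mathcal{D}$ the components of $G[\bigcup_{c \in \mathcal{S}} Desc(c)]$ are exactly the subtrees $G[Desc(c)]$.

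I split on $|\mathcal{C}|$. When $|\mathcal{C}|=1$, $G[B \setminus A]$ is a single connected subtree, so $(A,B)$ is half-connected; the non-triviality clause of the first bullet reflects $A \setminus B \neq \emptyset$. When $|\mathcal{C}| \geq 2$, $G[B \setminus A]$ is disconnected, so I would need $G[A \setminus B]$ to be connected. If $\mathcal{C} \subsetneq \mathcal{D}$ and $c_0 \in \mathcal{D} \setminus \mathcal{C}$, the set $Desc(c_0) \subseteq A \setminus B$ is already a connected component of $G[A \setminus B]$, so either $A \setminus B = Desc(c_0)$ (the separation is then captured, after exchanging $A$ and $B$, by the first bullet with $d = c_0$) or $G[A \setminus B]$ is disconnected and $(A,B)$ is not half-connected. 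This restricts the second bullet to $\mathcal{C} = \mathcal{D}$.

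For $\mathcal{C} = \mathcal{D}$ with $|\mathcal{D}| \geq 2$, I would decompose $A \setminus B = V_1 \cup V_2 \cup V_3 \cup V_4$ with $V_1 = V(G) \setminus Desc(a)$, $V_2 = Desc(a) \setminus Desc(a') \setminus \{a\}$, $V_3 = Desc(a') \setminus Desc(b)$, and $V_4 = \bigcup\{Desc(c) : c \text{ a child of } b,\ c \notin \mathcal{D}\}$. Using \cref{lem:charac-1-sep} (when $a \neq 1$ every child of $a$ has $lwpt < a$) and the sibling ordering in \cref{dfn:compatible-numbering}, one checks that $V_1 \cup V_2$ is connected when nonempty (empty iff $a=1$), $V_3$ is a connected subtree of $T$ when nonempty (empty iff $a' = b$), every component of $V_4$ attaches to $V_3$ (early children $c < \alpha$, via back-edges landing in $T(a,b)$) or to $V_1$ (late children $c > \max \mathcal{D}$, via back-edges landing in $[1, a-1]$), and the degenerate case $a=1$ together with $a' = b$ forces $A \setminus B = \emptyset$ and is ruled out by non-triviality. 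Consequently, $G[A \setminus B]$ is connected iff $a = 1$, $a' = b$, or $V_3$ and $V_1 \cup V_2$ lie in the same component of $G[A \setminus B]$.

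The main obstacle is the final equivalence: showing that ``$V_3$ and $V_1 \cup V_2$ lie in the same component'' is captured exactly by the combinatorial condition ``some $u \in [a', \alpha-1] \setminus \{b\}$ has a neighbor in $[1, a-1] \cup [\omega+1, n]$''. A linking is either direct, a back-edge $(u, y)$ with $u \in V_3 \cap [a', b-1] \subseteq [a', \alpha-1] \setminus \{b\}$ and $y \in [1, a-1]$, or mediated by a late $V_4$ component $Desc(c)$ with $c > \max \mathcal{D}$: such a component attaches to $V_1$ via $lwpt(c) < a$ and must also carry a back-edge from some $z \in Desc(c) \subseteq [\omega+1, n]$ to an ancestor $u$ on $T[a', b) \subseteq V_3 \cap [a', b-1]$, giving the combinatorial witness. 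For the converse, given a witness $(u, z)$ with $z \in [\omega+1, n]$, one uses sibling ordering and \cref{lem:follow-leftmost} to translate the structural position of $z$ (in the right part of $V_3$, forcing a strictly smaller $lwpt$ somewhere on the leftmost path and hence a direct back-edge from $V_3$ to $[1, a-1]$; or inside a late $V_4$ component, supplying the mediated link) back into an explicit linking of $V_3$ to $V_1 \cup V_2$.
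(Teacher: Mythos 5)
Your approach is essentially the paper's: pass through \cref{prop:charac-2-sep} to get $\mathcal{C}\subseteq\mathcal{D}$, dispatch $|\mathcal{C}|=1$ and $\mathcal{C}\subsetneq\mathcal{D}$ to the first bullet, then for $\mathcal{C}=\mathcal{D}$ decompose $A\setminus B$ into the tree-components of $T-\{a,b\}$ (your $V_1,V_2,V_3,V_4$), observe that early $V_4$-components can only attach to $V_3$, late ones to $V_1$, and siblings of $a'$ to $V_1$, and so reduce connectivity of $G[A\setminus B]$ to linking $V_3=T_{a'}$ and $V_1\cup V_2$. The paper's proof follows exactly this route and the justifications you give for the attachments (the ordering $\leq'$ on siblings, \cref{lem:charac-1-sep}, $L(c)=\{a,b\}$ for $c\in\mathcal{D}$) are the right ones.

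There is, however, a gap in the forward direction of your final equivalence. You claim a linking between $V_3$ and $V_1\cup V_2$ is either a direct back-edge from $V_3\cap[a',b-1]$ to $[1,a-1]$, or mediated by a late $V_4$-component; this is not exhaustive. Your $V_3 = Desc(a')\setminus Desc(b)$ decomposes in the numbering as $[a',b-1]\cup[LD(b)+1, LD(a')]$, and the second interval, when nonempty, can carry a direct back-edge to $[1,a-1]$ that is of neither of your two forms (its endpoint is above $\omega$, not below $\alpha$, so it is not itself the desired witness $u$). The paper disposes of this case up front: if $[LD(b)+1, LD(a')]\neq\emptyset$, pick any $w$ there and take the vertex $u$ of $T[a',w]\cap[a',b-1]$ closest to $w$; the next vertex on $T[a',w]$ lies in $[\omega+1,n]$ and is a \emph{tree}-neighbor of $u$, so the witness exists automatically without any linking analysis. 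Only after this reduction does $V_3 = [a',b-1]$ hold and does your two-case enumeration become exhaustive. Notably, your converse sketch does register ``the right part of $V_3$'' as a case (via \cref{lem:follow-leftmost}), so you noticed the phenomenon there but not in the forward pass; symmetrizing the two directions by adding this preliminary case would close the gap.
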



\begin{proof}
    Suppose first that $(A, B)$ is a half-connected type-1 separation of $G$ with separator $\{a, b\}$.
    By \cref{prop:charac-2-sep}, there exists a nonempty subset $\mathcal{C}$ of $\mathcal{D}$ such that, up to exchanging $A$ and $B$, we have ${B = \{a, b\} \cup \bigcup_{c \in \mathcal{C}} Desc(c)}$ and $A = V(G) \setminus (B \setminus \{a, b\})$, and some vertex $v \notin \{a, b\}$ is not a descendant of any $c \in \mathcal{C}$.
    By definition of $\mathcal{D}$, for every vertex $d \in \mathcal{D}$, we have that $G[Desc(d)]$ is a connected component of $G - \{a, b\}$.

    If $|\mathcal{C}| = 1$, there exists $d \in \mathcal{D}$ such that $B = \{a, b\} \cup Desc(d)$ and $A = V(G) \setminus (B \setminus \{a, b\})$ and we are done.
    If $|\mathcal{C}| \geq 2$ then $G[B \setminus A]$ is not connected, so $G[A \setminus B]$ is connected since $(A, B)$ is half-connected.
    If there exists a vertex $d \in \mathcal{D}$ such that $d \in A \setminus B$ then $G[A \setminus B]$ is the connected component of $d$ in $G - \{a, b\}$, so $A \setminus B = Desc(d)$. Thus, we have $A = \{a, b\} \cup Desc(d)$ and $B = V(G) \setminus (A \setminus \{a, b\})$, and since $B \neq \emptyset$ then some vertex $v' \notin \{a, b\}$ is not a descendant of $d$, so we are also done.
    Otherwise, we have $\mathcal{D} \subseteq B \setminus A$, so $\mathcal{D} = \mathcal{C}$ and $|\mathcal{D}| > 1$. Then, $B = \{a, b\} \cup \bigcup_{d \in \mathcal{D}} Desc(d)$ and $A = V(G) \setminus (B \setminus \{a, b\})$.
    If $a'=b$ or $a=1$ then we are done, so suppose that $a'$ is a proper ancestor of $b$ and $a \neq 1$.
    In that case, the root $r$ of $T$ lies in $A \setminus B$, and $a'$ as well.
    Since $G[A \setminus B]$ is connected then there is a path between $a'$ and $r$ in $G - \{a, b\}$.
    Let $T_{a'}$ be the connected component of $a'$ in $T - \{a, b\}$ and define $T_r$ analogously.
    Observe that $V(T_{a'}) = [a', b-1] \cup [b + ND(b), a' + ND(a')-1]$.
    If $[b + ND(b), a' + ND(a')-1] \neq \emptyset$ then $T_{a'}$ has a vertex $w \in [\omega + 1, n]$ so some vertex $u \in [a', \alpha-1] \setminus \{b\}$ has a neighbor in $G$ in $[1, a-1] \cup [\omega + 1, n]$. To see it, consider the vertex $u \in [a', b-1] \cap T[a', w]$ which is closest to $w$ in $T[a', w]$.
    Thus, we can assume that $V(T_{a'}) = [a', b-1]$.
    Since $T$ is a normal spanning tree and $a' \in T(a, b)$ then one of the following holds: \begin{itemize}
        \item Some vertex $w \in T_{a'}$ is adjacent in $G$ to some vertex $w' \in T_r$. 
        Then, $w \in [a', b-1] \subseteq [a', \alpha - 1]$ and $w'$ is a proper ancestor of $a$, so $w' \in [1, a-1]$.
        \item There is a child $c$ of $b$ in $T$ such that $c \in A$ and there is an edge between $Desc(c)$ and $T_{a'}$, and an edge between $Desc(c)$ and $T_r$.
        If $c \in [a', \alpha - 1]$ then $Desc(c) \subseteq [a', \alpha - 1]$ and the edge between $Desc(c)$ and $T_r$ is an edge between a vertex in $[a', \alpha - 1]$ and a vertex in $[1, a-1]$.
        Otherwise, since $c \in A$ then $c \notin \mathcal{D}$ so $c \geq \omega + 1$ since the elements of $\mathcal{D}$ are consecutive children of $b$.
        Then, the edge between $T_{a'}$ and $Desc(c)$ is an edge between $[a', b-1] \subseteq [a', \alpha - 1]$ and $[\omega + 1, n]$.
    \end{itemize}

    For the converse implication, we handle the two cases separately.
    Suppose first that there exists $d \in \mathcal{D}$ such that, up to exchanging $A$ and $B$, we have $B = \{a, b\} \cup Desc(d)$ and $A = V(G) \setminus (B \setminus \{a, b\})$, and some vertex $v \notin \{a, b\}$ is not a descendant of $d$.
    Then, $(A, B)$ is a type-1 separation by \cref{prop:charac-2-sep}, and is half-connected since ${G[B \setminus A] = G[Desc(d)]}$ is connected.

    Suppose now that up to exchanging $A$ and $B$, we have $B = \{a, b\} \cup \bigcup_{d \in \mathcal{D}} Desc(d)$ and $A = V(G) \setminus (B \setminus \{a, b\})$, $|\mathcal{D}| > 1$, some vertex $v \notin \{a, b\}$ is not a descendant of any $d \in \mathcal{D}$ and either $a'=b$ or $a=1$ or some vertex $u \in [a', \alpha-1] \setminus \{b\}$ has a neighbor in $G$ in $[1, a-1] \cup [\omega + 1, n]$.
    Again, we have that $(A, B)$ is a type-1 separation by \cref{prop:charac-2-sep}.
    We now argue that $G[A \setminus B]$ is connected.
    If $a' \neq b$, let $T_{a'}$ be the connected component of $a'$ in $T - \{a, b\}$, and define $T_r$ analogously if $r \neq a$.
    Note that every connected component of $T - \{a, b\}$ is of the form $T_{a'}, T_r$, $T[Desc(a'')]$ for some sibling $a''$ of $a$, or $T[Desc(c)]$ for some child $c$ of $b$.
    First, consider a child $c$ of $b$ such that $c \notin \mathcal{D}$.
    Since $G$ is 2-connected then $lwpt(c) < b$ by \cref{lem:charac-1-sep}.
    If $lwpt(c) \neq a$ then, since $T$ is a normal spanning tree, there is an edge between $Desc(c)$ and either $T_{a'}$ or $T_r$.
    If $lwpt(c) = a$ then $lwpt_2(c) \neq b$ since $c \notin \mathcal{D}$ so there is an edge between $Desc(c)$ and $T_{a'}$.
    Furthermore, if $c \geq \omega + 1$ then, since the numbering is compatible with $T$ we have $lwpt(c) < a$ so there is an edge between $Desc(c)$ and $T_r$.
    Next, consider a sibling $a''$ of $a'$.
    Since $a$ has at least two children then $a \neq r$ by \cref{lem:charac-1-sep}.
    Again by \cref{lem:charac-1-sep}, we have $lwpt(a'') < a$, so there is an edge in $G$ between $Desc(a'')$ and $T_r$.
    Thus, to prove that $G[A \setminus B]$ is connected, it suffices to prove that $T_{a'}$ and $T_r$ are connected in $G - \{a, b\}$, or that one of them is not defined. 

    If $a'=b$ then $T_{a'}$ is not defined, so we are done, and similarly if $a = 1$ then $T_{r}$ is not defined, so we are done.
    Thus, we can assume that some vertex $u \in [a', \alpha-1] \setminus \{b\}$ has a neighbor $w \in [1, a-1] \cup [\omega + 1, n]$ in $G$.
    We first argue that the connected component $T_u$ of $u$ in $T - \{a, b\}$ is connected to $T_{a'}$ in $G - \{a, b\}$.
    If $u \in [a', b-1]$ then $u \in T_{a'}$ so $T_u = T_{a'}$, so suppose that $u \in [b+1, \alpha - 1]$.
    Then, there exists a child $c$ of $b$ such that $u \in Desc(c)$, and $c < \alpha$.
    Thus, $c \notin \mathcal{D}$, and since the numbering is compatible with $T$ and $c < \alpha$ then one of $lwpt(c), lwpt_2(c)$ is in $T(a, b)$. This means that there is an edge in $G$ between $Desc(c)$ and $T(a, b)$, so $T_u$ and $T_{a'}$ are connected in $G - \{a, b\}$.
    We now argue that the connected component $T_w$ of $w$ in $T-\{a, b\}$ is connected to $T_r$ in $G - \{a, b\}$.
    We can assume that $w \notin T_r$, so $w \in [\omega + 1, n]$.
    If $T_w$ is of the form $T[Desc(a'')]$ for some sibling $a''$ of $a'$ then we already showed that there is an edge between $T_w$ and $T_r$.
    If $T_w$ is of the form $T[Desc(c)]$ for some child $c$ of $b$ then $c \geq \omega$ so we already showed that there is an edge between $T_w$ and $T_r$.
    Otherwise, we have $T_w = T_{a'}$, so $w$ is a descendant of $a'$ and not of $b$, and $w > \omega$. Thus, $LD(b) < LD(a')$ so $b$ is not a leftmost descendant of $a'$. However, by \cref{lem:follow-leftmost}, there exists a leftmost descendant $x$ of $a'$ which is adjacent to $lwpt(a')$. Then, we have $x \in T_{a'}$. However, since $a \neq 1$ and $G$ is 2-connected then $lwpt(a') < a$ by \cref{lem:charac-1-sep}. Thus, $lwpt(a') \in T_r$ and is adjacent to $x \in T_{a'}$.
    We proved that $T_u$ is connected to $T_{a'}$ in $G- \{a, b\}$ and that $T_w$ is connected to $T_r$ in $G - \{a, b\}$.
    Thus, $T_{a'}$ and $T_r$ are connected in $G - \{a, b\}$, so $G[A \setminus B]$ is connected.
\end{proof}

We now consider type-2 separations.

\begin{proposition} \label{prop:charac-t2-half-con}
    Let $G$ be a 2-connected graph and $T$ be a normal spanning tree of $G$ with root $r$. Suppose that the vertices of $G$ are numbered with a numbering compatible with $T$.
    Let $a < b \in V(G)$ and let $c_1 < \ldots < c_k$ denote the children of $b$.
    Let $\mathcal{D}$ be the set of children $d$ of $b$ such that $lwpt(d) = a$ and $lwpt_2(d) = b$.
    Then, $(A, B)$ is a half-connected type-2 separation of $G$ with separator $\{a, b\}$ if and only if all of the following holds:

    \begin{enumerate}
        \item there exists a child $a'$ of $a$ such that $b$ is a proper leftmost descendant of $a'$; \label{item:i}
         \item there exists $i \in \{0, 1, \ldots, k\}$ such that, if $T_2$ denotes the connected component of~$T - \{a, b\}$ containing $a'$, up to exchanging $A$ and $B$, we have $B = \{a, b\} \cup V(T_2) \cup \bigcup_{j \leq i} Desc(c_i)$ and $A = V(G) \setminus (B \setminus \{a, b\})$. \label{item:ii}
        \item $a \neq r$; \label{item:iii}
        \item $T[a, b]$ is stable; \label{item:iv}
        \item for every $j \leq i, lwpt(c_j) \geq a$ and for every $j > i, hgpt(c_j) \leq a$; \label{item:v}
        \item there do not exist $j,k$ such that $j \leq i < k$ and $c_j, c_k \in \mathcal{D}$; \label{item:vi}
       
    \end{enumerate}
\end{proposition}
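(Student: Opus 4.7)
The plan is to combine Proposition~\ref{prop:charac-2-sep}(item~2) with an analysis of the connected components of $G - \{a, b\}$ provided by \cref{lem:cc-type-2}. Item~2 of Proposition~\ref{prop:charac-2-sep} already characterizes type-2 separations in terms of a set $\mathcal{C}$ of children of $b$ satisfying certain $lwpt$/$hgpt$ conditions. The extra work is to show that half-connectedness is equivalent to $\mathcal{C}$ being an initial segment of the children in the natural ordering (which is item~(ii)) together with (vi).

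For the forward direction, apply Proposition~\ref{prop:charac-2-sep} to the half-connected type-2 separation $(A,B)$; this immediately yields (i), (iii), (iv) and produces a set $\mathcal{C}$ of children of $b$ with $lwpt(c) \geq a$ for $c \in \mathcal{C}$ and $hgpt(c) \leq a$ for $c \notin \mathcal{C}$. The crux is to classify the children $c_1 < \cdots < c_k$ of $b$ into four groups based on $(lwpt, lwpt_2)$: (G1) those with $lwpt(c) > a$; (G2) those with $lwpt(c) = a$ and $lwpt_2(c) < b$; (G3) those in $\mathcal{D}$ (i.e.\ $lwpt(c)=a$, $lwpt_2(c)=b$); (G4) those with $lwpt(c) < a$. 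Using Definition~\ref{dfn:compatible-numbering}(iii) and the definition of $\leq'$, these four groups appear as consecutive intervals of $c_1 < \cdots < c_k$ in this exact order. A child in (G1) or (G2) has $hgpt(c) > a$ (since $hgpt \geq lwpt$ in (G1) and $hgpt \geq lwpt_2$ in (G2)), so it is forced into $\mathcal{C}$; a child in (G4) satisfies $lwpt(c) < a$, hence is excluded from $\mathcal{C}$. Thus $\mathcal{C}$ consists of $(G1) \cup (G2)$ together with some subset of $\mathcal{D}$.

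Next, \cref{lem:cc-type-2} applies (its hypotheses follow from (i), (iii), (iv), and the classification, which ensures every child $c$ of $b$ satisfies $lwpt(c) \geq a$ or $hgpt(c) \leq a$). The components of $G - \{a, b\}$ are $C_{a'}$, $C_r$, and $G[Desc(d)]$ for each $d \in \mathcal{D}$. A direct check using $V(T(a,b)) \subseteq V(T_2)$ shows that $C_{a'}$ absorbs the subtrees of the children in $(G1) \cup (G2)$ (their first or second lowpoint lies in $V(T(a,b))$) and that $C_r$ absorbs those in (G4). Consequently $B \setminus \{a,b\}$ decomposes into $1 + |\mathcal{D} \cap \mathcal{C}|$ components and $A \setminus \{a,b\}$ into $1 + |\mathcal{D} \setminus \mathcal{C}|$ components. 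Half-connectedness therefore forces $\mathcal{D} \cap \mathcal{C} = \emptyset$ or $\mathcal{D} \setminus \mathcal{C} = \emptyset$; in either case $\mathcal{C} = \{c_1, \ldots, c_i\}$ for the appropriate $i$, which is (ii). Item (v) is inherited from the $lwpt$/$hgpt$ conditions on $\mathcal{C}$, and (vi) is precisely the dichotomy just obtained.

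For the backward direction, setting $\mathcal{C} = \{c_1, \ldots, c_i\}$, items (i), (iii), (iv), (v) supply the hypotheses of Proposition~\ref{prop:charac-2-sep}(item~2), so $(A,B)$ is a type-2 separation. Apply \cref{lem:cc-type-2} once more; by (vi) either no $c_j \in \mathcal{D}$ satisfies $j \leq i$ (so $B \setminus \{a,b\} = C_{a'}$ is connected) or no $c_k \in \mathcal{D}$ satisfies $k > i$ (so $A \setminus \{a,b\} = C_r$ is connected), which gives half-connectedness. The main obstacle is the combinatorial step in the forward direction: showing that (G1)--(G4) appear as consecutive blocks of children of $b$ (which is exactly why $\leq'$ is defined using $(-lwpt, lwpt_2)$) and that the only freedom in choosing $\mathcal{C}$ concerns $\mathcal{D}$. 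Once this structure is pinned down, half-connectedness completely determines $\mathcal{C}$.
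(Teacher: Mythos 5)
Your proposal is correct and follows essentially the same approach as the paper: apply \cref{prop:charac-2-sep} to extract the set $\mathcal{C}$, use \cref{lem:cc-type-2} to identify the connected components of $G - \{a, b\}$, and exploit the $\leq'$-ordering of siblings from \cref{dfn:compatible-numbering}~\ref{item:order-siblings} to show $\mathcal{C}$ must be an initial segment. The only cosmetic difference is that you make the ordering argument explicit by laying out the four groups (G1)--(G4) up front and counting components as $1 + |\mathcal{D}\cap\mathcal{C}|$ and $1 + |\mathcal{D}\setminus\mathcal{C}|$, whereas the paper reaches the same conclusion via a case analysis on whether $c_i$ (the largest index in $\mathcal{C}$) belongs to $\mathcal{D}$; both routes rest on the same observation that (G1) and (G2) are forced into $\mathcal{C}$ by $hgpt > a$, (G4) is forced out by $lwpt < a$, and half-connectedness pins down the $\mathcal{D}$-part.
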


\begin{proof}
    Suppose first that $(A, B)$ is a half-connected type-2 separation of $G$, with separator $\{a, b\}$.
    By \cref{prop:charac-2-sep}, there exists a child $a'$ of $a$ such that $b$ is a proper leftmost descendant of $a'$, and there exists a set $\mathcal{C}$ of children of $b$ such that, up to exchanging $A$ and $B$, we have ${B = \{a, b\} \cup V(T_2) \cup \bigcup_{c \in \mathcal{C}} Desc(c)}$ and $A = V(G) \setminus (B \setminus \{a, b\})$. Furthermore, $a \neq r$  ; $T[a, b]$ is stable ; and for every $c \in \mathcal{C}$, we have $lwpt(c) \geq a$ and for every child $d$ of $b$ which is not in $\mathcal{C}$, we have $hgpt(d) \leq a$. 
    Thus, it only remains to prove that there exists $i \in \{0, 1, \ldots, k\}$ such that $\mathcal{C} = \{c_1, \ldots, c_i\}$, and \cref{item:vi}.
    
    By \cref{lem:cc-type-2}, the connected components of $G - \{a, b\}$ are exactly the following: the component containing $a'$, the component containing $r$, and for every $d \in \mathcal{D}$, the component $G[Desc(d)]$.
    Furthermore, we have $a' \in B \setminus A$ and $r \notin B$ so $r \in A \setminus B$.
    Since $(A, B)$ is half-connected, the other connected components of $G - \{a, b\}$ all live on the same side of the separation. Thus, either $\mathcal{D} \subseteq A \setminus B$ or $\mathcal{D} \subseteq B \setminus A$.
    
    We now prove that there exists $i \in \{0, 1, \ldots, k\}$ such that $\mathcal{C} = \{c_1, \ldots, c_i\}$.
    If $\mathcal{C} = \emptyset$ it suffices to set $i = 0$. If $\mathcal{C} \neq \emptyset$, let $i$ be maximal such that $c_i \in \mathcal{C}$. By \cref{prop:charac-2-sep}, we have $lwpt(c_i) \geq a$. We consider two cases.
    \begin{itemize}
        \item If $c_i \in \mathcal{D}$ then $\mathcal{D} \subseteq B \setminus A$. 
        Let $j < i$ and consider $c_j$. By \cref{obs:numbering}, we have $lwpt(c_j) \geq lwpt(c_i) = a$. If $c_j \in \mathcal{D}$ then $c_j \in B$ and thus $c_j \in \mathcal{C}$. If $c_j \notin \mathcal{D}$, either $lwpt(c_j) > a$ or $lwpt_2(c_j) < b$. 
        In both cases, we have $hgpt(c_j) > a$ and therefore $c_j \in \mathcal{C}$ by \cref{prop:charac-2-sep}.
        \item If $c_i \notin \mathcal{D}$, let $j < i$ and consider $c_j$. Since the numbering of the vertices is compatible with $T$, we have that ${(lwpt(c_j), lwpt_2(c_j)) \leq' (lwpt(c_i), lwpt_2(c_i))}$. By definition of $\leq'$, either we have $lwpt(c_j) > lwpt(c_i)$, or we have $lwpt(c_j) = lwpt(c_i)$ and $lwpt_2(c_j) \leq lwpt_2(c_i)$. 
        Since $c_i \notin \mathcal{D}$ then $(lwpt(c_i), lwpt_2(c_i)) \neq (a, b)$, so one of them is in $T(a, b)$ since $lwpt(c_i) \geq a$. Thus, one of $lwpt(c_j), lwpt_2(c_j)$ is in $T(a, b) \subseteq V(T_2) \subseteq B \setminus A$. This implies $hgpt(c_j) > a$, so $c_j \in \mathcal{C}$ by \cref{prop:charac-2-sep}. 
    \end{itemize}
    In both cases, we indeed have $\mathcal{C} = \{c_1, \ldots , c_i\}$.
    Finally, since either $\mathcal{D} \subseteq A \setminus B$ or $\mathcal{D} \subseteq B \setminus A$ then there does not exist $j \leq i < k$ such that $c_j, c_k \in \mathcal{D}$.

    Suppose now that \cref{item:i,item:ii,item:iii,item:iv,item:v,item:vi} hold. 
    Setting $\mathcal{C} = \{c_1, \ldots, c_i\}$, \cref{prop:charac-2-sep} implies that $(A, B)$ is a type-2 separation of $G$ with separator $\{a, b\}$.
    Again, by \cref{lem:cc-type-2}, the connected components of $G - \{a, b\}$ are exactly the following: the component containing $a'$, the component containing $r$, and for every $d \in \mathcal{D}$, the component $G[Desc(d)]$.
    However, by \cref{item:vi}, either $\mathcal{D} \subseteq A \setminus B$ or $\mathcal{D} \subseteq B \setminus A$.
    In the first case, $G[B \setminus A]$ is the connected component of $G - \{a, b\}$ containing $a'$, and thus is connected.
    In the second case, $G[A \setminus B]$ is the connected component of $G - \{a, b\}$ containing $r$, and thus is connected.
    In both cases, $(A, B)$ is indeed half-connected.
\end{proof}

\begin{corollary} \label{cor:cyclic-intervals}
    Let $G$ be a 2-connected graph and $T$ be a normal spanning tree of $G$. Suppose that the vertices of $G$ are numbered with a numbering compatible with $T$.
    If $(A, B)$ is a half-connected 2-separation of $G$ then $A \setminus B$ and $B \setminus A$ form cyclic intervals of $[n] \setminus (A \cap B)$.
\end{corollary}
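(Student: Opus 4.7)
The plan is to case-split on the type of $(A, B)$ using \cref{prop:charac-t1-half-con,prop:charac-t2-half-con}, which describe the structure of half-connected type-1 and type-2 separations. In each case, I exhibit $B \setminus A$ (up to swapping $A$ and $B$) as an explicit union of intervals of $[n]$ and check that it forms a cyclic interval of $[n] \setminus \{a,b\}$ in the natural cyclic order; the set $A \setminus B$ is then the complement of a cyclic interval in a cycle, hence automatically a cyclic interval too.

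For type-1 separations, \cref{prop:charac-t1-half-con} leaves two subcases. Either $B \setminus A = Desc(d) = [d, LD(d)]$ for some $d \in \mathcal{D}$, which is clearly an interval of $[n]$ disjoint from $\{a,b\}$ because $a < b < d$; or $B \setminus A = \bigcup_{d \in \mathcal{D}} Desc(d)$. In the second subcase the crucial fact is that every $d \in \mathcal{D}$ has $(lwpt(d), lwpt_2(d)) = (a, b)$, so by the definition of $\leq'$ and \cref{dfn:compatible-numbering} the elements of $\mathcal{D}$ form a consecutive block among the children of $b$ in numerical order; since consecutive siblings have adjacent descendant intervals, $\bigcup_{d \in \mathcal{D}} Desc(d)$ is a single interval of $[n]$, again disjoint from $\{a,b\}$.

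For type-2 separations, \cref{prop:charac-t2-half-con} gives $B \setminus A = V(T_2) \cup \bigcup_{j \leq i} Desc(c_j)$, where $a'$ is the child of $a$ of which $b$ is a proper leftmost descendant, and $c_1 < \ldots < c_k$ are the children of $b$. The key step is the identity $LD(b) = LD(a')$: since $b$ is a leftmost descendant of $a'$, it is obtained by iteratively following left (i.e., largest) children from $a'$, and this operation preserves $LD$. Consequently $V(T_2) = Desc(a') \setminus Desc(b) = [a', b-1]$, while $\bigcup_{j \leq i} Desc(c_j) = [b+1, LD(c_i)]$ (empty if $i = 0$), using $c_1 = b+1$ and that consecutive children of $b$ have adjacent descendant intervals. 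Hence $B \setminus A = [a', b-1] \cup [b+1, LD(c_i)]$, which in the cyclic order on $[n] \setminus \{a,b\}$ is a single cyclic interval passing through the removed vertex $b$.

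The main subtlety lies in the type-2 case with $0 < i < k$, where the ``missing'' descendants $[LD(c_i)+1, LD(b)]$ belong to $A \setminus B$, and one might worry this produces a second gap in $B \setminus A$. The identity $LD(b) = LD(a')$ is precisely what eliminates a trailing interval $[LD(b)+1, LD(a')]$ that would otherwise appear in $V(T_2)$, so in fact the only discontinuity in $B \setminus A$ occurs at $b$ itself, and this vanishes once we view things cyclically in $[n] \setminus \{a,b\}$.
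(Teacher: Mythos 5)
Your proposal is correct and follows essentially the same route as the paper's proof: case-split on type-1 vs.\ type-2 via \cref{prop:charac-t1-half-con,prop:charac-t2-half-con}, observe that in the type-1 case $B\setminus A$ is a single interval because the elements of $\mathcal D$ are consecutive siblings (by \cref{dfn:compatible-numbering}~\ref{item:order-siblings}), and in the type-2 case that $V(T_2)=[a',b)$ and $\bigcup_{j\le i}Desc(c_j)=(b,LD(c_i)]$, so the only gap is at $b$ itself. The one difference is that you supply the explicit justification $LD(b)=LD(a')$ for the identity $V(T_2)=[a',b)$ — the paper states this as an observation without spelling it out — but the underlying argument and its structure are identical.
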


\begin{proof}
    Let $A \cap B = \{a < b\}$ and let $\mathcal{D}$ be the set of children $d$ of $b$ such that $lwpt(d) = a$ and $lwpt_2(d) = b$.
    Since the numbering is compatible with $T$, by \cref{dfn:compatible-numbering}~\ref{item:order-siblings}, the elements of $\mathcal{D}$ form an interval among the children of $b$.
    
    If $(A, B)$ is type-1, it then follows immediately from \cref{prop:charac-t1-half-con} and the fact that the numbering is compatible with $T$ that $B \setminus A$ is an interval of $[n] \setminus \{a, b\}$, and therefore $A \setminus B$ is also a cyclic interval of $[n] \setminus \{a, b\}$.
    
    If $(A, B)$ is type-2, we use \cref{prop:charac-t2-half-con}. With the same notations as in \cref{prop:charac-t2-half-con}, there exists $i \in \{0, 1, \ldots, k\}$ such that ${B = \{a, b\} \cup V(T_2) \cup \bigcup_{j \leq i} Desc(c_i)}$.
    Observe then that $V(T_2) = [a', b)$ in the numbering and that we have ${\bigcup_{j \leq i} Desc(c_i) = (b, c_i + ND(c_i) - 1]}$ if $i > 0$, and is empty otherwise.
    In both cases, $B \setminus A$ is an interval of $[n] \setminus \{a, b\}$, and therefore $A \setminus B$ is also a cyclic interval of $[n] \setminus \{a, b\}$.
\end{proof}

\begin{remark} \label{rem:compatible}
    It now follows immediately from \cref{cor:cyclic-intervals,,lem:tot-nested-half-con} that if $G$ is 2-connected and $T$ is a normal spanning tree of $G$ then every numbering of $V(G)$ that is compatible with $T$ is also compatible with $G$.
\end{remark}

\section{Computing half-connected 2-separations} \label{sec:4}

The goal of this section is to prove the following.

\begin{theorem} \label{thm:compute-pot-nested}
    There is an $O(n + m)$-time algorithm which, given a 2-connected graph $G$, computes a set $\mathcal{S}$ of half-connected 2-separations of $G$ that contains all totally-nested separations.
\end{theorem}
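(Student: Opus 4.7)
The plan is to build $\mathcal{S}$ in two passes. First I run a DFS on~$G$ to obtain a normal spanning tree~$T$ rooted at an arbitrary $r$, compute $lwpt, lwpt_2, hgpt, N_{\min}, N_{\max}, ND$ in the usual bottom-up sweep, and reorder the children of every vertex according to the total ordering $\leq'$ on $(lwpt, lwpt_2)$ pairs. This gives, in time $O(n+m)$, a numbering of $V(G)$ that is compatible with~$T$ in the sense of \cref{dfn:compatible-numbering}, together with the quantities $witn(v)$, $LD(v)$ and $maxhgpt(c)$ for each vertex~$v$ and each child~$c$. I then set up the standard range-minimum preprocessing supporting $O(1)$-time queries for $stab(a,b)$: by \cref{lem:witn-stab}, the path $T[a,b]$ is stable iff $\min_{v \in T(a,b)} witn(v) \geq a$, and since the maximal leftmost paths partition $E(T)$ into contiguous index ranges, the ``lowest ancestor $\alpha$ of $a$ with $T[\alpha,b]$ still stable'' is obtained from an RMQ on a linear array of $witn$-values.

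The first pass enumerates the half-connected type-1 separations via \cref{prop:charac-t1-half-con}. For every vertex~$b$, I group the children $d$ of $b$ with $lwpt_2(d)=b$ by the value $a=lwpt(d)$ (a proper ancestor of~$b$). For each such group at~$a$ the first bullet of \cref{prop:charac-t1-half-con} produces, for every $d$ in the group such that $Desc(d) \cup \{a,b\}$ is a proper subset of $V(G)$, one half-connected type-1 separation; this contributes a total of $O(n)$ separations over all $b$. The second bullet produces at most one further candidate per group, whose disjunctive side-condition can be checked by a constant number of queries into the precomputed $N_{\min},N_{\max}$ arrays restricted to the interval $[a',\alpha-1]$, hence again in amortised constant time per group.

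The second pass produces a \emph{maximal nested} family of half-connected type-2 separations; by maximality such a family automatically contains every totally-nested type-2 separation, since those are nested with every other 2-separation. By \cref{prop:charac-t2-half-con}, each type-2 half-connected separation at separator $\{a,b\}$ corresponds to a stable leftmost path $T[a,b]$ (with $a\neq r$) together with an index $i\in\{0,\ldots,k\}$ among the children $c_1<\cdots<c_k$ of~$b$ satisfying conditions~(v) and~(vi). I sweep bottom-up along the maximal leftmost paths of~$T$; for each such path and each candidate~$b$ on it, I use the $stab$-query to locate the highest admissible~$a$, and then read off the admissible indices~$i$ from the $hgpt$, $lwpt$ and $\mathcal{D}$ information, storing every admissible candidate as the constant-size encoding $(a,b,\textbf{f},\textbf{l},c_m,c_M)$ of \cref{subsec:representing-separations}. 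By \cref{cor:cyclic-intervals}, two such encodings represent nested separations iff their cyclic intervals nest, which is a constant-time test on the six parameters, so candidates can be added greedily to build a nested family.

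The main obstacle is carrying out this second pass within a linear budget while guaranteeing maximality, because the number of admissible triples $(a,b,i)$ can a priori be super-linear along a single leftmost path. I propose to handle this by processing the path top-down, maintaining the ``frontier'' of currently selected nested candidates together with the admissible range of~$i$ at the next~$b$; once a candidate is committed, it restricts the admissible $(\textbf{f},\textbf{l})$-intervals of all subsequent candidates, so the window of indices to inspect at the next vertex can be updated in amortised $O(1)$ time using the precomputed prefix maxima of $hgpt$ and $lwpt$ among the children of~$b$. Taking $\mathcal{S}$ to be the union of the candidates output by the two passes yields the required set of half-connected 2-separations containing all totally-nested separations, and the total running time is $O(n+m)$, proving \cref{thm:compute-pot-nested}.
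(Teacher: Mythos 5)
Your overall strategy mirrors the paper's: precompute $lwpt$, $lwpt_2$, $hgpt$, $witn$, a compatible numbering, and $O(1)$-time $stab$ queries; then enumerate all half-connected type-1 separations in one pass; then compute a nested family of half-connected type-2 separations in a second pass along the maximal leftmost paths. The type-1 pass and the correctness observation (a totally-nested separation, being nested with everything, must belong to any maximal nested set of half-connected 2-separations) are both sound. The claim that nestedness of two half-connected 2-separations given by their six-parameter encodings can be tested in $O(1)$ via cyclic-interval containment is also fine, once one notes that two half-connected 2-separations with the same separator can never cross.

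The genuine gap is in the second pass. You correctly identify the obstacle --- that there can be super-linearly many admissible candidate pairs $(a,b,i)$ along a single leftmost path --- but the proposed fix ("maintaining the frontier of currently selected nested candidates together with the admissible range of $i$ at the next $b$, updated in amortised $O(1)$ using prefix maxima") does not resolve it. Two specific things are missing. First, for a fixed $b$ there may be several admissible values of $a$ (not just the "highest admissible $a$" obtained from a single $stab$ query); one must iterate downward through candidate $a$'s, using $stab(p(a),b)$ or $stab(lwpt(c_{i_2}),b)$ to jump past unstable or inadmissible stretches, and one must argue that the total number of such jumps over the whole path is controlled. Second, the amortisation needs an explicit accounting: the paper bounds the work per $b$ by $O(\deg(b))$ (monotone advance of two pointers over the children of $b$) plus $O(1)$ per emitted tuple, and then bounds the total number of emitted tuples on a path of length $\ell$ by $O(\ell)$ via a laminar-family argument on the intervals $(a,b)$. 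Without these two ingredients --- the jump mechanism and the laminar-family bound --- the claimed $O(n+m)$ running time is unsupported, and the claim of maximality of the nested family is also not established (one must show that every admissible candidate dropped by the sweep is actually crossed by a retained one, which requires a careful invariant, not just "the frontier restricts subsequent candidates"). These are precisely the points where the paper invests significant effort, and the proposal as written does not bridge them.
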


Note that we cannot hope to compute all half-connected 2-separations of $G$ in time $O(n + m)$ since there could be quadratically many: if $G$ is a cycle then every separation is a half-connected type-2 separation.

We start by mentioning some algorithmic results that we will need for various computations. We then describe in detail the precomputation that we perform on a given input graph. We next focus on stability and show that several queries related to stability can be answered in constant time after a linear-time precomputation. Finally, we explain how to compute all half-connected type-1 separations and a set of half-connected type-2 separations containing all the totally-nested ones.

\subsection{Algorithmic tools}

First, we consider the problem of sorting a list $L$ of $n$ integers. It is well-known that every comparison-based sorting algorithm runs in $\Omega(n\log(n))$. However, if we know that all entries in $L$ are small then we can sort $L$ in linear time using \emph{counting sort}.

\begin{thm}[\cite{Cormen}] \label{thm:linear-sort}
    There is an $O(n + k)$-time algorithm which, given an integer $k$ and a list $L$ of $n$ integers in the range between $-k$ and $k$, sorts $L$.
\end{thm}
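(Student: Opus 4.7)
The statement is the classical counting-sort bound from \cite{Cormen}, so the plan is simply to instantiate counting sort and check that the range $[-k,k]$ presents no obstacle. Concretely, I would first reduce to the case of nonnegative keys by replacing each entry $x \in L$ by $x + k \in [0, 2k]$; this is a single $O(n)$ pass that is clearly reversible at the end by subtracting $k$ from each output.

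Next, I would allocate an auxiliary counter array $C$ of size $2k+1$, initialized to zero in time $O(k)$. I would then make one pass over $L$ incrementing $C[x+k]$ for each $x \in L$, so that after this pass $C[i]$ records the multiplicity of key $i$ in the shifted list; this pass costs $O(n)$. To produce the sorted output, I would either (i) scan $C$ from left to right and write $C[i]$ copies of $i-k$ to the output, or (ii) replace $C$ by its prefix sums and then scan $L$ once more in reverse, placing each element at position $C[x+k]$ and decrementing the counter (which also yields a stable sort). Either variant runs in time $O(n+k)$, matching the claimed bound.

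The analysis is immediate: allocation and initialization of $C$ are $O(k)$, each pass over $L$ is $O(n)$, and the final scan over $C$ (or over $L$) is $O(n+k)$. No step in the plan is really an obstacle; the only thing worth flagging is that the algorithm requires random-access memory of size $\Theta(k)$, which is fine under the standard RAM model assumed throughout the paper. Since this result is used only as a black box in the sequel (for sorting small integer keys derived from vertex numbers in $[1,n]$ and similar small-range attributes), no additional structural argument is needed beyond this direct appeal to counting sort.
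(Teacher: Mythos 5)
Your proposal is correct and follows exactly the standard counting-sort argument that the paper invokes by citing \cite{Cormen}; the paper itself gives no proof beyond the citation, and your shift-by-$k$ reduction to the nonnegative range $[0,2k]$ plus the usual $O(n+k)$ counting passes is precisely what is needed. Nothing to add.
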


Counting sort is a \emph{stable sort}. 
This means that if several entries of $L$ are equal then their relative order after the sort is the same as their relative order before the sort.
Thus, if $L$ is a list of length $n$ of $t$-tuples, which we want to sort lexicographically, we can do it using counting sort, as follows.
We start by sorting the elements of $L$ according to their last coordinate with counting sort, then according to their penultimate coordinate and so on, before finally sorting them according to their first coordinate.
After these $t$ steps of counting sort, $L$ will be sorted lexicographically.

\begin{corollary} \label{cor:sort-lex}
    There is an $O(t \cdot (n + k))$-time algorithm which, given an integer $k$ and a list $L$ of length $n$ of $t$-tuples such that every entry of every tuple is in the range between $-k$ and $k$, sorts $L$ lexicographically.
\end{corollary}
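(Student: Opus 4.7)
The plan is to reduce the lexicographic sort to $t$ successive applications of the counting-sort algorithm from \cref{thm:linear-sort}, processing coordinates from least significant to most significant. Concretely, I would proceed as follows: starting with $i = t$ and going down to $i = 1$, run counting sort on the current list $L$ using the $i$-th coordinate of each tuple as the key. Each individual sort costs $O(n + k)$ by \cref{thm:linear-sort} since every coordinate lies in $[-k, k]$, so the total running time is $O(t(n+k))$, as required.

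The correctness of this so-called LSD (least significant digit) radix sort is where the \emph{stability} of counting sort, noted in the paragraph following \cref{thm:linear-sort}, is crucial. I would establish by induction on $j$ that, after performing the sort with keys from coordinate $t$ down to coordinate $t - j + 1$, the list $L$ is lexicographically sorted with respect to the suffix $(x_{t-j+1}, \ldots, x_t)$ of each tuple. The base case $j = 1$ is immediate. For the inductive step, consider two tuples $u$ and $v$ at positions where $u$ precedes $v$ after the sort on coordinate $t - j + 1$. If $u_{t-j+1} < v_{t-j+1}$ then $u$ is lexicographically smaller on the suffix $(x_{t-j+1}, \ldots, x_t)$, regardless of the remaining coordinates. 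If $u_{t-j+1} = v_{t-j+1}$ then stability guarantees that $u$ preceded $v$ before this pass, so by the induction hypothesis $u$ was already lexicographically no larger than $v$ on the suffix $(x_{t-j+2}, \ldots, x_t)$, and equality on the current coordinate makes $u$ no larger on $(x_{t-j+1}, \ldots, x_t)$ either. After $j = t$ iterations, $L$ is sorted lexicographically on the full tuples.

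There is no real obstacle here: the entire argument is a textbook application of stable radix sort, and both ingredients (linear-time counting sort and its stability) are explicitly quoted from \cref{thm:linear-sort} and the discussion immediately afterwards. The only item worth being careful about is ensuring that coordinates are sorted in the correct order (from $t$ down to $1$, not the reverse), since processing from the most significant coordinate first would destroy the invariant on later passes.
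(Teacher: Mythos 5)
Your proof is correct and is exactly the approach the paper uses: the paragraph preceding the corollary already describes applying stable counting sort from the last coordinate to the first (LSD radix sort), and your induction argument simply makes the standard correctness claim explicit.
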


The following two results essentially say that computing all lowpoints and all highpoints can be done in linear time. 

\begin{lemma}[\cite{Kosinas23}] \label{lem:precomp-lwpts}
    There is an $O(k \cdot (n+m))$-time algorithm which, given an integer $k$ and a pair $(G, T)$ where $T$ is a normal spanning tree of a graph $G$, computes the first $k$ lowpoints of all vertices of $G$.
\end{lemma}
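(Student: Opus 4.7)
The plan is to compute all first $k$ lowpoints by a post-order dynamic programming on $T$. For every non-leaf vertex $v$ with children $c_1, \ldots, c_d$ in $T$, the definition of $L(\cdot)$ yields directly
\[
L(v) \;=\; B(v) \;\cup\; \bigcup_{i=1}^{d}\bigl(L(c_i)\setminus\{v\}\bigr),
\]
where $B(v)$ denotes the set of proper ancestors of $v$ joined to $v$ by a back-edge. Since the $k$ smallest elements of $L(v)$ are determined by the $k$ smallest elements of each $L(c_i)$ together with $B(v)$, I maintain for every vertex $v$ a sorted list $\ell(v)$ of length $\min(k,|L(v)|)$ storing those smallest elements, padded with the placeholder value $v$ whenever $|L(v)| < k$ to match the definition of $lwpt_k$.

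As a preprocessing step, I compute for every $v$ the elements of $B(v)$ in increasing order: I collect the pairs $(v,u)$ for every back-edge $(v,u)$ with $u$ a proper ancestor of $v$ and sort them lexicographically by counting sort in time $O(n+m)$ using \cref{cor:sort-lex}. I then traverse $T$ in post-order. At each vertex $v$, I initialize a sorted list $M$ with the first $\min(k,|B(v)|)$ elements of $B(v)$; for every child $c$ of $v$, I delete $v$ from $\ell(c)$ if it is present, and merge the resulting sorted list into $M$ by a two-finger merge that deduplicates on the fly and truncates the output to size at most $k$. After all children are processed, I set $\ell(v) := M$.

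Each merge of two sorted lists of size at most $k$ runs in time $O(k)$, so the work spent at a vertex $v$ is $O(k \cdot d_T(v) + \min(k, b(v)))$, where $d_T(v)$ is the number of children of $v$ in $T$ and $b(v) = |B(v)|$. Summing over all vertices and using $\sum_v d_T(v) = n-1$ and $\sum_v b(v) \le m$ gives a total running time of $O(k(n+m))$ on top of the linear-time preprocessing.

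The main point that needs a short justification is that truncating every intermediate $M$ to size $k$ is safe, i.e.\ that if $S_1$ is the set processed so far and $L_2$ the sorted list for the next child, then the $k$ smallest elements of $S_1 \cup L_2$ coincide with the $k$ smallest elements of $\mathrm{top}_k(S_1) \cup L_2$. This holds because any element $x \in S_1 \setminus \mathrm{top}_k(S_1)$ admits at least $k$ strictly smaller elements already inside $\mathrm{top}_k(S_1)$, and those survive in $\mathrm{top}_k(S_1) \cup L_2$, so $x$ cannot belong to the $k$ smallest of $S_1 \cup L_2$ either. Without this observation the size of $M$ could grow to $\Theta(\sum_i |L(c_i)|)$ at a single vertex, and the $O(k(n+m))$ bound would fail.
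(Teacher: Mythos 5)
Your proposal is correct and matches the approach the paper sketches: the paper does not prove this lemma but cites \cite{Kosinas23} and remarks that it is ``easily done by dynamic programming, starting from the leaves of $T$,'' which is precisely what you carry out. Your recurrence $L(v) = B(v) \cup \bigcup_i (L(c_i)\setminus\{v\})$, the truncation-to-$k$ argument, and the running-time bookkeeping are all sound; just make sure in the implementation that the lists you merge contain only genuine elements of $L(c_i)$ and that the padding by $v$ is applied only to the final output for $v$, not propagated upward, since otherwise a padding value $c_i$ from a child's list could leak into $\ell(v)$.
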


\begin{lemma}[\cite{Kosinas23}] \label{lem:precomp-hgpts}
    There is an $O(n+m)$-time algorithm which, given a pair $(G, T)$ where $T$ is a normal spanning tree of a graph $G$, computes the highpoint of all vertices of $G$.
\end{lemma}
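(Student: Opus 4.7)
The plan is to run a single bottom-up DFS on $T$ that computes, for each vertex $v$, the two deepest distinct elements of $L(v)$ (undefined if $|L(v)| < 2$), which I denote $t_1(v)$ and $t_2(v)$; here $L(v)$ is the set of strict ancestors of $v$ that are adjacent in $G$ to some descendant of $v$. Once $t_1(v)$ and $t_2(v)$ are known, $hgpt(v)$ is read off in constant time: return $t_1(v)$ if it is defined and $t_1(v)\neq p(v)$; otherwise return $t_2(v)$ if defined; otherwise return $v$. Correctness is immediate from the definition of $hgpt(v)$ as the deepest element of $L(v)\setminus\{p(v)\}$, with fallback $v$.

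The recurrence underlying the DFS is the observation that a back-edge $(x,y)$ contributes $y$ to $L(v)$ if and only if $x$ is a descendant of $v$ and $y$ is a strict ancestor of $v$, so
\[
L(v) \;=\; \{\, y : (v,y) \text{ is a back-edge}\,\} \;\cup\; \bigcup_{c \text{ child of } v} \bigl(L(c)\setminus\{v\}\bigr),
\]
since any back-edge from inside $T(c)$ ending strictly above $c$ ends at $v$ or at a strict ancestor of $v$, and $L(v)$ excludes $v$. In particular the two deepest distinct elements of $L(v)$ must lie in the candidate multiset $\{\,y : (v,y) \text{ is a back-edge}\,\} \cup \bigcup_{c} \{t_1(c), t_2(c)\}$, after filtering out $v$; so $t_1(v)$ and $t_2(v)$ are obtained by scanning this set and selecting the two deepest distinct values.

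For the complexity, at each vertex $v$ the candidate set has size at most $\deg_G(v) + 2\cdot|\{c : p(c)=v\}|$, so selecting its two deepest distinct values takes $O(\deg_G(v))$ time; summing over $v$ yields $O(n+m)$. I do not foresee a significant obstacle. The only subtlety is the need to carry \emph{two} candidates rather than one, precisely so as to skip over $p(v)$ in the case $t_1(v)=p(v)$, together with the need, when merging children's values, to ensure that $t_1(v)$ and $t_2(v)$ correspond to distinct elements of $L(v)$.
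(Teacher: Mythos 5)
Your key claim---that the two deepest distinct elements of $L(v)$ all lie in the candidate set $\{\,y : (v,y)\text{ is a back-edge}\,\} \cup \bigcup_c \{t_1(c),t_2(c)\}$ after filtering out $v$---is false, and this is a genuine gap rather than a detail to be patched. The trouble sits exactly in the deletion $L(c)\setminus\{v\}$ in your recurrence: every element of $L(c)$ is a proper ancestor of $c$, so the deepest possible element of $L(c)$ is $p(c)=v$, and that is precisely what gets deleted. Hence the deepest surviving contribution from a child $c$ is at best $t_2(c)$; but $hgpt(v)$ then further discards $p(v)$, which may itself equal $t_2(c)$, in which case you would need $t_3(c)$---which you never carried. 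For lowpoints the analogous deletion is harmless, because you delete the \emph{deepest} element while asking for the \emph{shallowest}; for highpoints you are repeatedly deleting from the same end you are trying to maximize, and the loss compounds as you ascend the tree.

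Concretely, take a ``broom'': a path $u_0u_1u_2u_3u_4$ rooted at $u_0$ (so $u_i$ is the parent of $u_{i+1}$), with four leaves $x_1,\dots,x_4$ all children of $u_4$, and back-edges $x_iu_{i-1}$ for $i=1,\dots,4$. Your algorithm correctly returns $hgpt(u_4)=u_2$ from $t_1(u_4)=u_3$, $t_2(u_4)=u_2$. But $u_3$ has the single child $u_4$ and no incident back-edge, so after filtering out $u_3$ the candidate set is just $\{u_2\}$; you get $t_1(u_3)=u_2=p(u_3)$, $t_2(u_3)$ undefined, and output $hgpt(u_3)=u_3$, whereas $L(u_3)=\{u_0,u_1,u_2\}$ gives $hgpt(u_3)=u_1$. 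At $u_2$ your candidate set is empty and you output $u_2$ instead of the true value $u_0$. Lengthening the broom's handle makes the output wrong at more and more vertices, and carrying any fixed list $t_1,\dots,t_k$ fails identically once the handle is longer than $k$, while carrying all of them is $\Theta(n\cdot\mathrm{depth}(T))$ in the worst case, not linear. This is exactly the obstacle the paper itself flags just below the lemma: it observes that the first $k$ lowpoints are ``easily'' computed by bottom-up dynamic programming, but that highpoints are ``more involved'' and the cited algorithm of Kosinas relies on a variant of Union-Find in which the union tree is known in advance. A single bottom-up pass carrying $O(1)$ values per vertex, of the kind you propose, cannot do it.
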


The computation of all first $k$ lowpoints can be done easily by dynamic programming, starting from the leaves of $T$. However, the computation of all highpoints is more involved and the algorithm given by Kosinas in \cite{Kosinas23} relies on a variant of the Union-Find structure where the so-called Union Tree is known beforehand.

\begin{lemma} \label{lem:compute-numbering}
    There is an $O(n + m)$-time algorithm which, given a pair $(G, T)$ where $T$ is a normal spanning tree of a graph~$G$, computes a numbering of the vertices of $G$ which is compatible with $T$.
\end{lemma}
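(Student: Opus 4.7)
The plan is to compute a compatible numbering via two preorder DFS traversals of $T$, separated by a single global counting sort that puts the children of every vertex into the order required by \cref{itm:compatible-3}. I will first do a DFS to obtain a preliminary preorder numbering, then use this numbering together with \cref{lem:precomp-lwpts} to sort children globally, and finally re-run a DFS following the sorted order to obtain the final numbering.

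First I would run a DFS of $T$ from its root, visiting children in an arbitrary order, and record for each $v$ its preorder rank $\pi_0(v) \in [n]$ as well as $ND(v)$; this takes $O(n)$ time and gives a bijection $V(G) \to [n]$ in which ancestors precede descendants. Next I would invoke \cref{lem:precomp-lwpts} with $k = 2$ to obtain the vertices $lwpt_1(v)$ and $lwpt_2(v)$ for every $v$, in $O(n+m)$ time; since lowpoints depend only on $T$ and on the back-edges of $G$, their identity as vertices is independent of the numbering, and I can then read off $\pi_0(lwpt_1(v))$ and $\pi_0(lwpt_2(v))$.

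For the sorting step, to every non-root $v$ I associate the $4$-tuple
\[ \tau(v) \;=\; \bigl(\pi_0(p(v)),\; -\pi_0(lwpt_1(v)),\; \pi_0(lwpt_2(v)),\; v\bigr) \]
and feed the resulting list of $n-1$ tuples into the lexicographic counting sort of \cref{cor:sort-lex} with $t = 4$ and $k = n$. This runs in $O(n)$ time; grouping by the first coordinate places the entries for the children of each fixed vertex in a contiguous block, and within each block the remaining coordinates $(-\pi_0(lwpt_1), \pi_0(lwpt_2))$ sort the children by increasing $\leq'$-value of $(lwpt_1, lwpt_2)$. From the sorted list I rebuild the adjacency lists of $T$ so that each vertex's children appear in this order, and then run a second preorder DFS to produce the final numbering. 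Property \cref{item:numbered1-n} holds by construction, property \cref{itm:compatible-2} holds because any preorder numbering of a rooted tree assigns consecutive labels to each subtree, and property \cref{itm:compatible-3} holds because siblings are explored in $\leq'$-increasing order and a preorder DFS gives the smallest label to the first-visited sibling. The total running time is $O(n+m)$.

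The main conceptual obstacle is the apparent circularity: condition \cref{itm:compatible-3} is stated in terms of the final numbering and of the lowpoints evaluated in it, yet the sort must be performed before the final numbering exists. This is resolved by the observation that $lwpt_1(v)$ and $lwpt_2(v)$ are vertices determined by $T$ alone, and that all ancestors of any fixed vertex $v$ lie on a single root-to-$v$ path, so every preorder numbering of $T$ induces the same total order on those ancestors. In particular, the $\leq'$-comparison between the $(lwpt_1, lwpt_2)$-pairs of two siblings depends only on the ancestor ordering in $T$, so using $\pi_0$ for this comparison yields exactly the same ordering as using the final numbering would.
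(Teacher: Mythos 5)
Your proposal is correct and matches the paper's proof essentially step for step: a preliminary DFS to obtain a preorder and $ND$ values, \cref{lem:precomp-lwpts} with $k=2$ to get the first two lowpoints, a counting-based lexicographic sort of the children by the preorder ranks of their lowpoints (the paper sorts the $seq(v)$ globally and then distributes each vertex to its parent's list, whereas you encode $\pi_0(p(v))$ as the leading key so that siblings come out contiguous — an inessential implementation difference), and a second DFS honoring the sorted sibling order. Your justification of why $\pi_0$-based comparison is safe before the final numbering exists is the right observation; the only thing you gloss over is the edge case $lwpt_2(v) = v$, where "depends only on the ancestor ordering in $T$" is not literally true since siblings are incomparable — there the order is self-consistent rather than $T$-determined (the paper's proof splits out this case explicitly), but the algorithm is unaffected and the resulting numbering is still compatible.
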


\begin{proof}
    First, perform a DFS of $T$ and order the vertices according to the first time they are encountered in the DFS.
    Let $\prec$ denote the corresponding linear order on $V(G)$. Assume that every $v \in V(G)$ stores $rank(v)$, its rank in the linear order $\prec$.
    Observe that if $u$ is a proper ancestor of $v$ then $u \prec v$.
    By \cref{lem:precomp-lwpts}, we can compute the first 2 lowpoints of all vertices in time $O(n + m)$.
    For every vertex $v$, let $seq(v) = (- rank(lwpt_1(v)), rank(lwpt_2(v)))$.
    By \cref{cor:sort-lex}, we can sort lexicographically the set of all $seq(v)$ in time $O(n)$.
    
    Then, in linear time overall, every vertex can sort its children by decreasing $seq(\cdot)$: traverse the sorted list of all $seq(v)$ and insert each vertex into the list of its parent.
    Finally, perform a DFS of $T$ starting from the same root, and satisfying the following: for every vertex $u$, the children of $u$ are visited by increasing values of $seq(\cdot)$.
    We number the vertices of $G$ from 1 to $n$ according to the order in which they are first visited during this DFS.

    This numbering trivially satisfies \cref{dfn:compatible-numbering}~\ref{item:numbered1-n}.
    Every vertex $j$ has exactly $ND(j)$ descendants and any DFS of $T$ visits all descendants of $j$ consecutively, starting from $j$, which proves \cref{dfn:compatible-numbering}~\ref{itm:compatible-2}.
    If $j < k$ are siblings then $j < k$ means that $j$ is visited before $k$ among the children of their parent so $seq(j) \leq seq(k)$ in the lexicographical order. 
    If $(lwpt_1(j), lwpt_2(j)) = (lwpt_1(k), lwpt_2(k))$ then \cref{dfn:compatible-numbering}~\ref{itm:compatible-3} holds.
    If $lwpt_1(j) \neq lwpt_1(k)$ then $lwpt_1(k) \prec lwpt_1(j)$. 
    Since $lwpt_1(j), lwpt_1(k)$ are ancestors of $p(j) = p(k)$, they are comparable in $T$ and therefore $lwpt_1(k)$ is a proper ancestor of $lwpt_1(j)$. 
    Thus, $lwpt_1(k) < lwpt_1(j)$ by the properties of a DFS and thus ${(lwpt_1(j), lwpt_2(j)) \leq' (lwpt_1(k), lwpt_2(k))}$, so \cref{dfn:compatible-numbering}~\ref{itm:compatible-3} holds.
    Otherwise, $lwpt_1(j) = lwpt_1(k)$ and thus $lwpt_2(j) \prec lwpt_2(k)$. If $lwpt_2(j)$ is a proper ancestor of $j$ then it is also a proper ancestor of $k$. 
    However $lwpt_2(k)$ is an ancestor of $k$ so $lwpt_2(k)$ and $lwpt_2(j)$ are comparable and $lwpt_2(j) \prec lwpt_2(k)$ implies that $lwpt_2(j)$ is a proper ancestor of $lwpt_2(k)$ which in turn implies $lwpt_2(j) < lwpt_2(k)$ by the properties of a DFS. 
    In that case, we therefore have $(lwpt_1(j), lwpt_2(j)) \leq' (lwpt_1(k), lwpt_2(k))$.
    If $lwpt_2(j)$ is not a proper ancestor of $j$ then $lwpt_2(j) = j$. Since $lwpt_2(j) \prec lwpt_2(k)$, $lwpt_2(k)$ cannot be an ancestor of $j$ so $lwpt_2(k) = k > j$. 
    In that case, we therefore also have $(lwpt_1(j), lwpt_2(j)) \leq' (lwpt_1(k), lwpt_2(k))$.
\end{proof}

\begin{lemma} \label{lem:precomp-maxhgpt}
    There is an $O(n)$-time algorithm which, given a pair $(G, T)$ where $T$ is a normal spanning tree of a graph $G$, a numbering of the vertices of $G$ and an array storing $hgpt(v)$ for every $v \in V(G)$, computes $maxhgpt(v)$ for every vertex $v \in V(G)$.
\end{lemma}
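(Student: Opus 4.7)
The plan is a straightforward right-to-left dynamic programming over siblings. For a non-root vertex $v$, let $s$ denote the smallest sibling of $v$ with $s > v$ if such a sibling exists. Directly from the definition of $maxhgpt$, we have $maxhgpt(v) = \max(hgpt(v), maxhgpt(s))$ when $s$ exists, and $maxhgpt(v) = hgpt(v)$ otherwise (i.e., when $v$ is the largest child of $p(v)$). So if the values of $maxhgpt$ at the siblings of $v$ lying to the left of $v$ are already known, computing $maxhgpt(v)$ takes constant time.

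To implement this in $O(n)$ time, I would first build, for each vertex $u$, the sorted list $\mathrm{children}(u)$ of its children in increasing order. This is done in one pass: iterate $v = 2, 3, \ldots, n$ and append $v$ to $\mathrm{children}(p(v))$. Since the $v$'s are enumerated in increasing order, each children list is automatically sorted. This preprocessing uses $O(n)$ time and space in total.

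Then, for each vertex $u$, I traverse $\mathrm{children}(u)$ from its last (largest) element to its first (smallest) while maintaining a running maximum $M$: for each child $c$ encountered in this order, update $M \gets \max(M, hgpt(c))$ and assign $maxhgpt(c) \gets M$. Correctness follows immediately from the recurrence above, since when $c$ is processed $M$ equals $\max\{hgpt(c') : c' \in \mathrm{children}(u), c' \geq c\}$.

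The total work across all vertices is proportional to the sum of the sizes of the children lists, which is exactly $n - 1$, yielding the $O(n)$ bound. I do not expect any real obstacle: all required inputs ($p(\cdot)$ and $hgpt(\cdot)$) are available, and no interaction with the numbering beyond enumerating vertices from $1$ to $n$ is needed.
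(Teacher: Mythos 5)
Your proposal is correct and uses essentially the same approach as the paper: iterate over the children of each vertex from largest to smallest while maintaining a running maximum of $hgpt(\cdot)$, with total work proportional to the sum of the children-list sizes, i.e.\ $O(n)$. The only addition is the explicit $O(n)$-time construction of the sorted children lists, which the paper handles as part of the general precomputation in Section~4.2.
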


\begin{proof}
    Let $v \in V(G)$ and let $c_1 < \ldots < c_k$ be the children of $v$. Observe that $maxhgpt(c_k) = hgpt(c_k)$ and that for every $i < k$, we have $maxhgpt(c_i) = \max(hgpt(c_i), maxhgpt(c_{i+1}))$. Thus, we can compute successively $maxhgpt(c_k)$, $maxhgpt(c_{k-1}), \ldots$, up to $maxhgpt(c_1)$, in time $O(k)$. Summing over all vertices $v \in V(G)$, we indeed compute all $maxhgpt(\cdot)$ in time $O(n)$ overall.
\end{proof}

We conclude this section with two key tools for designing linear-time algorithms. 
Given a rooted tree $T$ and two vertices $u, v \in V(T)$, observe that all common ancestors of $u$ and $v$ are comparable in $T$. The \emph{least common ancestor (LCA)} of $u$ and $v$ is their unique common ancestor which is a descendant of all their common ancestors. Finding the least common ancestor of two vertices in an input tree $T$ requires $\Omega(n)$ time in general. However, if we are allowed to preprocess $T$ in \emph{linear time} beforehand then we can find the least common ancestor of any two vertices in \emph{constant time}.

Consider an array $A = A[0], A[1], \ldots, A[n-1]$ of size $n$. Then, an \emph{interval} of $A$ is any subarray of $A$ of the form ${A[x, y] := A[x], A[x+1], \ldots, A[y]}$.
The minimum of the interval $A[x, y]$ is the minimum value $v$ among all $A[i]$ for $i \in [x, y]$, and its corresponding \emph{position} is an index $i \in [x, y]$ such that $A[i] = v$.
Finding the minimum and its position in an arbitrary interval of an input array $A$ of size $n$ requires $\Omega(n)$ time in general.
However, once again if we are allowed to preprocess $A$ in \emph{linear time} beforehand then we can find the minimum and its position in any interval of $A$ in \emph{constant time}.
Interestingly, Farach-Colton and Bender \cite{bender2000LCA} proved the first result by reducing it to the second one, and the second one by reducing it to a subcase of the first one.

\begin{lemma}[\cite{bender2000LCA}] \label{lem:LCA}
    There is an algorithm which, given a rooted tree $T$ on $n$ vertices, after an $O(n)$-time precomputation, can answer in constant time LCA queries in $T$.
\end{lemma}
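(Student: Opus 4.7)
The plan is to use the classical Euler-tour reduction from LCA to Range Minimum Query (RMQ), and then solve the resulting RMQ instance by block decomposition, in the style of Farach-Colton and Bender.

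First, I would build in $O(n)$ time the Euler tour of $T$: a DFS traversal produces an array $E$ of length $2n-1$ that records a vertex each time it is entered or left. For each vertex $v$ I store $\mathrm{first}(v)$, its first position in $E$, and let $D[i]$ be the depth in $T$ of $E[i]$. The key observation is that for any $u, v \in V(T)$, the LCA of $u$ and $v$ is exactly the vertex of minimum depth appearing in $E$ between positions $\mathrm{first}(u)$ and $\mathrm{first}(v)$. Hence an LCA query reduces, in $O(1)$ time, to a range-minimum query on $D$.

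Second, I would solve RMQ on $D$ by exploiting the fact that consecutive entries of $D$ differ by exactly $\pm 1$ (moving to a child increases depth by $1$; returning to the parent decreases it by $1$). Partition $D$ into blocks of size $b := \lceil \tfrac{1}{2} \log_2 n \rceil$, giving $O(n/\log n)$ blocks. Build a sparse table on the block minima, which supports any block-aligned range-minimum query in $O(1)$ after $O((n/\log n)\cdot \log n) = O(n)$ preprocessing. For in-block queries, notice that each block is determined up to an additive shift by a $\{+1,-1\}$-sequence of length $b-1$, so there are only $2^{b-1} = O(\sqrt{n})$ distinct block types; precomputing, for every block type and every subinterval, the position of the minimum takes $O(\sqrt{n} \cdot b^2) = O(\sqrt{n}\log^2 n) = o(n)$ time and space. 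An arbitrary query $[i,j]$ in $D$ then splits into a suffix of $i$'s block, an aligned range of whole blocks, and a prefix of $j$'s block; each piece is answered in $O(1)$, and we return the best of the three minima. Composing with the Euler-tour reduction yields constant-time LCA queries after $O(n)$ preprocessing.

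The main technical hurdle is achieving $O(n)$ preprocessing for the in-block queries: if the entries of $D$ were arbitrary, the number of block types could be as large as $n^b$, destroying the bound. This is precisely why the reduction passes through the Euler tour (yielding the $\pm 1$ structure) instead of reducing LCA to general RMQ directly. Everything else (DFS, Euler tour, sparse table) is standard and fits comfortably in linear time.
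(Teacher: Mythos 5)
Your proof is correct and is precisely the argument from the cited reference \cite{bender2000LCA}: the paper invokes this result as a black box, and your Euler-tour reduction to $\pm 1$-RMQ followed by the sparse table over blocks plus tabulation of the $O(\sqrt{n})$ block types is the standard Farach-Colton--Bender construction. No gaps.
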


\begin{lemma}[\cite{bender2000LCA}] \label{lem:RMQ}
    There is an algorithm which, given an array $A$ of size $n$, after an $O(n)$-time precomputation, can return the minimum and maximum (and their positions) of any interval of $A$ in constant time.
\end{lemma}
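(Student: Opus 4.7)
The plan is to prove the minimum case (the maximum case follows by applying the same construction to $-A$) by reducing range-minimum queries to least-common-ancestor queries on a suitably built tree, and then invoking \cref{lem:LCA}. Concretely, I would build the \emph{Cartesian tree} $C_A$ of $A$: a binary tree on nodes labeled $0,1,\ldots,n-1$ whose root is the position of the global minimum of $A$, and whose left and right subtrees are the Cartesian trees of the two subarrays on either side of that position. The Cartesian tree can be constructed in $O(n)$ time by scanning $A$ left to right and maintaining its right spine on a stack, popping entries larger than the new one. The key structural property is that for $0 \leq x \leq y \leq n-1$, the position of the minimum of $A[x,y]$ coincides exactly with the least common ancestor of the nodes labeled $x$ and $y$ in $C_A$. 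After the $O(n)$ construction of $C_A$ and the $O(n)$ LCA preprocessing of \cref{lem:LCA}, each RMQ is answered by one LCA query in constant time.

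Should one wish to avoid relying on \cref{lem:LCA} (since in \cite{bender2000LCA} the two results are proved together), I would give a self-contained argument. First reduce LCA on $C_A$ to an RMQ problem via an Euler tour: traverse $C_A$ recording, for each visit, the node and its depth; the resulting depth sequence $D$ has length $2n-1$ and satisfies the $\pm 1$ property that consecutive entries differ by exactly one. The LCA of $u$ and $v$ is then the node of minimum depth appearing between any occurrences of $u$ and $v$ in the tour. Partition $D$ into blocks of size $b = \tfrac{1}{2}\log n$. On the array of $O(n/\log n)$ block minima, build a classical sparse table answering RMQ in constant time; its preprocessing is $O((n/\log n)\cdot \log n) = O(n)$.

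For in-block queries, I would exploit the $\pm 1$ property crucially: each block is determined, up to a global additive shift, by its sequence of $b-1$ signs, so only $2^{b-1} = O(\sqrt{n})$ distinct normalized shapes ever arise. Precomputing the answer to every in-block query on every shape uses a table of size $O(\sqrt{n} \cdot \log^2 n) = o(n)$. Any final query $A[x,y]$ then decomposes into three constant-time lookups: the in-block answer for the block containing $x$, the in-block answer for the block containing $y$, and the sparse-table answer spanning the intervening full blocks.

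The main obstacle is the in-block subproblem: a naive sparse table over the full array would cost $\Theta(n\log n)$ space, and the affordability of a universal shape-lookup table relies critically on the $\pm 1$ normalization, which itself requires routing the problem through a Cartesian tree and an Euler tour rather than attacking the input array directly. The rest of the pipeline, namely the Cartesian tree construction, the Euler tour, and the sparse table on block minima, consists of standard linear-time primitives, and the composition of three constant-time lookups is immediate.
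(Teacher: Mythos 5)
The paper does not prove this lemma; it is stated as a black-box result with a citation to Bender and Farach-Colton. Your proposal is a correct and faithful reconstruction of the argument in that reference — the reduction from RMQ to LCA via the Cartesian tree, and the self-contained resolution via Euler tour, the $\pm 1$ property, block decomposition with a sparse table on block minima, and the $O(\sqrt{n})$ normalized-shape lookup table for in-block queries — so there is nothing to correct or to contrast with the paper.
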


\subsection{Precomputation} \label{subsec:precomputations}

We now describe the precomputation we perform when given a 2-connected graph $G$. 
Note that we can verify in linear time that $G$ is 2-connected using \cref{lem:algo-1-sep}.
For the remaining algorithmic sections, we will assume that all the precomputation presented in this section have already been performed.

We start by computing a normal spanning tree $T$ of $G$, rooted arbitrarily. This can be done in time $O(n + m)$ by a DFS.
Then, using \cref{lem:compute-numbering}, we compute a numbering of the vertices of $G$ that is compatible with $T$. 
Every vertex stores its number of children and an array containing the names of its children, sorted by increasing number. Every vertex $v$ also stores $sib\_rank(v)$, its rank amongst all its siblings.  
We preprocess $T$ to be able to find the least common ancestor of any two vertices in $T$ in constant time. This can be done in time $O(n)$ by \cref{lem:LCA}.
Then, we compute all first lowpoints, second lowpoints and highpoints using \cref{lem:precomp-lwpts,,lem:precomp-hgpts}. Once this is done, we can compute $maxhgpt(v)$ for every $v \in V(G)$ in time $O(n)$ overall using \cref{lem:precomp-maxhgpt}.
We also compute $N_{\min}(v)$ and $N_{\max}(v)$ for all vertices $v \in V(G)$. This can be done in time $O(n + m)$ by iterating over all edges of $G$.
Then, we apply \cref{lem:RMQ} to the arrays $N_{\min}$ and $N_{\max}$, to preprocess them in time $O(n)$ to be able to answer minimum and maximum queries on an interval in constant time.
Once this is done, we can compute $witn(v)$ for all $v \in V(G)$ in time $O(n)$ overall. 
Similarly, using dynamic programming from the leaves of $T$, we compute all values of $ND(\cdot)$.
Since the numbering is compatible with~$T$, we have $LD(v) = v + ND(v) - 1$ for every $v \in V(G)$ so we can also compute all values of $LD(\cdot)$.

Once this is done, we compute the partition of $E(T)$ into maximal leftmost paths. This can be done by a DFS in $T$, where we visit the children of every vertex by decreasing number. Let $P_1, \ldots, P_k$ denote these maximal leftmost paths. We store each $P_i = (v^i_0, \ldots, v^i_l)$ as an array $P_i$ so that $P_i[j] = v^i_j$. Since these paths form a partition of the edges of $T$, computing and storing all $P_i$ can be done in $O(n)$ time.
Observe that for every vertex $v \in [2, n]$, there is a unique path $P_{i_v}$ such that $v \in V(P_{i_v})$ and $v$ is not the first vertex of $P_{i_v}$ (namely, the path containing the edge $(v, p(v)) \in E(T)$). We denote by $r_v$ the unique integer $j$ such that $P_{i_v}[j] = v$. We compute and store all $i_v$ and $r_v$ on the fly while constructing the paths $P_i$.

We shall use the following lemma in various contexts. Informally, it says that, after a linear-time precomputation, we can answer in constant time maximum and minimum queries on any leftmost path, for any function.

\begin{lemma} \label{lem:query-leftmost-branch}
    There is an algorithm which, given a graph $G$ equipped with a spanning tree $T$ and an array storing the values of a function $f : V(G) \to [0, n]$, after an $O(n)$-time precomputation, can answer in constant time queries of the form: Given $a, b \in V(G)$, determine whether they belong to a common maximal leftmost path $P$, and if so return the minimum and the maximum of $f$ (and a corresponding vertex) over all vertices of $P[a, b]$ (or $P(a, b), P[a, b), P(a, b]$).
\end{lemma}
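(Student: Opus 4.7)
The plan is to reduce each query to a range minimum/maximum query on a precomputed array representing the relevant leftmost path, exploiting the path decomposition already built in \cref{subsec:precomputations} together with the linear-time preprocessing of \cref{lem:RMQ}.

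For each maximal leftmost path $P_i = (v_0^i, \ldots, v_{\ell_i}^i)$, I would build a companion array $F_i$ of the same length with $F_i[j] = f(v_j^i)$. Since the maximal leftmost paths partition $E(T)$, the total length $\sum_i |P_i|$ is at most $2n$, so all the arrays $F_i$ can be constructed in $O(n)$ time. I would then apply the preprocessing of \cref{lem:RMQ} to each $F_i$, so that the minimum and maximum of any sub-interval of any $F_i$ (together with positions where they are attained) can be retrieved in $O(1)$ time; summing over all paths, this preprocessing takes $O(n)$ time overall.

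To answer a query on inputs $a, b \in V(G)$, I would first test whether $a$ and $b$ are comparable in $T$ by checking whether $b \in [a, LD(a)]$ or $a \in [b, LD(b)]$; if not, they do not lie on a common leftmost path. Otherwise, assume without loss of generality that $a$ is an ancestor of $b$. If $a = b$, the answer is trivially $f(a)$ attained at $a$. Otherwise, $b$ is a non-first vertex of $P_{i_b}$ at position $r_b$, and $(a,b)$ lies on a common maximal leftmost path if and only if $a$ appears on $P_{i_b}$ at some position $p \leq r_b$. This happens in exactly two situations, both checkable in constant time using the stored indices $i_v$, $r_v$ and the path arrays: either $i_a = i_b$, in which case $a$ is a non-first vertex of $P_{i_b}$ with $p = r_a$, or $a$ equals the top vertex $P_{i_b}[0]$ of the path, in which case $p = 0$. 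If one of these holds, I would invoke the preprocessed range query on $F_{i_b}$ restricted to $[p, r_b]$, $(p, r_b]$, $[p, r_b)$ or $(p, r_b)$ depending on the requested variant of the interval, and return the reported extremum together with the vertex $P_{i_b}[j]$ at the corresponding position $j$. The only subtle point is that when $a$ is the top vertex of a leftmost path the stored index $i_a$ refers to a \emph{different} path, so this subcase must be detected separately; aside from this, everything reduces to a single range-query lookup.
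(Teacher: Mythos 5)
Your proposal is correct and follows essentially the same route as the paper: build one auxiliary array per maximal leftmost path, preprocess each with \cref{lem:RMQ}, and observe that if $a$ is an ancestor of $b$ and they share a leftmost path, that path must be $P_{i_b}$ with $a$ either at $P_{i_b}[r_a]$ or at $P_{i_b}[0]$. Your explicit ancestry test and $a=b$ case are harmless extras, and you correctly identified the one subtle point that the stored index $i_a$ does not help when $a$ is the first vertex of $P_{i_b}$.
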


\begin{proof}
    We start by describing the precomputation. 
    For every path $P_i$, we create an array $f_i$ of the same size as $P_i$, and set $f_i[j] = f(P_i[j])$ for all relevant indices $j$.
    Since the $P_i$ partition the edges of $T$ then the collection of $f_i$ has linear size.
    We apply \cref{lem:RMQ} separately on each $f_i$ to be able find in constant time the minimum and the maximum (and their positions) of any interval of $f_i$. 
    This takes time $O(|f_i|)$ for every $f_i$, hence time $O(n)$ overall.
    This finishes the description of the precomputation.
    
    Suppose now that we are given a query $(a, b)$. We may assume without loss of generality that $a < b$. 
    Observe that $a$ and $b$ belong to a common maximal leftmost path if and only if $a \in P_{i_b}$.
    Furthermore, if $a \in P_{i_b}$, either $a = P_{i_b}[r_a]$ or $a = P_{i_b}[0]$.
    If this is not the case, we return that $a$ and $b$ do not belong to a common maximal leftmost path.
    Otherwise, let $r'_a \in \{0, r_a\}$ be the index such that $P_{i_b}[r'_a] = a$. 
    It now suffices to perform a minimum and maximum query over $f_i[r_a, r_b]$ to obtain the minimum and the maximum of $f$ (and their positions) over all vertices of $P[a, b]$. Given a position $p$, the corresponding vertex is simply $P_{i_b}[p]$. All these steps can be performed in constant time.
    For $P(a, b)$, simply query over $f_i[r_a + 1, r_b - 1]$, and similarly for $P(a, b]$ and $P[a, b)$.
\end{proof}

\subsection{Computing stability} \label{subsec:stability}

Stability will turn out to be a key feature in several steps of our algorithm. In this section, we show that we can answer queries related to stability in constant time, after just a linear-time precomputation.
Our main result is the following.

\begin{proposition} \label{prop:queries-stability}
    There is an algorithm which, given a graph $G$ equipped with a normal spanning tree $T$ and a numbering of its vertices which is compatible with $T$, after an $O(n+m)$-time precomputation, can compute every value $stab(a, b)$ in constant time.
\end{proposition}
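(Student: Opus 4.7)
The plan is to reduce the computation of $stab(a, b)$ to a constant-time query on the unique maximal leftmost path containing $b$, supported by linear-time precomputation.

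First, applying \cref{lem:query-leftmost-branch} to the function $v \mapsto witn(v)$, after an $O(n)$ precomputation we can, in constant time, decide whether $a$ and $b$ lie on a common maximal leftmost path $P$ and recover the minimum of $witn$ over any sub-interval of $P$. If they do not, then $T[a, b]$ is not a leftmost path and we return $stab(a, b) = 0$. Otherwise, write $a = P[q]$ and $b = P[p]$, and let $u_i := P[i]$ and $w_i := witn(P[i])$ for indices $i$ along $P$. By \cref{lem:witn-stab}, $T[a, b]$ is stable iff $\min_{i \in (q, p)} w_i \geq a$; we check this in constant time and return $a$ when it holds.

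In the remaining case, the task is to find the largest position $j < q$ satisfying $u_j \leq \min_{i \in (j, p)} w_i$. A structural consequence of \cref{item:descendants-interval,item:order-siblings} of \cref{dfn:compatible-numbering} is that along every maximal leftmost path the labels $u_0 < u_1 < \cdots$ are strictly increasing. For every position $j$ on every maximal leftmost path, we define
\[
F(j) := \min\{i > j : w_i < u_j\},
\]
with $F(j) := +\infty$ when the set is empty. By \cref{lem:witn-stab} again, stability of $T[P[j], P[p]]$ is precisely the condition $F(j) \geq p$, so $stab(a, b) = P[j^*]$ with $j^*$ the largest $j \leq q$ such that $F(j) \geq p$.

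The precomputation now has two stages. First, the arrays $F$ are computed on all maximal leftmost paths in total $O(n)$ time: using that the sequences $(u_j)$ are strictly increasing on each path and that all $w_j$ lie in $[1, n]$, a global counting sort orders all pairs $(i, w_i)$ by $w_i$, after which a single sweep per path computes, for each $i$, the smallest index $j^*_i$ with $u_{j^*_i} > w_i$; a second pass then derives $F(j)$ for every $j$ of the path from the $j^*_i$'s. Second, on each path we build a data structure answering in constant time the two-dimensional rightmost-dominance query ``largest $j \leq q$ with $F(j) \geq p$'', which combines a Cartesian tree on the array $F$ with the $O(1)$-LCA preprocessing of \cref{lem:LCA}. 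The main obstacle is this last data structure, since the set $\{j : F(j) \geq p\}$ varies with the query parameter $p$ in a way that is not contiguous: the auxiliary information stored at each Cartesian-tree node has to be chosen carefully so that the rightmost-dominance query collapses to an $O(1)$ pointer chase on the tree. Summing over all maximal leftmost paths, whose lengths together equal $n - 1$, yields the desired $O(n + m)$-time overall precomputation.
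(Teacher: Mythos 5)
Your reduction to the query ``find the largest position $j \leq q$ on the maximal leftmost path $P$ with $F(j) \geq p$'' is correct, and the identity $F(j) \geq p \iff T[P[j], P[p]]$ stable does follow from \cref{lem:witn-stab} together with the strict monotonicity of the vertex labels along $P$. The gap is in the last step. The query you need is a three-sided range-successor (``rightmost point in $[0,q]\times[p,\infty)$'') query, and a Cartesian tree plus $O(1)$-LCA as in \cref{lem:LCA} gives you range \emph{maximum}, not ``rightmost index whose value clears a threshold'' --- these are genuinely different problems, and you acknowledge as much when you call it ``the main obstacle'' and say the auxiliary information ``has to be chosen carefully'' without exhibiting it. As written, there is no argument that a top-down descent in the Cartesian tree terminates in $O(1)$ steps: if the subtree root lies right of $q$ one recurses left, if it lies left of $q$ but has value $\geq p$ it is a candidate but a better one may hide in the right subtree, and the chase can take $\Theta(\log n)$ or worse. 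Three-sided range-successor is not known to admit $O(1)$-time queries with $O(n)$-size structures in general, so the plan cannot be closed by a generic reference; you would need to extract and prove a special structural property of $F$ (for instance, a laminarity of the intervals $[j, F(j)]$) and then build a bespoke structure around it, none of which appears in the proposal. The first precomputation stage is also left vague: it is not clear how ``a second pass then derives $F(j)$ \dots from the $j^*_i$'s'' in linear time, since $F(j) = \min\{i > j : j^*_i \leq j\}$ is itself a nontrivial stabbing-type computation.

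The paper sidesteps the data structure entirely. It proves (Claim in the proof of \cref{prop:queries-stability}) that when $T[a,b]$ is a leftmost path but not stable, and $v$ is a $witn$-minimizer in $T(a,b)$, then $stab(a,b) = stab(witn(v), v)$. This identity turns the answer into a function of a single precomputable quantity: \cref{lem:precom-stab} computes all values $stab(witn(v),v)$ by a left-to-right dynamic program along each maximal leftmost path (each step uses the same identity to jump to an already-computed value), and the query then becomes one RMQ over $witn$ on $P(a,b)$ via \cref{lem:query-leftmost-branch} followed by a table lookup. If you want to keep your $F$-based framing, the analogue of the paper's insight is that the recursion collapses all ``interesting'' query answers onto the $O(|P|)$ precomputed values $stab(w_i, u_i)$, so no 2D dominance structure is needed at all.
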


Recall that if $a < b$ and $T[a, b]$ is a leftmost path then $stab(a, b)$ is the largest ancestor $\alpha$ of $a$ such that $T[\alpha, b]$ is stable, if there exists such an ancestor, and is 0 otherwise.
Given $a < b \in V(G)$, the following is a natural way to compute $stab(a, b)$.
If $T[a, b]$ is not a leftmost path of $\alpha$ then $stab(a, b) = 0$. 
Otherwise, we start by setting $\alpha = a$ and we look at all values of $witn(v)$ for $v \in T(\alpha, b)$. If $witn(v) \geq \alpha$ for all $v \in T(\alpha, b)$ then $T[\alpha, b]$ is stable by \cref{lem:witn-stab} and we return $\alpha$. Otherwise, there exists $v \in T(\alpha, b)$ such that $witn(v) = \beta < \alpha$. We then set $\alpha = \beta$ and repeat the process until it terminates.
This naive algorithm runs in quadratic time. 
There are two key observations to speed up this algorithm. First, instead of going over all values of $witn(v)$ for $v \in T(\alpha, b)$, we can use \cref{lem:query-leftmost-branch} to find the minimum in constant time. This will be formalized by \cref{lem:test-stab}. The second observation is that if we set the value of $\alpha$ to some value $\beta = witn(v)$ then the continuation of the algorithm (and thus the final value of $\alpha$) only depends on $v$, and not on $a$ and $b$. Therefore, it suffices to precompute the final value of $\alpha$ for every $v$. We will show in \cref{lem:precom-stab} that this can be done in linear time overall.

\begin{lemma} \label{lem:test-stab}
    There is an algorithm which, given a graph $G$ equipped with a normal spanning tree $T$ and a numbering of its vertices which is compatible with $T$, after an $O(n+m)$-time precomputation, can answer in constant time queries of the form: Given $a < b \in V(G)$, tell whether $T[a, b]$ is stable, and if not, either say that $T[a, b]$ is not a leftmost path or return a vertex $v \in T(a, b)$ with minimum $witn(v)$, witnessing that $T[a, b]$ is not stable.
\end{lemma}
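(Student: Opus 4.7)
The plan is to combine \cref{lem:witn-stab} with the data structure from \cref{lem:query-leftmost-branch}, applied to the function $f = witn$. The precomputation goes as follows: the values $witn(v)$ for every $v \in V(G)$ are already available from \cref{subsec:precomputations}, so I simply invoke \cref{lem:query-leftmost-branch} on $f = witn$, which requires only $O(n)$ additional time. I also rely on the arrays $i_v, r_v$ stored while partitioning $E(T)$ into maximal leftmost paths, which let us decide membership of two vertices in a common maximal leftmost path in constant time.

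Given a query $(a, b)$ with $a < b$, I first check in constant time, using $i_a, r_a, i_b, r_b$, whether $a$ and $b$ lie on a common maximal leftmost path. If they do not, then $T[a,b]$ is not a leftmost path, so it is not stable and we return this verdict. Otherwise, I use \cref{lem:query-leftmost-branch} on $f = witn$ to obtain, in constant time, either the fact that $T(a, b)$ is empty or a vertex $v \in T(a, b)$ minimizing $witn(v)$. If $T(a, b) = \emptyset$ or $witn(v) \geq a$, then by \cref{lem:witn-stab} the path $T[a, b]$ is stable, and we return this. Otherwise $witn(v) < a$, so \cref{lem:witn-stab} certifies that $T[a, b]$ is not stable with $v$ as a witness, and we return $v$.

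Correctness is immediate from \cref{lem:witn-stab}, which reduces stability of a leftmost path $T[a, b]$ exactly to the condition $\min_{v \in T(a, b)} witn(v) \geq a$. The total precomputation cost is $O(n + m)$, dominated by the cost already incurred in \cref{subsec:precomputations} to obtain the numbering, the lowpoints, and the values $witn(v)$; each query is answered by a constant number of constant-time operations. There is no real obstacle here: the nontrivial work has already been done in proving \cref{lem:witn-stab} and in setting up \cref{lem:query-leftmost-branch}, and this lemma simply packages them together.
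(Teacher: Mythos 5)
Your proposal is correct and follows essentially the same route as the paper's own proof: both reduce stability to the condition of \cref{lem:witn-stab} and answer it by a constant-time min-query on $witn(\cdot)$ over $P(a,b)$ via \cref{lem:query-leftmost-branch}, with the membership check for a common maximal leftmost path handled by the same machinery.
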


\begin{proof}
    By \cref{lem:witn-stab}, $T[a, b]$ is stable if and only if it is a leftmost path and $witn(v) \geq a$ for every $v \in T(a, b)$.
    Since we already computed all values of $witn(\cdot)$, we can access each of them in constant time.
    Therefore, by \cref{lem:query-leftmost-branch}, after an $O(n)$-time precomputation, we can answer in constant time queries of the form: Given $a, b \in V(G)$, determine whether they belong to a common maximal leftmost path $P$, and if so return the minimum of $witn(\cdot)$ (and a corresponding vertex) over all vertices of $P(a, b)$.
    Note that $T[a, b]$ is a leftmost path if and only if $a$ and $b$ belong to a common maximal leftmost path.
    We query whether $a$ and $b$ belong to a common maximal leftmost path $P$, and if so return the minimum $\beta$ of $witn(\cdot)$ (and a corresponding vertex $v$) over all vertices of $P(a, b)$.
    If $a$ and $b$ are not in a common leftmost path, we return that $T[a, b]$ is not a leftmost path.
    If $\beta < a$, we return $v \in P(a, b)$ such that $witn(v) = \beta$.
    Otherwise, we return that $T[a, b]$ is stable.
\end{proof}

\begin{lemma} \label{lem:precom-stab}
    There is an $O(n+m)$-time algorithm which, given a graph $G$ equipped with a normal spanning tree $T$ and a numbering of its vertices which is compatible with $T$, computes all values of $stab(witn(v), v)$.
\end{lemma}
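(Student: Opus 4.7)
The plan is to compute all values $stab(witn(v), v)$ in increasing order of $v$, using a simple recursion supported by \cref{lem:test-stab}. After its $O(n+m)$-time precomputation, that lemma answers in constant time: given a pair $(a, b)$, either $T[a, b]$ is a stable leftmost path, or it is not a leftmost path, or it is a leftmost path that is not stable, in which case it returns a witness $u \in T(a, b)$ with $witn(u) = \min_{u' \in T(a, b)} witn(u') < a$.

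For each vertex $v$ in increasing order, I first observe that $witn(v)$ is an ancestor of $v$ (for $v = 1$ I set $stab(witn(1), 1) := 0$ as a base case; for $v > 1$, since $witn(v) \leq N_{\min}(v) \leq p(v) < v$ and every neighbour of $v$ smaller than $v$ is an ancestor of $v$ in the normal spanning tree $T$, and any $lwpt(w)$ for $w$ a child of $v$ is a proper ancestor of $v$). I then query \cref{lem:test-stab} on $(witn(v), v)$ and distinguish three cases. If $T[witn(v), v]$ is not a leftmost path, set $stab(witn(v), v) = 0$. If it is stable, set $stab(witn(v), v) = witn(v)$. Otherwise, using the witness $v_1 \in T(witn(v), v)$ returned by the query, set $stab(witn(v), v) := stab(witn(v_1), v_1)$. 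Since $v_1$ is a strict ancestor of $v$ and the numbering is compatible with $T$, we have $v_1 < v$, so $stab(witn(v_1), v_1)$ has already been computed. Each step takes $O(1)$ time, so the total running time is $O(n+m)$.

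The main step, and the part I expect to require most care, is justifying the recursion in the third case, i.e., proving $stab(witn(v), v) = stab(witn(v_1), v_1)$. I will establish this in two steps. First, any valid $\alpha$ for $stab(witn(v), v)$ must satisfy $\alpha \leq witn(v_1)$: since $T[witn(v), v]$ is not stable, $\alpha$ is a proper ancestor of $witn(v)$, so $v_1 \in T(\alpha, v)$, and by \cref{lem:witn-stab}, stability of $T[\alpha, v]$ forces $witn(v_1) \geq \alpha$. Second, for every ancestor $\alpha$ of $witn(v_1)$, $T[\alpha, v]$ is a leftmost stable path if and only if $T[\alpha, v_1]$ is. Leftmost-ness is immediate because $T[v_1, v]$ is a suffix of the leftmost path $T[witn(v), v]$, so $T[\alpha, v_1]$ leftmost concatenated with $T[v_1, v]$ gives $T[\alpha, v]$ leftmost (and conversely). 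For stability, decomposing $T(\alpha, v) = T(\alpha, v_1) \cup \{v_1\} \cup T(v_1, v)$ and applying \cref{lem:witn-stab}, the condition on $\{v_1\}$ gives $witn(v_1) \geq \alpha$ (since $\alpha$ is an ancestor of $witn(v_1)$), and the condition on $T(v_1, v) \subseteq T(witn(v), v)$ gives $witn(u) \geq witn(v_1) \geq \alpha$ by the minimality of $witn(v_1)$. Thus stability of $T[\alpha, v]$ collapses to $witn(u) \geq \alpha$ on $T(\alpha, v_1)$, which is exactly stability of $T[\alpha, v_1]$. Combining the two steps yields $stab(witn(v), v) = stab(witn(v_1), v) = stab(witn(v_1), v_1)$, as required.
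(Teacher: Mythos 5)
Your proof is correct and follows essentially the same approach as the paper: after the $O(n+m)$-time precomputation of \cref{lem:test-stab}, it memoizes the values via the key recursion $stab(witn(v), v) = stab(witn(v_1), v_1)$, where $v_1$ is the minimum-witness vertex returned by the query. The only cosmetic differences are that you process all vertices globally in increasing order (relying on ancestors being numbered before descendants) rather than separately along each maximal leftmost path, and you establish the recursion via a candidate-restriction-plus-equivalence argument rather than the paper's pair of opposite inequalities; both variations are sound.
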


\begin{proof}
    We start by doing the $O(n+m)$-time precomputation of \ \cref{lem:test-stab}.
    Observe that $stab(witn(r), r) = 0$ since $witn(r) \geq r$.
    We then compute the $stab(witn(v), v)$ separately on all maximal leftmost paths.
    Let $P = v_0, \ldots, v_{|P|}$ be a maximal leftmost path.
    We have $witn(v_1) \leq v_0$ as $(v_0, v_1) \in E(G)$.
    If $witn(v_1) = v_0$ then $T[v_0, v_1]$ is stable by \cref{lem:witn-stab}, so $stab(witn(v_1), v_1) = v_0$.
    Otherwise, $stab(witn(v_1), v_1) = 0$ as there is no vertex $u \leq witn(v_1)$ such that $T[u, v_1]$ is a leftmost path.
    Now, suppose that we already computed $stab(witn(v_1), v_1), \ldots, stab(witn(v_{i-1}), v_{i-1})$ and we want to compute $stab(witn(v_i), v_i)$.
    If $witn(v_i) < v_0$ then $stab(witn(v_i), v_i) = 0$ for the same reason as above.
    Otherwise, $witn(v_i) = v_{i'}$ for some $i' < i$. Using \cref{lem:test-stab}, we query in constant time whether $T[v_{i'}, v_i]$ is stable.
    If it is, we store $stab(witn(v_i), v_i) = v_{i'}$.
    Otherwise, since $T[v_{i'}, v_i]$ is a leftmost path, we are given a witness $v_j$ of non-stability of $T[v_{i'}, v_i]$, with $witn(v_j)$ minimum, $i' < j < i$ and $witn(v_j) < v_{i'}$.
    
    \begin{claim*}
        $stab(witn(v_i), v_i) = stab(witn(v_j), v_j)$.
    \end{claim*}

    \begin{subproof}
        Suppose first that $stab(witn(v_j), v_j) \neq 0$ and let $v_k = stab(witn(v_j), v_j)$. Since $T[v_k, v_j]$ is stable, for every $k < l < j$, we have $witn(v_l) \geq v_k$ by \cref{lem:witn-stab}. Furthermore, $v_k = stab(witn(v_j), v_j) \leq witn(v_j)$ by definition. 
        Finally, for $j < l < i$, we have $witn(v_l) \geq witn(v_j) \geq v_k$ by minimality of $witn(v_j)$.
        Therefore, we have ${witn(v_l) \geq v_k}$ for every $k < l < i$ so $T[v_k, v_i]$ is stable by \cref{lem:witn-stab}, and $v_k = stab(witn(v_j), v_j) \leq witn(v_j) < witn(v_i)$. This proves that $stab(witn(v_i), v_i) \geq stab(witn(v_j), v_j)$.

        Conversely, suppose that $stab(witn(v_i), v_i) \neq 0$ and let $v_s = stab(witn(v_i), v_i)$.
        Then, $v_s \leq witn(v_i)$. Furthermore, by \cref{lem:witn-stab}, we have $witn(v_l) \geq v_s$ for every $s < l < i$. In particular, this implies $v_s \leq witn(v_j)$ since $s \leq i' < j < i$. 
        Thus, we have $witn(v_l) \geq v_s$ for every $s < l < j$ so $T[v_s, v_j]$ is stable and thus $stab(witn(v_i), v_i) \leq stab(witn(v_j), v_j)$.

        Therefore, if one of $stab(witn(v_i), v_i), stab(witn(v_j), v_j)$ is non-zero then the other as well and the two are equal. 
    \end{subproof}
    Hence, we can simply store $stab(witn(v_i), v_i) = stab(witn(v_j), v_j)$. 
    Computing each $stab(witn(v_i), v_i)$ takes constant time so the total running time is indeed linear.
\end{proof}


\begin{proof}[Proof of \cref{prop:queries-stability}]
    The precomputation consists in computing all $stab(witn(v), v)$ for $v \in V(G)$. This takes time $O(n+m)$ by \cref{lem:precom-stab}. We also do the $O(n+m)$-time precomputation described in \cref{lem:test-stab}.
    Now, suppose we are given a query $(a, b)$.
    If $a \geq b$, we return $stab(a, b) = 0$.
    Otherwise, using \cref{lem:test-stab}, we query whether $T[a, b]$ is stable. If not, we are told either that $T[a, b]$ is not a leftmost path or given a vertex $v \in T(a, b)$ with minimum $witn(v)$, witnessing that $T[a, b]$ is not stable, i.e. $witn(v) < a$.
    If $T[a, b]$ is stable, we return $stab(a, b) = a$.
    If $T[a, b]$ is not a leftmost path, we return $stab(a, b) = 0$.
    Otherwise, we return $stab(a, b) = stab(witn(v), v)$.

    \begin{claim*}
        If $T[a,b]$ is a leftmost path then $stab(a, b) = stab(witn(v), v)$.
    \end{claim*}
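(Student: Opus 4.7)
The plan is to prove that $stab(a, b)$ and $stab(witn(v), v)$ are the largest element of the same set of vertices, hence equal (or both $0$). Specifically, I would show that for every $\alpha$, the two conditions ``$\alpha$ is an ancestor of $a$ and $T[\alpha, b]$ is stable'' and ``$\alpha$ is an ancestor of $witn(v)$ and $T[\alpha, v]$ is stable'' are equivalent. Since $T[a, b]$ is itself not stable at this point of the algorithm (the witness $v$ gives $witn(v) < a$), every valid $\alpha$ is automatically a \emph{proper} ancestor of $a$.

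I would first verify that the two ancestry conditions coincide once stability is in hand. If $\alpha$ is a proper ancestor of $a$ with $T[\alpha, b]$ stable, then \cref{lem:witn-stab} applied to $v \in T(a, b) \subseteq T(\alpha, b)$ yields $witn(v) \geq \alpha$, and comparability of ancestors of $v$ makes $\alpha$ an ancestor of $witn(v)$. Conversely, any ancestor $\alpha$ of $witn(v)$ satisfies $\alpha \leq witn(v) < a$ by \cref{obs:numbering}, and since $\alpha$ and $a$ are comparable ancestors of $v$, $\alpha$ is a proper ancestor of $a$. Thus, modulo the stability equivalence, both $stab$ values optimize over the same candidate set.

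The heart of the argument is then the equivalence: for a proper ancestor $\alpha$ of $a$ with $\alpha \leq witn(v)$, $T[\alpha, b]$ is stable iff $T[\alpha, v]$ is stable. The forward direction is a sub-path argument, using that $T[\alpha, v]$ is a prefix of the leftmost path $T[\alpha, b]$. For the converse, assuming $T[\alpha, v]$ stable, I would check via \cref{lem:witn-stab} that $witn(u) \geq \alpha$ for every $u \in T(\alpha, b)$, splitting into the cases $u \in T(\alpha, v)$ (stability of $T[\alpha, v]$), $u = v$ (direct from $\alpha \leq witn(v)$), and $u \in T(v, b)$, where the minimality of $witn(v)$ over $T(a, b)$ yields $witn(u) \geq witn(v) \geq \alpha$. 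The main obstacle is justifying that $T[\alpha, b]$ is itself a leftmost path; this requires concatenating the leftmost paths $T[\alpha, v]$ and $T[a, b]$. The key observation is that the child of $a$ on $T[a, b]$ leading to $b$ must be $\ell(a)$: as $\alpha < a < v$ on the leftmost path $T[\alpha, v]$, the unique child of $a$ that is an ancestor of $v$ (and hence of $b$) is precisely $\ell(a)$, so both leftmost paths continue from $a$ via $\ell(a)$ down the left-child chain. Once the leftmost-path structure of $T[\alpha, b]$ is established, \cref{lem:witn-stab} closes the equivalence and the claim follows by comparing the largest elements of the two identical sets.
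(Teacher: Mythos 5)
Your proof is correct and follows essentially the same approach as the paper, relying on \cref{lem:witn-stab}, the minimality of $witn(v)$ over $T(a,b)$, and concatenation of leftmost paths. You frame it as an equality of candidate sets and anchor the concatenation on the observation that the child of $a$ on $T[a,b]$ equals $\ell(a)$, whereas the paper compares the two maxima directly and deduces the concatenation from the fact that $b$ is a leftmost descendant of $v$; these are cosmetic differences.
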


    \begin{subproof}
        Suppose first that $stab(witn(v), v) \neq 0$ and let $w = stab(witn(v), v)$.
        Every $u \in T(w, v)$ satisfies ${witn(u) \geq w}$ by \cref{lem:witn-stab}.
        Since $T[a, b]$ is a leftmost path and $v \in T(a, b)$ then $b$ is a leftmost descendant of $v$. Thus, $T[w, b]$ is a leftmost path. 
        Furthermore, if $u \in T(w, b)$, either $u \in T(w, v)$ or $u = v$ or $u \in T(v, b)$.
        If $u \in T(w, v)$ then $witn(u) \geq w$. 
        Since $w = stab(witn(v), v)$ then $w \geq witn(v)$ by definition. 
        If $u \in T(v, b)$ then $witn(u) \geq witn(v) \geq w$ by minimality of $witn(v)$.
        Therefore, $T[w, b]$ is stable by \cref{lem:witn-stab}, and $w \leq witn(v) < a$, so $stab(a, b) \geq w$.

        Conversely, suppose $stab(a, b) \neq 0$ and let $s = stab(a, b)$.
        Then, $s \leq a$ and $s \leq witn(u)$ for every $u \in T(s, b)$ by \cref{lem:witn-stab}.
        Thus, we have $s \leq witn(u)$ for every $u \in T(a, b)$ and thus $s \leq witn(v)$.
        Furthermore, if $u \in T(s, v)$ then $u \in T(s, b)$ so $witn(u) \geq s$. Thus, $T[s, v]$ is stable, and therefore $stab(witn(v), v) \geq s$.

        Therefore, if one of $stab(a, b), stab(witn(v), v)$ is non-zero then the other as well and the two are equal. 
    \end{subproof}
    This proves the correctness of the algorithm. Answering any query takes constant time.
\end{proof}

From now on, we assume that we performed the precomputation described in \cref{prop:queries-stability} and thus that we can compute any $stab(a, b)$ in constant time.

\subsection{Computing half-connected type-1 separations}

\begin{lemma}
\label{lem:comp-type-1-triples}
    There is an $O(n)$-time algorithm which, given a graph $G$, computes the set $\mathcal{S}$ of all triples of vertices $(a, b, d)$ such that $lwpt(d) = a, lwpt_2(d) = b$ and $b = p(d)$.
\end{lemma}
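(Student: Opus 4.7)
The plan is to exploit the fact that a triple $(a,b,d)$ satisfying the stated conditions is uniquely determined by its third coordinate $d$: the equality $b = p(d)$ forces $b$, and then $lwpt(d) = a$ forces $a$. Hence, enumerating admissible triples reduces to enumerating those vertices $d$ for which $lwpt_2(d) = p(d)$.

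As part of the general precomputation of \cref{subsec:precomputations}, we already have constant-time access to $p(d)$, $lwpt(d)$ and $lwpt_2(d)$ for every $d \in V(G)$. The algorithm is therefore straightforward: initialize $\mathcal{S} \gets \emptyset$, iterate over every non-root vertex $d \in V(G)$, test whether $lwpt_2(d) = p(d)$, and if so append the triple $(lwpt(d), p(d), d)$ to $\mathcal{S}$; finally return $\mathcal{S}$. Each iteration performs a constant number of array lookups and comparisons, yielding a total running time of $O(n)$.

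Correctness follows immediately from the uniqueness observation: every triple added to $\mathcal{S}$ satisfies the three required equalities by construction, and conversely any triple meeting the specification is produced during the step that processes its third coordinate. There is no real obstacle here; the only point to verify is that $p(\cdot)$, $lwpt(\cdot)$ and $lwpt_2(\cdot)$ are indeed stored in arrays accessible in constant time, which is ensured by the DFS construction of $T$ together with \cref{lem:precomp-lwpts}.
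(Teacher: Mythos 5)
Your proposal is correct and matches the paper's proof essentially verbatim: both iterate over all vertices $d$, test $lwpt_2(d) = p(d)$, and emit $(lwpt(d), p(d), d)$, relying on the precomputed lowpoint arrays for constant-time lookups. The only cosmetic difference is that you explicitly skip the root, which the paper leaves implicit since the root has no parent.
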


\begin{proof}
    We initially set $S = \emptyset$ and iterate over all vertices $d \in V(G)$. For each of them, if $lwpt_2(d) = p(d)$, we add the triple $(lwpt(d), p(d), d)$ to $S$.
    This takes time constant time for every $d \in V(G)$ so time $O(n)$ overall.

    Once we have iterated over all vertices, we have $S = \mathcal{S}$.
\end{proof}

\begin{lemma}
\label{lem:comp-type-1}
    There is an $O(n)$-time algorithm which, given a 2-connected graph $G$ equipped with a normal spanning tree $T$ and a numbering of its vertices which is compatible with $T$, and the set $\mathcal{S}$ all triples of vertices $(a, b, d)$ such that $lwpt(d) = a, lwpt_2(d) = b$ and $b = p(d)$, computes the set $\mathcal{T}_1$ of all half-connected type-1 separations.
\end{lemma}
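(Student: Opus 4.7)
My plan is to follow the characterization in \cref{prop:charac-t1-half-con}, which decomposes half-connected type-1 separations into two cases. I would first bucket-sort the triples of $\mathcal{S}$ by their second coordinate (the value $b = p(d)$) in $O(n)$ time, and then, within each bucket, group them into their sub-intervals of common $(a, b)$. By \cref{dfn:compatible-numbering}~\ref{itm:compatible-3}, the children $d$ of $b$ with a common value of $(lwpt(d), lwpt_2(d)) = (a, b)$ form a contiguous range in the sorted children list of $b$, so this second grouping can be performed by a single pass through the children of every vertex $b$, in $O(n)$ time overall. For every resulting pair $(a, b)$, I compute, in time $O(|\mathcal{D}_{ab}|)$, the values $\alpha := \min \mathcal{D}_{ab}$, $c_M := \max \mathcal{D}_{ab}$, $\omega := c_M + ND(c_M) - 1$, and $s := \sum_{d \in \mathcal{D}_{ab}} ND(d)$; note that $\bigcup_{d \in \mathcal{D}_{ab}} Desc(d) = [\alpha,\omega]$ because the subtrees of consecutive children of $b$ occupy consecutive numbering intervals.

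For the first case of \cref{prop:charac-t1-half-con}, I iterate over each triple $(a, b, d) \in \mathcal{S}$ and add the separation encoded (in the sense of \cref{subsec:representing-separations}) as $(a, b, d, d + ND(d) - 1, d, d)$ whenever $ND(d) + 2 < n$, the latter ensuring that some vertex outside $\{a,b\} \cup Desc(d)$ exists. Each check takes $O(1)$, for a total of $O(n)$.

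For the second case, I iterate over every pair $(a, b)$ with $|\mathcal{D}_{ab}| \geq 2$ and $s + 2 < n$. The disjunction ``$a' = b$ or $a = 1$'' is equivalent to ``$a = p(b)$ or $a = 1$'' and is testable in $O(1)$. Otherwise, I check whether some vertex $u$ in $[a', b - 1]$ or in $[b + 1, \alpha - 1]$ satisfies $N_{\min}(u) < a$ or $N_{\max}(u) > \omega$: both intervals admit $O(1)$ min/max queries through the RMQ structures of \cref{subsec:precomputations} set up on $N_{\min}$ and $N_{\max}$. When the test succeeds, I record the separation with encoding $(a, b, \alpha, \omega, \alpha, c_M)$.

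The main obstacle is to retrieve, in $O(1)$, the vertex $a'$ (the child of $a$ which is an ancestor of $b$) for every queried pair $(a, b)$: this is a level-ancestor query, whereas \cref{subsec:precomputations} only preprocesses LCA. I sidestep this by batching. I bucket-sort the at most $|\mathcal{S}| \leq n$ queries by $b$, and then perform a single DFS of $T$ maintaining a stack $\mathrm{Stk}$ indexed by depth, where $\mathrm{Stk}[k]$ stores the ancestor of the currently visited vertex at depth $k$. When the DFS reaches $b$, each pending query $(a, b)$ is answered by returning $\mathrm{Stk}[depth(a) + 1]$, which takes $O(1)$; the stack is updated in $O(1)$ at each DFS step, so all queries are resolved in $O(n)$ total. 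Correctness of the algorithm is then immediate from \cref{prop:charac-t1-half-con}, and the cumulative running time is $O(n)$.
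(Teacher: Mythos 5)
Your proposal is correct and follows the paper's strategy: enumerate candidate type-1 separators via $\mathcal{S}$, group the triples by their common $(a,b)$ pair, and then test the two cases of \cref{prop:charac-t1-half-con}, with the connectivity test for the multi-component case carried out by constant-time RMQ queries on $N_{\min}$ and $N_{\max}$. The small differences are cosmetic or fill-in-the-details. First, you bucket-sort by $b$ and then exploit \cref{dfn:compatible-numbering}~\ref{itm:compatible-3} to group by $(a,b)$ within each child list, whereas the paper does a full lexicographic sort; both are $O(n)$. Second, you replace the paper's special-casing of $(a,b,d)=(1,2,3)$ and $(a,b)=(1,2)$ with the cardinality tests $ND(d)+2 < n$ and $s+2 < n$; these are easily seen to be equivalent in a 2-connected graph with a compatible numbering and are arguably cleaner. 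Third—and this is the only genuinely nontrivial deviation—the paper is rather terse about how the child $a'_{ab}$ of $a$ on the path to $b$ is obtained in $O(n)$ total (it just says ``by iterating over the tuples $(a,b,d)\in\mathcal S$ and over the children of all $a$'', implicitly a merge of the sorted query values $b$ against the sorted child intervals for each fixed $a$). Your batched offline level-ancestor, maintaining a depth-indexed ancestor stack along a DFS and answering each query $(a,b)$ with $\mathrm{Stk}[depth(a)+1]$ when the DFS reaches $b$, is a concrete and correct alternative that achieves the same $O(n)$ bound and is perhaps easier to implement and verify.
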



\begin{proof}
    We start by sorting $\mathcal{S}$ lexicographically. Since $|\mathcal{S}| \leq n$ and all values in $\mathcal{S}$ are between 1 and $n$, this can be done in time $O(n)$ by \cref{cor:sort-lex}.
    By iterating over the tuples $(a, b, d) \in \mathcal{S}$ and over the children of all $a \in V(G)$, we can compute for every tuple $(a, b, d) \in \mathcal{S}$ the unique child $a'_{ab}$ of $a$ which is an ancestor of $b$.
    We initially set $\mathcal{T}_1 = \emptyset$. 
    We then iterate over all elements of $\mathcal{S}$.
    
    For every triple $(a, b, d) \in \mathcal{S}$, we add the tuple $(a, b, d, LD(d), d, d)$ to $\mathcal{T}_1$ unless $(a, b, d) = (1, 2, 3)$ and 3 is the only child of 2 in $T$.
    Furthermore, for every pair $(a, b)$ such that there is some triple $(a, b, \cdot)$ in $\mathcal{S}$, let $m_{a, b}$ be the smallest child $c$ of $b$ such that $(a, b, c) \in \mathcal{S}$, and $M_{a, b}$ be the largest child $c$ of $b$ such that $(a, b, c) \in \mathcal{S}$. Note that since $\mathcal{S}$ is sorted lexicographically, we can compute all $m_{a, b}$ and $M_{a, b}$ on the fly while traversing $\mathcal{S}$. 
    For every such $(a, b)$, if $m_{a, b} \neq M_{a, b}$, we add the tuple $(a, b, m_{a, b}, LD(M_{a, b}), m_{a, b}, M_{a, b})$ to $\mathcal{T}_1$ if $(a, b) \neq (1, 2)$ and either $a'_{ab}=b$ or $a=1$ or some vertex $u \in [a'_{ab}, m_{a,b}-1] \setminus \{b\}$ has a neighbor in $G$ in $[1, a-1] \cup [LD(M_{a,b}) + 1, n]$.

    \begin{claim*}
        If $(a, b, \textbf{f}, \textbf{l}, i, j) \in \mathcal{T}_1$ then $(a, b, \textbf{f}, \textbf{l}, i, j)$ encodes a half-connected type-1 separation.
    \end{claim*}

    \begin{subproof}
        Consider a tuple $(a, b, \textbf{f}, \textbf{l}, i, j) \in \mathcal{T}_1$.
        By design of the algorithm, either there exists $d \in V(G)$ such that we have ${(a, b, \textbf{f}, \textbf{l}, i, j) = (a, b, d, LD(d), d, d)}$ and $(a, b, d) \in \mathcal{S}$, or $(a, b, \textbf{f}, \textbf{l}, i, j) = (a, b, m_{a, b}, LD(M_{a, b}), m_{a, b}, M_{a, b})$ with $m_{a, b} \neq M_{a, b}$.
        Suppose first that there exists $d \in V(G)$ such that $(a, b, \textbf{f}, \textbf{l}, i, j) = (a, b, d, LD(d), d, d)$ and $(a, b, d) \in \mathcal{S}$. 
        By definition of $\mathcal{S}$, we have $a = lwpt(d), b = lwpt_2(d)$ and $b = p(d)$, thus $a < b < d$. If $(a, b, d) \neq (1, 2, 3)$ then $d > 3$ so there exists $v \notin \{a, b\}$ which is not a descendant of $d$ (take $v \in \{1, 2, 3\} \setminus \{a, b\}$). 
        If $(a, b, d) = (1, 2, 3)$, then 2 has a child $v \neq 3$, otherwise we would not have added $(a, b, d, LD(d), d, d)$ to $\mathcal{T}_1$. Then, $v \notin \{a, b\}$ is not a descendant of $d$.
        Since the ordering is compatible with $T$, the tuple $(a, b, d, LD(d), d, d)$ encodes the pair $(A, B)$ with $B = \{a, b\} \cup Desc(d)$ and $A = V(G) \setminus (B \setminus \{a, b\})$. By \cref{prop:charac-t1-half-con}, $(A, B)$ is a half-connected type-1 separation. Furthermore, we indeed have that $d$ is the minimum and maximum child of $b$ in $B$.
        
        Suppose now that $(a, b, \textbf{f}, \textbf{l}, i, j) = (a, b, m_{a, b}, LD(M_{a, b}), m_{a, b}, M_{a, b})$ with $m_{a, b} \neq M_{a, b}$.
        Let $\mathcal{C}$ be the set of children $c$ of $b$ such that $m_{a, b} \leq c \leq M_{a, b}$, and $\mathcal{D}$ be the set of children $d$ of $b$ such that $lwpt(d) = a$ and $lwpt_2(d) = b$.
        By definition of $\mathcal{S}$, we have $(a, b, d) \in \mathcal{S}$ if and only if $d \in \mathcal{D}$, and therefore $m_{a, b}$ is the minimum element of $\mathcal{D}$ and $M_{a, b}$ is the maximum element of $\mathcal{D}$.
        Thus, if $d \in \mathcal{D}$, then $d$ is a child of $b$ and $m_{a, b} \leq d \leq M_{a, b}$ so $d \in \mathcal{C}$.
        Conversely, if $c \in \mathcal{C}$ then $c$ is a child of $b$ such that $m_{a, b} \leq c \leq M_{a, b}$. 
        However, the children of $b$ are numbered according to their pair $(lwpt(\cdot), lwpt_2(\cdot))$. Since $(a, b, m_{a, b}) \in \mathcal{S}$ and $(a, b, M_{a, b}) \in \mathcal{S}$ then ${(lwpt(m_{a, b}), lwpt_2(m_{a, b})) = (a, b) = (lwpt(M_{a, b}), lwpt_2(M_{a, b}))}$. Since we have $m_{a, b} \leq c \leq M_{a, b}$, it follows that $(lwpt(c), lwpt_2(c)) = (a, b)$ and thus $c \in \mathcal{D}$. 
        This finishes to prove that ${\mathcal{C} = \mathcal{D}}$, so in particular we have $|\mathcal{D}| > 1$.
        Since the ordering is compatible with $T$, the tuple $(a, b, m_{a, b}, LD(M_{a, b}), \cdot, \cdot)$ encodes the pair $(A, B)$ with ${B = \{a, b\} \cup \bigcup_{d \in \mathcal{D}} Desc(d)}$ and $A = V(G) \setminus (B \setminus \{a, b\})$.
        Furthermore, $m_{a,b}$ is the smallest element of $\mathcal{D}$ and $LD(M_{a,b})$ is the maximum vertex which is the descendant of some $d \in \mathcal{D}$.
        Since ${(a, b) \neq (1, 2)}$, there exists ${v \in \{1, 2\} \setminus \{a, b\}}$. Then, $v \notin \{a, b\}$ is not a descendant of any $d \in \mathcal{D}$.
        Furthermore, since we added the tuple $(a, b, m_{a, b}, LD(M_{a, b}), m_{a, b}, M_{a, b})$ to $\mathcal{T}_1$ then either $a'_{ab}=b$ or $a=1$ or some vertex $u \in [a'_{ab}, m_{a,b}-1] \setminus \{b\}$ has a neighbor in $G$ in $[1, a-1] \cup [LD(M_{a,b}) + 1, n]$
        It now follows from \cref{prop:charac-t1-half-con} that $(A, B)$ is a half-connected type-1 separation. Furthermore, $m_{a, b}$ is the minimum child of $b$ in $B$ and $M_{a, b}$ the maximum child of $b$ in $B$.
    \end{subproof}

    \begin{claim*}
        If $(A, B)$ is a half-connected type-1 separation, some tuple $(a, b, \textbf{f}, \textbf{l}, c_m, c_M) \in \mathcal{T}_1$ correctly encodes it.
    \end{claim*}

    \begin{subproof}
        Let $(A, B)$ be such a separation with separator $\{a < b\}$. By \cref{lem:a-b-comp}, we have that $a$ is an ancestor of $b$. Let $a' \in V(G)$ be the only child of $a$ which is an ancestor of $b$. Let $\mathcal{D}$ be the set of children $d$ of $b$ such that $lwpt(d) = a$ and $lwpt_2(d) = b$. Let $\alpha$ be the smallest element of $\mathcal{D}$ and $\omega$ be the maximum vertex which is the descendant of some $d \in \mathcal{D}$.
        By \cref{prop:charac-t1-half-con}, one of the following holds.

        \begin{itemize}
            \item There exists $d \in \mathcal{D}$ such that, up to exchanging $A$ and $B$, we have $B = \{a, b\} \cup Desc(d)$ and $A = V(G) \setminus (B \setminus \{a, b\})$, and some vertex $v \notin \{a, b\}$ is not a descendant of $d$.
            \item Up to exchanging $A$ and $B$, we have $B = \{a, b\} \cup \bigcup_{d \in \mathcal{D}} Desc(d)$ and $A = V(G) \setminus (B \setminus \{a, b\})$, $|\mathcal{D}| > 1$, some vertex $v \notin \{a, b\}$ is not a descendant of any $d \in \mathcal{D}$ and either $a'=b$ or $a=1$ or some vertex $u \in [a', \alpha-1] \setminus \{b\}$ has a neighbor in $G$ in $[1, a-1] \cup [\omega + 1, n]$.
        \end{itemize}

        In the first case, the separation $(A, B)$ is encoded by $(a, b, d, LD(d), d, d)$. Since $d \in \mathcal{D}$ then $(a, b, d) \in \mathcal{S}$. If we have ${(a, b, d) \neq (1, 2, 3)}$ then $(a, b, d, LD(d), d, d)$ is added to $\mathcal{T}_1$ when considering $(a, b, d) \in \mathcal{S}$. If $(a, b, d) = (1, 2, 3)$ then $v$ witnesses that 3 is not the only child of 2 in $T$, thus we also added $(a, b, d, LD(d), d, d)$ to $\mathcal{T}_1$ when considering $(a, b, d) \in \mathcal{S}$.

        In the second case, the separation $(A, B)$ is encoded by $(a, b, m_{a,b},LD(M_{a,b}),m_{a,b},M_{a,b})$.
        Since $|\mathcal{D}| > 1$ then we have ${m_{a,b} \neq M_{a,b}}$ and there is some triple $(a, b, \cdot) \in \mathcal{S}$.
        Note that $m_{a,b} = \alpha$ and $\omega = LD(M_{a,b})$.
        Since $v$ is not a descendant of any $d \in \mathcal{D}$ then $(a, b) \neq (1, 2)$, and since either $a'_{ab} = b$ or $a=1$ or some vertex $u \in [a'_{ab}, m_{a,b}-1] \setminus \{b\}$ has a neighbor in $G$ in $[1, a-1] \cup [M_{a,b} + 1, n]$ then we added $(a, b, m_{a,b},LD(M_{a,b}),m_{a,b},M_{a,b})$ to $\mathcal{T}_1$ when considering the pair $(a, b)$.
    \end{subproof}

    To show that this algorithm runs in time $O(n)$, since $|\mathcal{S}| \leq n$, it suffices to show that we can check in constant time whether some vertex $u \in [a'_{ab}, m_{a,b}-1] \setminus \{b\}$ has a neighbor in $G$ in $[1, a-1] \cup [LD(M_{a,b}) + 1, n]$.
    This can simply be done by querying the minimum of $N_{\min}[a'_{ab}, b-1]$ and $N_{\min}[b+1, m_{a,b}-1]$, and the maximum of $N_{\max}[a'_{ab}, b-1]$ and $N_{\max}[b+1, m_{a,b}-1]$, which we can do in constant time with our precomputation.
\end{proof}

\subsection{Computing half-connected type-2 separations}

The purpose of this section is to prove the following proposition.
\begin{proposition}
\label{prop:comp-type-2}
    There is an $O(n+m)$-time algorithm which, given a 2-connected graph $G$ equipped with a normal spanning tree $T$ and a numbering of its vertices which is compatible with $T$, computes a set~$\mathcal S$ of half-connected type-2 separations such that every half-connected type-2 separation not in~$\mathcal S$ is crossed by a separation in~$\mathcal S$.
    In particular, $\mathcal S$ contains all totally-nested type-2 separations.
\end{proposition}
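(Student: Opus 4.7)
The plan is to iterate over every pair $(b, I)$ with $b \in V(G)$ of depth at least~$2$ in~$T$ and $I \in \{0, 1, \ldots, k_b\}$, where $k_b$ is the number of children of~$b$. There are only $\sum_b (k_b+1) = O(n)$ such pairs, so it suffices to spend $O(1)$ per pair. For the pair $(b, I)$ with children $c_1 < \cdots < c_{k_b}$, set
\[
\alpha_I := \min(lwpt(c_I),\, p(p(b))) \quad (I \ge 1), \qquad \alpha_0 := p(p(b)),
\]
\[
\beta_I := maxhgpt(c_{I+1}) \quad (I < k_b), \qquad \beta_{k_b} := 2,
\]
then compute $a^\star := stab(\alpha_I, b)$ in $O(1)$ via \cref{prop:queries-stability}. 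By \cref{prop:charac-t2-half-con}, the triple $(a^\star, b, I)$ encodes a half-connected type-2 separation precisely when $a^\star \ge \max(\beta_I, 2)$, $a^\star \ne 0$, and condition~(vi) of \cref{prop:charac-t2-half-con} holds. Since $\mathcal{D}_{a^\star, b}$ is a contiguous block of siblings by the compatibility of the numbering with~$T$ (see \cref{itm:compatible-3}), condition~(vi) can be checked in $O(1)$ by comparing the precomputed pairs $(lwpt(c_I), lwpt_2(c_I))$ and $(lwpt(c_{I+1}), lwpt_2(c_{I+1}))$ to $(a^\star, b)$. When all checks pass, add the encoding of the resulting separation to~$\mathcal S$ via the scheme of \cref{subsec:representing-separations}. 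Added to the precomputation, this yields the claimed $O(n+m)$ running time.

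By construction, each separation added to~$\mathcal S$ is a half-connected type-2 separation (via \cref{prop:charac-t2-half-con}). The remaining task is to verify that every half-connected type-2 separation $s = (A, B)$ not in~$\mathcal S$ is crossed by some separation in~$\mathcal S$. Writing $s$ as $(a, b, I)$ in the form of \cref{prop:charac-t2-half-con}, non-inclusion of~$s$ implies either $a < a^\star$ strictly, or the algorithm skipped $(b, I)$ because the candidate~$(a^\star, b, I)$ failed condition~(vi). In both cases, the ``witness'' underlying the algorithm's choice --- a stability witness $v \in T(a^\star, b)$ with minimal $witn(v)$ (as used in the proof of \cref{prop:queries-stability}), or a sibling of~$c_I$ lying in $\mathcal{D}_{a^\star, b}$ on the other side of~$I$ --- should yield a separation added by the algorithm at a different pair $(b', I')$ whose separator interlaces with $\{a, b\}$ on~$T$, and therefore crosses~$s$ by \cref{lem:crossing-vertices}.

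The main obstacle is exactly this crossing argument. Since two type-2 separations sharing a vertex in their separators are always nested (the first case of \cref{lem:crossing-vertices} cannot hold because the shared vertex lies in both separators and so cannot be strictly separated by either, and the second case fails when the separators differ), producing a separation crossing~$s$ requires identifying a suitable $b' \ne b$. Tracking how stability witnesses along $T(a, b)$ and the sibling-order structure of the children of~$b$ determine such~$b'$ is the delicate structural step, relying on a careful case analysis using \cref{prop:charac-2-sep} and the monotonicity properties of $(lwpt, lwpt_2)$ induced by the numbering.
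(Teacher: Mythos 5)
Your approach departs from the paper's: the paper computes a \emph{maximal nested set} of $P$-good tuples per leftmost path via a stack-based sweep, and the crossing claim then falls out of maximality, whereas you try to pick exactly one candidate $a^\star$ per pair $(b,I)$ and argue separately that anything missed is crossed. This shortcut does not work. The problem is that for a fixed $(b,I)$, the totally-nested separation with separator $\{a,b\}$ need not be the one with the largest $a$: your $a^\star = stab(\alpha_I,b)$ may be strictly larger than the totally-nested $a$, and the totally-nested one is then not in $\mathcal{S}$ nor crossed by anything in $\mathcal{S}$. Concretely, take $V(G)=\{1,\ldots,6\}$ with the path $1\text{--}2\text{--}3\text{--}4\text{--}5\text{--}6$ as $T$ and back-edges $(6,2),(5,2),(6,1)$. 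This is 2-connected and the natural numbering is compatible with~$T$. The separation with separator $\{2,5\}$ and $I=0$ (so $B=\{2,3,4,5\}$) is a half-connected type-2 separation, and a short check shows it is totally-nested (none of $\{3,1\},\{3,6\},\{4,1\},\{4,6\}$ is a separator, and it shares a vertex with every other type-2 separator on the path). But for $(b,I)=(5,0)$ you compute $\alpha_0=p(p(5))=3$ and $a^\star = stab(3,5)=3$, so you add the separation with separator $\{3,5\}$ (which is \emph{not} totally-nested: $\{2,4\}$ crosses it) and never add $\{2,5\}$. For $(5,1)$ your $a^\star$ is $1$ and the $\beta$-check fails. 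So the totally-nested $\{2,5\}$ ends up outside~$\mathcal{S}$, contradicting the proposition's conclusion.

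The deeper issue is that your crossing argument only accounts for two ``witness'' mechanisms (a stability failure in $T(a^\star,b)$, or a sibling of $c_I$ straddling~$I$), but when $a<a^\star$ with $T[a,b]$ still stable there is no such witness at all, as the example shows. The paper avoids this by iterating $a$ \emph{downward} per $b$ (Algorithm~1, lines 19--22 and the outer while loop), using a stack to stop only when going lower would cross a tuple already committed on a longer sub-path, thereby certifying maximality of the nested family. Your per-$(b,I)$-single-candidate scheme cannot recover this behaviour without essentially reintroducing that downward sweep.
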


In the following definition, one should think of $\{a, b\}$ as the separator of a type-2 separation, and if $\mathcal{D}$ is the set of children $d$ of $b$ such that $lwpt(d) = a$ and $lwpt_2(d) = b$ then $d_1$ is the largest child of $b$ before the elements of $\mathcal{D}$, and $d_2$ is the smallest child of $b$ after the elements of $\mathcal{D}$.

\begin{definition}\label{dfn:type-2-p-good}
    Given a 2-connected graph $G$ equipped with a normal spanning tree $T$ and a numbering of its vertices which is compatible with $T$, and a maximal leftmost path $P$ of $G$, we say that a tuple $(a,b,d_1,d_2)$ is \emph{$P$-good} if it satisfies all of the following properties.
    \begin{enumerate}[(1)]
        \item\label{item:p-good-1} $a,b \in P$ and $1 < a \leq p(p(b))$.
        \item\label{item:p-good-2}  $p(d_1)=b$ or $d_1 = 0$, and $p(d_2)=b$ or $d_2=n+1$.
        \item\label{item:p-good-3}  All children $d \leq d_1$ of $b$ satisfy $lwpt(d) \geq a$ and $a < hgpt(d)$.
        \item\label{item:p-good-4}  All children $d$ of $b$ with $d_1 < d < d_2$ satisfy $lwpt(d) = a$ and $lwpt_2(d)=b$.
        \item\label{item:p-good-5}  All children $d \geq d_2$ of $b$ satisfy $hgpt(d) \leq a$ and $lwpt(d) < a$.
        \item\label{item:p-good-6}  $T[a,b]$ is stable.
    \end{enumerate}
\end{definition}

\begin{observation}\label{obs:p-good-unique}
    No two $P$-good tuples $(a,b,d_1,d_2) \neq (a',b',d_1',d_2')$ have $a=a'$ and $b=b'$.\qed
\end{observation}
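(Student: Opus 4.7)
The plan is to show that for fixed $a$ and $b$, the entries $d_1$ and $d_2$ in any $P$-good tuple $(a,b,d_1,d_2)$ are uniquely determined. Let $\mathcal{C}_3$, $\mathcal{C}_4$, $\mathcal{C}_5$ denote the sets of children of $b$ satisfying the child-level conclusions of conditions~\ref{item:p-good-3}, \ref{item:p-good-4}, and~\ref{item:p-good-5} of \cref{dfn:type-2-p-good}, respectively.

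The first step is to verify that $\mathcal{C}_3$, $\mathcal{C}_4$, $\mathcal{C}_5$ are pairwise disjoint. The pair $\mathcal{C}_3, \mathcal{C}_5$ is disjoint because the former forces $hgpt(d) > a$ while the latter forces $hgpt(d) \leq a$, and the pair $\mathcal{C}_4, \mathcal{C}_5$ is disjoint because the former forces $lwpt(d) = a$ while the latter forces $lwpt(d) < a$. For the remaining pair, if $d \in \mathcal{C}_4$ then $lwpt(d) = a$ and $lwpt_2(d) = b = p(d)$; since every element of $L(d)$ is a proper ancestor of $d$ and thus lies on $T[1,b]$, we obtain $L(d) = \{a,b\}$, hence $hgpt(d) = a$, which excludes $d$ from $\mathcal{C}_3$.

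The second step uses this disjointness to force $d_1$. By~\ref{item:p-good-2}, we have $d_1 \in \{0\} \cup \{d : p(d) = b\}$. By~\ref{item:p-good-3}, every child $d \leq d_1$ lies in $\mathcal{C}_3$; conversely,~\ref{item:p-good-4} and~\ref{item:p-good-5} combined with disjointness place every child $d > d_1$ outside $\mathcal{C}_3$. Thus if $\mathcal{C}_3 = \emptyset$ then $d_1$ cannot be a child of $b$ (otherwise $d_1$ would itself lie in $\mathcal{C}_3$), so $d_1 = 0$; whereas if $\mathcal{C}_3 \neq \emptyset$ then $d_1$ itself lies in $\mathcal{C}_3$ and must equal its maximum element. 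A symmetric argument based on~\ref{item:p-good-5} shows that $d_2 = \min \mathcal{C}_5$ when $\mathcal{C}_5 \neq \emptyset$ and $d_2 = n+1$ otherwise. In both cases $d_1$ and $d_2$ depend only on $a$, $b$, and the graph, so two $P$-good tuples agreeing on their first two coordinates must coincide. There is no real obstacle; the only mild care needed concerns the sentinel values $0$ and $n+1$, which is handled cleanly by the mutual exclusivity argument.
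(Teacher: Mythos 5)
Your proof is correct, and it fills in the detail that the paper leaves implicit (the paper marks the observation as immediate with a terminal $\qed$). The key insight you correctly identify is that the three child-level predicates in conditions~\ref{item:p-good-3}, \ref{item:p-good-4}, \ref{item:p-good-5} of \cref{dfn:type-2-p-good} are pairwise exclusive, which forces $d_1 = \max \mathcal{C}_3$ (or the sentinel $0$) and $d_2 = \min \mathcal{C}_5$ (or the sentinel $n+1$), both of which depend only on $a$, $b$, and the graph. The slightly nontrivial part — that $lwpt(d)=a$ together with $lwpt_2(d) = b = p(d)$ entails $hgpt(d) = a$, because $b$ is the highest proper ancestor of $d$ and hence $L(d)$ cannot contain anything above $b$ — is handled cleanly. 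This is essentially the argument the authors intended; you have just made it explicit.
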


\begin{observation}\label{obs:d-moves-to-left}
    If $(a,b,d_1,d_2)$ and $(a',b,d_1',d_2')$ with $a > a'$ are two $P$-good tuples then $d_1 \leq d_1'$ and $d_2 \leq d_2'$.\qed
\end{observation}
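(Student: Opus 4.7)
The plan is to prove $d_1 \leq d_1'$ and $d_2 \leq d_2'$ independently, by short contradiction arguments that exploit how the partition of the children of~$b$ into the three groups (``$\leq d_1$'', ``$d_1 < \cdot < d_2$'', ``$\geq d_2$'') shifts when~$a$ varies. The crux is that \cref{dfn:type-2-p-good}~\ref{item:p-good-3}--\ref{item:p-good-5} force every child of~$b$ at position~$\leq d_1$ (respectively $\leq d_1'$) to satisfy $lwpt \geq a$ (resp.\ $\geq a'$), while every child at position $> d_1$ (resp.\ $> d_1'$) satisfies $lwpt \leq a$ (resp.\ $\leq a'$). Only the lowpoint halves of these conditions will be used.

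For $d_1 \leq d_1'$, I would assume $d_1 > d_1'$ for contradiction. Then $d_1 \geq 1$, so \cref{dfn:type-2-p-good}~\ref{item:p-good-2} makes $d_1$ a genuine child of~$b$, and \cref{dfn:type-2-p-good}~\ref{item:p-good-3} applied to the unprimed tuple yields $lwpt(d_1) \geq a > a'$. On the other hand, $d_1 > d_1'$ places $d_1$ (relative to the primed tuple) either strictly between $d_1'$ and $d_2'$, in which case \cref{dfn:type-2-p-good}~\ref{item:p-good-4} gives $lwpt(d_1) = a'$; or at a position $\geq d_2'$, in which case \cref{dfn:type-2-p-good}~\ref{item:p-good-5} gives $lwpt(d_1) < a'$. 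Both conclusions contradict $lwpt(d_1) > a'$.

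The proof of $d_2 \leq d_2'$ is symmetric. Assuming $d_2 > d_2'$, we get $d_2' \leq n$, so \cref{dfn:type-2-p-good}~\ref{item:p-good-2} makes $d_2'$ a genuine child of~$b$, and the primed version of \cref{dfn:type-2-p-good}~\ref{item:p-good-5} gives $lwpt(d_2') < a'$. Meanwhile, $d_2' < d_2$ places $d_2'$ (relative to the unprimed tuple) either in position $\leq d_1$, where \cref{dfn:type-2-p-good}~\ref{item:p-good-3} gives $lwpt(d_2') \geq a > a'$, or strictly between $d_1$ and $d_2$, where \cref{dfn:type-2-p-good}~\ref{item:p-good-4} gives $lwpt(d_2') = a > a'$. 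Both cases contradict $lwpt(d_2') < a'$. The whole argument is a routine monotonicity check driven by the lowpoint ordering of the children of~$b$; there is no substantive obstacle, only a mild bit of care needed for the boundary values $d_1 = 0$ and $d_2' = n+1$, which are ruled out by the contradiction hypotheses themselves.
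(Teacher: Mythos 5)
Your proof is correct, and it is essentially the routine monotonicity argument the paper leaves implicit (the observation is stated with an immediate~\qedsymbol{} and no proof, signalling that the authors regarded it as obvious). The argument is clean: assuming $d_1 > d_1'$ forces $d_1 \geq 1$, hence $p(d_1)=b$ by \cref{dfn:type-2-p-good}~\ref{item:p-good-2}; the unprimed \cref{dfn:type-2-p-good}~\ref{item:p-good-3} gives $lwpt(d_1)\geq a > a'$, while the primed \cref{dfn:type-2-p-good}~\ref{item:p-good-4} or~\ref{item:p-good-5} gives $lwpt(d_1)\leq a'$, a contradiction. The $d_2$ case is handled symmetrically and the boundary values $d_1=0$, $d_2'=n+1$ are correctly excluded by the contradiction hypotheses. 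You rightly note that only the lowpoint halves of conditions~\ref{item:p-good-3}--\ref{item:p-good-5} are needed, which makes the argument particularly transparent. No gaps.
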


\begin{lemma}\label{lem:eq-P-good-separator}
    Let $G$ be a 2-connected graph equipped with a normal spanning tree $T$ and a numbering of its vertices which is compatible with $T$, and let $P$ be a maximal leftmost path of $G$.
    Let $a < b \in V(P)$.
    There exists a $P$-good tuple $(a,b,d_1,d_2)$ if and only if $\{a,b\}$ is the separator of a half-connected type-2 separation.
    If so, the tuple $(a,b,d_1,d_2)$ is unique for $\{a,b\}$.
\end{lemma}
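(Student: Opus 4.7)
The plan is to establish, for the given pair $\{a,b\}$, a correspondence between $P$-good tuples $(a,b,d_1,d_2)$ and separations satisfying \cref{prop:charac-t2-half-con}, bearing in mind that a single $P$-good tuple may encode both of the (at most two) half-connected type-2 separations with separator $\{a,b\}$. The pivotal step is a trichotomy for the children of $b$: using that $L(c) \setminus \{b\}$ consists only of ancestors of $b$, and that $\leq'$ orders siblings by decreasing $lwpt_1$ and, within the same $lwpt_1$, by increasing $lwpt_2$, I would show that each child $c$ of $b$ satisfies exactly one of (i) $lwpt(c) \geq a$ and $hgpt(c) > a$; (ii) $c \in \mathcal{D}$ (so $lwpt(c) = a$, $lwpt_2(c) = b$, and necessarily $hgpt(c) = a$); (iii) $lwpt(c) < a$. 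The sibling ordering then places type (i) children strictly before $\mathcal{D}$, $\mathcal{D}$ as a contiguous middle block, and type (iii) children strictly after $\mathcal{D}$. When a half-connected type-2 separation with separator $\{a,b\}$ exists, \cref{item:v} of \cref{prop:charac-t2-half-con} additionally upgrades type (iii) to also satisfy $hgpt(c) \leq a$.

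For the backward direction ($P$-good $\Rightarrow$ separation), I would choose $i$ to be the index of the largest child of $b$ strictly less than $d_2$, take $a'$ to be the successor of $a$ on $P$ (a proper ancestor of $b$ because \cref{item:p-good-1} gives $a \leq p(p(b))$), and let $T_2$ be the connected component of $T - \{a,b\}$ containing $a'$. Setting $B = \{a,b\} \cup V(T_2) \cup \bigcup_{j \leq i} Desc(c_j)$, the clauses of \cref{prop:charac-t2-half-con} are then easily verified: \cref{item:i,item:iii} follow from \cref{item:p-good-1}, \cref{item:iv} is \cref{item:p-good-6}, \cref{item:ii} holds by construction, \cref{item:v} follows from \cref{item:p-good-3,item:p-good-5}, and \cref{item:vi} is immediate because $\mathcal{D} \subseteq \{c : d_1 < c < d_2\} \subseteq \{c_1,\dots,c_i\}$ by \cref{item:p-good-4}.

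For the forward direction (separation $\Rightarrow$ $P$-good tuple), clauses \cref{item:i,item:iii,item:iv} of \cref{prop:charac-t2-half-con} immediately yield $1 < a \leq p(p(b))$ (using that $b$ is a \emph{proper} leftmost descendant of the child $a'$ of $a$) and that $T[a,b]$ is a stable leftmost path, whence $a$ and $b$ share a common maximal leftmost path which must be $P$; this gives \cref{item:p-good-1,item:p-good-6}. I would then set $d_1$ to be the largest type-(i) child of $b$ (or $0$ if no such child exists) and $d_2$ to be the smallest type-(iii) child (or $n+1$ otherwise), automatically ensuring \cref{item:p-good-2}; the trichotomy combined with \cref{item:v,item:vi} of \cref{prop:charac-t2-half-con} forces every child of $b$ into exactly one of the three $P$-good blocks, which yields \cref{item:p-good-3,item:p-good-4,item:p-good-5}. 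The main subtlety is that \cref{item:vi} allows $\mathcal{D}$ to be placed wholly on either side of the separation, so both potential separations with separator $\{a,b\}$ correspond to the same tuple $(a,b,d_1,d_2)$ --- this is why the lemma equates ``existence of a $P$-good tuple'' with ``$\{a,b\}$ being a separator'' rather than with a specific separation. Uniqueness of the tuple then follows from \cref{obs:p-good-unique}, or directly from the observation that $d_1$ and $d_2$ above are determined by $\{a,b\}$ alone.
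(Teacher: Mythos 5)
Your proof is correct and takes essentially the same approach as the paper's: both reduce the equivalence to \cref{prop:charac-t2-half-con} via the observation that a child $c$ of $b$ has $lwpt_2(c) = b$ if and only if $hgpt(c) = lwpt(c)$ (your type-(ii) note that $hgpt(c) = a$ ``necessarily''), and both get uniqueness from \cref{obs:p-good-unique}. The paper's own proof is only a few lines and leaves the trichotomy, the use of the compatible sibling ordering, and the matching of $(i, \mathcal{C})$ in \cref{prop:charac-t2-half-con} with $(d_1, d_2)$ entirely implicit; you spell those out, which is fine.
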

\begin{proof}
    Observe that a child $c$ of $b$ satisfies $hgpt(c) = lwpt(c)$ if and only if $lwpt_2(c) = b$. 
    Thus, it follows from \cref{prop:charac-t2-half-con} that there exists a $P$-good tuple $(a,b,d_1,d_2)$ if and only if $\{a,b\}$ is a type-2 separator.
    The uniqueness follows from \cref{obs:p-good-unique}. 
    %
    %
    %
\end{proof}


%
We say that two $P$-good tuples $(a,b,d_1,d_2)$ and $(a',b',d_1',d_2')$ \emph{cross} if $a<a'<b<b'$ or $a'<a<b'<b$.
Otherwise, they are \emph{nested}.
A set $\mathcal S_P$ of $P$-good tuples is \emph{nested} if every two elements in $\mathcal S_P$ are nested.
If $\{a < b\}$ is the separator of a totally-nested type-2 separation then its corresponding $P$-good tuple is not crossed by another $P$-good tuple by \cref{lem:crossing-vertices}. 
With \cref{lem:eq-P-good-separator} it then suffices to compute a maximal nested set of $P$-good tuples for every maximal leftmost path~$P$, which we will do in the following lemma.
    
\begin{lemma}\label{lem:pot-nested-type-2}
    There is an $O(n+m)$-time algorithm which, given a 2-connected graph $G$ equipped with a normal spanning tree $T$ and a numbering of its vertices which is compatible with $T$, computes for every maximal leftmost path $P$ a maximal nested set $\mathcal S_P$ of $P$-good tuples.
\end{lemma}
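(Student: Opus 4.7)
I will process each maximal leftmost path $P = v_0 v_1 \cdots v_k$ (indexed top-down) independently and produce $\mathcal{S}_P$ in time $O(|V(P)| + \sum_{v \in V(P)} \deg_T(v))$, which sums to $O(n+m)$ overall.

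\textbf{Reformulation of $P$-goodness.} My first step is to show that $(a, b, d_1, d_2)$ is $P$-good if and only if: (i) $a, b \in V(P)$ with $1 < a \leq p(p(b))$; (ii) $T[a,b]$ is stable; (iii) every child $d$ of $b$ with $lwpt(d) < a$ satisfies $hgpt(d) \leq a$; and (iv) $d_1$ and $d_2$ are determined by splitting the children of $b$ (in $\leq'$-order) at the contiguous block of those satisfying $(lwpt(d), lwpt_2(d)) = (a,b)$. The contiguity of this block is guaranteed by Definition~\ref{dfn:compatible-numbering}(iii). Conditions (3) and (4) of Definition~\ref{dfn:type-2-p-good} then follow by a direct case analysis: for a child $d$ strictly to the left of the block, either $lwpt(d) > a$ (so $hgpt(d) \geq lwpt(d) > a$) or $lwpt(d) = a$ and $lwpt_2(d) < b$ (so $lwpt_2(d) \in L(d) \setminus \{b\}$ and $hgpt(d) \geq lwpt_2(d) > a$); for a child strictly to the right, $\leq'$-ordering forces $lwpt(d) < a$, and (iii) gives $hgpt(d) \leq a$. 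Under this reformulation, condition (iii) amounts to $a \notin \bigcup_{d} \bigl(lwpt(d), hgpt(d)\bigr)$ taken over the children $d$ of $b$.

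\textbf{Algorithm.} For each $b \in V(P)$ with $b \neq v_0, v_1$, I scan the children of $b$ in their $\leq'$-order to assemble the forbidden open intervals $(lwpt(d), hgpt(d))$; this is $O(1+\deg_T(b))$ work per $b$. Intersecting with the range $V(P) \cap [v_2, p(p(b))]$ and with the set of stable $a$'s (queried in constant time via Proposition~\ref{prop:queries-stability}), I obtain the valid $a$'s for $b$ as a union of $O(\deg_T(b))$ contiguous subsets of $V(P)$, and I emit one candidate tuple per endpoint of these subsets. I then sort the combined list of candidate tuples $(a,b)$ by $(a,-b)$ with counting sort (Theorem~\ref{thm:linear-sort}) and build $\mathcal{S}_P$ greedily using a stack representing the current outer-to-inner nested chain: for each candidate in sorted order, pop from the top while the top interval lies strictly to the left of $a$, then push the candidate (and add it to $\mathcal{S}_P$) if the resulting top strictly encloses it, and skip it otherwise.

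\textbf{Main obstacle.} The hardest part will be arguing that this procedure outputs a set $\mathcal{S}_P$ which is maximal with respect to \emph{all} $P$-good tuples, not merely the enumerated candidates. The natural plan is to use a "sandwiching" argument: if two $P$-good tuples $(a, b)$ and $(a',b')$ satisfy $a \leq a' < b' \leq b$, then any $P$-good tuple strictly between them in either endpoint can be inserted without creating a crossing, and hence is covered by the endpoints of the valid subintervals enumerated above. Combined with the correctness of the stack-based greedy for maximal nested families of intervals, this yields maximality. A subtle point is to ensure we do not overshoot the time budget when the same value of $a$ is a candidate for many different $b$'s; I expect this to be handled by charging each candidate to either the vertex $b$ or one of the children of $b$ whose forbidden interval produces it.
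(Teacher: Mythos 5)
Your reformulation of $P$-goodness via forbidden open intervals $(lwpt(d), hgpt(d))$ is correct — given the $\leq'$-ordering of children, the three blocks (3)--(5) of Definition~\ref{dfn:type-2-p-good} are automatically arranged once $a$ avoids every $(lwpt(d),hgpt(d))$ and $T[a,b]$ is stable — and this is the same structural insight the paper uses, though the paper tracks it incrementally through the indices $i_1, i_2$ rather than stating it as a forbidden-interval condition.

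The algorithm you build on it, however, has a genuine gap: you emit only the \emph{endpoints} of the valid $a$-subintervals as candidates, and $\mathcal S_P$ is assembled from those candidates alone. A maximal nested set must generally contain interior tuples as well. Since two $P$-good tuples with the same $b$ are always nested, if for a fixed $b$ the valid $a$'s form a contiguous block $\{a_1,\ldots,a_t\}$ and none of the intermediate $(a_j,b)$ is crossed by a tuple at some other $b'$, then \emph{all} of $(a_1,b),\ldots,(a_t,b)$ must lie in $\mathcal S_P$; your construction keeps only $(a_1,b)$ and $(a_t,b)$ and thus fails maximality. Your sandwiching argument shows the interior tuples \emph{could} be inserted without creating a crossing — which is precisely why they are required for maximality — but your greedy never actually inserts them. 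Enumerating every valid $a$ instead would fix correctness but could produce $\Theta(|P|^2)$ pairs before any pruning, blowing the time budget. The paper avoids both problems by interleaving generation with selection: it sweeps $b$ from deep to shallow, terminates the $a$-enumeration for the current $b$ as soon as $a$ drops below the top of a stack of previously committed $a$'s (any smaller $a$ would already cross a chosen tuple with larger $b'$), and then charges the remaining work against the laminarity bound $|\mathcal S_P| \leq 2|P|-1$. The offline enumerate-then-sort-then-greedy structure you propose has no analogous early cutoff, so it cannot simultaneously achieve maximality and the linear bound.
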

\begin{proof}
    We perform \cref{alg:potentially-nested-type-2} on every maximal leftmost path $P$.
    Note that its requirements are satisfied by \cref{prop:queries-stability}.
    The idea is the following. For a maximal leftmost path $P=T[v_1,v_\ell]=v_1 v_2 \ldots v_\ell$, we start with $b=v_\ell$ and compute all $P$-good tuples $(a,b,d_1,d_2)$ (for this fixed $b$), which we add to $\mathcal S_P$. Moreover, we put the corresponding $a$'s on a stack $S$ (which is initially empty), which is ordered such that the top of the stack is the largest element. Then we move $b$ down.
    In step $i$, we consider $b = v_{\ell - i + 1}$. Let $a'$ be the current top of the stack $S$, which means that there exists a $P$-good tuple $(a',b',d_1',d_2') \in \mathcal S_P$ with $a'<b<b'$ (see invariant~\ref{inv:1}). So, we only have to compute the tuples $(a,b,d_1,d_2)$ with $a \geq a'$ (since otherwise $a<a'<b<b'$ and hence $(a,b,d_1,d_2)$ crosses $(a',b',d_1',d_2')$), which we will do in the following.
    
    Let $c_1 < c_2 < \cdots < c_k$ be the children of $b$.
    We introduce two indices $i_1$ and $i_2$ that must satisfy the invariants~\ref{inv:2}, \ref{inv:3} and~\ref{inv:4} below. Intuitively, $c_{i_1}$ and $c_{i_2}$ will correspond to the unique $d_1$ and $d_2$ that we want to compute for a given type-2 separator $\{a,b\}$.
    For the given $b=v_{\ell - i + 1}$, we want to compute all tuples $(a,b,d_1,d_2)$ with $a \geq a'$ where $a'$ is the top of the stack $S$.
    For this, we start with $a=stab(p(p(b)),b)$. Observe that $a \neq 0$ if and only if it lies on the path $P$. After that, we adjust in \cref{line:inv-a-b-start} to \cref{line:alg-invariant-a-b} the indices $i_1$ and $i_2$ such that invariants~\ref{inv:2}--\ref{inv:4} hold.
    In particular, we have $lwpt(c_{i_2}) < a$ and therefore \ref{inv:4} holds.
    Moreover, we have that $i_1$ is the highest index such that $(lwpt(c_{i_1}),hgpt(c_{i_1})) >_{lex} (a,a)$ (or $i_1=0$ if it does not exist). Therefore, \ref{inv:2} holds. 
    Then, all other children $c_{i_1}<d<c_{i_2}$ of $b$ must satisfy $lwpt(d)=a$ and $hgpt(d)=a$, which implies $lwpt_2(d)=b$. Therefore, \ref{inv:3} holds.

    Now, we need to check whether the tuple $(a,b,c_{i_1},c_{i_2})$ is $P$-good. Properties \ref{item:p-good-1} and \ref{item:p-good-6} of \cref{dfn:type-2-p-good} are satisfied if $a > 1$. Properties \ref{item:p-good-2}--\ref{item:p-good-4} are satisfied by invariants \ref{inv:2} and \ref{inv:3}. Hence, it remains to check property~\ref{item:p-good-5}. We have $lwpt(d) < a$ for all children $d \geq c_{i_2}$ by \ref{inv:4}, so we only have to check whether $maxhgpt(c_{i_2}) \leq a$ (if $1 \leq i_2 \leq k$, otherwise property~\ref{item:p-good-5} is satisfied). Therefore, if we have $a > 1$, and $i_2 = k+1$ or $maxhgpt(c_{i_2}) \leq a$ then we found a $P$-good tuple $(a,b,c_{i_1},c_{i_2})$ and we add it to the set $\mathcal S_P$, and we add $a$ to $S'$, a stack which stores all $a$'s such that $(a, b, \cdot, \cdot) \in \mathcal{S}_P$ for this value of $b$. Then, we compute the next possibility for $a$, which is $stab(p(a),b)$ and repeat the procedure (for the fixed $b$).
    Otherwise, if we found $maxhgpt(c_{i_2}) > a$ then we compute the next possibility for $a$, which is $stab(lwpt(c_{i_2}),b)$ (if $1 \leq i_2 \leq k$) and repeat the procedure  (for the fixed $b$).
    Otherwise, if we find $a \in \{0,1\}$ or $a<a'$ (where $a'$ is the top of the stack $S$), then we move $b$ down, and repeat the algorithm with the next $b$.

    We now describe the invariants that we maintain throughout the algorithm, and then we prove the correctness and runtime of the algorithm.

    Invariants when considering $b$ in \cref{line:alg-invariant-b}:
    \begin{enumerate}[({I}1)]
        \item\label{inv:1} The stack $S$ contains all vertices $a'$ such that there exists a tuple $(a',b',d_1',d_2') \in \mathcal S_P$ with $b' > b > a'$. Moreover, $S$ is ordered such that the top is the largest element.
        \item\label{inv:8} If there is a $P$-good tuple $(a',b',d_1',d_2')$ with $b' > b$ that is nested with all $P$-good tuples in $\mathcal S_P$ then we have ${(a',b',d_1',d_2') \in \mathcal S_P}$.
    \end{enumerate}

    Invariants when considering $(b,a)$ in \cref{line:alg-invariant-a-b}: ($k$ is the number of children of $b$. We define $c_0 = 0$ and ${c_{k+1}=n+1}$.)
    \begin{enumerate}[({I}1)]
        \setcounter{enumi}{2}
        \item\label{inv:2} All children $d \leq c_{i_1}$ of $b$ have $lwpt(d) \geq a$ and $hgpt(d) > a$.
        \item\label{inv:3} All children $d$ of $b$ with $c_{i_1} < d < c_{i_2}$ have $lwpt(d)=a$ and $lwpt_2(d)=b$.
        \item\label{inv:4} All children $d \geq c_{i_2}$ of $b$ have $lwpt(d) < a$.
        \item\label{inv:5} The stack $S'$ contains all $a' \geq a$ such that $(a',b,d_1',d_2') \in \mathcal S_P$, ordered such that the top is the smallest element.
        \item\label{inv:6} If there exists a $P$-good tuple $(a',b,d_1',d_2')$ with $a' > a$ then $(a',b,d_1',d_2') \in \mathcal S_P$.
        \item\label{inv:7} The $P$-good tuples in $\mathcal S_P$ form a nested set.
    \end{enumerate}

    \begin{algorithm}
        \caption{Algorithm for \cref{lem:pot-nested-type-2}}
        \label{alg:potentially-nested-type-2}
        \begin{algorithmic}[1]
        \Require $G$ graph with normal spanning tree $T$, consistent numbering and a maximal leftmost path $P=v_1 v_2 \ldots v_\ell$
        \Require $lwpt(v)$, $hgpt(v)$ and $maxhgpt(v)$ for every vertex $v$
        \Require Oracle to query $stab(x,y)$ for any $x,y \in V(G)$ in constant time.
    
        \State $\mathcal S_P \gets$ empty stack \Comment{nested set of $P$-good tuples}
        \State $S \gets$ empty stack \Comment{stack of already computed $a$'s}
        \For{$b=v_\ell,v_{\ell-1},\ldots,v_2$}\label{line:type-2-for-loop}
            \While{$b = S$.top()}
                \State $S$.pop()
            \EndWhile
            \State $c_1,c_2,\ldots,c_k \gets$ children of $b$ with $c_1 < c_2 < \cdots < c_k$ \label{line:alg-invariant-b}
            \State $c_0 \gets 0$ ; $c_{k+1} \gets n+1$
            \State $i_1 \gets 0$  ; $i_2 \gets 1$
            \State $a \gets stab(p(p(b)),b)$
            \State $S' \gets$ empty stack \Comment{stack of $a$'s for the current $b$}
            \While{$a > 1$ and [$S=\emptyset$ or $a \geq S$.top()]}\label{line:while-loop-a-b} 
                \While{$i_2 \leq k$ and $lwpt(c_{i_2}) \geq a$}\label{line:inv-a-b-start}
                    \State $i_2 \gets i_2 + 1$
                \EndWhile
                \While{$i_1 + 1 \leq k$ and $(lwpt(c_{i_1 + 1}),hgpt(c_{i_1 + 1})) >_{lex} (a,a)$}
                    \State $i_1 \gets i_1 + 1$
                \EndWhile
    
                \If{$i_2 = k+1$ or $maxhgpt(c_{i_2}) \leq a$}\label{line:alg-invariant-a-b} 
                    \State $S'$.push($a$)
                    \State $\mathcal S_P \gets \mathcal S_P \cup \{(a,b,c_{i_1},c_{i_2})\}$ \label{line:add-tuple-to-S_P}
                    \If{$a \neq v_1$}
                        \State\label{line:a-assign-1} $a \gets stab(p(a),b)$
                    \Else
                        \State\label{line:a-assign-2} $a \gets 0$
                    \EndIf
                \ElsIf{$i_2 \leq k$} 
                    \State\label{line:a-assign-3} $a \gets stab(lwpt(c_{i_2}),b)$
                \Else 
                    \State\label{line:a-assign-4} $a \gets 0$
                \EndIf
            \EndWhile
            \While{$S' \neq \emptyset$}
                \State $S$.push($S'$.pop())
            \EndWhile
        \EndFor
        \State\Return $\mathcal S_P$
        \end{algorithmic}
    \end{algorithm}
    
    \begin{claim*}
        The invariants \ref{inv:1},~\ref{inv:2}--\ref{inv:5} are satisfied throughout the algorithm.
    \end{claim*}
    \begin{subproof}
        \ref{inv:2}--\ref{inv:4}. \cref{line:inv-a-b-start} to \cref{line:alg-invariant-a-b} ensure that these invariants are satisfied. Observe that as $a$ moves down, the unique values for $i_1$ and $i_2$ that satisfy \ref{inv:2}--\ref{inv:4} cannot decrease. In particular, let $a^{(1)} > a^{(2)}$ be the values of $a$ in two consecutive runs of \cref{line:alg-invariant-a-b} and $i_1^{(1)},i_2^{(1)},i_1^{(2)},i_2^{(2)}$ be the corresponding values of $i_1$ and $i_2$ that satisfy \ref{inv:2}--\ref{inv:4}. If a child $d$ of $b$ has $lwpt(d) \geq a^{(1)}$ and $hgpt(d) > a^{(1)}$ then it has $lwpt(d) \geq a^{(2)}$ and $hgpt(d) > a^{(2)}$, therefore $i_1^{(1)} \leq i_1^{(2)}$. If a child $d$ of $b$ has $lwpt(d) < a^{(2)}$ then it also has $lwpt(d) < a^{(1)}$, therefore $i_2^{(1)} \leq i_2^{(2)}$.

        \ref{inv:5}. Observe that whenever we add a tuple $(a,b,d_1,d_2)$ to $\mathcal S_P$, we add $a$ to $S'$. Moreover, for a fixed $b$, the value of $a$ only decreases, so the stack $S'$ is ordered such that the top is the smallest element.
        
        \ref{inv:1}. Consider the $j$-th time the algorithm reaches \cref{line:alg-invariant-b}. If $j=1$ then $\mathcal S_P = \emptyset$, so the invariant is satisfied. Let $j>2$ and assume that the invariant was satisfied when the algorithm reached \cref{line:alg-invariant-b} for the $(j-1)$-th time. Observe that whenever we add a tuple $(a,b,d_1,d_2)$ to $\mathcal S_P$, we will add $a$ to the stack $S$. Moreover, we only add elements $a$ on top of $S$ that satisfy $a \geq a_t$ (where $a_t$ is the current top of the stack $S$). Therefore, the stack is ordered such that the top is the largest element. Moreover, we remove the top of the stack $a_t$ if $a_t = b$ and add all $a' \in S'$ (i.e.\ $a'$'s such that there exists $(a',b'=b,d_1',d_2') \in \mathcal S_P$). Hence, $S$ contains all $a'$ such that there exists $(a',b',d_1',d_2')\in \mathcal S_P$ with $b' > b > a'$.
    \end{subproof}
    
    \begin{claim}\label{claim:add-tuple-iff-p-good}
        If we add the tuple $(a,b,c_{i_1},c_{i_2})$ into $\mathcal S_P$, then $(a,b,c_{i_1},c_{i_2})$ is $P$-good.
    \end{claim}
    \begin{subproof}    
        Observe that we only add tuples $(a,b,d_1,d_2)$ to $\mathcal S_P$ in \cref{line:add-tuple-to-S_P}.
        Obviously $b \in P$. Since $a > 1$, we have $a \in P$, so \cref{dfn:type-2-p-good}~\ref{item:p-good-1} is satisfied.
        Moreover $i_1 \in \{0,1,\ldots,k\}$ and $i_2 \in \{1,2,\ldots,k+1\}$ (where $k$ is the number of children of $b$), and we defined $c_0:=0$ and $c_{k+1}:=n+1$. Hence, \cref{dfn:type-2-p-good}~\ref{item:p-good-2} is satisfied. Invariants~\ref{inv:2} and~\ref{inv:3} ensure \cref{dfn:type-2-p-good}~\ref{item:p-good-3} and~\ref{item:p-good-4}.
        Moreover, we have $lwpt(c_{i_2}) < a$ (by invariant~\ref{inv:4}) and $maxhgpt(c_{i_2}) \leq a$, which implies $lwpt(d) < a$ and $hgpt(d) \leq a$ for every child $d \geq c_{i_2}$ of $b$. Hence, \cref{dfn:type-2-p-good}~\ref{item:p-good-5} is satisfied.
        Moreover, note that \cref{dfn:type-2-p-good}~\ref{item:p-good-6} is satisfied since when assigning a value to $a$, we always ensure the stability of the interval $T[a,b]$.
    \end{subproof}
    
    \begin{claim*}
        Invariant~\ref{inv:6} is satisfied throughout the algorithm.
    \end{claim*}
    \begin{subproof}
        In the first run of \cref{line:alg-invariant-a-b}, we assign $a \gets stab(p(p(b)),b)$, i.e.\ $a$ is the largest vertex $a \leq p(p(b))$ such that $T[a,b]$ is stable. Hence, for every tuple $(a',b,\cdot,\cdot)$ with $a' > a$ either $a' \geq p(b)$ (violating property~\ref{item:p-good-1}) or $T[a',b]$ is not stable (violating property~\ref{item:p-good-6}). This satisfies invariant~\ref{inv:6} in the first run of \cref{line:alg-invariant-a-b}.
        
        Now, assume that the invariant holds in the $(j-1)$-th run of \cref{line:alg-invariant-a-b} and we show that it still holds in the $j$-th run. Let $a_{j-1}$ (respectively $a_j$) be the value of $a$ in the $(j-1)$-th (respectively $j$-th) run of \cref{line:alg-invariant-a-b}. Note that we assign the value for $a_j$ in one of \cref{line:a-assign-1,line:a-assign-2,line:a-assign-3,line:a-assign-4}, and since there is a $j$-th run of \cref{line:alg-invariant-a-b} then it must be either at \cref{line:a-assign-1} or at \cref{line:a-assign-3}.
        
        First, suppose that $a_j$ was assigned at \cref{line:a-assign-1}, and consider \cref{line:add-tuple-to-S_P} in the $(j-1)$-th run. By \cref{obs:p-good-unique}, there exists at most one $P$-good tuple $(a_{j-1},b,d_1,d_2)$. By \cref{claim:add-tuple-iff-p-good}, we add the tuple $(a_{j-1},b,d_1,d_2)$ to $\mathcal S_P$ if and only if it is $P$-good. 
        After that, we move $a$ down (i.e.\ $a_j$ is an ancestor of $a_{j-1}$).
        Since we set $a_j = stab(p(a_{j-1}),b)$ in \cref{line:a-assign-1}, then $a_j$ is the largest $a' \leq p(a_{j-1})$ such that $T[a',b]$ is stable. Since we added $(a_{j-1},b,d_1,d_2)$ to $\mathcal S_P$ if and only if it is $P$-good and since there exist no $P$-good tuples $(a',b,\cdot,\cdot)$ with $a_{j-1} > a' > a_j$, because of property~\ref{item:p-good-6}, invariant~\ref{inv:6} is satisfied.
        
        On the other hand, if we set $a_j = stab(lwpt(c_{i_2}),b)$ in \cref{line:a-assign-3} then there must exist a child $d \geq c_{i_2}$ of $b$ such that ${hgpt(d) > a_{j-1}}$. 
        However, there cannot be a $P$-good tuple $(a',b,\cdot,\cdot)$ with $hgpt(d) > a' > lwpt(d)$. Thus, there exists no $P$-good tuple $(a',b,\cdot,\cdot)$ with $a_{j-1} \geq a' > lwpt(c_{i_2})$. Since $T[a',b]$ needs to be stable, we can say that there exists no $P$-good tuple $(a',b,\cdot,\cdot)$ with $a_{j-1} \geq a' > stab(lwpt(c_{i_2}),b)$. Therefore, invariant \ref{inv:6} is satisfied.
    \end{subproof}
    
    \begin{claim*}
        Invariant~\ref{inv:7} is satisfied throughout the algorithm.
    \end{claim*}
    
    \begin{subproof}
        Note that invariant~\ref{inv:7} is satisfied in the first run of \cref{line:alg-invariant-a-b} since $\mathcal S_P$ is empty. Assume that invariant~\ref{inv:7} is satisfied in the $(j-1)$-th run of \cref{line:alg-invariant-a-b}. Note that we only add tuples to $\mathcal S_P$ in \cref{line:add-tuple-to-S_P}. Observe that when we add a tuple $(a,b,\cdot,\cdot)$ to $\mathcal S_P$ then $b \leq b'$ holds for all tuples $(a',b',\cdot,\cdot) \in \mathcal S_P$. Moreover, if $b' > b$ then we only add $(a,b,\cdot,\cdot)$ to $\mathcal S_P$ if $a \geq a'$ because of invariant~\ref{inv:1} and of the test at \cref{line:while-loop-a-b}. In this case, we have $a' \leq a < b < b'$ and therefore these tuples are nested. Otherwise, if $b=b'$, the tuples $(a,b,\cdot,\cdot)$ and $(a',b',\cdot,\cdot)$ are also nested.
    \end{subproof}
    
    \begin{claim*}
        Invariant~\ref{inv:8} is satisfied throughout the algorithm.
    \end{claim*}
    
    \begin{subproof}
        The invariant is clearly satisfied before we first consider \cref{line:alg-invariant-b}. Assume that the invariant holds before we consider \cref{line:alg-invariant-b} for the $(j-1)$-th time. 
        Before we consider \cref{line:alg-invariant-b} for the $j$-th time, invariant~\ref{inv:8} still holds for all tuples $(a',b',\cdot,\cdot)$ with $b'>b+1$ since we decreased $b$ by one compared to the previous passage. 
        Therefore we only need to show that if there exists a $P$-good tuple $(a',b',d'_1,d'_2)$ with $b'=b+1$ that is nested with all $P$-good tuples in $\mathcal S_P$ then it was added to $\mathcal{S}_P$ at the $(j-1)$-th run.

        For this, let $a_1,a_2,\ldots,a_\ell$ be the sequence of $a$'s that we consider for this fixed $b+1$ in \cref{line:alg-invariant-a-b}. By invariant~\ref{inv:6}, every $P$-good tuple $(a'',b+1,\cdot,\cdot)$ with $a'' \geq a_\ell$ is in $\mathcal S_P$. The next value of $a$ (below $a_\ell$) does no longer satisfy the condition of the while-loop in \cref{line:while-loop-a-b}. So either $a=0$ or $a<S$.top(), in both cases there does not exist a $P$-good tuple $(a'',b+1,\cdot,\cdot)$ with $a'' < a_\ell$ that is nested with all $P$-good tuples in $\mathcal S_P$.
    \end{subproof}
    \begin{claim}
        \label{claim:type-2-maximal-nested-set}
        At the end of \cref{alg:potentially-nested-type-2}, the tuples in $\mathcal S_P$ form a maximal nested set of $P$-good tuples.
    \end{claim}

    \begin{subproof}
        This follows directly from invariants~\ref{inv:7} (nestedness) and~\ref{inv:8} (maximality) since the algorithm terminates with $b=v_2$ and there is no $P$-good tuple $(\cdot, b', \cdot, \cdot)$ with $b' \leq b$ by property~\ref{item:p-good-1}.
    \end{subproof}
    
    \begin{claim}
        \label{claim:type-2-running-time}
        For a maximal leftmost path $P=v_1 v_2 \ldots v_\ell$, the runtime of \cref{alg:potentially-nested-type-2} is in $O\left(\ell +  
\sum_{i=2}^\ell \deg(v_i) \right)$.
    \end{claim}
    
    \begin{subproof}For a fixed $v_i$, let $\ell_i$ be the number of tuples $(\cdot, v_i, \cdot, \cdot)$ in $\mathcal S_P$.
        First, we show that the running time of one run of the for-loop in \cref{line:type-2-for-loop} for a fixed $b=v_i$ is in $O(\ell_i + \deg(v_i))$.
        For this, let $b=v_i$ be fixed. Observe that we repeat the while-loop in \cref{line:while-loop-a-b} at most $\ell_i+1$ times. Therefore, the total running time of \cref{line:alg-invariant-a-b} to \cref{line:a-assign-4} is in $O(\ell_i)$. To see that the total running time of \cref{line:inv-a-b-start} to \cref{line:alg-invariant-a-b} is in $O(\ell_i + \deg(v_i))$, observe that the indices $i_1$ and $i_2$ only ever increase and are at most $\deg(v_i)+1$.
        
        Now, note that $\sum_{i=1}^\ell \ell_i = \mathcal |\mathcal S_P|$. Moreover, note that the set of all intervals $(a,b)$ such that $(a,b,d_1,d_2)$ is in~$\mathcal S_P$, forms a laminar set family.
        Since $|P|=\ell$, by a standard result it follows that $|\mathcal S_P| \leq 2\ell-1$. Then the claim follows with $\sum_{i=1}^\ell \ell_i = \mathcal |\mathcal S_P| \leq 2 \ell - 3$.
    \end{subproof}

    We perform \cref{alg:potentially-nested-type-2} for every maximal leftmost path $P$. Then, the total running time is in $O(n+m)$, which follows from \cref{claim:type-2-running-time}. By \cref{claim:type-2-maximal-nested-set}, we get a maximal nested set of $P$-good tuples for every leftmost path $P$.
\end{proof}

\begin{lemma}
\label{lem:P-good-eq-type-2-sepn}
    Let $G$ be a 2-connected graph equipped with a normal spanning tree $T$ and a numbering of its vertices which is compatible with $T$.
    Let~$P$ be a maximal leftmost path in~$G$ and let~$(a,b,d_1,d_2)$ be a $P$-good tuple.
    Then there exist at most two half-connected type-2 separations with separator $\{a,b\}$.
    Moreover, there is an algorithm which, given the pair $(G, T)$ and the leftmost path $P$, after an $O(n+m)$-time precomputation, can compute for any given $P$-good tuple the encodings of these separations in constant time.
\end{lemma}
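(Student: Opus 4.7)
The plan is to use \cref{prop:charac-t2-half-con}, which parametrizes every half-connected type-2 separation with separator $\{a, b\}$ by the choice of a single index $i \in \{0, 1, \ldots, k\}$ (where $c_1 < \cdots < c_k$ are the children of $b$), subject to items~\ref{item:v} and~\ref{item:vi} of that proposition. Using the $P$-good conditions of \cref{dfn:type-2-p-good}, I would argue that this index $i$ is essentially forced. By property~\ref{item:p-good-3}, every child $d \leq d_1$ of $b$ has $hgpt(d) > a$ and so cannot lie outside $\{c_1, \ldots, c_i\}$ (otherwise item~\ref{item:v} fails); symmetrically, by property~\ref{item:p-good-5}, every child $d \geq d_2$ has $lwpt(d) < a$ and must lie outside $\{c_1, \ldots, c_i\}$. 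The remaining children are the elements of $\mathcal{D} := \{d : d_1 < d < d_2, \, d \text{ a child of } b\}$, which by property~\ref{item:p-good-4} satisfy $lwpt(d) = a$ and $lwpt_2(d) = b$; item~\ref{item:vi} of \cref{prop:charac-t2-half-con} then forces them either all into or all out of $\{c_1, \ldots, c_i\}$. This leaves at most two candidate values of $i$: the index of $d_1$ among the children of $b$, or the index of $d_M$, where $d_M$ is the largest element of $\mathcal{D}$; these coincide exactly when $\mathcal{D} = \emptyset$, proving the ``at most two'' bound.

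To read off the encodings, I would use that $T[a, b]$ is a leftmost path, so $b$ is a leftmost descendant of the child $a'$ of $a$ on the path to $b$; this gives $LD(a') = LD(b)$ and hence $V(T_2) = [a', b-1]$ in the numbering. In the first case, $B \setminus \{a, b\} = [a', b-1] \cup [b+1, LD(d_1)]$ (or just $[a', b-1]$ when $d_1 = 0$), and in the second case it is $[a', b-1] \cup [b+1, LD(d_M)]$; both are intervals of $[n] \setminus \{a, b\}$. The respective encodings are
\[
(a, b, a', LD(d_1), b+1, d_1) \qquad\text{and}\qquad (a, b, a', LD(d_M), b+1, d_M),
\]
with the convention that when $d_1 = 0$ (respectively $d_M = 0$) the encoding degenerates to $(a, b, a', b-1, 0, 0)$, corresponding to $\mathcal{C} = \emptyset$.

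All ingredients needed for the constant-time computation are already provided by \cref{subsec:precomputations}: $LD$ is stored at every vertex, each maximal leftmost path is stored as an array, and every vertex stores $sib\_rank$ together with the sorted array of its children. Given a $P$-good tuple and the array for $P$, the vertex $a'$ is the successor of $a$ in that array, and the child $d_M$ is either the last child of $b$ (when $d_2 = n+1$) or the sibling of $d_2$ immediately before it, found in constant time via $sib\_rank(d_2)$. The main conceptual step is the case analysis of the first paragraph; once that is in place, the encodings and the constant-time lookups follow mechanically.
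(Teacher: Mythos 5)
Your argument matches the paper's: both parameterize half-connected type-2 separations by the index $i$ from \cref{prop:charac-t2-half-con} and then use the $P$-good conditions \ref{item:p-good-3}--\ref{item:p-good-5} together with item~\ref{item:vi} to pin $c_i$ down to at most two values (which the paper writes as $c_i \in \{d_1, d_2-1\}$), and the encoding formulas and precomputation ingredients you list are the same ones used in the paper. One small presentational nit: in your first paragraph you define $d_M$ as ``the largest element of $\mathcal{D}$'', but in your last paragraph (and implicitly in the encoding formulas) $d_M$ is the largest child of $b$ strictly below $d_2$; these coincide when $\mathcal{D}\neq\emptyset$, but when $\mathcal{D}=\emptyset$ only the second reading makes the second encoding correctly collapse onto the first rather than degenerating to $(a,b,a',b-1,0,0)$ — you should use the ``predecessor of $d_2$'' definition consistently.
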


\begin{proof}
    Let $(A, B)$ be a half-connected type-2 separation with separator $\{a, b\}$, and let $i$ be as in the statement of \cref{prop:charac-t2-half-con}.
    We prove that $c_i \in \{d_1, d_2-1\}$.
    With the notations of \cref{prop:charac-t2-half-con}, since $(a,b,d_1,d_2)$ is a $P$-good tuple then $\mathcal{D} = \{c_j : d_1 < c_j < d_2\}$, so the last item of \cref{prop:charac-t2-half-con} implies that either $c_i \leq d_1$ or $c_i \geq d_2-1$.
    If $c_i < d_1$ then $hgpt(d_1) > a$, which contradicts the penultimate item of \cref{prop:charac-t2-half-con}.
    If $c_i > d_2-1$ then $lwpt(d_2) < a$, which again contradicts the penultimate item of \cref{prop:charac-t2-half-con}.
    Thus, $c_i \in \{d_1, d_2-1\}$.
    Since the separation $(A, B)$ is entirely determined by the index $i$ then there exist at most two half-connected type-2 separations with separator $\{a,b\}$.

    Furthermore, by \cref{prop:charac-t2-half-con}, if $c_1, \ldots, c_k$ are the children of $b$, either $B = \{a, b\} \cup V(T_2) \cup \bigcup_{c_i \leq d_1}Desc(c_i)$ or $B = \{a, b\} \cup V(T_2) \cup \bigcup_{c_i < d_2}Desc(c_i)$, and in both cases $A = V(G) \setminus (B \setminus \{a, b\})$. Denote the first separation by $(A, B)$ and the second by $(A', B')$. By \cref{prop:charac-t2-half-con} and by definition of $P$-good tuples, both are half-connected type-2 separations.
    

    Next, we show how to compute the encodings of~$(A,B)$ and~$(A',B')$.
    Let $\textbf{f}:=a'$.
    Let $\textbf l := LD(d_1)$ if $d_1 > 0$, and otherwise $\textbf l := b-1$.
    Let $m:=b+1$ if $d_1 > 0$, and otherwise $m:=0$.
    Let $M:=d_1$ if $d_1>0$, and otherwise $M:=0$.
    Then~$(A,B)$ is encoded by $(a,b,\textbf f, \textbf l, m, M)$.

    Let $\textbf f':=a'$.
    Let $\textbf l':=d_2-1$ if $d_2 < n+1$, and otherwise $\textbf l'=LD(b)$ if $b$ is not a leaf, and otherwise $\textbf l':=b-1$.
    Let $m':=b+1$ if $d_2>b+1$ and~$b$ is not a leaf, and otherwise $m':=0$.
    Let $M'$ be the right sibling of~$d_2$ if $b+1 < d_2 < n+1$, and otherwise let~$M'$ be the largest child of~$b$ if $d_2 = n+1$ and~$b$ is not a leaf, and otherwise let $M':=0$.
    Then~$(A',B')$ is encoded by $(a,b,\textbf f', \textbf l', m', M')$.

    In \cref{subsec:precomputations}, we showed that we can precompute $LD(v)$ for all~$v \in V(G)$ in time $O(n+m)$.
    Moreover, recall that we precomputed the rank of every vertex among its siblings as well as an array for each vertex that contains all children and is sorted by increasing number.
    Then it is straightforward to compute in $O(1)$-time the right sibling of any vertex if it exists, and the largest child of every vertex which is not a leaf.
    Additionally, for each $P$-good tuple $(a,b,d_1,d_2)$, we have access to the maximal leftmost path~$P$, on which we can compute the vertex~$a'$ in time~$O(1)$.
    It is straightforward to see that all other computations can be done in $O(1)$-time.
\end{proof}

Combining \cref{lem:eq-P-good-separator,lem:pot-nested-type-2,lem:P-good-eq-type-2-sepn}, we can finally prove \cref{prop:comp-type-2}.

\begin{proof}[Proof of \cref{prop:comp-type-2}]
    For each maximal leftmost path~$P$, we compute a maximal nested set $\mathcal S_P$ of $P$-good tuples.
    By \cref{lem:pot-nested-type-2}, this can be done in time $O(n+m)$.
    Then, for every tuple in some~$\mathcal S_P$, we compute the encoding of at most two half-connected type-2 separations as described in \cref{lem:P-good-eq-type-2-sepn}, and put them into the set~$\mathcal S$.
    Since the sum of the sizes of all sets $\mathcal S_P$ is in $O(n+m)$, this runs in linear time.
    Finally, we output~$\mathcal S$.

    Now, we prove the correctness of the algorithm.
    By \cref{lem:P-good-eq-type-2-sepn}, every encoding we add to $\mathcal{S}$ is the encoding of a half-connected type-2 separation.
    Let~$(A',B')$ be a half-connected type-2 separation that is not in~$\mathcal S$. Let $\{a'<b'\}$ be the separator of $(A',B')$.
    By \cref{prop:charac-2-sep}, there is a maximal leftmost path~$P$ that contains both~$a'$ and~$b'$.
    By \cref{lem:eq-P-good-separator}, there are unique values for~$d_1'$ and~$d_2'$ such that $(a',b',d_1',d_2')$ is $P$-good.
    First, assume that $(a',b',d_1',d_2')$ is nested with every tuple in~$\mathcal S_P$. 
    Then $(a',b',d_1',d_2') \in \mathcal S_P$ since $\mathcal S_P$ is a maximal nested set. 
    Then, however, the encoding of~$(A',B')$ is in $\mathcal S$ since we applied \cref{lem:P-good-eq-type-2-sepn} to~$(a',b',d_1',d_2')$. 
    Otherwise, there is a tuple $(a,b,d_1,d_2) \in \mathcal S_P$ that crosses $(a',b',d_1',d_2')$; that is, either $a<a'<b<b'$ or $a'<a<b'<b$.
    By \cref{lem:eq-P-good-separator} there is a half-connected type-2 separation $(A,B)$ with separator~$\{a,b\}$, whose encoding we added to~$\mathcal S$.
    Since the tuples $(a,b,d_1,d_2)$ and $(a',b',d_1',d_2')$ cross, the type-2 separations $(A,B)$ and~$(A',B')$ also cross by \cref{lem:crossing-vertices}.
    So, $(A',B')$ is crossed by a separation in~$\mathcal S$.
\end{proof}

The proof of \cref{thm:compute-pot-nested} is now straightforward.

\begin{proof}[Proof of \cref{thm:compute-pot-nested}]
    Combine \cref{lem:comp-type-1-triples} with \cref{lem:comp-type-1} to get the set of all half-connected type-1 separations in time $O(n+m)$, and use \cref{prop:comp-type-2} to get a set of half-connected type-2 separations that contains all totally-nested type-2 separations of~$G$.
\end{proof}
\section{The structure of totally-nested 2-separations} \label{sec:stru-tot-nest}

In this section, we give a structural characterization of which half-connected 2-separations of a 2-connected graph are totally-nested. 

We first prove that a half-connected type-1 separation is never crossed by another type-1 separation.

\begin{lemma} \label{lem:type1-nested}
    Let $G$ be a 2-connected graph equipped with a normal spanning tree $T$. 
    Suppose that the vertices of $G$ are numbered with a numbering compatible with $T$.
    Let $(A, B)$ be a half-connected type-1 separation and $(A', B')$ be any type-1 separation. Then, $(A, B)$ and $(A', B')$ are nested.
\end{lemma}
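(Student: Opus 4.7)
The plan is to argue by contradiction using \cref{lem:crossing-vertices}. Suppose $(A,B)$ and $(A',B')$ cross. By \cref{lem:a-b-comp}, their separators can be written $\{a,b\}$ and $\{a',b'\}$ with $a$ an ancestor of $b$ and $a'$ an ancestor of $b'$. Then \cref{lem:crossing-vertices} gives two sub-cases: either the separators coincide and there exist four components of $G-\{a,b\}$ distributed appropriately across the two separations, or the separators differ and each separation cuts through the other's separator.

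For the coinciding sub-case, I would use the two forms of $(A,B)$ coming from \cref{prop:charac-t1-half-con}. In the single-child form $B \setminus A = Desc(d)$, the set $B \setminus A$ is a single component of $G-\{a,b\}$, so two of the required four components cannot both fit there. In the other form $B \setminus A = \bigcup_{d \in \mathcal D} Desc(d)$ with $|\mathcal D| > 1$, the half-connectedness of $(A,B)$ forces $G[A \setminus B]$ to be connected, and the same obstruction appears on the opposite side. So the four-component configuration is impossible, ruling out crossing.

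For the differing-separators sub-case, I would first observe that $\{a,b\}$ and $\{a',b'\}$ are disjoint: any shared vertex would lie in both separators and hence on no proper side of either separation, contradicting that the other separation separates the pair containing it. The key structural claim is then that, for a type-1 separation with separator $\{a,b\}$ to separate a pair $\{u,v\}$ disjoint from $\{a,b\}$ with $u$ an ancestor of $v$, one must have $v \in Desc(b)$ and $u \notin Desc(b)$. Indeed, $B \setminus A \subseteq Desc(b)$ forces at least one of $u,v$ into $Desc(b)$; both cannot lie in $Desc(b)$, since the unique child $c$ of $b$ with $u \in Desc(c)$ would then also satisfy $v \in Desc(c)$ (as $u$ is an ancestor of $v$), so $u$ and $v$ would end up on the same side of $(A,B)$; and if only $u$ lay in $Desc(b)$ then $v \in Desc(u) \subseteq Desc(b)$, a contradiction. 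Applying this claim symmetrically to $(A,B)$ separating $\{a',b'\}$ and to $(A',B')$ separating $\{a,b\}$ yields $b' \in Desc(b)$ and $b \in Desc(b')$, which in a tree forces $b = b'$, contradicting the disjointness of the separators. I do not foresee any real obstacle; the bulk of the argument is captured by the single structural claim above, after which both symmetric applications close the case.
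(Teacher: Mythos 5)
Your proof is correct and takes a genuinely different route from the paper's. After reducing to the case of disjoint separators (the coinciding-separators branch of \cref{lem:crossing-vertices} is ruled out in both proofs because half-connectedness of $(A,B)$ is incompatible with the four-component configuration), the paper invokes the lowpoint characterization directly: it picks the child $c$ of $b$ with $b' \in Desc(c)$ and $(lwpt(c),lwpt_2(c)) = (a,b)$, then uses that $(A',B')$ being type-1 forces a descendant of $b'$ — hence of $c$ — to be adjacent to $a' \notin \{a,b\}$, contradicting $L(c) = \{a,b\}$. You instead isolate a purely tree-topological fact — a type-1 separation with separator $\{a,b\}$ separating an ancestor--descendant pair $\{u,v\}$ disjoint from $\{a,b\}$ must place $v$ in $Desc(b)$ and $u$ outside — and apply it symmetrically to both separations to force $b \in Desc(b')$ and $b' \in Desc(b)$, hence $b = b'$, contradicting disjointness. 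The paper's argument is a little shorter once one has the lowpoint machinery at hand; your argument is more elementary in that it only needs the coarse inclusion $B \setminus A \subseteq Desc(b)$ and the fact that each $Desc(c)$ lies wholly on one side (both immediate from \cref{prop:charac-2-sep}), and it has a pleasingly symmetric structure. Both are sound; the choice is a matter of taste.
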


\begin{proof}
    By contradiction, suppose that $(A, B)$ and $(A', B')$ cross.
    Let $\{a < b\} = A \cap B$ and $\{a' < b'\} = A' \cap B'$.
    Let $\mathcal{D}$ be the set of children $d$ of $b$ such that $lwpt(d) = a$ and $lwpt_2(d) = b$.
    By \cref{prop:charac-2-sep}, there exists a subset $\mathcal{C}$ of $\mathcal{D}$ such that $B = \{a, b\} \cup \bigcup_{c \in \mathcal{C}} Desc(c)$ and $A = V(G) \setminus (B \setminus \{a, b\})$.
    Since $(A, B)$ is half-connected, by \cref{lem:crossing-vertices}, $(A, B)$ separates $a'$ and $b'$.
    Then, one of $a', b'$ is in $A \setminus B$ and the other in $B \setminus A$.
    Since $a' < b'$ then $a'$ is an ancestor of $b'$ by \cref{lem:a-b-comp}. Thus, we must have $a' \in A \setminus B$ and $b' \in B \setminus A$.
    There exists a child $c \in \mathcal{D}$ of $b$ such that $b' \in Desc(c)$. 
    Since $a' \notin B$ is an ancestor of $b'$ then $a'$ is a proper ancestor of $b$, different from $a$.
    Since $(A', B')$ is type-1, by \cref{prop:charac-2-sep} there exists a child $c'$ of $b'$ such that $lwpt(c') = a'$ and $lwpt_2(c') = b'$. 
    This means that some descendant of $c'$, hence of $c$, is adjacent to $a' < b$, which contradicts that $(lwpt(c), lwpt_2(c)) = (a, b)$.
\end{proof}

Thus, if a half-connected type-1 separation is crossed, it must be crossed by a type-2 separation. Furthermore, since it is half-connected it must separate the separator of this type-2 separation by \cref{lem:crossing-vertices}.
Consider a half-connected type-1 separation with separator $\{a < b\}$. To know whether this separation is crossed, we would then like to be able to answer the following question: Does there exist a type-2 separation with separator $\{a' < b'\}$ such that $a < a' < b < b'$ in the tree order?
Intuitively, if there exists such a separation, there should exist a minimal one, i.e. one with $a'$ maximal and $b'$ minimal, and furthermore such a minimal pair $(a', b')$ should depend only on $b$ and not on $a$. In the following definition, one should think of $\alpha(b)$ as the maximal such $a'$ and of $\beta(b)$ as the minimal such $b'$.

\begin{definition}
    Let $G$ be a 2-connected graph equipped with a normal spanning tree $T$ with root $r$.
    Suppose that the vertices of $G$ are numbered with a numbering compatible with $T$.
    Let $b \in V(G)$. If $b \neq r$ and $b$ is not a leaf of~$T$, let $\ell$ be the left child of $b$ and set $\alpha(b) = stab(p(b), \ell)$.
    There exists a leftmost descendant $\beta$ of $\ell$ such that either $\beta$ is a leaf of $T$ or the left child $\ell'$ of $\beta$ satisfies either $lwpt(\ell') \geq b$ or $hgpt(\ell') < b$. Let $\beta(b)$ be the minimum such vertex.
    If $b=r$ or $b$ is a leaf of $T$, we set $\alpha(b) = \beta(b) = 0$.
\end{definition}

With these definitions, we can now formalize the intuition given earlier. The next lemma and its proof are supported by \cref{fig:crossing-type-2-above}.

\begin{lemma} \label{lem:find-crossing-type-2-above}
    Let $G$ be a 2-connected graph equipped with a normal spanning tree $T$.
    Suppose that the vertices of $G$ are numbered with a numbering compatible with $T$.
    Let $a, b \in V(G)$ be such that $a$ is a proper ancestor of $b$ in $T$ and $b$ is not a leaf.
    Let $\ell$ be the left child of $b$. Suppose that there is no edge between $Desc(\ell)$ and $T(a, b)$.
    Let $\alpha = \alpha(b)$ and $\beta = \beta(b)$.
    There exists a type-2 separation with separator $\{a' < b'\}$ such that $a < a' < b < \ell \leq b'$ in the tree order if and only if $\alpha > a$, every child $c$ of $\beta$ satisfies either $lwpt(c) \geq \alpha$ or $hgpt(c) \leq \alpha$, and $T[\alpha, \beta]$ is stable.
\end{lemma}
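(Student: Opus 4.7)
The plan is to handle the two directions separately, both centered on \cref{prop:charac-2-sep}. For the $(\Leftarrow)$ direction, I would \emph{construct} a type-2 separation with separator $\{\alpha,\beta\}$ directly. Let $\alpha'$ be the child of $\alpha$ on the leftmost path to $\beta$, let $T_2$ be the component of $T-\{\alpha,\beta\}$ containing $\alpha'$, let $\mathcal C = \{c \text{ child of } \beta : lwpt(c) \geq \alpha\}$, and set $B = \{\alpha,\beta\} \cup V(T_2) \cup \bigcup_{c \in \mathcal C} Desc(c)$ with $A = V(G) \setminus (B \setminus \{\alpha,\beta\})$. Every hypothesis of the type-2 case of \cref{prop:charac-2-sep} is then immediate: $\alpha \neq r$ since $\alpha > a \geq 1$; $\beta$ is a proper leftmost descendant of $\alpha'$ because $\beta \geq \ell > b > \alpha$; $T[\alpha,\beta]$ is stable by assumption; and the assumed child condition gives $hgpt(c) \leq \alpha$ for every $c$ not in $\mathcal C$.

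For $(\Rightarrow)$, suppose $(A',B')$ is a type-2 separation with separator $\{a' < b'\}$ satisfying $a<a'<b<\ell\leq b'$. \cref{prop:charac-2-sep} gives a child $a''$ of $a'$ with $b'$ a proper leftmost descendant, a stable leftmost path $T[a',b']$, and a dichotomy on the children of $b'$ via a subset $\mathcal C$. Since $\ell$ lies on the leftmost path from $a''$ and $\ell \leq b'$, the subpath $T[a',\ell]$ inherits stability (back-edges $(x,y)$ with $a'' \leq x < \ell$ also satisfy $a'' \leq x < b'$, so $y \geq a'$). Combined with $a' \leq p(b)$, the maximality of $\alpha = stab(p(b),\ell)$ yields $\alpha \geq a' > a$, establishing the first conclusion.

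The key structural step is to show that $b' \geq \beta$. Suppose for contradiction that $b' < \beta$. Since $b'$ is a leftmost descendant of $\ell$ (via $a''$), it is a proper ancestor of $\beta$ on the leftmost path from $\ell$, so by minimality of $\beta$ it is not a leaf and its left child $\ell''$ satisfies simultaneously $lwpt(\ell'') < b$ and $hgpt(\ell'') \geq b$. If $\ell'' \notin \mathcal C$, then $hgpt(\ell'') \leq a' < b$, contradicting $hgpt(\ell'') \geq b$. If $\ell'' \in \mathcal C$, then $lwpt(\ell'') \geq a' > a$; combined with $lwpt(\ell'') < b$ and the fact that $lwpt(\ell'')$ is a proper ancestor of $b$, this forces $lwpt(\ell'') \in T(a,b)$, producing an edge from $Desc(\ell'') \subseteq Desc(\ell)$ to $T(a,b)$ and contradicting the hypothesis.

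It then remains to check stability of $T[\alpha,\beta]$ and the child condition on $\beta$. Stability of $T[\alpha,\ell]$ holds by definition of $\alpha$; for back-edges $(x,y)$ with $\ell \leq x < \beta$, we have $x \in Desc(\ell)$, so $y \notin T(a,b)$ by the no-edge hypothesis, while stability of $T[a',b']$ (applicable since $a'' \leq x < b'$) forbids $y < a'$, hence $y \in \{a\} \cup [\ell,x)$ is likewise ruled out and one obtains $y \geq \alpha$. For the child condition, consider a child $c$ of $\beta$. When $c$ is the left child $\ell'$, the defining property of $\beta$ gives $lwpt(\ell') \geq b > \alpha$ or $hgpt(\ell') < b$; in the latter case, the no-edge hypothesis upgrades this to $hgpt(\ell') \leq a < \alpha$. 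For $c \neq \ell'$, distinguish whether $b' = \beta$ or $b' > \beta$: if $b' = \beta$, apply the $\mathcal C$-dichotomy directly (if $c \in \mathcal C$, combine $lwpt(c) \geq a'$ with the no-edge hypothesis; if $c \notin \mathcal C$, $hgpt(c) \leq a' \leq \alpha$); if $b' > \beta$, then $c \in T_2 \subseteq B'$, so $lwpt(c) < \alpha$ forces $lwpt(c) \leq a$ via the no-edge hypothesis, placing $lwpt(c)$ in the $r$-component $\subseteq A'$ and contradicting $(A',B')$ being a separation. The main obstacle will be organizing this case analysis cleanly, since the argument toggles between the no-edge hypothesis, the stability of $T[a',b']$, and the separation property of $(A',B')$.
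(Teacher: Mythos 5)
Your proof is correct and follows essentially the same overall structure as the paper's: the $(\Leftarrow)$ direction instantiates \cref{prop:charac-2-sep} at the separator $\{\alpha,\beta\}$ exactly as the paper does, and the $(\Rightarrow)$ direction establishes $\alpha \geq a'$, then $\beta \leq b'$, then verifies stability and the child condition. The detailed arguments in the forward direction differ somewhat — you prove $\beta \leq b'$ by contradiction rather than by directly checking that $b'$ satisfies the defining condition of $\beta$; you verify stability of $T[\alpha,\beta]$ by a direct back-edge case analysis instead of the paper's $witn(\cdot)$-based argument via \cref{lem:witn-stab}; and for the child condition in the case $b' > \beta$ you invoke the separation property of $(A',B')$ (edges cannot cross the separator) rather than the paper's argument through $witn(\beta) \geq b$ — but these are local stylistic choices that do not change the plan. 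One phrasing to tighten: in the stability step, after ruling out $y \in T(a,b)$ and $y < a'$ you write that ``$y \in \{a\} \cup [\ell,x)$ is likewise ruled out''; in fact $y \in [\ell,x)$ is not ruled out but simply gives $y \geq \ell > \alpha$ directly, so the sentence should separate the case $y \leq a$ (excluded by $y \geq a'$) from the case where $y$ is a descendant of $b$ (which satisfies $y \geq b > \alpha$ immediately).
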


\begin{figure}[h]
    \centering
    \includegraphics[width=0.25\linewidth]{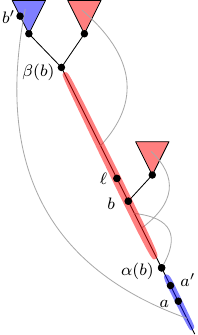}
    \caption{The separation~$(A,B)$ with separator $\{\alpha(b), \beta(b)\}$ constructed in the proof of \cref{lem:find-crossing-type-2-above}. In this figure, the proper sides of~$(A,B)$ are coloured in blue and red, respectively.}
    \label{fig:crossing-type-2-above}
\end{figure}

\begin{proof}
    Suppose first that there exists a type-2 separation with separator $\{a' < b'\}$ such that $a < a' < b < \ell \leq b'$ in the tree order.
    Then, $T[a', b']$ is a leftmost path and is stable by \cref{prop:charac-2-sep}. Thus, $b'$ is a proper leftmost descendant of $b$, hence a leftmost descendant of $\ell$. Thus, $T[a', \ell]$ is a leftmost path and is stable by restriction. Therefore we have $\alpha = stab(p(b), \ell) \geq a' > a$.

    We now show that if $b'$ has a left child $\ell'$ then either $lwpt(\ell') \geq b$ or $hgpt(\ell') < b$.
    Since $\{a', b'\}$ is the separator of a type-2 separation, by \cref{prop:charac-2-sep} either $lwpt(\ell') \geq a'$ or $hgpt(\ell') \leq a' < b$. Furthermore, there is no edge between $Desc(\ell)$ and $T(a, b)$ by assumption so if $lwpt(\ell') \geq a'$ then actually $lwpt(\ell') \geq b$. Therefore, either $lwpt(\ell') \geq b$ or $hgpt(\ell') < b$.
    This proves that $\beta$ is an ancestor of $b'$. 
    If $\beta = b'$, by \cref{prop:charac-2-sep} every child $c$ of $\beta$ satisfies either $hgpt(c) \leq a' \leq \alpha$ or $lwpt(c) \geq a'$, which implies $lwpt(c) \geq b \geq \alpha$ since there is no edge between $Desc(\ell)$ and $T(a, b)$.
    Suppose now that $\beta \neq b'$. Then, $\beta \in T(a', b')$ so $witn(\beta) \geq a'$ since $T[a', b']$ is stable. There is no edge between $Desc(\ell)$ and $T(a, b)$ so $witn(\beta) \geq b$. Therefore, for every child $c$ of $\beta$ apart from its left child $\ell'$, we have $lwpt(c) \geq witn(\beta) \geq b \geq \alpha$. Finally, by definition of $\beta$, either $lwpt(\ell') \geq b \geq \alpha$ or $hgpt(\ell') < b$, which implies $hgpt(\ell') \leq a \leq \alpha$ since there is no edge between $Desc(\ell)$ and $T(a, b)$.
    
    We finally show that $T[\alpha, \beta]$ is stable, using \cref{lem:witn-stab}.
    $T[\alpha, \beta]$ is a leftmost path since it is a restriction of the leftmost path $T[a', b']$, and since $\beta$ is a proper descendant of $b$, which is a proper descendant of $\alpha$. Let $\gamma \in T(\alpha, \beta)$. If $\gamma \in T(\alpha, \ell)$ then $witn(\gamma) \geq \alpha$ since $T[\alpha, \ell]$ is stable. If not, then $\gamma \in T(b, \beta)$ and therefore $\gamma \in Desc(\ell) \cap T(a', b')$ so $witn(\gamma) \geq a'$ and since there is no edge between $Desc(\ell)$ and $T(a, b)$, we eventually get $witn(\gamma) \geq b > \alpha$. Therefore, $T[\alpha, \beta]$ is stable.

    Conversely, suppose that $\alpha > a$, every child $c$ of $\beta$ satisfies either $lwpt(c) \geq \alpha$ or $hgpt(c) \leq \alpha$, and $T[\alpha, \beta]$ is stable. Then, $a < \alpha < b < \ell \leq \beta$ in the tree order by assumption and by definition of $\alpha$ and $\beta$. Furthermore, it follows from \cref{prop:charac-2-sep} and from our assumptions that there exists a type-2 separation of $G$ with separator $\{\alpha < \beta\}$.
\end{proof}

We can now characterize the half-connected type-1 separations which are totally-nested.

\begin{proposition} \label{lem:charac-totally-nested-type-1}
    Let $G$ be a 2-connected graph equipped with a normal spanning tree $T$.
    Suppose that the vertices of $G$ are numbered with a numbering compatible with $T$.
    Let $(A, B)$ be a half-connected type-1 separation of $G$ with separator $\{a < b\}$.
    Let $\mathcal{D}$ be the set of children $d$ of $b$ such that $lwpt(d) = a$ and $lwpt_2(d) = b$.
    Let $\ell$ be the left child of $b$.
    Let $\alpha = \alpha(b)$ and $\beta = \beta(b)$.
    Then, $(A, B)$ is crossed if and only if all of the following holds:
    \begin{itemize}
        \item Up to exchanging $A$ and $B$, either $B = \{a, b\} \cup Desc(\ell)$ or $B = \{a, b\} \cup \bigcup_{d \in \mathcal{D}} Desc(d)$;
        \item $\ell \in \mathcal{D}$;
        \item $\alpha > a$, every child $c$ of $\beta$ satisfies either $lwpt(c) \geq \alpha$ or $hgpt(c) \leq \alpha$, and $T[\alpha, \beta]$ is stable.
    \end{itemize}
\end{proposition}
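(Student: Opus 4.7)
The plan is to reduce both directions to Lemma~\ref{lem:find-crossing-type-2-above}. By Lemma~\ref{lem:type1-nested}, any separation crossing the half-connected type-1 separation $(A,B)$ must be type-2. Moreover, by Proposition~\ref{prop:charac-t1-half-con}, either $B \setminus A = Desc(d)$ for some $d \in \mathcal{D}$ (which is connected, since $T(d) \subseteq G[Desc(d)]$), or $B \setminus A = \bigcup_{d \in \mathcal{D}} Desc(d)$ with $|\mathcal{D}|>1$, in which case $A \setminus B$ is connected by half-connectedness. In both cases one side of $(A,B)$ consists of a single component of $G-\{a,b\}$, so the second clause of Lemma~\ref{lem:crossing-vertices} cannot apply. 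Consequently, for any crossing type-2 separation $(A',B')$ with separator $\{a'<b'\}$, $(A,B)$ separates $\{a',b'\}$ and $(A',B')$ separates $\{a,b\}$; up to swapping $A'$ and $B'$, we may assume $a' \in A \setminus B$ and $b' \in B \setminus A$, and by Lemma~\ref{lem:a-b-comp} $a'$ is a proper ancestor of $b'$ in $T$.

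For the forward direction, the key step is to pin $a'$ and $b'$ onto the leftmost path through $b$. If $a'$ were not an ancestor of $b$, then $a' \in Desc(c)$ for some child $c$ of $b$; the constraint $a' \in A$ forces $c \neq \ell$ (in the first form of the first bullet) or $c \notin \mathcal{D}$ (in the second), but then $b' \in Desc(c)$ would contradict $b' \in B \setminus A$. Hence $a'$ is a proper ancestor of $b$, so $T[a',b']$ passes through $b$. Since $b'$ is a proper leftmost descendant of some child of $a'$, both $b$ and $b'$ lie on that leftmost path, whence $b' \in Desc(\ell)$. In the first form of the first bullet this forces $d = \ell$, and in both forms we obtain $\ell \in \mathcal{D}$, establishing the first two bullets. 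The inequality $a < a'$ in the tree order follows analogously: were $a'$ a proper ancestor of $a$, then both $a$ and $b$ would lie in $T(a',b') \subseteq B' \setminus A'$, contradicting that $(A',B')$ separates $\{a,b\}$. Hence $a < a' < b < \ell \leq b'$ in the tree order. The first two bullets imply $Desc(\ell) \subseteq B \setminus A$ and $T(a,b) \subseteq A \setminus B$, so the 2-separation property of $(A,B)$ ensures no edge between $Desc(\ell)$ and $T(a,b)$. Applying Lemma~\ref{lem:find-crossing-type-2-above} then translates the existence of $(A',B')$ into exactly the third bullet.

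For the converse, the first two bullets again give the ``no edge between $Desc(\ell)$ and $T(a,b)$'' hypothesis of Lemma~\ref{lem:find-crossing-type-2-above}, which together with the third bullet produces a type-2 separation $(A',B')$ with separator $\{a'<b'\}$ satisfying $a < a' < b < \ell \leq b'$ in the tree order. Then $a' \in T(a,b) \subseteq A \setminus B$ and $b' \in Desc(\ell) \subseteq B \setminus A$, so $(A,B)$ separates $\{a',b'\}$; and since $a$ is a proper ancestor of $a'$ it lies in the component of the root $r$ in $G-\{a',b'\}$ (hence in $A'$), while $b \in T(a',b') \subseteq B'$, so $(A',B')$ separates $\{a,b\}$. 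Lemma~\ref{lem:crossing-vertices} then shows that $(A,B)$ and $(A',B')$ cross. The main technical obstacle throughout is the tree-walk argument that pins $b'$ down inside $Desc(\ell)$; once this is done, the proposition follows cleanly from Lemmas~\ref{lem:find-crossing-type-2-above} and~\ref{lem:crossing-vertices}.
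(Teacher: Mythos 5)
Your proof is correct and follows essentially the same route as the paper: reduce to type-2 crossers via \cref{lem:type1-nested}, use \cref{lem:crossing-vertices} and half-connectedness to deduce mutual separation of the separators, pin $b'$ onto the leftmost path through $b$ (and hence into $Desc(\ell)$), and invoke \cref{lem:find-crossing-type-2-above} in both directions. Two small slips in phrasing, neither fatal: the step ``up to swapping $A'$ and $B'$, we may assume $a'\in A\setminus B$ and $b'\in B\setminus A$'' is not something swapping $A'$ and $B'$ accomplishes (it leaves membership in $A$ vs.\ $B$ unchanged); the assignment is instead \emph{forced}, because $a'$ is an ancestor of $b'$ and $B\setminus A$ consists only of descendants of children of $b$, so $a'\in B\setminus A$ would drag $b'$ into $B\setminus A$ as well and the two could not be separated by $(A,B)$. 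Likewise, writing ``$c\neq\ell$ (in the first form of the first bullet)'' during the pinning-to-ancestor step is slightly circular, since you only establish $d=\ell$ later; at that point you should say $c\neq d$ (for the $d$ given by \cref{prop:charac-t1-half-con}). With those cosmetic corrections the argument matches the paper's proof.
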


\begin{proof}
    Suppose first that $(A, B)$ is crossed by a separation $(A', B')$ with separator $\{a' < b'\}$. 
    By \cref{lem:type1-nested}, $(A', B')$ is a type-2 separation. 
    Since $(A, B)$ is a half-connected type-1 separation, by \cref{prop:charac-t1-half-con}, up to exchanging $A$ and $B$, either $B = \{a, b\} \cup \bigcup_{d \in \mathcal{D}} Desc(d)$ or there exists $d \in \mathcal{D}$ such that $B = \{a, b\} \cup Desc(d)$.
    Since $(A, B)$ is half-connected then $(A', B')$ separates $a$ and $b$, and $(A, B)$ separates $a'$ and $b'$ by \cref{lem:crossing-vertices}.
    In particular, we have $\{a, b\} \cap \{a', b'\} = \emptyset$.
    Since $a'$ is an ancestor of $b'$ by \cref{lem:a-b-comp}, we must have $a' \in A \setminus B$ and $b' \in B \setminus A$.
    Thus, there exists $c \in \mathcal{D}$ such that $b' \in Desc(c)$, and $a' < b < c \leq b'$ in the tree order.
    Since $(A', B')$ is a separation which separates $a$ from $b \in T(a', b')$, we must have $a \notin T(a', b')$, so $a < a' < b < c \leq b'$ in the tree order. 
    Furthermore, since $(A', B')$ is a type-2 separation then $T[a', b']$ is a leftmost path by \cref{prop:charac-2-sep}, so $c$ is the left child of $b$, i.e. $c = \ell$. Thus, $\ell = c \in \mathcal{D}$, so there is no edge between a descendant of $\ell$ and $T(a, b)$. Therefore, \cref{lem:find-crossing-type-2-above} implies that $\alpha > a$, every child $c$ of $\beta$ satisfies either $lwpt(c) \geq \alpha$ or $hgpt(c) \leq \alpha$, and $T[\alpha, \beta]$ is stable.
    
    We now prove the converse. Suppose that the following holds:
    Up to exchanging $A$ and $B$, either $B = \{a, b\} \cup Desc(\ell)$ or ${B = \{a, b\} \cup \bigcup_{d \in \mathcal{D}} Desc(d)}$. Furthermore, $\ell \in \mathcal{D}$, $\alpha > a$, every child $c$ of $\beta$ satisfies either $lwpt(c) \geq \alpha$ or $hgpt(c) \leq \alpha$, and $T[\alpha, \beta]$ is stable.
    Since $\ell \in \mathcal{D}$, there is no edge between a descendant of $\ell$ and $T(a, b)$. Thus, \cref{lem:find-crossing-type-2-above} implies that there is a type-2 separation $(A', B')$ with separator $\{a' < b'\}$ such that $a < a' < b < \ell \leq b'$ in the tree order. 
    Let $r$ be the root of $T$.
    By \cref{prop:charac-2-sep}, up to exchanging $A'$ and $B'$, we have $T(a', b') \subseteq B' \setminus A'$ and $T[r, a') \subseteq A' \setminus B'$.
    Then, we have that $(A, B)$ separates $a'$ and $b'$, and $(A', B')$ separates $b \in T(a', b')$ and $a \in T[r, a')$. Thus, the separations $(A, B)$ and $(A', B')$ cross by \cref{lem:crossing-vertices}.
\end{proof}

We will use the next result to characterize when a half-connected type-2 separation is crossed by a type-1 separation.
The next lemma and its proof are supported by \cref{fig:t2-crossed-by-t1}.

\begin{lemma} \label{lem:t2-crossed-by-t1}
    Let $G$ be a 2-connected graph equipped with a normal spanning tree $T$.
    Suppose that the vertices of $G$ are numbered with a numbering compatible with $T$.
    Let $(A, B)$ be a half-connected type-2 separation with separator $\{a < b\}$. The following are equivalent.
    \begin{itemize}
        \item There exists a type-1 separation $(A', B')$ with separator $\{a'' < b''\}$ such that $a'' < a < b'' < b$ in the tree order, and $(A', B')$ separates $a$ and $b$.
        \item There exists $\gamma \in T(a, b)$ such that the left child $\ell'$ of $\gamma$ satisfies $lwpt_2(\ell') = \gamma$.
    \end{itemize}
\end{lemma}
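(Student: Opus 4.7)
The plan is to prove the two implications separately, using the type-1 half of \cref{prop:charac-2-sep} throughout.

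For the forward direction, suppose $(A',B')$ is a type-1 separation with separator $\{a'' < b''\}$ satisfying the hypothesis. I would apply the type-1 characterization to obtain a non-empty set $\mathcal{C}$ of children of $b''$ with $lwpt(c) = a''$ and $lwpt_2(c) = b''$ for all $c \in \mathcal{C}$ and, up to swapping $A'$ and $B'$, $B' = \{a'', b''\} \cup \bigcup_{c\in\mathcal{C}} Desc(c)$. Since $(A',B')$ separates $a$ from $b$, and since $a$ is a proper ancestor of $b''$ while $b$ is a proper descendant of $b''$, we necessarily have $a \in A'\setminus B'$ and $b \in Desc(c)$ for some $c \in \mathcal{C}$. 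Because $(A,B)$ is type-2, \cref{prop:charac-2-sep} tells us that $T[a,b]$ is a leftmost path, so from every interior vertex on $T[a,b]$ the path continues through its left child; as $b'' \in T(a,b)$, this forces $c$ to be the left child $\ell'$ of $b''$. We then take $\gamma := b''$, which satisfies $\gamma \in T(a,b)$ and $lwpt_2(\ell') = \gamma$.

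For the backward direction, let $\alpha := lwpt(\ell')$. The crux is to show that $\alpha$ is a proper ancestor of $a$. The elements of $L(\ell')$ are all ancestors of $\ell'$, hence totally ordered in the tree order. Since $\gamma = p(\ell')$ and since the only ancestor of $\ell'$ lying at or above $\gamma$ in tree order is $\gamma$ itself, the condition $lwpt_2(\ell') = \gamma$ forces $L(\ell') = \{\alpha, \gamma\}$. Now I invoke \cref{prop:charac-t2-half-con}: since $(A,B)$ is a half-connected type-2 separation, there is a child $a'$ of $a$ such that $b$ is a proper leftmost descendant of $a'$, and furthermore $a \neq r = 1$ and $T[a,b]$ is stable. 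As $G$ is 2-connected, \cref{lem:charac-1-sep} yields $lwpt(a') < a$ in tree order, so some back-edge $(x,y)$ satisfies $x \in Desc(a')$ and $y < a$ in the numbering. Stability of $T[a,b]$ rules out $a' \leq x < b$, so $x \geq b$ in numbering; since $x \in Desc(a')$, it follows that $x \in Desc(b) \subseteq Desc(\ell')$. Hence $y \in L(\ell') = \{\alpha, \gamma\}$, and as $y < a < \gamma$ in tree order, we get $y = \alpha$, so $\alpha < a$ in tree order.

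Having established $\alpha < a < \gamma < b$ in tree order, I would close by invoking the type-1 case of \cref{prop:charac-2-sep} with $\mathcal{C} := \{\ell'\}$: the required conditions $lwpt(\ell') = \alpha$ and $lwpt_2(\ell') = \gamma$ hold by construction, and $a$ itself witnesses that some vertex outside $\{\alpha,\gamma\}$ is not a descendant of $\ell'$. This yields a type-1 separation $(A',B')$ with separator $\{\alpha, \gamma\}$ and $B' = \{\alpha,\gamma\}\cup Desc(\ell')$, which separates $a$ (in $A'\setminus B'$) from $b$ (a descendant of $\ell'$, hence in $B'\setminus A'$). The main obstacle is the argument that $\alpha < a$ in tree order; the delicate point is that stability of $T[a,b]$ routes the back-edge witnessing $lwpt(a') < a$ into the descendants of $b$, placing its endpoint inside $L(\ell')$, which has been pinned down to $\{\alpha, \gamma\}$.
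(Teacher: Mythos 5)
Your argument is correct and reaches the same intermediate conclusions as the paper ($d$ is the left child of $b''$ in the forward direction, $lwpt(\ell') < a$ in the backward direction), but through somewhat different routes. In the forward direction, you use the hypothesis that $(A', B')$ separates $a$ and $b$ together with the leftmost-path structure of $T[a,b]$ to pin down $d = \ell'$; the paper instead ignores the separating hypothesis entirely and uses stability directly, via $witn(b'') \geq a > a''$, to rule out any child of $b''$ other than the left child. Both work. In the backward direction, the paper invokes \cref{lem:follow-leftmost} to obtain a leftmost descendant of $a'$ adjacent to $lwpt(a')$, which lands in $Desc(b)$ automatically; you instead take an arbitrary back-edge $(x,y)$ witnessing $lwpt(a') < a$ and use stability plus the leftmost-path structure to route it into $Desc(\ell')$. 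Your observation that $lwpt_2(\ell') = p(\ell')$ forces $L(\ell') = \{\alpha, \gamma\}$ is nice, though not strictly necessary — once you know $y \in L(\ell')$ with $y < a$, it is immediate that $\alpha = lwpt(\ell') \leq y < a$ without enumerating $L(\ell')$. One step that deserves a line of justification is the claim that $x \in Desc(a')$ and $x \geq b$ together imply $x \in Desc(b)$: this is true, but it relies on the leftmost-path structure (if $b$ is a leftmost descendant of $a'$ then $LD(a') = LD(b)$, so $Desc(a') \cap [b,n] = Desc(b)$), and is precisely the sort of fact that the paper's use of \cref{lem:follow-leftmost} sidesteps by producing a witness that is itself a leftmost descendant.
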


\begin{figure}[h]
    \centering
    \includegraphics[width=0.25\linewidth]{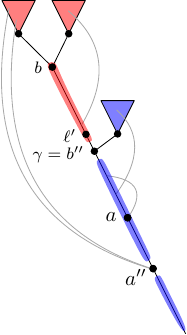}
    \caption{The type-1 separation~$(A',B')$ constructed in \cref{lem:t2-crossed-by-t1} that crosses the given type-2 separation~$(A,B)$. In this figure, the proper sides of~$(A',B')$ are coloured in red and blue, respectively.}
    \label{fig:t2-crossed-by-t1}
\end{figure}

\begin{proof}
    Suppose first that there exists a type-1 separation $(A', B')$ with separator $\{a'' < b''\}$ such that $a'' < a < b'' < b$ in the tree order, and $(A', B')$ separates $a$ and $b$.
    Then, $b'' \in T(a, b)$ and by \cref{prop:charac-2-sep} there exists a child $d$ of $b''$ such that $lwpt(d) = a''$ and $lwpt_2(d) = b''$.
    Since $b'' \in T(a, b)$ and $T[a, b]$ is stable by \cref{prop:charac-2-sep}, then $witn(b'') \geq a$. Thus, for every child $c$ of $b''$ except its left child we have $lwpt(c) \geq witn(b'') \geq a > a''$. Thus, $d$ is the left child of $b''$ and $lwpt_2(d) = b''$.
    
    Conversely, suppose that there exists $\gamma \in T(a, b)$ such that the left child $\ell'$ of $\gamma$ satisfies $lwpt_2(\ell') = \gamma$.
    Since $\{a, b\}$ is the separator of a type-2 separation then $a$ is not the root of $T$. Therefore, since $G$ is 2-connected, the unique child $a'$ of $a$ that is an ancestor of $b$ satisfies $lwpt(a') < a$ by \cref{lem:charac-1-sep}. 
    Let $a'' = lwpt(a')$. By \cref{lem:follow-leftmost}, there is a leftmost descendant $d$ of $a'$ that is adjacent to $a''$. Since $T[a, b]$ is stable, $d$ must be a descendant of $b$, hence of $\ell'$. Thus, $lwpt(\ell') \leq a''$. However, $\ell'$ is a descendant of $a'$ so $lwpt(\ell') \geq lwpt(a') = a''$ and thus $lwpt(\ell') = a''$. 
    Therefore, $\ell'$ satisfies $lwpt(\ell') = a'', lwpt_2(\ell') = \gamma$ and $a \notin \{a'', \gamma\}$ is not a descendant of $\ell'$. Set $B' = \{a'', \gamma\} \cup Desc(\ell')$ and $A' = V(G) \setminus (B' \setminus \{a'', \gamma\})$. By \cref{prop:charac-2-sep}, $(A', B')$ is a type-1 separation with separator $\{a'' < \gamma\}$. Furthermore, $a'' < a < \gamma < b$ in the tree order, and $(A', B')$ separates $a \in A'$ and $b \in B'$.
\end{proof}

We can now characterize the half-connected type-2 separations which are totally-nested.

\begin{proposition} \label{lem:charac-tot-nested-t2}
    Let $G$ be a 2-connected graph equipped with a normal spanning tree $T$ with root $r$.
    Suppose that the vertices of $G$ are numbered with a numbering compatible with $T$.
    Let $(A, B)$ be a half-connected type-2 separation with separator $\{a < b\}$. Let $a'$ be the unique child of $a$ which is an ancestor of $b$. Then, $(A, B)$ is crossed if and only if one of the following holds.
    \begin{itemize}
        \item There exists $\gamma \in T(a, b)$ such that the left child $\ell'$ of $\gamma$ satisfies $lwpt_2(\ell') = \gamma$.
        \item There exists $\gamma \in T(a, b)$ such that the left child $\ell'$ of $\gamma$ satisfies $hgpt(\ell') \leq stab(p(a), a')$.
        \item The vertex $b$ is not a leaf of $T$ and the following holds. Let $\ell$ be the left child of $b$, $\alpha = \alpha(b)$ and $\beta = \beta(b)$. Then, the separation $(A, B)$ does not separate $\ell$ and $r$, $\alpha > a$, every child $c$ of $\beta$ satisfies either $lwpt(c) \geq \alpha$ or $hgpt(c) \leq \alpha$, and $T[\alpha, \beta]$ is stable. 
    \end{itemize}
\end{proposition}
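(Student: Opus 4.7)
The plan is to split both directions according to the type of the crossing separation and, for each type, enumerate the positions of its separator relative to $\{a,b\}$ in $T$.

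For the forward direction, suppose $(A,B)$ is crossed by some $(A', B')$. By half-connectedness of $(A,B)$ together with \cref{lem:crossing-vertices}, the two separators are disjoint and each separation separates the vertices of the other's separator. If $(A', B')$ is type-1 with separator $\{a'' < b''\}$ in tree-ancestor order (by \cref{lem:a-b-comp}), I will first argue that $b''$ must be a descendant of $a$ and then split on its position: if $b'' \in T(a, b)$, the witness argument behind \cref{lem:t2-crossed-by-t1} applies (stability of $T[a,b]$ gives $witn(b'') \geq a$, forcing the witnessing child $d$ of $b''$ to be its left child $\ell'$ with $lwpt_2(\ell') = b''$), yielding condition 1 with $\gamma = b''$; if $b''$ sits in a non-leftmost branch off $T(a,b)$ or in a sibling subtree of $a'$, then $(A',B')$ fails to separate $a$ from $b$; and if $b''$ is a proper descendant of $b$, then $a''=b$ is forced, and a direct check using the description of $(A,B)$ from \cref{prop:charac-t2-half-con} will show the resulting pair to be nested rather than crossing.

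If $(A', B')$ is type-2 with separator $\{\tilde a < \tilde b\}$, then $T[\tilde a, \tilde b]$ is leftmost and stable. I will enumerate the positions of $\tilde a, \tilde b$ against the regions of $(A,B)$ from \cref{prop:charac-t2-half-con} and rule out every configuration where $(A,B)$ fails to separate $\{\tilde a, \tilde b\}$ or where $(A',B')$ fails to separate $\{a, b\}$; in particular, any endpoint sitting in a non-leftmost branch off $T[a,b]$ violates the leftmost requirement on $T[\tilde a, \tilde b]$. The two surviving configurations will be the \emph{upper} one, with $\tilde a \in T[r, a)$ and $\tilde b \in T(a, b)$, and the \emph{lower} one, with $\tilde a \in T(a, b)$ and $\tilde b$ a proper leftmost descendant of $\ell$. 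In the upper case, the leftmost stable path $T[\tilde a, \tilde b]$ forces $a'$ to be the left child of $a$, and stability of the prefix $T[\tilde a, a']$ yields $\tilde a \leq stab(p(a), a')$; combining this with the children-split of the type-2 separator at the left child $\ell'$ of $\tilde b$ gives $hgpt(\ell') \leq \tilde a \leq stab(p(a), a')$, establishing condition 2 with $\gamma = \tilde b$. In the lower case, $(A,B)$ separating $\tilde a$ from $\tilde b$ will force $\ell$ to lie on the same side as $r$, equivalently $hgpt(\ell) \leq a$ via \cref{prop:charac-t2-half-con}; this ensures no edges between $Desc(\ell)$ and $T(a,b)$, so \cref{lem:find-crossing-type-2-above} will certify the conditions on $\alpha(b), \beta(b)$ in condition 3.

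For the backward direction, each condition yields a crossing separation explicitly. Condition 1 invokes \cref{lem:t2-crossed-by-t1} to produce a type-1 separator $\{a'', \gamma\}$ with $a'' < a < \gamma < b$ separating $a$ and $b$; since this new separator is disjoint from $\{a,b\}$ and $(A,B)$ places $a''$ in $A \setminus B$ and $\gamma \in T(a,b)$ in $B \setminus A$, \cref{lem:crossing-vertices} will yield the crossing. For condition 2, with $\alpha := stab(p(a), a')$, I will verify via \cref{prop:charac-2-sep} that $\{\alpha, \gamma\}$ is a type-2 separator by splicing the stability of $T[\alpha, a']$ and $T[a', \gamma] \subseteq T[a,b]$ and checking the children split at $\gamma$ (non-left children $c$ have $lwpt(c) \geq witn(\gamma) \geq a > \alpha$, while $\ell'$ has $hgpt(\ell') \leq \alpha$ by hypothesis); then the crossing follows from $a \in T(\alpha, \gamma)$ being inside the new separator while $b \in Desc(\ell')$ is outside. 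For condition 3, \cref{lem:find-crossing-type-2-above} directly certifies that $\{\alpha(b), \beta(b)\}$ is a type-2 separator, and the crossing is verified by locating $\alpha(b) \in T(a,b) \subseteq B \setminus A$, $\beta(b) \in Desc(\ell) \subseteq A \setminus B$, $a$ outside the new separator, and $b \in T(\alpha(b), \beta(b))$ inside it.

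The main obstacle will be the forward-direction case analysis: ruling out the degenerate sub-cases where the crossing type-1 separator shares the vertex $b$ with $\{a,b\}$ (which turns out to force nesting rather than crossing), and confirming that every type-2 crossing reduces to the upper or lower configuration. A secondary delicate point will be the observation that in the upper type-2 configuration the mere existence of a crossing forces $a'$ to be the left child of $a$, which is precisely what connects the existence of such a crossing to the $stab(p(a), a')$ quantity appearing in condition 2.
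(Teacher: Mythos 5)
Your plan tracks the paper's proof very closely: split the crossing $(A', B')$ by type, reduce the type-1 case to \cref{lem:t2-crossed-by-t1}, split the type-2 case into the two positional configurations (one endpoint below $a$, the other in $T(a,b)$; or one endpoint in $T(a,b)$, the other a descendant of $\ell$), and reuse \cref{lem:find-crossing-type-2-above} in both directions for the lower configuration. The overall enumeration is correct, and the backward direction is assembled the same way as in the paper.

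There are two places where your sketch glosses over steps that actually require an argument and are spelled out in the paper. First, in the forward upper configuration you write that ``the children-split of the type-2 separator at the left child $\ell'$ of $\tilde b$ gives $hgpt(\ell') \leq \tilde a$.'' But \cref{prop:charac-2-sep} alone only gives the disjunction $lwpt(\ell') \geq \tilde a$ \emph{or} $hgpt(\ell') \leq \tilde a$, and nothing you have so far rules out the first alternative. The paper closes this by observing (via \cref{lem:charac-1-sep} and 2-connectivity, then \cref{lem:follow-leftmost}) that there is a leftmost descendant $d$ of the child of $\tilde a$ on $T[\tilde a,\tilde b]$ adjacent to a proper ancestor of $\tilde a$; stability of $T[\tilde a,\tilde b]$ and of $T[a,b]$ then force $d$ to be a proper descendant of $\tilde b$, hence of $\ell'$, yielding $lwpt(\ell') < \tilde a$ and so $hgpt(\ell') \leq \tilde a$. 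This argument needs to be made explicit; without it, condition 2 is not established. (The same leftmost-descendant argument is also what the paper uses to discharge the sub-case $\tilde a \in T[r,a)$, $\tilde b$ a proper descendant of $b$ with $\ell \in B$; your criterion ``$(A',B')$ fails to separate $a$ from $b$'' does handle it, but note that this case is \emph{not} one where $\tilde b$ ``sits in a non-leftmost branch,'' so the clause you offered for ruling out stray positions does not cover it — it is the separation criterion that does.)

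Second, in the backward direction for condition 2, the separator $\{\alpha, \gamma\}$ with $\alpha = stab(p(a),a')$ is a type-2 separator only when $\alpha \neq r$. The paper handles the case $\alpha = r$ separately: there one checks $lwpt(\ell') = \alpha$, $lwpt_2(\ell') = \gamma$, and produces a type-1 crossing separation via \cref{prop:charac-2-sep}. Your sketch speaks only of verifying a type-2 separator; if $\alpha = 1$ the verification against \cref{prop:charac-2-sep} item 2 would fail on the requirement $a \neq r$, so the degenerate sub-case must be added. Both gaps are fillable with short arguments, but they are not the routine ``combinations'' your write-up suggests.
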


\begin{proof}
    Suppose first that $(A, B)$ is crossed, and let $(A', B')$ be a 2-separation of $G$ with separator $\{a'' < b''\}$ that crosses $(A, B)$.
    Since $(A, B)$ is half-connected, by \cref{lem:crossing-vertices}, we have that $(A, B)$ separates $a''$ and $b''$, and $(A', B')$ separates $a$ and $b$.
    In particular, this implies that $\{a, b\} \cap \{a'', b''\} = \emptyset$.
    Since $(A, B)$ is a type-2 separation, by \cref{prop:charac-2-sep}, up to exchanging $A$ and $B$, we can assume that $r \in A \setminus B$ and $T(a, b) \subseteq B \setminus A$.

    Suppose that $(A', B')$ is a type-1 separation. By \cref{prop:charac-2-sep}, there exists a non-empty subset $\mathcal{C}$ of children of $b''$ such that, up to exchanging $A'$ and $B'$, we have $B' = \{a'', b''\} \cup \bigcup_{c \in \mathcal{C}} Desc(c)$. 
    One of $a, b$ is in $A' \setminus B'$ and the other in $B' \setminus A'$. Since $a$ is a proper ancestor of $b$ then $a \in A' \setminus B'$ and $b \in B' \setminus A'$. 
    Thus, $b$ is a proper descendant of $b''$ and $a$ is a proper ancestor of $b''$, i.e. $a < b'' < b$ in the tree order.
    However, $(A, B)$ separates $b'' \in T(a, b) \subseteq B \setminus A$ and $a''$, which is an ancestor of $b''$, so $a'' \in T[r, a)$ and therefore $a'' < a < b'' < b$ in the tree order. \cref{lem:t2-crossed-by-t1} then implies that there exists $\gamma \in T(a, b)$ such that the left child $\ell'$ of $\gamma$ satisfies $lwpt_2(\ell') = \gamma$.

    Suppose now that $(A', B')$ is a type-2 separation. 
    Suppose first that $a'' \in A \setminus B$ and $b'' \in B \setminus A$. Since $a''$ is an ancestor of $b''$, by \cref{prop:charac-2-sep}, we must have that $a'' \in T[r, a)$ and $b''$ is a descendant of $a'$, the unique child of $a$ which is an ancestor of $b$.
    By \cref{prop:charac-2-sep}, $T[a'', b'']$ is stable, so $T[a'', a']$ is stable by restriction (it is indeed a leftmost path since $a'$ is a proper descendant of $a$, which is itself a proper descendant of $a''$). 
    Thus, $a'' \leq stab(p(a), a')$.
    Since $(A', B')$ is a type-2 separation then $a'' \neq r$. Let $a_0$ be the unique child of $a''$ that is an ancestor of $b''$. Since $G$ is 2-connected, $lwpt(a_0) < a''$ by \cref{lem:charac-1-sep}. 
    By \cref{lem:follow-leftmost}, there exists a leftmost descendant $d$ of $a_0$ that is adjacent to a proper ancestor of $a''$, hence of $a$.
    Since $T[a'', b'']$ is stable, $d$ must be a (leftmost) descendant of $b''$.
    We now prove that $b'' \in T(a, b)$.
    By contradiction, suppose that $b'' \notin T(a, b)$. 
    Since $a'' < a' < b''$ in the tree order, then $a' \in T(a'', b'')$, which is a leftmost path, so $b''$ is a leftmost descendant of $a'$. Since $b'' \notin T(a, b)$ then $b''$ is a proper descendant of $b$ and thus $b \in T(a'', b'')$. Let $b'$ be the unique child of $b$ that is an ancestor of $b''$. Since $b'' \in B \setminus A$ then $b' \in B \setminus A$ and thus $lwpt(b') \geq a$ by \cref{prop:charac-2-sep}. 
    However, $d$ is a descendant of $b''$, hence of $b'$, and is adjacent to a proper ancestor of $a$, a contradiction.
    This proves that $b'' \in T(a, b)$.
    Let $b_0$ be the left child of $b''$. By \cref{prop:charac-2-sep}, either $lwpt(b_0) \geq a''$ or $hgpt(b_0) \leq a''$. Recall that there exists a leftmost descendant $d$ of $b''$ that is adjacent to a proper ancestor of $a''$. Since $b'' \in T(a, b)$ and $T[a, b]$ is stable, then $d$ must be a proper descendant of $b''$. Thus, $d$ is a descendant of $b_0$, therefore $lwpt(b_0) < a''$, and thus $hgpt(b_0) \leq a''$.
    Thus, $b''$ is a vertex of $T(a, b)$ and its left child $b_0$ satisfies $hgpt(b_0) \leq a'' \leq stab(p(a), a')$.

    Otherwise, we have that $a'' \in B \setminus A$ and $b'' \in A \setminus B$. Since $a''$ is an ancestor of $b''$, by \cref{prop:charac-2-sep} we must have that $a'' \in T(a, b)$ and $b''$ is a proper descendant of $b$, so $a < a'' < b < b''$ in the tree order. Thus, $b$ is not a leaf. Let $\ell$ be the left child of $b$, $\alpha = \alpha(b)$ and $\beta = \beta(b)$. $T[a'', b'']$ is stable, hence a leftmost path and thus $b''$ is a descendant of $\ell$. Since $b'' \in A \setminus B$ then $\ell \in A \setminus B$ and thus by \cref{prop:charac-2-sep}, $hgpt(\ell) \leq a$ so there is no edge between a descendant of $\ell$ and $T(a, b)$. Therefore, \cref{lem:find-crossing-type-2-above} implies that $\alpha > a$, every child $c$ of $\beta$ satisfies either $lwpt(c) \geq \alpha$ or $hgpt(c) \leq \alpha$, and $T[\alpha, \beta]$ is stable.
    
    We now prove the converse implication, and consider the three cases separately.
    \begin{itemize}
        \item Suppose that there exists $\gamma \in T(a, b)$ such that the left child $\ell'$ of $\gamma$ satisfies $lwpt_2(\ell') = \gamma$. 
        Then, \cref{lem:t2-crossed-by-t1} implies that there exists a type-1 separation $(A', B')$ with separator $\{a'' < b''\}$ such that $a'' < a < b'' < b$ in the tree order, and $(A', B')$ separates $a$ and $b$. 
        By \cref{prop:charac-2-sep}, we also have that $(A, B)$ separates $a'' \in T[r, a)$ and $b'' \in T(a, b)$, so $(A, B)$ and $(A', B')$ cross by \cref{lem:crossing-vertices}.
        \item Suppose that there exists $\gamma \in T(a, b)$ such that the left child $\ell'$ of $\gamma$ satisfies $hgpt(\ell') \leq stab(p(a), a') =: a''$.
        Therefore, $a'' > 0$ so $T[a'', a']$ is stable and $a''$ is a proper ancestor of $a$. We have that $T[a'', a']$ is stable, $T[a, \gamma]$ is stable (by restriction of $T[a, b]$, which is stable too) and $a'' < a < a' \leq \gamma$ in the tree order, therefore $T[a'', \gamma]$ is stable. 
        Since $T[a, b]$ is stable and $\gamma \in T(a, b)$ then $witn(\gamma) \geq a$. 
        Thus, for every child $c$ of $\gamma$ apart from its left child $\ell'$, we have $lwpt(c) \geq witn(\gamma) \geq a \geq a''$. 
        By assumption, we have $hgpt(\ell') \leq a''$.
        If $a''$ is the root of $T$, then $lwpt(\ell') = a''$ and $lwpt_2(\ell') = \gamma$, and $a \notin \{a'', \gamma\}$ is not a descendant of $\ell'$. Thus, it follows from \cref{prop:charac-2-sep} that there is a type-1 separation $(A', B')$ of $G$ with separator $\{a'' < \gamma\}$ that separates $a \in A'$ and $b \in B'$. 
        Otherwise, it follows from \cref{prop:charac-2-sep} that there is a type-2 separation $(A', B')$ of $G$ with separator $\{a'' < \gamma\}$ that separates $a \in B'$ and $b \in A'$. 
        Since $(A, B)$ separates $a'' \in T[r, a)$ and $b'' \in T(a, b)$, it follows from \cref{lem:crossing-vertices} that $(A, B)$ and $(A', B')$ cross.
        \item Suppose that $b$ is not a leaf of $T$. Let $\ell$ be the left child of $b$, $\alpha = \alpha(b)$ and $\beta = \beta(b)$. Assume that $\ell \in A \setminus B$, $\alpha > a$, every child $c$ of $\beta$ satisfies either $lwpt(c) \geq \alpha$ or $hgpt(c) \leq \alpha$, and $T[\alpha, \beta]$ is stable.
        Since $\ell \in A \setminus B$ then $hgpt(\ell) \leq a$ by \cref{prop:charac-2-sep}, so there is no edge between a descendant of $\ell$ and $T(a, b)$. Thus, \cref{lem:find-crossing-type-2-above} implies that there exists a type-2 separation $(A', B')$ with separator $\{a'' < b''\}$ such that $a < a'' < b < \ell \leq b''$ in the tree order. Since $\ell \in A \setminus B$ then $b'' \in A \setminus B$ so and $(A, B)$ separates $a''$ and $b''$ by \cref{prop:charac-2-sep}.
        We also have that $(A', B')$ separates $a \in T[r, a'')$ from $b \in T(a'', b'')$ by \cref{prop:charac-2-sep}. 
        Therefore, the separations $(A, B)$ and $(A', B')$ cross by \cref{lem:crossing-vertices}. \qedhere
    \end{itemize}
\end{proof}
\section{Building the Tutte-decomposition} \label{sec:build-tutte}

In this section, we first show how to turn the structural characterizations from \cref{sec:stru-tot-nest} into an algorithm to efficiently test whether a half-connected 2-separation is totally-nested. Then, we explain how to construct the tree-decomposition induced by a set of nested separations, and apply it to build the Tutte-decomposition from the set of all totally-nested 2-separations.

\subsection{Checking for totally-nestedness}

Throughout this section, we assume the precomputation described in \cref{subsec:precomputations,,subsec:stability} has already been performed.
The main result of this section is the following.

\begin{theorem} \label{thm:query-tot-nested}
    There is an algorithm which, given a 2-connected graph $G$ equipped a spanning tree $T$ and a numbering of $V(G)$ compatible with $T$, after an $O(n+m)$-time precomputation, can answer in constant time queries of the form: Given a half-connected 2-separation, is it totally-nested?
\end{theorem}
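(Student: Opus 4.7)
The plan is to reduce the structural characterizations of \cref{lem:charac-totally-nested-type-1,lem:charac-tot-nested-t2} to a constant number of constant-time queries, backed by linear-time precomputations. A query on a half-connected $2$-separation given by its encoding $(a, b, \mathbf{f}, \mathbf{l}, c_m, c_M)$ will then dispatch on type-$1$ versus type-$2$ (identifiable in constant time from the encoding together with the cached lowpoint/highpoint data) and check the list of conditions in the relevant proposition.

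First I would precompute, for every vertex $b$, the values $\alpha(b)$ and $\beta(b)$. Computing $\alpha(b) = stab(p(b), \ell(b))$ is a single constant-time query by \cref{prop:queries-stability}. Computing all $\beta(b)$ reduces to the following task on each maximal leftmost path $P = v_0 v_1 \ldots v_k$: for every $i < k$, locate the smallest $j \geq i+1$ such that either $j = k$, or $lwpt(v_{j+1}) \geq v_i$, or $hgpt(v_{j+1}) < v_i$. This can be carried out in time $O(|P|)$ by a monotone sweep along $P$, and thus in $O(n)$ overall since the leftmost paths partition $E(T)$. In the same pass I would record, for each $b$, the Boolean indicating whether $T[\alpha(b), \beta(b)]$ is stable, via one more constant-time stability query from \cref{prop:queries-stability}.

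Next, for the ``child condition'' at $\beta(b)$---that every child $c$ of $\beta(b)$ satisfies $lwpt(c) \geq \alpha(b)$ or $hgpt(c) \leq \alpha(b)$---I would observe that it fails iff some child $c$ has $\alpha(b) \in (lwpt(c), hgpt(c))$. Grouping the $n$ queries by their vertex $\beta(b)$, sorting query thresholds and interval endpoints with counting sort (all values lie in $[1,n]$), and performing a linear sweep of the $O(\deg v)$ intervals at each $v$ answers all queries in $O(n + m)$ total. For the two ``leftmost-path existence'' clauses in the type-$2$ characterization, I would introduce auxiliary functions $g_1, g_2 : V(G) \to [0, n+1]$ with $g_1(\gamma) = 1$ iff $\gamma$ has a left child $\ell'$ with $lwpt_2(\ell') = \gamma$ (and $0$ otherwise), and $g_2(\gamma) = hgpt(\ell(\gamma))$ when $\gamma$ is not a leaf (and $n+1$ otherwise). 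Since $T(a,b)$ is a leftmost path whenever $(A,B)$ is a type-$2$ separation, \cref{lem:query-leftmost-branch} lets me answer a range-max of $g_1$ and a range-min of $g_2$ over $T(a,b)$ in $O(1)$ each after $O(n)$ preprocessing, which exactly test the two clauses.

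At query time, a constant-time shape test on the encoding (using the cached $LD(\cdot)$ values and child positions to recognise whether $B \setminus \{a,b\}$ equals $Desc(\ell(b))$, $\bigcup_{d \in \mathcal{D}} Desc(d)$, or neither, and whether $\ell(b)$ and the root lie on the same side) combines with the precomputed flags and the two range queries to evaluate every condition of \cref{lem:charac-totally-nested-type-1} or \cref{lem:charac-tot-nested-t2} in $O(1)$. The main obstacle I anticipate is the child-condition check at $\beta(b)$: a naive per-query scan would spend $\Theta(\deg \beta(b))$ time, so the offline batching by $\beta$-value and the linear interval sweep are essential; the monotone sweep that produces all $\beta(b)$ along leftmost paths is the other delicate subroutine, and care is needed to verify that it runs in amortized $O(1)$ per vertex.
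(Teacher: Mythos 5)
Your proposal is correct and follows essentially the same route as the paper: precompute $\alpha(\cdot)$ via a single stability query and $\beta(\cdot)$ via a monotone sweep along maximal leftmost paths; batch the per-$\beta$ child condition offline via a grouped linear sweep; cache the stability of $T[\alpha(b),\beta(b)]$; and answer the two leftmost-path clauses of the type-2 characterization by range queries over auxiliary arrays (your $g_1,g_2$ are exactly the paper's $Y,Y'$), dispatching at query time on type and shape of the encoding. The only cosmetic difference is that you phrase the child condition as an interval-stabbing sweep while the paper exploits the $\leq'$-monotonicity of children and the precomputed $maxhgpt$ to advance a single pointer; both yield $O(n+m)$, and neither is substantively simpler.
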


This result is an algorithmic consequence of \cref{lem:charac-totally-nested-type-1,,lem:charac-tot-nested-t2}.
We now describe some precomputation we need to perform to be able to use these characterizations efficiently.

\begin{lemma} \label{lem:compute-alpha-beta}
    There is an $O(n+m)$-time algorithm which, given a 2-connected graph $G$ equipped a spanning tree $T$ and a numbering of $V(G)$ compatible with $T$, computes $\alpha(v), \beta(v)$ for every $v \in V(G)$.
\end{lemma}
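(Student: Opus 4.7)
The plan is to compute $\alpha$ and $\beta$ separately. Computing $\alpha$ is immediate from \cref{prop:queries-stability}: for each vertex $v$ that is neither the root nor a leaf I would simply set $\alpha(v) = stab(p(v), \ell(v))$, which takes $O(1)$ per vertex after the linear-time precomputation, and $\alpha(v) = 0$ otherwise. This already exhausts the budget for $\alpha$.

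For $\beta$, I would process the maximal leftmost paths of $T$ one by one; since they partition $E(T)$, achieving a running time linear in the length of each path yields $O(n)$ overall. Fix such a path $P = u_0 u_1 \cdots u_k$ with $u_k$ a leaf. The queries associated with $P$ are $\beta(u_{j-1})$ for $j \in \{1, \ldots, k\}$, asking for the smallest index $i \in \{j, \ldots, k\}$ such that $i = k$ or $lwpt(u_{i+1}) \geq u_{j-1}$ or $hgpt(u_{i+1}) < u_{j-1}$. I split the task into two independent subproblems, one per non-trivial condition, and combine them by taking the minimum (clamping to $k$ when both subproblems return $k$).

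The observation that makes the lowpoint subproblem easy is that $i \mapsto lwpt(u_{i+1})$ is monotonically non-decreasing along a leftmost path. Indeed, every element of $L(u_{i+1}) \setminus \{u_i\}$ is a proper ancestor of $u_i$ adjacent to a descendant of $u_{i+1}$, hence of $u_i$, so $L(u_{i+1}) \setminus \{u_i\} \subseteq L(u_i)$, and a short case analysis (treating separately the case $L(u_{i+1}) = \{u_i\}$) yields $lwpt(u_{i+1}) \geq lwpt(u_i)$. Consequently, for any fixed threshold $u_{j-1}$, the set of indices $i$ satisfying $lwpt(u_{i+1}) \geq u_{j-1}$ is a suffix of $\{0, \ldots, k-1\}$; as $j$ decreases the threshold decreases, so this suffix only grows leftward. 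Processing the queries in decreasing order of $j$ while maintaining a single left-moving pointer therefore solves this subproblem in $O(|P|)$ per path.

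The main obstacle is the highpoint subproblem, since $i \mapsto hgpt(u_{i+1})$ is not monotone. My plan is to reformulate the condition positionally: when $hgpt(u_{i+1}) = u_{i'}$ lies on $P$, the condition $hgpt(u_{i+1}) < u_{j-1}$ holds for query $j$ exactly when $j \geq i'+2$, so the index $i$ ``drops out of eligibility'' precisely at $j = i' + 1$. Indices whose highpoint lies strictly above $u_0$ are permanently eligible, while the case $hgpt(u_{i+1}) = u_{i+1}$ never satisfies the highpoint condition but is already handled by the lowpoint step. Bucket-sorting the indices by their removal times in $O(|P|)$ and then processing the queries in decreasing order of $j$, I repeatedly remove the indices scheduled for the current $j$ and query the smallest still-eligible index $\geq j$. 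Since every removal merges an index with its right neighbour, this is an incremental union problem on a line whose union tree is a path known in advance; it can be solved in strict linear time using the static-tree union--find of Gabow--Tarjan (or, equivalently for the purposes of this paper, in $O(|P|\, \alpha(|P|))$ with standard path compression). Combining the two subproblem answers per query yields $\beta(u_{j-1})$, and summing over all maximal leftmost paths completes the proof within the $O(n+m)$ budget.
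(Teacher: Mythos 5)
Your computation of $\alpha$ is identical to the paper's, and your decomposition of $\beta(v)$ into a lowpoint-driven part and a highpoint-driven part, processed leftmost path by leftmost path and combined by a minimum, is also exactly the paper's decomposition (the paper calls them $\beta_1$ and $\beta_2$). Where you differ is in how the two per-path subproblems are solved, and these differences are worth comparing.

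For the lowpoint subproblem, the paper processes the path bottom-to-top with a FIFO queue of pending queries; when at $v_i$ it pops all queries $v_j$ from the front with $lwpt(v_{i+1}) \geq v_j$. That argument exploits only that the queue is sorted by query threshold. You instead observe that $i \mapsto lwpt(u_{i+1})$ is monotone non-decreasing along any rooted path (because $L(u_{i+1})\setminus\{u_i\}\subseteq L(u_i)$), so the satisfying indices form a suffix for each threshold, and a single sweeping pointer suffices. Your monotonicity fact is correct (and the paper never states it), and your resulting two-pointer is arguably more transparent. Both are strictly linear.

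For the highpoint subproblem the routes genuinely diverge. The paper uses a monotonic stack (a ``nearest suitable element'' scan in the other direction), which is elementary and strictly linear. You translate the condition $hgpt(u_{i+1}) < u_{j-1}$ into a removal time $i'+1$, bucket-sort by removal times, and maintain a decremental ``find next live index to the right'' structure via union--find on a line. This is correct, but it is substantially heavier machinery: as you yourself note, plain path-compression union--find only gives $O(|P|\,\alpha(|P|))$, which does \emph{not} meet the $O(n+m)$ bound; you must invoke the static-tree Gabow--Tarjan variant. That tool is admittedly consistent with this paper's toolbox (it is already needed for the highpoint precomputation of Kosinas that the paper cites), so the proof is not wrong, but the paper's stack is both simpler and self-contained. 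Two small nits: (i) you write ``highpoint lies strictly above $u_0$'' when you mean a proper ancestor of $u_0$ (i.e.\ below $u_0$ in the paper's root-at-bottom convention); (ii) you should state explicitly that indices with $hgpt(u_{i+1})=u_{i+1}$ are removed at initialization (or at least argue, as can be done, that leaving them in cannot corrupt the combined answer because they are always dominated by the lowpoint answer). With those clarifications the argument is sound.
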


\begin{proof}
    If $v$ is a leaf or the root of $T$ then $\alpha(v) = 0$ by definition.
    Otherwise, $\alpha(v) = stab(p(v), \ell(v))$, where $\ell(v)$ denotes the left child of $v$.
    For every $v \in V(G)$ which is not the root or a leaf of $T$, we can access $p(v)$ and $\ell(v)$ in constant time. Then, computing $stab(p(v), \ell(v))$ also takes constant time by \cref{prop:queries-stability}.
    The computation of all $\alpha(v)$ thus takes time $O(n)$ overall.

    We compute the $\beta(v)$ separately on each maximal leftmost path. 
    By definition, $\beta(r) = 0$. 
    Observe that if $P$ is a leftmost path then for every $v \in V(P)$, $\ell(v)$ (the left child of $v$) is the successor of $v$ in $P$ (except if $v$ is the first vertex of $P$).
    For each leftmost path $P$, we compute for every $v \in V(P)$ (except the first vertex) the minimum vertex $\beta_1(v) \geq \ell(v)$ in $V(P)$ such that either $\beta_1(v)$ is a leaf of $T$ (i.e. the last vertex of $P$) or $lwpt(\ell(\beta_1(v))) \geq v$.
    Then, we compute for every $v \in V(P)$ (except the first vertex) the minimum vertex $\beta_2(v) \geq \ell(v)$ in $V(P)$ such that either $\beta_2(v)$ is a leaf of $T$ or $hgpt(\ell(\beta_2(v))) < v$.
    Then, for every $v \in V(P)$, we will have $\beta(v) = \min(\beta_1(v), \beta_2(v))$.

    We now explain how to compute $\beta_1(\cdot)$ on a maximal leftmost path $P = v_0v_1\ldots v_k$. The details are given in \cref{alg:compute-alpha-beta}. We use a queue $Q$ to store all vertices below our current vertex for which $\beta_1$ has not yet been computed. $Q$ will always be sorted with the smallest elements first. 
    When we process a new vertex $v_i \in V(P)$, if $v_i = v_k$ then $\beta_1(v_j) = v_k$ for all vertices $v_j \in Q$. If not, we keep removing the first element $v_j$ of $Q$ and setting $\beta_1(v_j) = v_i$ as long as $lwpt(v_{i+1}) \geq v_j$. Once this is done, we add $v_i$ as the last element of $Q$.

    \begin{algorithm}
    \caption{Algorithm to compute $\beta_1$}
    \label{alg:compute-alpha-beta}
    \begin{algorithmic}[1]
    \Require Maximal leftmost path $P = v_0v_1\ldots v_k)$.
    \Require $lwpt(v)$ for every $v \in V(P)$.

    \State $Q \gets$ empty queue
    \For{$i \in [1, k-1]$}
        \While{$Q \neq \emptyset$ and $lwpt(v_{i+1}) \geq Q.first()$} \label{line:while-beta1}
            \State $\beta_1(Q.first()) \gets v_i$ \label{line:set-beta1}
            \State $Q.pop()$
        \EndWhile
        \State $Q.push(v_i)$
    \EndFor

    \While{$Q \neq \emptyset$} \label{line:while-Q-not-empty}
        \State $\beta_1(Q.first()) \gets v_k$ \label{line:set-beta1'}
        \State $Q.pop()$
    \EndWhile
    \State $\beta_1(v_k) \gets 0$
    
    \end{algorithmic}
    \end{algorithm}

    Since the $v_i$ are inserted into $Q$ by increasing order and since we always ever remove the first element of $Q$ then $Q$ remains sorted (with the smallest elements first) throughout the algorithm.

    \begin{claim*}
        At any point of the algorithm, the following holds.
        \begin{itemize}
            \item If $v_j \in Q$ when we start processing $v_i$ then $v_j < v_i$ and $\beta_1(v_j) \geq v_i$.
            \item Whenever we remove some $v_j$ from $Q$, we correctly computed $\beta_1(v_j)$.
        \end{itemize}
    \end{claim*}

    \begin{subproof}
        The two invariants trivially hold when we start processing $v_1$ since $Q$ is initialized empty.
        Let $i \in [1, k - 1]$ and suppose that the two invariants hold when we start processing $v_i$. We show that they still hold once we are done processing $v_i$ and start processing $v_{i+1}$.
        The only element we add to $Q$ while processing $v_i$ is $v_i$, so when we start processing $v_{i+1}$, every $v_j$ in $Q$ will satisfy $v_j \leq v_i < v_{i+1}$. This proves the first part of the first invariant.
        Observe that $\beta_1(v_i) \geq v_{i+1}$ by definition.
        Consider a $v_j$ such that $\beta_1(v_j) < v_{i+1}$ and $v_j \in Q$ when we start processing $v_i$. By the first invariant, we have $v_i \leq \beta_1(v_j) < v_{i+1}$, hence $\beta_1(v_j) = v_i$ since $\beta_1(v_j)$ is a leftmost descendant of $v_j$ hence a vertex of $P$. Therefore, since $v_i$ is not a leaf of $T$ (since $i < k$), we have $lwpt(v_{i+1}) \geq v_j$.
        When we stop entering the While loop at \cref{line:while-beta1}, either $Q$ is empty or the first vertex $v_l$ of $Q$ satisfies $lwpt(v_{i+1}) < v_l$, hence $v_j < v_l$. 
        Since $Q$ remains sorted with the smallest elements first, in both cases we must have removed $v_j$ from $Q$ when processing $v_i$.
        This proves that when we are done processing $v_i$, every $v_l \in Q$ satisfies $\beta_1(v_l) > v_i$, hence $\beta_1(v_l) \geq v_{i+1}$.
        Finally, consider any $v_j$ which was removed from $Q$ when processing $v_i$. 
        $v_j$ was in $Q$ when we started processing $v_i$ so $\beta_1(v_j) \geq v_i$ by the first invariant. Furthermore, since we removed $v_j$ from $Q$ it must be that $lwpt(v_{i+1}) \geq v_j$ so $\beta_1(v_j) \leq v_i$ and thus $\beta_1(v_j) = v_i$, which is exactly what we set at \cref{line:set-beta1}.
        Finally, when we process $v_k$, every $v_j \in Q$ at that point satisfies $\beta_1(v_j) \geq v_k$ by the first invariant, and therefore $\beta_1(v_j) = v_k$ since $v_k$ is a leaf of $T$ (otherwise the maximal leftmost path $P$ would not end at $v_k$), which is exactly what we set at \cref{line:set-beta1'}.
    \end{subproof}

    For every $i \in [1, k-1]$, $v_i$ is added to $Q$ when processing $v_i$. Since $Q$ is empty at the end of the algorithm because of \cref{line:while-Q-not-empty}, every $v_i$ is removed from $Q$ at some point. By the second invariant, we thus correctly computed every $\beta_1(v_i)$. Finally, since $v_k$ is a leaf of $T$, we also correctly computed $\beta_1(v_k)$.

    As for the running time of this algorithm, every $v_i$ gets inserted exactly once in $Q$, and the time spent while processing $v_i$ is $O(1) + O(r_i)$ where $r_i$ is the number of elements we removed from $Q$ while processing $v_i$. Summing over all $v_i$, the running time is therefore $O(|V(P)|)$.

    We use the same idea to compute $\beta_2(\cdot)$ on $P$. The algorithm is given in \cref{alg:compute-alpha-beta2}. The main difference is that here we use a stack $S$ instead of a queue. In this stack, we again store all vertices below our current vertex for which $\beta_2$ has not yet been computed (but this time with the largest on top). When we process a new vertex $v_i \in V(P)$, if $v_i = v_k$ then $\beta_2(v_j) = v_k$ for all vertices $v_j \in S$. If not, we keep removing the first element $v_j$ of $S$ and setting $\beta_2(v_j) = v_i$ while $hgpt(v_{i+1}) < v_j$.

    \begin{algorithm}
    \caption{Algorithm to compute $\beta_2$}
    \label{alg:compute-alpha-beta2}
    \begin{algorithmic}[1]
    \Require Maximal leftmost path $P = (v_0 < v_1 < \ldots < v_k)$.
    \Require $hgpt(v)$ for every $v \in V(P)$.
    \State $S \gets$ empty stack
    
    \For{$i \in [1, k-1]$}
        \While{$S \neq \emptyset$ and $hgpt(v_{i+1}) < S.top()$}
            \State $\beta_2(S.top()) \gets v_i$
            \State $S.pop()$
        \EndWhile
        \State $S.push(v_i)$
    \EndFor
    
    \While{$S \neq \emptyset$}
        \State $\beta_2(S.top()) \gets v_k$
        \State $S.pop()$
    \EndWhile
    \State $\beta_2(v_k) \gets 0$
    
    \end{algorithmic}
    \end{algorithm}

    The complexity and correctness analysis of \cref{alg:compute-alpha-beta2} are similar to that of \cref{alg:compute-alpha-beta}.
    
    Finally, we set $\beta(v) = \min(\beta_1(v), \beta_2(v))$ for every $v \in V(P)$ except the first vertex. 
    This correctly computes all values of $\beta(v)$ for $v \in V(P) \setminus \{v_0\}$ in time $O(|V(P)|)$.

    Since for every vertex $v$ (except the root $r$) there exists a maximal leftmost path for which $v$ is not the first vertex, we compute this way all values of $\beta(v)$ for $v \in V(G) \setminus \{r\}$.
    Since the maximal leftmost paths partition the edges of $T$ and since we already computed this partition, the total running time to compute all $\beta(v)$ is $O(n)$.
\end{proof}

We now assume that we computed $\alpha(v)$ and $\beta(v)$ for every $v \in V(G)$.

\begin{lemma} \label{lem:compute-Z}
    There is an $O(n+m)$-time algorithm which, given a 2-connected graph $G$ equipped a spanning tree $T$ and a numbering of $V(G)$ compatible with $T$, computes a table $Z$ such that for every $v \in V(G)$, we have $Z[v] = 1$ if and only if the following holds: $\alpha(v) \neq 0$, $\beta(v) \neq 0$, every child $c$ of $\beta(v)$ satisfies either $lwpt(c) \geq \alpha(v)$ or $hgpt(c) \leq \alpha(v)$ and $T[\alpha(v), \beta(v)]$ is stable.
\end{lemma}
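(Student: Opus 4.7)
The plan is to reduce the test $Z[v]=1$ to four sub-conditions and to handle each within total $O(n+m)$ time. Using the arrays produced by \cref{lem:compute-alpha-beta}, the tests $\alpha(v)\neq 0$ and $\beta(v)\neq 0$ are constant time per vertex. For the stability of $T[\alpha(v),\beta(v)]$, we invoke \cref{prop:queries-stability}: after an $O(n+m)$ precomputation, the value $stab(\alpha(v),\beta(v))$ is available in constant time, and $T[\alpha(v),\beta(v)]$ is stable precisely when this value equals $\alpha(v)$. Initialize $Z[v]:=0$ for every $v$, and set $Z[v]:=1$ only once all four conditions are confirmed.

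The only nontrivial obstacle is verifying, for every $v$ simultaneously and in total linear time, that every child $c$ of $b:=\beta(v)$ satisfies $lwpt(c)\geq\alpha(v)$ or $hgpt(c)\leq\alpha(v)$. Taking the negation, this condition fails if and only if some child $c$ of $b$ has $lwpt(c)<\alpha(v)<hgpt(c)$. Setting $M_b(\alpha):=\max\{hgpt(c):c\text{ is a child of }b,\ lwpt(c)<\alpha\}$, with the convention $M_b(\alpha):=-\infty$ when no such child exists, the condition holds precisely when $M_b(\alpha(v))\leq\alpha(v)$.

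To evaluate $M_b(\alpha(v))$ for every relevant $v$ in total linear time, we batch queries by $b$ and process each batch with a two-pointer sweep. First, form the buckets $V_b:=\{v\in V(G):\beta(v)=b\}$ in $O(n)$ time via \cref{thm:linear-sort}, then sort each bucket by increasing $\alpha(v)$; globally this is a lex sort on pairs $(\beta(v),\alpha(v))\in [0,n]^2$, costing $O(n)$ by \cref{cor:sort-lex}. In parallel, sort the children of every $b\in V(G)$ by increasing $lwpt$: lex-sort the tuples $(p(c),lwpt(c))$ over non-root $c\in V(G)$, again in $O(n)$ by \cref{cor:sort-lex}. For each $b$, walk jointly through the sorted children of $b$ and the sorted list $V_b$, maintaining a running maximum $M$: for each $v\in V_b$ taken in increasing order of $\alpha(v)$, advance the child pointer, consuming into $M$ every child $c$ of $b$ with $lwpt(c)<\alpha(v)$, and then test whether $M\leq\alpha(v)$. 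The work for bucket $b$ is $O(|V_b|+\deg_T(b))$, summing to $O(n)$ overall.

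Setting $Z[v]:=1$ exactly when all four conditions pass yields the requested table in total time $O(n+m)$. The main subtlety is the bookkeeping that links each bucket $V_b$ to the sorted child-list of the same $b$ so that the joint sweep stays cleanly linear; everything else is a direct application of the precomputation tools already collected in \cref{subsec:precomputations,subsec:stability}.
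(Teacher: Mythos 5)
Your proposal is correct and follows essentially the same strategy as the paper: handle the $\alpha(v)\neq 0$, $\beta(v)\neq 0$, and stability tests in constant time per vertex via \cref{lem:compute-alpha-beta} and \cref{prop:queries-stability}; then, for the child condition, batch the remaining vertices by $\beta$, sort each batch by increasing $\alpha$, and run a two-pointer sweep over the children so that the total work per batch is $O(|V_b|+\deg_T(b))$, giving $O(n)$ overall. The one real difference is how the suffix/prefix maximum of $hgpt$ over the relevant children is obtained: you explicitly re-sort the children of each $b$ by increasing $lwpt$ (via a lex sort of $(p(c),lwpt(c))$) and maintain a running maximum $M$ as the pointer advances, whereas the paper exploits that a numbering compatible with $T$ already orders siblings by non-increasing $lwpt$, so the relevant children form a suffix of the sibling list and the precomputed $maxhgpt(\cdot)$ array answers the query directly with no re-sorting or running max. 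Both are clean and linear; yours is slightly more self-contained and does not rely on the sibling-ordering property of the compatible numbering, while the paper's reuses existing precomputation and avoids the extra sort.
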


\begin{proof}
    We initialize our table $Z$ by setting $Z[v] = 1$ for every $v \in V(G)$.
    For every $v \in V(G)$ such that $\alpha(v) = 0$ or $\beta(v) = 0$, we set $Z[v] = 0$.
    For all others $v \in V(G)$, we query $stab(\alpha(v), \beta(v))$ using \cref{prop:queries-stability}. Note that $T[\alpha(v), \beta(v)]$ is stable if and only if $stab(\alpha(v), \beta(v)) = \alpha(v)$. Thus, whenever $stab(\alpha(v), \beta(v)) \neq \alpha(v)$, we set $Z[v] = 0$.
    
    Let $X$ be the set of all triples $(\beta(v), \alpha(v), v)$ for which $Z[v] = 1$ at this point. 
    By \cref{cor:sort-lex}, we can sort $X$ lexicographically in time $O(n)$.
    We now iterate over all triples $(\beta(v), \alpha(v), v) \in X$.
    For every fixed $\beta$, we consider all triples of the form $(\beta, \cdot, \cdot)$ together. Write them as $(\beta, \alpha_1, v_1), \ldots, (\beta, \alpha_k, v_k)$ by increasing lexicographical order.
    We go over all these triples while simultaneously going over all children of $\beta$ by increasing order.
    If $\beta$ has no child, we do nothing.
    Otherwise, we consider the first triple $(\beta, \alpha_1, v_1)$. We now go over the children $c$ of $\beta$ until we find the smallest child $c_1$ such that $lwpt(c_1) < \alpha_1$. Then, if $maxhgpt(c_1) > \alpha_1$, we set $Z[v_1] = 0$. Indeed, in that case there is a child $d \geq c_1$ of $\beta$ such that $hgpt(d) > \alpha_1$. Furthermore, since $d \geq c_1$ then $lwpt(d) \leq lwpt(c_1) < \alpha_1$. On the other hand, if $maxhgpt(c_1) \leq \alpha_1$ then every child $d$ of $\beta$ indeed satisfies either $lwpt(d) \geq \alpha_1$ or $hgpt(d) \leq \alpha_1$ (by minimality of $c_1$).
    We now want to do the same for $(\beta, \alpha_2, v_2)$. The key observation here is that since $\alpha_2 \geq \alpha_1$, the first child $c_2$ such that $lwpt(c_2) < \alpha_2$ satisfies $c_2 \geq c_1$. Thus, to find $c_2$ we do not need to start over from the smallest child of $\beta$, we can just start from $c_1$.
    This way, the time needed to find all $c_i$ will be proportional to the number of children of $\beta$ in $T$, and to $k$. Therefore, summing over all values of $\beta$, this takes time $O(n)$ overall.
    The correctness of this algorithm is straightforward to check.
\end{proof}

We now assume that we computed this table $Z$.

\begin{proposition} \label{prop:query-totally-nested-1}
    There is an algorithm which, given a 2-connected graph $G$ equipped a spanning tree $T$ and a numbering of $V(G)$ compatible with $T$, after an $O(n+m)$-time precomputation, can answer in constant time queries of the form: Given a half-connected type-1 separation, is it totally-nested?
\end{proposition}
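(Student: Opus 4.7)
The strategy is to turn the characterization of \cref{lem:charac-totally-nested-type-1} into a constant-time test, after some auxiliary information has been precomputed. A half-connected 2-separation is totally-nested if and only if it is not crossed, and by \cref{lem:type1-nested} a half-connected type-1 separation cannot be crossed by another type-1 separation; hence \cref{lem:charac-totally-nested-type-1} characterizes exactly when such a separation is crossed.

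For the precomputation, we first apply \cref{lem:compute-alpha-beta} to compute $\alpha(v)$ and $\beta(v)$ for every $v \in V(G)$, and then apply \cref{lem:compute-Z} to compute the table $Z$. Both steps run in $O(n+m)$ time. Together with the precomputation of \cref{subsec:precomputations}, this gives constant-time access to $\alpha(v)$, $\beta(v)$, $Z[v]$, $lwpt(v)$, $lwpt_2(v)$, $hgpt(v)$, and to the left child $\ell(b)$ of any non-leaf vertex $b$ (since the sorted list of children is stored).

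Now we describe the query. Let $(a, b, \textbf{f}, \textbf{l}, c_m, c_M)$ be the encoding of a half-connected type-1 separation $(A,B)$. If $b$ is a leaf of $T$, then $\ell$ does not exist, so the conditions of \cref{lem:charac-totally-nested-type-1} cannot hold and we return that $(A,B)$ is totally-nested. Otherwise, let $\ell = \ell(b)$. The key observation is that the first two bullets of \cref{lem:charac-totally-nested-type-1} hold simultaneously if and only if $c_M = \ell$. Indeed, $\ell$ is the largest child of $b$, so $c_M \leq \ell$; and since $c_M = \max \mathcal{D}$ by the construction of the encoding in \cref{lem:comp-type-1}, we have $\ell \in \mathcal{D}$ if and only if $\ell = c_M$, which gives the second bullet. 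Moreover, when $c_M = \ell$, the encoding corresponds to $B = \{a,b\} \cup Desc(\ell)$ if $c_m = c_M$, and to $B = \{a,b\} \cup \bigcup_{d \in \mathcal{D}} Desc(d)$ if $c_m < c_M$, so the first bullet is also satisfied. Finally, the third bullet of \cref{lem:charac-totally-nested-type-1} becomes, by construction of $Z$, the two conditions $\alpha(b) > a$ and $Z[b] = 1$.

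The query therefore reduces to checking whether $c_M = \ell$, $\alpha(b) > a$, and $Z[b] = 1$; the separation is crossed exactly when all three hold, and is totally-nested otherwise. Each test is $O(1)$, so the overall query time is constant. The main conceptual point to get right is the equivalence between the first two bullets of \cref{lem:charac-totally-nested-type-1} and the single test $c_M = \ell$, which relies on the description of $\mathcal{D}$ via the encoding and on $\ell$ being the largest child of $b$; once this is observed, the rest of the proof is immediate from the precomputed data.
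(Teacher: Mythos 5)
Your algorithm is exactly the paper's: check $c_M = \ell$, $\alpha(b) > a$, and $Z[b]=1$, and return ``crossed'' iff all hold. The endpoint is correct, and the reduction of the third bullet of \cref{lem:charac-totally-nested-type-1} to $\alpha(b) > a$ and $Z[b] = 1$ is right (the $\alpha(b)\neq 0$, $\beta(b)\neq 0$ parts of $Z$ being automatic since $a \geq 1$ and $b$ has a child and $b\neq r$).

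However, the justification of the key equivalence has a real flaw. You write that $c_M = \max\mathcal{D}$ ``by the construction of the encoding in \cref{lem:comp-type-1}'' and deduce from it that $\ell\in\mathcal D$ iff $\ell = c_M$. That identity only holds for the encodings coming from the \emph{second} case of \cref{lem:comp-type-1}, where $c_m < c_M$ and $c_M = M_{a,b}=\max\mathcal D$. In the first case ($c_m = c_M = d$), the tuple $(a,b,d,LD(d),d,d)$ is produced for \emph{every} $d\in\mathcal D$, so $c_M$ is an arbitrary element of $\mathcal D$, not necessarily the largest. Consequently, the implication ``$\ell\in\mathcal D\Rightarrow \ell = c_M$'' fails as stated: one can have $c_M = d\in\mathcal D$ with $d \neq \ell = \max\mathcal D\in\mathcal D$. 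The overall equivalence you want is nonetheless true, but for a different reason: in that bad configuration ($m=M=d\neq\ell$, $\ell\in\mathcal D$, $|\mathcal D|\geq 2$), the \emph{first} bullet of \cref{lem:charac-totally-nested-type-1} fails, since $B = \{a,b\}\cup Desc(d)$ equals neither $\{a,b\}\cup Desc(\ell)$ nor $\{a,b\}\cup\bigcup_{d'\in\mathcal D} Desc(d')$. The paper avoids this issue by splitting into the cases $m = M$ and $m < M$ and in each case reading off directly that the test is $c_M = \ell$. To make your proof correct, you should either restore that case split, or replace the claim $c_M = \max\mathcal D$ by the observation that when $c_M \neq \ell$, one of the first two bullets must fail regardless of whether $\ell\in\mathcal D$ (second bullet fails if $\ell\notin\mathcal D$; first bullet fails if $\ell\in\mathcal D$).

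Also a small point: the ``$b$ is a leaf'' branch is vacuous for type-1 separations, since $B$ always contains a child of $b$; it is harmless but unnecessary.
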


\begin{proof}
    All the precomputation has already been described and takes linear time. 
    
    Suppose we are given a query $(a, b, \textbf{f}, \textbf{l}, m, M)$ encoding a half-connected type-1 separation. Observe that if $\mathcal{C}$ is the set of all children $c$ of $b$ such that $m \leq c \leq M$ then $(a, b, \textbf{f}, \textbf{l}, m, M)$ encodes the half-connected type-1 separation $(A, B)$ with $B = \{a, b\} \cup \bigcup_{c \in \mathcal{C}} Desc(c)$ and $A = V(G) \setminus (B \setminus \{a, b\})$.

    Suppose first that $m = M =: d$. 
    Then, $lwpt(d) = a$ and $lwpt_2(d) = b$ since $(A, B)$ is type-1.
    If $d$ is not the left child of $b$, we answer that $(A, B)$ is totally-nested.
    If $d$ is the left child of $b$, then $b$ is not a leaf of $T$, and $b \neq r$ so $\beta(b) \neq 0$. We check whether $\alpha(b) > a$, every child $c$ of $\beta(b)$ satisfies either $lwpt(c) \geq \alpha(b)$ or $hgpt(c) \leq \alpha(b)$ and $T[\alpha(b), \beta(b)]$ is stable. This is done using the table $Z$ we already computed.
    If all these conditions hold, we answer that $(A, B)$ is crossed. Otherwise, we answer that $(A, B)$ is totally-nested.

    Suppose now that $m < M$. 
    Since $(A, B)$ is a half-connected type-1 separation then $\mathcal{C}$ is the set of children $d$ of $b$ such that $lwpt(d) = a, lwpt_2(d) = b$ and $M$ is the maximum such vertex, by \cref{prop:charac-t1-half-con}.
    Since $M$ is a child of $b$ then $b$ is not a leaf of $T$, and $b \neq r$ so $\beta(b) \neq 0$.
    We check if $M$ is the left child of $b$, $\alpha(b) > a$, every child $c$ of $\beta(b)$ satisfies either $lwpt(c) \geq \alpha(b)$ or $hgpt(c) \leq \alpha(b)$ and $T[\alpha(b), \beta(b)]$ is stable. This is done using the table $Z$ we already computed.
    If all these conditions hold, we answer that $(A, B)$ is crossed. Otherwise, we answer that $(A, B)$ is totally-nested.

    The correctness follows immediately from \cref{lem:charac-totally-nested-type-1}, and everything can be done in constant time.
\end{proof}

\begin{proposition} \label{prop:query-totally-nested-2}
    There is an algorithm which, given a 2-connected graph $G$ equipped a spanning tree $T$ and a numbering of $V(G)$ compatible with $T$, after an $O(n+m)$-time precomputation, can answer in constant time queries of the form: Given a half-connected type-2 separation, is it totally-nested?
\end{proposition}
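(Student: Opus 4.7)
The plan is to turn the characterization in \cref{lem:charac-tot-nested-t2} into three constant-time tests, one for each of the three crossing conditions, each supported by linear-time precomputation that matches what was done for type-1 separations in \cref{prop:query-totally-nested-1}.

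For the precomputation, I will introduce two auxiliary functions on $V(G)$. Define $f_1 : V(G) \to \{0, 1\}$ by $f_1(v) = 1$ if $v$ is not a leaf and its left child $\ell'$ satisfies $lwpt_2(\ell') = v$, and $f_1(v) = 0$ otherwise. Define $f_2 : V(G) \to [0, n]$ by $f_2(v) = hgpt(\ell(v))$ if $v$ is not a leaf, and $f_2(v) = n$ otherwise. Both $f_1$ and $f_2$ can be tabulated in time $O(n)$ given the precomputation of \cref{subsec:precomputations}. Applying \cref{lem:query-leftmost-branch} to each of $f_1$ and $f_2$ then allows, after an additional $O(n)$-time precomputation, constant-time maximum and minimum queries over any sub-leftmost-path. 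I also assume the precomputations of \cref{prop:queries-stability}, \cref{lem:compute-alpha-beta}, and \cref{lem:compute-Z}, so that $stab(x, y)$, $\alpha(v)$, $\beta(v)$, and $Z[v]$ are all available in constant time.

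Given a query encoded by $(a, b, \textbf{f}, \textbf{l}, m, M)$ representing a half-connected type-2 separation $(A, B)$ with separator $\{a < b\}$, I recover the unique child $a'$ of $a$ which is an ancestor of $b$: by the canonical orientation $r \in A$ of a half-connected type-2 separation (see \cref{prop:charac-t2-half-con}) together with the convention of \cref{subsec:representing-separations}, we have $a' = \textbf{f}$. Since $T[a, b]$ is a leftmost path by \cref{prop:charac-2-sep}, so is $T(a, b)$, so both $f_1$ and $f_2$ queries over $T(a, b)$ are well-defined. For condition (1) of \cref{lem:charac-tot-nested-t2}, I query the maximum of $f_1$ over $T(a, b)$; it holds iff this maximum equals $1$. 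For condition (2), I compute $\sigma = stab(p(a), a')$ (well-defined since $a \neq r$) and query the minimum of $f_2$ over $T(a, b)$; it holds iff this minimum is at most $\sigma$, and if $\sigma = 0$ the test automatically fails since $hgpt$-values are at least $1$. For condition (3), I check whether $Z[b] = 1$, $\alpha(b) > a$, and $M < \ell(b)$: the first two enforce the conditions on $\alpha(b), \beta(b)$ and force $b$ to be neither a leaf nor the root, while the third encodes that $\ell(b)$ lies on the same side as $r$; indeed, under the canonical orientation the children of $b$ belonging to $B$ form an initial segment $c_1, \ldots, c_i$ with $M = c_i$, so $\ell(b) \in A$ is equivalent to $M < \ell(b)$.

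I return "crossed" if any of the three tests succeeds, and "totally-nested" otherwise. Correctness is immediate from \cref{lem:charac-tot-nested-t2} combined with the above interpretation of the encoding, and each query step runs in constant time. The only mildly subtle point is the bookkeeping for condition (3), where one must correctly relate the geometric statement "$\ell(b)$ and $r$ are on the same side" to the encoding parameter $M$; all remaining work is a routine application of the precomputed oracles.
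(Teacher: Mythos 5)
Your proposal is correct and follows essentially the same route as the paper's own proof: you tabulate the same two auxiliary arrays (the paper calls them $Y$ and $Y'$, with the only cosmetic difference that it sets the default for leaves to $0$ rather than $n$, which is irrelevant since no leaf appears in $T(a,b)$), apply \cref{lem:query-leftmost-branch} to them, and check the three conditions of \cref{lem:charac-tot-nested-t2} via the identical constant-time queries, with the third condition handled through $Z[b]$, $\alpha(b) > a$, and the comparison of $M$ with $\ell(b)$.
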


\begin{proof}
    We create an array $Y$, such that $Y[v] = 1$ if $v$ has a left child $\ell$ and $lwpt_2(\ell) = v$, and $Y[v] = 0$ otherwise.
    Since we already computed all second lowpoints, $Y$ can be computed in time $O(n)$.
    We also create an array $Y'$ such that $Y'[v] = hgpt(\ell(v))$ if $v$ is not a leaf of $T$, where $\ell(v)$ is the left child of $v$, and $Y'[v] = 0$ otherwise.
    Since the highpoints are already computed, this takes time $O(n)$.
    We do the precomputation described in \cref{lem:query-leftmost-branch} to be able to find in constant time the minimum and maximum of $Y$ and of $Y'$ on every leftmost path.

    This finishes the description of the precomputation, which takes time $O(n+m)$ and we now describe how we process a query. 
    Suppose that we are given a half-connected type-2 separation encoded as $(a, b, \textbf{f}, \textbf{l}, m, M)$.
    Let $\mathcal{C}$ be the set of children $c$ of $b$ such that $c \leq M$ (note that $\mathcal{C}$ can be empty).
    Let $T_2$ be the connected component of $T - \{a, b\}$ that contains $V(T(a, b)) \neq \emptyset$.
    Recall that $(a, b, \textbf{f}, \textbf{l}, m, M)$ encodes the separation $(A, B)$ with $B = \{a, b\} \cup V(T_2) \cup \bigcup_{c \in \mathcal{C}} Desc(c)$ and $A = V(G) \setminus (B \setminus \{a, b\})$.
    Let $a'$ be the unique child of $a$ that is an ancestor of $b$ in $T$. Since $(A, B)$ is a type-2 separation then $a' \neq b$ by \cref{prop:charac-2-sep}. Observe that $a'$ is the smallest vertex of $B \setminus \{a, b\}$ in the numbering since it is an ancestor of every vertex in $B \setminus \{a, b\}$, hence $a' = \textbf{f}$.  

    We check whether any of the three following items hold. \begin{itemize}
        \item There exists $\gamma \in V(T(a, b))$ such that the left child $\ell'$ of $\gamma$ satisfies $lwpt_2(\ell') = \gamma$.
        \item There exists $\gamma \in V(T(a, b))$ such that the left child $\ell'$ of $\gamma$ satisfies $hgpt(\ell') \leq stab(p(a), a')$.
        \item $b$ is not a leaf of $T$. Let $\ell$ be the left child of $b$, $\alpha = \alpha(b)$ and $\beta = \beta(b)$. Then, the separation $(A, B)$ does not separate $\ell$ and $r$, $\alpha > a$, every child $c$ of $\beta$ satisfies either $lwpt(c) \geq \alpha$ or $hgpt(c) \leq \alpha$, and $T[\alpha, \beta]$ is stable. 
    \end{itemize}

    The first can be checked by querying the maximum of $Y$ on the leftmost path $T(a, b)$.
    The second one can be checked by querying the minimum of $Y'$ on the leftmost path $T(a, b)$ and comparing it to $stab(p(a), a')$, which we can compute in constant time using \cref{prop:queries-stability}.
    Checking that $b$ is not a leaf takes constant time. If so, then $\alpha(b) \neq 0$ and $\beta(b) \neq 0$.
    Observe that $\ell$ is on the side of $(A, B)$ that contains $r$ if and only if $M < \ell$, and checking that every child $c$ of $\beta$ satisfies either $lwpt(c) \geq \alpha$ or $hgpt(c) \leq \alpha$, and $T[\alpha, \beta]$ is stable can be done using $Z$.

    If one of these items holds, we return that $(A, B)$ is crossed. Otherwise, we return that $(A, B)$ is totally-nested.

    The correctness follows from \cref{lem:charac-tot-nested-t2}, and answering a query takes constant time.
\end{proof}

Finally, \cref{thm:query-tot-nested} follows immediately from \cref{prop:query-totally-nested-1,,prop:query-totally-nested-2}.

\subsection{Building the tree-decomposition from a set of nested separations}

Recall that \cref{lem:nested-iff-tree-decomp} states that, given a set $\mathcal{S}$ of nested separations of a graph $G$, there exists a unique (up to isomorphism) tree-decomposition of $G$ induced by $\mathcal{S}$.
The proof of \cref{lem:nested-iff-tree-decomp} can be turned into an efficient algorithm computing the corresponding tree-decomposition from a set of nested separations.
We believe that such a result already appears in the literature but we could not find it explicitly, so we prove it in \cref{sec:proof-build-tree} for completeness.

First, let us describe how we store separations of a graph.
Let $G$ be a graph on vertex set $[n]$. 
Any set $A \subseteq V(G)$ can be uniquely written as a disjoint union of maximal intervals $A=[a_1,a'_1] \cup [a_2,a'_2] \cup \ldots \cup [a_\ell,a'_\ell]$. The corresponding tuple ${\underline{a} = (a_1,a'_1,a_2,a'_2,\ldots,a_\ell,a'_\ell)}$ is the \emph{representation} of $A$, and has \emph{length} $\ell$. 
For a separation $(A,B)$, if $\underline a$ is the representation of $A$ and $\underline b$ is the representation of $B$ then $(\underline a, \underline b)$ is the \emph{representation} of $(A,B)$. If $\underline{a}$ has length $\ell_1$ and $\underline{b}$ has length $\ell_2$ then $(\underline a, \underline b)$ has length $\ell_1 + \ell_2$.

Note that, given the representation of $(A, B)$, say of length $\ell$, computing the representations of $A \cap B$, $A \cup B$, $A \setminus B$ and $B \setminus A$ can be done in time $O(\ell)$. This implies that these representations have length $O(\ell)$.
In particular, computing explicitly the list of all vertices in $A \cap B$ can be done in time $O(\ell + |A \cap B|)$.

Observe that if $(a, b, \textbf{f}, \textbf{l}, m, M)$ is the encoding of a 2-separation $(A, B)$ then we can compute the representation of $(A, B)$ in constant time.

\begin{restatable}{theorem}{buildtree} \label{thm:build-tree+torsos}
    Let $G$ be a connected graph on vertex set $[n]$ and $\mathcal{S} = \{(A_1, B_1), \ldots, (A_p, B_p)\}$ be a set of nested minimal separations of $G$, each of order at most $k$.
    Suppose that for every separation $(A_i, B_i)$, we are given the representation $(\underline{a_i}, \underline{b_i})$ of $(A_i, B_i)$, and that all representations have length at most $\ell$.
    Let $(T, \mathcal{V})$ be the tree-decomposition of $G$ induced by $\mathcal{S}$. 
    There is an $O(n + m + p\ell + pk^2)$-time algorithm which, given the graph $G$ and all representations $(\underline{a_i}, \underline{b_i})$, computes the tree-decomposition $(T, \mathcal{V})$ and all torsos.
\end{restatable}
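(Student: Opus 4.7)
The plan is to construct $(T, \mathcal V)$ in four steps: preprocess the separations, build the abstract tree~$T$ (without bag contents), assign vertices to bags, and finally fill in the virtual edges of the torsos.

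First I would preprocess. For each separation $s_i = (A_i, B_i)$, I use the interval representations $(\underline{a_i}, \underline{b_i})$ to compute in $O(\ell)$ time the representations of $A_i \setminus B_i$ and $B_i \setminus A_i$, together with an explicit list of the $\le k$ separator vertices of $S_i := A_i \cap B_i$. For each $v \in V(G)$ I also build the list $\mathcal S_v$ of separations containing $v$ in their separator. Since $\sum_i |S_i| \le pk$, this preprocessing fits into $O(n + p\ell + pk) = O(n + p\ell)$ time.

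Next I build the abstract tree $T$, which has $p+1$ nodes and $p$ edges (one per separation). Its structure is determined by the nesting poset on oriented separations: I would orient each $s_i$ canonically (say, so its chosen side $A_i$ is the smaller one, with a consistent tie-break) and then two oriented separations correspond to adjacent edges of $T$ exactly when they are ``directly nested'' (one chosen side strictly contains the other with no third chosen side strictly in between). I would sort the oriented separations by chosen-side size and use a linear sweep over the interval representations to compute parent pointers in the resulting nesting tree, yielding $T$ in $O(p\ell)$ time.

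Then I assign vertices to bags. Let $U := \bigcup_i S_i$. A key structural fact is that within each connected component of $G - U$, every vertex has the same profile: a path in $G-U$ from $u$ to $v$ never crosses any separator, so its endpoints lie on the same side of every separation. Hence each component of $G - U$ is contained in a single bag. I would compute the components of $G - U$ in $O(n+m)$ time and then determine each component's profile by propagation through the auxiliary graph whose nodes are components and separator vertices and whose edges are the $G$-edges between them: two components can differ in their profiles only in the separations whose separator contains the bridging vertex, so the propagation work is charged to the $O(pk)$ separator-vertex occurrences. For each $v \in U$, I add $v$ to every bag in the subtree of $T$ whose edges are exactly $\mathcal S_v$, giving total additional work $\sum_v O(|\mathcal S_v|) = O(pk)$.

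Finally, for each edge $e = (t, t')$ of $T$, I compute the adhesion set $V_t \cap V_{t'}$ (of size at most $k$) and add, for each pair of its vertices not already joined by an edge of $G$, a virtual edge to the torsos at $t$ and $t'$. This is $O(k^2)$ per edge of $T$ and $O(pk^2)$ in total. The main obstacle will be executing the tree-construction step within $O(n+m+p\ell)$, since naively checking nestedness for all $\binom{p}{2}$ pairs is too slow; the crucial point is that because $\mathcal S$ is nested, the oriented separations form a forest under inclusion of chosen sides, and each parent pointer can be identified by a single sweep over the interval representations without any pairwise comparisons, thereby avoiding the quadratic blow-up.
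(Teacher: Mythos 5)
Your plan follows the same overall route as the paper: orient the separations so that their proper sides form a laminar set family, build the Hasse-diagram tree, assign each vertex to the correct bags, and then add the virtual adhesion edges to form the torsos. Orienting each separation toward its \emph{smaller} proper side (rather than away from a fixed vertex $x$ lying in a single bag, as the paper does) is a valid alternative: a short case analysis shows that if the chosen proper sides of two nested separations are neither nested nor disjoint then all four proper sides have equal cardinality, forcing the two separations to coincide, so the ``smaller side'' rule (with any tie-break) does produce a laminar family. The treatment of separator vertices and the torso construction are both correct and achieve the stated bounds.

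However, the bag assignment for the non-separator vertices has a genuine gap. You compute the components of $G-U$ and claim to find each component's bag by propagation through an auxiliary graph, asserting that ``the propagation work is charged to the $O(pk)$ separator-vertex occurrences.'' This charging does not go through. When a new component $C$ is discovered via a separator vertex $v$, locating its node within the subtree $T_v$ of bags containing $v$ (which has $|\mathcal{S}_v|+1$ nodes) requires deciding which side of each separation in $\mathcal{S}_v$ the component lies on, costing $\Omega(|\mathcal{S}_v|)$ work. A single separator vertex $v$ can bridge to as many as $\deg_G(v)$ different components, and each such discovery incurs this cost, so the total is on the order of $\sum_{v\in U}\deg_G(v)\,|\mathcal{S}_v|$, which is not $O(n+m+pk)$ in general (take $m=O(n)$, $p=O(1)$, $k=\Theta(n)$). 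The paper avoids this entirely by folding the bag assignment into the same interval sweep that builds the tree: when the sweep reaches $y\in X$, the top of the stack of active intervals directly names the innermost proper side containing $y$, hence $y$'s bag, at $O(1)$ amortized cost per element of $X$ with no propagation at all. Your tree-building step (``sort by chosen-side size and linear sweep to compute parent pointers'') is also rather terse --- one needs to handle sets that are unions of several intervals, which the paper does by emitting one event per interval with the set size as a tiebreak key in a lexicographic sort --- but that step can likely be completed without new ideas; the component-propagation step is the one that, as written, does not meet the time bound.
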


The idea behind this algorithm is simple. Since $\mathcal{S}$ is a nested set of separations, there is a way to ``orient'' all the separations so that they form a laminar set family (see \cref{lem:sep-nested}). 
The Hasse diagram of this laminar set family is a tree, which is isomorphic to the tree $T$ (see \cref{prop:isomorphism-T}), and which can be built efficiently from the laminar set family (see \cref{prop:compute-tree}).
Once the tree is built, it suffices to add the vertices to the correct bags in the tree-decomposition (see \cref{lem:algo-add-not-x}), and to construct the torsos (see \cref{lem:algo-build-torsos}).

With \cref{thm:build-tree+torsos}, we can now easily derive our main result, which we restate for convenience.

\main*

\begin{proof}
    By \cref{thm:compute-pot-nested}, we can compute a set $\mathcal{S}$ of half-connected 2-separations of $G$, which contains all totally-nested separations, in time $O(n+m)$.
    Then, by \cref{thm:query-tot-nested}, after a precomputation done in time $O(n+m)$, we can decide in constant time whether a half-connected 2-separation of $G$ is totally-nested.
    By iterating over all $O(n+m)$ separations in $\mathcal{S}$, we can compute all totally-nested 2-separations of $G$ in time $O(n+m)$.
    Since $G$ is 2-connected, these separations are minimal.
    All the half-connected 2-separations of $G$ were computed with their encoding, from which we can get a representation of order $O(1)$ in constant time.
    Therefore, by \cref{thm:build-tree+torsos}, there is an algorithm which computes the tree-decomposition $(T, \mathcal{V})$ induced by these separations, and all torsos, in time $O(n+m)$.
    Then, the tree-decomposition induced by these separations is the Tutte-decomposition of $G$ by definition, and its torsos are the triconnected components of $G$.
\end{proof}

\section*{Acknowledgments}
We thank Jan Kurkofka and Johannes Carmesin for precious discussions, and Marthe Bonamy for her valuable comments.

\bibliographystyle{plain}
\bibliography{biblio.bib}

\appendix
\section{Proofs of \texorpdfstring{\cref{subsec:typo-2-sep}}{Section 3.1}} \label{sec:proof-charac-2-sec}

We first prove \cref{prop:charac-2-sep}, which we restate for convenience.

\charactwosep*

\begin{proof}
    We first show that if one of the cases holds then $(A, B)$ is a 2-separation. 
    First, suppose that there exists a nonempty set $\mathcal{C}$ of children of $b$ such that, up to exchanging $A$ and $B$, we have $B = \{a, b\} \cup \bigcup_{c \in \mathcal{C}} Desc(c)$ and ${A = V(G) \setminus (B \setminus \{a, b\})}$. Suppose also that for every $c \in \mathcal{C}$, we have $lwpt(c) = a$, $lwpt_2(c) = b$ ; and some vertex $v \notin \{a,b\}$ is not a descendant of any $c \in \mathcal{C}$. Observe that $B \setminus A$ is nonempty as $\mathcal{C}$ is nonempty, and $A \setminus B$ is nonempty since it contains $v$. 
    If $u \in B \setminus A$, there exists $c \in \mathcal{C}$ such that $u \in Desc(c)$. Since $lwpt(c) = a$ and $lwpt_2(c) = b$, and since $T$ is a normal spanning tree, then $N(u) \subseteq Desc(c) \cup \{a, b\} \subseteq B$ . Thus, $(A, B)$ is indeed a 2-separation of $G$ with separator $\{a < b\}$. 
    
    Second, suppose that there exists a child $a'$ of $a$ such that $b$ is a proper leftmost descendant of $a'$, and a set $\mathcal{C}$ of children of $b$ such that, if $T_2$ denotes the connected component of $T - \{a, b\}$ containing $a'$, up to exchanging $A$ and $B$, we have ${B = \{a, b\} \cup V(T_2) \cup \bigcup_{c \in \mathcal{C}} Desc(c)}$ and $A = V(G) \setminus (B \setminus \{a, b\})$. Suppose also that $a \neq r$ ; $T[a, b]$ is stable ; and for every $c \in \mathcal{C}$, we have $lwpt(c) \geq a$ and for every child $d$ of $b$ which is not in $\mathcal{C}$, we have $hgpt(d) \leq a$. 
    Since $a \neq r$, let $T_1$ be the component of $T-\{a,b\}$ containing $r$. Let $T_2$ be the component of $T-\{a,b\}$ containing $a'$ (hence all vertices of $T(a, b)$).
    Denote by $\overline{\mathcal{C}}$ the set of children of $b$ that are not in $\mathcal{C}$ and by $\mathcal{A}$ the set of siblings of $a'$.
    Since ${B = \{a, b\} \cup V(T_2) \cup \bigcup_{c \in \mathcal{C}} Desc(c)}$ then $A = \{a, b\} \cup V(T_1) \cup \bigcup_{a'' \in A} Desc(a'') \cup \bigcup_{d \in \overline{\mathcal{C}}} Desc(d)$. 
    Since $V(T_1), V(T_2) \neq \emptyset$ then $A \setminus B, B \setminus A \neq \emptyset$.
    We now prove that there is no edge between $A \setminus B$ and $B \setminus A$.
    First, every $c \in \mathcal{C}$ has $lwpt(c) \geq a$ and therefore since $T$ is a normal spanning tree we have $N(Desc(c)) \subseteq T[a, b] \cup Desc(c) \subseteq B$. 
    Moreover, every $d \in \overline{\mathcal{C}}$ has $hgpt(d) \leq a$ and therefore no vertex in $Desc(d)$ is adjacent to a vertex in $B \setminus A$. 
    Since $T$ is a normal spanning tree, every vertex in $Desc(a'')$ for $a'' \in \mathcal{A}$ is only adjacent to $V(T_1) \cup \{a\}$ and therefore to no vertex in $B \setminus A$. 
    Since $T[a, b]$ is stable then $b$ is a leftmost descendant of $a'$ and every back-edge $(x,y)$ with $a' \leq x < b$ satisfies $y \geq a$.
    Additionally, it follows from \cref{obs:numbering} and from the fact that $b$ is a leftmost descendant of $a'$ that $V(T_2) = [a', b)$. Therefore no vertex in $V(T_2)$ is adjacent to a vertex in $V(T_1)$. This finishes to prove that $(A, B)$ is a 2-separation of $G$, with separator $\{a < b\}$.

    It follows immediately from the definition of $A$ and $B$ in the two cases that $(A, B)$ is type-1 in the first case and type-2 in the second case. 

    Conversely, let $(A, B)$ be a 2-separation of $G$ with separator $\{a < b\}$.
    Suppose first that all vertices in $B \setminus A$ are descendants of $b$. Let $\mathcal{C}$ be the set of children of $b$ that are in $B \setminus A$.
    Since $(A, B)$ is a separation, there exists $u \in B \setminus A$. Then $u$ is a proper descendant of $b$. Let $c$ be the unique child of $b$ that is an ancestor of $u$. $u$ and $c$ are connected in $G - \{a, b\}$ so $c \in B \setminus A$, hence $c \in \mathcal{C}$ and $\mathcal{C} \neq \emptyset$. This also proves that $u \in \bigcup_{c \in \mathcal{C}} Desc(c)$, so $B \subseteq \{a, b\} \cup \bigcup_{c \in \mathcal{C}} Desc(c)$.
    If $c \in \mathcal{C}$ then $c \in B \setminus A$ so $Desc(c) \subseteq B \setminus A$ since $Desc(c)$ is connected in $G - \{a, b\}$. This finishes to prove that ${B = \{a, b\} \cup \bigcup_{c \in \mathcal{C}} Desc(c)}$.
    If $c \in \mathcal{C}$, since $G$ is 2-connected and $b \neq 1$ then $lwpt(c) < b$ by \cref{lem:charac-1-sep}. 
    Thus, some descendant of $c$ is adjacent to $lwpt(c) < b$. 
    Since $Desc(c) \subseteq B \setminus A$, this implies that $lwpt(c) \in B$ and therefore ${lwpt(c) = a}$.
    Since $(b, c) \in E(G)$ then $lwpt_2(c) \leq b$. The same reasoning as above implies that $lwpt_2(c) \in B$ and therefore $lwpt_2(c) = b$.
    Finally, let $v \in A \setminus B$. Then, $v \notin \{a, b\}$ is not a descendant of any $c \in \mathcal{C}$.
    This finishes to show that we are in the first case.
    Similarly, if all vertices in $A \setminus B$ are descendants of $b$ then we are also in the first case. 

    We now assume that not all vertices in $A \setminus B$ are descendants of $b$ and not all vertices in $B \setminus A$ are descendants of $b$.

    Let $a'$ be the unique child of $a$ such that $b$ is a descendant of $a'$ (which exists by \cref{lem:a-b-comp}).
    Then, $A \setminus B$ and $B \setminus A$ induce a partition of the connected components of $T - \{a, b\}$. Every connected component of $T - \{a, b\}$ is of one of the following forms: either it contains all vertices of $T$ that are not descendants of $a$, or it is $T(c)$ for some child $c$ of $b$, or it is $T(a'')$ for some child $a'' \neq a'$ of $a$, or it is the component that contains $T(a, b)$. 
    
    If $a = r$ there is no child $a'' \neq a'$ by \cref{lem:charac-1-sep}.
    Thus, there is at most one connected component of $T - \{a, b\}$ that doesn't contain only descendants of $b$ and thus one of $A \setminus B, B \setminus A$ must only contain descendants of $b$, which contradicts our assumption. 
    Therefore, $a \neq r$. 
    Let $T_1$ be the connected component of $T - \{a, b\}$ containing all vertices that are not descendants of $a$. Up to exchanging $A$ and $B$ we can assume that $V(T_1) \subseteq A$.
    Every child $a''$ of $a$ satisfies $lwpt(a'') < a$ by \cref{lem:charac-1-sep}.
    If $a''$ is a sibling of $a'$, there is an edge between a descendant of $a''$ and a proper ancestor of $a$, i.e. between a vertex in $Desc(a'')$ and a vertex of $T_1$. Thus, $Desc(a'') \subseteq A$.

    Since $B \setminus A$ contains a vertex which is not a descendant of $b$ then $T(a, b)$ is not empty and the connected component $T_2$ of $T - \{a, b\}$ that contains $T(a, b)$ is in $B$.
    Since $T(a, b) \neq \emptyset$ then $a' \neq b$ so $b$ is a proper descendant of $a'$. 

    Let $\mathcal{C}$ be the set of children $c$ of $b$ such that $Desc(c) \subseteq B \setminus A$ and $\overline{\mathcal{C}}$ be the set of all other children of $b$ (for which $Desc(c) \subseteq A \setminus B$).
    Then, $B = \{a, b\} \cup V(T_2) \cup \bigcup_{c \in \mathcal{C}}Desc(c)$ and $A = V(G) \setminus (B \setminus \{a, b\})$.
    
    Since $V(T_1) \subseteq A \setminus B$ and $V(T_2) \subseteq B \setminus A$, there is no child $c$ of $b$ such that $Desc(c)$ is connected to both $T_1$ and $T_2$ in $G-\{a,b\}$. 
    If $a' \leq x < b$ then $x$ is a descendant of $a'$ by \cref{obs:numbering} but $x < b$ so $b \notin T[a', x]$ so $x$ is in the component of $a'$ in $T - \{a, b\}$, i.e. $x \in V(T_2)$. Thus, every back-edge $(x,y)$ with $a' \leq x < b$ satisfies $a \leq y$, otherwise we would have an edge between $V(T_2)$ and $V(T_1)$.
    Furthermore, all $d \in \overline{\mathcal{C}}$ satisfy $hgpt(d) \leq a$ and all $c \in \mathcal{C}$ satisfy $lwpt(c) \geq a$, otherwise there would be an edge between $A \setminus B$ and $B \setminus A$. 

    Finally we show that $T[a,b]$ is a leftmost path. By contradiction, suppose not. Thus, $b$ is not a leftmost descendant of $a'$. By \cref{lem:follow-leftmost}, there is a leftmost descendant $x$ of $a'$ that is adjacent to $lwpt(a')$. Then, we have $x \in V(T_2)$ since $b$ is not a leftmost descendant of $a'$. Furthermore, $lwpt(a') < a$ by \cref{lem:charac-1-sep} so $x$ is adjacent to a vertex in $V(T_1)$. Now, $T_1$ and $T_2$ are connected in $G-\{a,b\}$, a contradiction. Hence, $T[a,b]$ is a leftmost path so $T[a, b]$ is stable.

    This finishes to prove that we are in the second case.
\end{proof}

We now prove \cref{lem:cc-type-2}, which we also restate for convenience.

\cctypetwo*

\begin{proof}
    Each connected component of $G - \{a, b\}$ is the union of connected components of $T - \{a, b\}$.
    Since $a \neq r$ and there exists a child of $a$ which is a proper ancestor of $b$ then the connected components of $T - \{a, b\}$ are exactly the following (and they are pairwise distinct). \begin{itemize}
        \item The component that contains $r$.
        \item The component that contains $a'$ (and thus all vertices of $T(a, b)$).
        \item $T[Desc(c)]$ for every child $c$ of $b$.
        \item $T[Desc(a'')]$ for every sibling $a''$ of $a'$.
    \end{itemize}

    Consider $d \in \mathcal{D}$. The connected component of $d$ in $T - \{a, b\}$ is $T[Desc(d)]$. By definition of $\mathcal{D}$ and since $T$ is a normal spanning tree, we have $N(Desc(c)) = \{a, b\}$, so $G[Desc(d)]$ is a connected component of $G - \{a, b\}$.
    
    Let $C$ be a connected component of $G - \{a, b\}$ and suppose that $C$ contains neither $r$ nor $a'$.
    Thus, either there exists a child $c$ of $b$ such that $Desc(c) \subseteq V(C)$, or there exists a sibling $a''$ of $a'$ such that $Desc(a'') \subseteq V(C)$.
    Since $C$ does not contain $a'$ then $C$ does not contain any vertex of $T(a, b)$.
    Suppose first that there exists a child $c$ of $b$ such that $Desc(c) \subseteq V(C)$. Since $G$ is 2-connected, it follows from \cref{lem:charac-1-sep} that $lwpt(c) < b$. 
    Thus, we have ${lwpt(c), lwpt_2(c) \in \{a, b\} \cup V(T[r, a)) \cup V(T(a, b))}$. 
    Since $r \notin V(C)$ and $C$ does not contain any vertex of $T(a, b)$, then we must have $lwpt(c) = a$ and $lwpt_2(c) = b$, so $c \in \mathcal{D}$. Since the connected component that contains $c$ is $G[Desc(c)]$ then $C = G[Desc(c)]$.
    Suppose now that there exists a sibling $a''$ of $a'$ such that $Desc(a'') \subseteq V(C)$. Since $G$ is 2-connected and $a \neq r$, by \cref{lem:charac-1-sep} we have $lwpt(a'') < a$ and thus $lwpt(a'') \in V(T[r, a))$. Hence, some vertex in $Desc(a'')$ is adjacent to some vertex in $T[r, a)$ and thus $C$ contains $r$, a contradiction. 
    This finishes the characterization of the connected components of $G - \{a, b\}$.

    We now prove that these connected components are pairwise distinct.
    It is clear that those of the form $G[Desc(d)]$ for $d \in \mathcal{D}$ are pairwise distinct, and distinct from those that contain $r$ and $a'$, so we only have to prove that $r$ and $a'$ are not in the same connected component in $G - \{a, b\}$.
    Let $T_{a'}$ be the connected component of $a'$ in $T - \{a, b\}$ and define $T_r$ analogously.
    Since $T$ is a normal spanning tree, there is no edge in $G$ between a vertex in $T_{a'}$ and a vertex in $Desc(a'')$ where $a''$ is a sibling of $a$.
    Similarly, there is no edge in $G$ between $Desc(c)$ and $Desc(c')$ if $c \neq c'$ are two children of $b$, no edge between $Desc(a_1)$ and $Desc(a_2)$ if $a_1 \neq a_2$ are two siblings of $a'$, and no edge between between $Desc(c)$ and $Desc(a'')$ if $c$ is a child of $b$ and $a''$ is a sibling of $a'$.
    Thus, if $a'$ and $r$ are in the same connected component in $G - \{a, b\}$, it is either because there exists a child $c$ of $b$ such that there is an edge in $G$ between $Desc(c)$ and $T_r$ and between $Desc(c)$ and $T_{a'}$, or because there is an edge in $G$ between $T_r$ and $T_{a'}$
    Since every child $c$ of $b$ satisfies either $lwpt(c) \geq a$ or $hgpt(c) \leq a$ then for every child $c$ of $b$, either there is no edge in $G$ between $Desc(c)$ and $T_{r}$ or there is no edge in $G$ between $Desc(c)$ and $T_{a'}$.
    Finally, since $b$ is a leftmost descendant of $a'$ and since the numbering of the vertices is consistent with $T$ then $V(T_{a'}) = [a', b-1]$.
    Since $T[a, b]$ is stable, every back-edge $(x, y)$ with $x \in [a', b-1]$ satisfies $y \geq a$.
    Thus, since $T$ is a normal spanning tree, there is no edge in $G$ between a vertex in $T_{a'}$ and a vertex in $T_r$.
    This proves that $a'$ and $r$ are not in the same connected component in $G - \{a, b\}$.
\end{proof}

\section{Proof of \texorpdfstring{\cref{thm:build-tree+torsos}}{Theorem 6.6}} \label{sec:proof-build-tree}


In this section, we show how to build this tree-decomposition efficiently if we are given a representation of small length of each separation in $\mathcal{S}$. We restate the result for convenience.

\buildtree*

We start by proving a lemma characterizing the vertices which only appear in a single bag of the tree-decomposition.

\begin{lemma} \label{lem:leaf-X-nonempty}
    Let $G$ be a graph and $\mathcal{S} = \{(A_1, B_1), \ldots, (A_p, B_p)\}$ be a set of nested separations of $G$.
    Let $(T, \mathcal{V})$ be the tree-decomposition of $G$ induced by $\mathcal{S}$. 
    Let $X \subseteq V(G)$ be the set of vertices which only appear in a single bag of $(T, \mathcal{V})$.
    For every $x \in V(G)$, we have $x \in X$ if and only if for every separation $(A, B) \in \mathcal{S}$ it holds that $x \notin A \cap B$.
    Furthermore, if ${t \in V(T)}$ is a leaf of $T$ then $V_t \cap X \neq \emptyset$.
    Therefore, if $(A, B) \in \mathcal{S}$ then $A \cap X \neq \emptyset$ and $B \cap X \neq \emptyset$.
\end{lemma}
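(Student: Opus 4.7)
The plan is to exploit the standard property that for every $x \in V(G)$, the set $T_x := \{t \in V(T) : x \in V_t\}$ induces a connected subtree of $T$.

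\textbf{Characterization of $X$.} For the first part, I would argue both directions via this connectivity property. If $x \notin X$, then $T_x$ contains at least two nodes, hence contains some edge $e = (t_1, t_2)$. The separation $(A_e, B_e) \in \mathcal{S}$ induced by $e$ has separator $V_{t_1} \cap V_{t_2}$, and since $x$ belongs to both $V_{t_1}$ and $V_{t_2}$ we get $x \in A_e \cap B_e$. Conversely, if $x \in A \cap B$ for some $(A,B) \in \mathcal{S}$, then $(A,B)$ is induced by some edge $e = (t_1, t_2)$ with separator $V_{t_1} \cap V_{t_2}$, and $x$ must lie in this separator since $A \cap B = V_{t_1} \cap V_{t_2}$. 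Hence $x$ is in at least two bags, so $x \notin X$.

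\textbf{Leaves give vertices in $X$.} Let $t$ be a leaf of $T$ and let $t'$ be its unique neighbor in $T$. The edge $e = (t, t')$ induces a separation $(A, B) \in \mathcal{S}$ with $A = V_t$ and $B = \bigcup_{s \neq t} V_s$, and separator $V_t \cap V_{t'}$. Since $(A, B)$ is a genuine separation, $A \setminus B$ is nonempty; that is, there exists $x \in V_t \setminus V_{t'}$. By the subtree property, any bag $V_s$ with $s \neq t$ containing $x$ would force $t'$ (which lies on the path from $t$ to $s$) to contain $x$ as well, contradicting $x \notin V_{t'}$. So $x$ lies only in $V_t$, giving $x \in V_t \cap X$.

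\textbf{Every separation has both sides meeting $X$.} Finally, let $(A, B) \in \mathcal{S}$ be induced by the edge $e = (t_1, t_2)$, and let $T_1, T_2$ be the two components of $T - e$, so $A = \bigcup_{t \in V(T_1)} V_t$. I would observe that any leaf $t^*$ of $T_1$ other than $t_1$ must also be a leaf of $T$, because such a $t^*$ has no neighbors in $T_2$. If $T_1$ contains at least two vertices, it has at least two leaves, hence at least one leaf $t^* \neq t_1$, which is a leaf of $T$; then by the previous step, $V_{t^*} \cap X \neq \emptyset$, and since $V_{t^*} \subseteq A$ we get $A \cap X \neq \emptyset$. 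If instead $T_1 = \{t_1\}$, then $t_1$ itself is a leaf of $T$ and $V_{t_1} = A$, so again $A \cap X \neq \emptyset$. The same argument applies to $B$. The main (very mild) obstacle is just keeping clear the distinction between being a leaf of $T_1$ and being a leaf of $T$; everything else follows directly from the subtree property.
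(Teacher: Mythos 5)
Your proof is correct and takes essentially the same approach as the paper's, relying throughout on the subtree-connectivity property of tree-decompositions; in fact, your third step is slightly more careful than the paper's, which simply asserts that $T_A$ contains a leaf of $T$ without the case split you supply. The only point you leave implicit is the degenerate case where $T$ consists of a single node (then there is no neighbor $t'$), but in that case $\mathcal{S}=\emptyset$, so $X=V(G)$ and both remaining claims are vacuous or trivial.
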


\begin{proof}
    Let $x \in V(G)$ and suppose first that $x \in X$.
    Let $(A, B) \in \mathcal{S}$, and let $e$ be the edge of $T$ which induces the separation $(A, B)$. 
    Let $T_A, T_B$ be the two connected components of $T - e$. 
    Up to renaming $A$ and $B$, we have $A = \bigcup_{t \in V(T_A)} V_t$ and $B = \bigcup_{t \in V(T_B)} V_t$. 
    Thus, if $x \in A \cap B$, then there exist $t_A \in V(T_A)$ and $t_B \in V(T_B)$ such that $x \in V_{t_A} \cap V_{t_B}$. This would contradict that $x \in X$. Therefore, $x \notin A \cap B$.
    
    Conversely, suppose that $x \notin X$. Then, there are two distinct nodes $t, u \in V(T)$ such that $x \in V_t \cap V_{u}$. Let $e = (t, t')$ be the first edge of the path from $t$ to $u$ in $T$. 
    Since $(T, \mathcal{V})$ is a tree-decomposition then $x \in V_{t'}$.
    Let $(A, B) \in \mathcal{S}$ be the separation induced by the edge $e$. 
    Since $t$ and $t'$ belong to the two connected components of $T - e$ and since $x \in V_t \cap V_{t'}$ then $x \in A \cap B$.

    Consider now a leaf $t$ of $T$. If $t$ is the unique node of $T$ then $X = V(G)$ and thus $X \cap V_t = V(G) \neq \emptyset$.
    Otherwise, let $t'$ be the unique neighbour of $t$ in $T$, and let $(A, B) \in \mathcal{S}$ be the separation of $G$ induced by the edge $(t, t')$ of $T$. Up to renaming $A$ and $B$, we can assume that $A = V_t$.
    Consider any vertex $x \in A \setminus B$.
    Then, $x \notin A \cap B = V_t \cap V_{t'}$ so $x \notin V_{t'}$.
    Since $(T, \mathcal{V})$ is a tree-decomposition then $t$ is the only node of $T$ such that $x \in V_t$, and therefore $x \in V_t \cap X$.

    Finally, let $(A, B) \in \mathcal{S}$ and let $e$ be the edge of $T$ which induces this separation. Let $T_A$ be the component of $T - e$ such that $A = \bigcup_{t \in V(T_A)}V_t$ and let $t \in V(T_A)$ be a leaf of $T$. By the last paragraph, we have $V_t \cap X \neq \emptyset$ so $A \cap X \neq \emptyset$. A similar argument shows that $B \cap X \neq \emptyset$.
\end{proof}

We now show that we can use the vertices which belong to a single bag of the tree-decomposition to orient the separations of $\mathcal{S}$ consistently, so that they form a laminar set family.

\begin{lemma} \label{lem:sep-nested}
    Let $G$ be a graph and $\mathcal{S} = \{(A_1, B_1), \ldots, (A_p, B_p)\}$ be a set of nested minimal separations of~$G$.
    Let ${x \in V(G)}$ be a vertex such that $x \in A \setminus B$ for every separation $(A, B) \in \mathcal{S}$.
    Then, for every $(A_i, B_i) \neq (A_j, B_j) \in \mathcal{S}$, the sets $B_i \setminus A_i$ and $B_j \setminus A_j$ are either disjoint, or one of them is a proper subset of the other.
\end{lemma}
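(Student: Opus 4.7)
The plan is to unpack nestedness into its four possible orientations and use the common vertex~$x$ to kill one of them, then use minimality to upgrade containment to proper containment.

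Fix $i \neq j$. By the definition of nestedness, after possibly swapping $A_i \leftrightarrow B_i$ and $A_j \leftrightarrow B_j$, one has $A_i \subseteq A_j$ and $B_j \subseteq B_i$. Reintroducing the swaps, this gives four cases: $(\mathrm{i})$ $A_i \subseteq A_j$ and $B_j \subseteq B_i$; $(\mathrm{ii})$ $A_i \subseteq B_j$ and $A_j \subseteq B_i$; $(\mathrm{iii})$ $B_i \subseteq A_j$ and $B_j \subseteq A_i$; $(\mathrm{iv})$ $B_i \subseteq B_j$ and $A_j \subseteq A_i$. By assumption $x \in A_i \cap A_j$ and $x \notin B_i \cup B_j$, so case $(\mathrm{ii})$ is impossible since it would force $x \in A_i \subseteq B_j$.

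In case $(\mathrm{iii})$, any $v \in B_i \setminus A_i$ lies in $B_i \subseteq A_j$ and thus not in $B_j \setminus A_j$, so $B_i \setminus A_i$ and $B_j \setminus A_j$ are disjoint. In case $(\mathrm{i})$, if $v \in B_j \setminus A_j$ then $v \in B_j \subseteq B_i$ and $v \notin A_j \supseteq A_i$, hence $v \in B_i \setminus A_i$, giving $B_j \setminus A_j \subseteq B_i \setminus A_i$. Case $(\mathrm{iv})$ is symmetric.

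It remains to argue that in cases $(\mathrm{i})$ and $(\mathrm{iv})$ the inclusion is proper. Suppose, for contradiction, we are in case $(\mathrm{i})$ with $B_i \setminus A_i = B_j \setminus A_j =: s$. Then $A_i = V(G) \setminus s = A_j$. Now I will use minimality: since $(A_i, B_i)$ is minimal, every vertex of the separator $A_i \cap B_i$ has a neighbor in $B_i \setminus A_i = s$, so $A_i \cap B_i \subseteq N(s) \setminus s$; conversely any vertex of $V(G) \setminus s$ adjacent to $s$ must lie in the separator (otherwise an edge crosses the separation), so $A_i \cap B_i = N(s) \setminus s$. The same holds for $(A_j, B_j)$, hence $A_i \cap B_i = A_j \cap B_j$ and so $B_i = s \cup (A_i \cap B_i) = B_j$, contradicting $(A_i, B_i) \neq (A_j, B_j)$. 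Case $(\mathrm{iv})$ is symmetric. This completes the proof.

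The main (and only nontrivial) obstacle is the last step, where minimality is used to rule out equality of the two one-sided differences; the rest is a straightforward case analysis on the orientations of the nested pair.
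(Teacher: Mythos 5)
Your proof is correct and follows the same route as the paper: the same four-case unpacking of nestedness, the same use of the common vertex $x$ to eliminate the $A_i \subseteq B_j$ orientation, and the same minimality argument recovering $A_k$ and $A_k \cap B_k$ from $B_k \setminus A_k$ to upgrade $\subseteq$ to $\subsetneq$. Your writeup of the minimality step is slightly more explicit (spelling out $A_i \cap B_i = N(s)\setminus s$ and reconstructing $B_i$), but the argument is the same.
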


\begin{proof}
    Since $\mathcal{S}$ is a set of nested separations, one of the following holds: \begin{itemize}
        \item $A_i \subseteq A_j$ and $B_j \subseteq B_i$,
        \item $A_i \subseteq B_j$ and $A_j \subseteq B_i$,
        \item $B_i \subseteq A_j$ and $B_j \subseteq A_i$,
        \item $B_i \subseteq B_j$ and $A_j \subseteq A_i$.
    \end{itemize}
    In the first case, we get $B_j \setminus A_j \subseteq B_i \setminus A_i$.
    However, for $k \in \{i, j\}$, we have $A_k = V(G) \setminus (B_k \setminus A_k)$, and since the separation $(A_k, B_k)$ is minimal then $A_k \cap B_k = N(B_k \setminus A_k)$.
    Therefore, $B_j \setminus A_j = B_i \setminus A_i$ would imply $(A_i, B_i) = (A_j, B_j)$, which does not hold, so $B_j \setminus A_j \subsetneq B_i \setminus A_i$.
    The second case is impossible since $x \in A_i \setminus B_j$.
    In the third case, we get that $B_i \setminus A_i$ and $B_j \setminus A_j$ are disjoint.
    In the last case, we get $B_i \setminus A_i \subseteq B_j \setminus A_j$.
    We prove that the inclusion is strict exactly like in the first case.
\end{proof}

\begin{remark}\label{rmk:nb-nested-seps}
    Let $G$ be a graph and $\mathcal{S} = \{(A_1, B_1), \ldots, (A_p, B_p)\}$ be a set of nested minimal separations of~$G$.
    By \cref{lem:leaf-X-nonempty}, up to renaming $A_i$ and $B_i$ for every $i \in [p]$, there exists a vertex $x \in V(G)$ such that $x \in A_i \setminus B_i$ for every $i \in [p]$.
    Then, by \cref{lem:sep-nested}, the set family $\mathcal{F} = \{B_i \setminus A_i : i \in [p]\}$ is a laminar set family with $p$ sets.
    Furthermore, we can add the sets $V(G)$ and $\emptyset$ to this laminar set family and obtain a set family $\mathcal{F}'$ which is still a laminar set family, with $p+2$ sets.
    It is a standard result that every laminar set family on a ground set of size $n$ contains at most $2n-1$ sets.
    Thus, $p+2 \leq 2n-1$, so $p \leq 2n-3$.

    If $(T, \mathcal{V})$ is the Tutte-decomposition of $G$, the number of bags which contain a vertex is $1$ plus the number of totally-nested separations for which this vertex is in the separator.
    Therefore, $\sum_{t \in V(T)}|V_t| \leq n + 2|E(T)| \leq 5n-6$.
\end{remark}

Consider the tree $T'$ obtained by adding a root to the Hasse diagram of this laminar set family, and labeling the edge from $B_i \setminus A_i$ to its ancestor with label $i$.
The next result states that $T'$ is isomorphic to the tree $T$ indexing the tree-decomposition induced by $\mathcal{S}$, and that the edge indexed with label $i$ induces the separation $(A_i, B_i)$.

\begin{proposition} \label{prop:isomorphism-T}
    Let $G$ be a graph and $\mathcal{S} = \{(A_1, B_1), \ldots, (A_p, B_p)\}$ be a set of nested minimal separations of $G$.
    Let $(T, \mathcal{V})$ be the tree-decomposition of $G$ induced by $\mathcal{S}$. 
    Let $X \subseteq V(G)$ be the set of vertices which only appear in a single bag of $(T, \mathcal{V})$.
    Let $x \in X$ and suppose that $x \in A \setminus B$ for every separation $(A, B) \in \mathcal{S}$.
    Root $T$ at $t_x$, the only node of $T$ such that $x \in V_{t_x}$.
    Let $T'$ be the rooted tree whose edges are labelled from 1 to $p$ such that the edge labelled $i$ is a proper ancestor of the edge labelled $j$ if and only if $B_j \setminus A_j \subsetneq B_i \setminus A_i$.
    Then, there is an isomorphism of rooted trees between $T'$ and $T$ that maps the edge of $T$ which induces the separation $(A_i, B_i)$ to the edge labelled $i$ in $T'$.
\end{proposition}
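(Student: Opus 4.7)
My plan is to build the isomorphism via the natural bijection on edges. Define $\phi : E(T) \to E(T')$ by sending the edge $e_i$ of $T$ that induces the separation $(A_i, B_i)$ to the edge of $T'$ labelled $i$. Since $\mathcal{S}$ induces $(T, \mathcal{V})$, the correspondence $e_i \mapsto (A_i, B_i)$ is a bijection $E(T) \to \mathcal{S}$, so $\phi$ is a bijection; moreover $T$ and $T'$ each have $p+1$ nodes and $p$ edges (the $p$ sets of the laminar family $\{B_i \setminus A_i\}$ plus the artificial root). The goal is to extend $\phi$ to an isomorphism of rooted trees.

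The heart of the argument is the following claim: for all distinct $i, j \in [p]$, the edge $e_i$ is a proper ancestor of $e_j$ in $T$ (rooted at $t_x$) if and only if $B_j \setminus A_j \subsetneq B_i \setminus A_i$. To prove it I would, for each $i$, denote by $t_i$ the child-endpoint of $e_i$ in the rooting at $t_x$ and by $T_i$ the subtree of $T$ rooted at $t_i$. Since $x \in A_i \setminus B_i$ and $t_x$ lies outside $V(T_i)$, the identity $B_i = \bigcup_{s \in V(T_i)} V_s$ holds, and by the connectedness in $T$ of $\{s : v \in V_s\}$ for each vertex $v$, one has $v \in B_i \setminus A_i$ exactly when every bag containing $v$ is a node of $T_i$. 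The forward direction of the claim then follows immediately: if $t_j$ is a proper descendant of $t_i$ then $V(T_j) \subseteq V(T_i)$, so every $v \in B_j \setminus A_j$ lies in $B_i \setminus A_i$; minimality of the separations gives $A_k = V(G) \setminus (B_k \setminus A_k)$, so each separation in $\mathcal{S}$ is determined by its $B \setminus A$-side, and the assumption $i \neq j$ then rules out equality and forces strict inclusion. For the converse, I would pick any $v \in B_j \setminus A_j \subseteq B_i \setminus A_i$, observe that its bags are contained in $V(T_i) \cap V(T_j)$, and run a case analysis on the relative position of $t_i$ and $t_j$ in $T$: the incomparable case contradicts $v$ having a bag; the case where $t_i$ is a proper descendant of $t_j$ contradicts the strict inclusion via the already-established forward direction; so $t_j$ must be a proper descendant of $t_i$.

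Finally, I would conclude by invoking the standard fact that a rooted tree is uniquely determined by the proper-ancestor poset on its edges: in any rooted tree the proper ancestors of a given edge form a chain, so every non-maximal edge has a unique immediate ancestor, and the tree is recovered by attaching each non-maximal edge to the child-endpoint of its immediate ancestor and each maximal edge to the root. Since $\phi$ preserves this poset by the claim and by the very construction of $T'$, it extends uniquely to a bijection on nodes mapping $t_x$ to the root of $T'$, yielding a rooted tree isomorphism that, by design, sends the edge $e_i$ inducing $(A_i, B_i)$ to the edge labelled $i$. The main subtlety will be keeping the use of the tree-decomposition property (connectivity of the bag-set of each vertex and the ensuing characterization of $B_i \setminus A_i$ as the vertices whose bags are confined to $V(T_i)$) clean and explicit throughout the claim's proof; the extension step is then essentially bookkeeping.
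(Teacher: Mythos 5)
Your proposal is correct and follows essentially the same path as the paper's: both hinge on the claim that $e_i$ is a proper ancestor of $e_j$ in $T$ rooted at $t_x$ if and only if $B_j \setminus A_j \subsetneq B_i \setminus A_i$ (the paper isolates this as \cref{lem:desc-compr-prec}), proved by identifying $B_i \setminus A_i$ with the vertices all of whose bags lie in the subtree below $e_i$ and then using minimality to get strictness. The paper's final step builds the node map $t_i \mapsto t'_i$ explicitly and checks edge-preservation rather than appealing to the abstract fact that a rooted tree is determined by the proper-ancestor poset on its edges, but the two are equivalent; the only thing to tidy in your converse case analysis is the degenerate possibility $t_i = t_j$, which is ruled out because $e \mapsto (A_e, B_e)$ is a bijection, so $e_i = e_j$ forces $i = j$.
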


We will use the following technical result in the course of the proof of \cref{prop:isomorphism-T}.

\begin{lemma} \label{lem:desc-compr-prec}
    Let $G$ be a graph and $\mathcal{S} = \{(A_1, B_1), \ldots, (A_p, B_p)\}$ be a set of nested minimal separations of $G$.
    Let $(T, \mathcal{V})$ be the tree-decomposition of $G$ induced by $\mathcal{S}$. 
    Let $X \subseteq V(G)$ be the set of vertices which only appear in a single bag of $(T, \mathcal{V})$.
    Let $x \in X$ and suppose that $x \in A \setminus B$ for every separation $(A, B) \in \mathcal{S}$.
    Root $T$ at $t_x$, the only node of $T$ such that $x \in V_{t_x}$.
    Consider two distinct separations $(A_i, B_i), (A_j, B_j) \in \mathcal{S}$.
    Let $e_i$ be the edge of $T$ that induces the separation $(A_i, B_i)$ and define $e_j$ analogously.
    Then, $e_i$ is a proper ancestor of $e_j$ in $T$ if and only if $B_j \setminus A_j \subsetneq B_i \setminus A_i$.
\end{lemma}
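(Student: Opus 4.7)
The plan is to reduce the statement to a clean trichotomy argument driven by the tree structure of $T$, using \cref{lem:sep-nested} together with minimality to match up set containment with tree containment. Root $T$ at $t_x$; then every edge $e$ of $T$ has a well-defined upper endpoint $p_e$ (closer to $t_x$) and lower endpoint $c_e$. Writing $e_i = (p_i, c_i)$ and $e_j = (p_j, c_j)$, I would first observe that since $x$ lies only in $V_{t_x}$ and $x \in A_i \setminus B_i$, the component of $T - e_i$ that contains $t_x$ is exactly the one rooted at $p_i$, and its bags union to $A_i$. Thus $B_i = \bigcup_{t \in T(c_i)} V_t$, and by the tree-decomposition property, $v \in B_i \setminus A_i$ if and only if every bag containing $v$ lies in the subtree of $T$ rooted at $c_i$.

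Next, I would prove the forward direction. Assume $e_i$ is a proper ancestor of $e_j$; then both endpoints of $e_j$ lie in the subtree rooted at $c_i$, and $c_j$ is a proper descendant of $c_i$. The subtree rooted at $c_j$ is therefore contained in the subtree rooted at $c_i$, so by the characterization above $B_j \setminus A_j \subseteq B_i \setminus A_i$. For strictness I use minimality: if $B_j \setminus A_j = B_i \setminus A_i$, then since $A_k = V(G) \setminus (B_k \setminus A_k)$ for $k \in \{i,j\}$ we get $A_i = A_j$, and by minimality each separator equals $N(B_k \setminus A_k)$, forcing $A_i \cap B_i = A_j \cap B_j$, hence $(A_i, B_i) = (A_j, B_j)$, a contradiction.

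For the converse, I would invoke \cref{lem:sep-nested}: since the two distinct minimal separations are nested and oriented so that $x$ lies in both $A_i$ and $A_j$, exactly one of the following holds: $B_j \setminus A_j \subsetneq B_i \setminus A_i$, $B_i \setminus A_i \subsetneq B_j \setminus A_j$, or $B_i \setminus A_i$ and $B_j \setminus A_j$ are disjoint. On the tree side, exactly one of the following holds: $e_i$ is a proper ancestor of $e_j$, $e_j$ is a proper ancestor of $e_i$, or $e_i$ and $e_j$ are incomparable in the rooted tree. The forward direction and its symmetric analogue already match up the two "ancestor" cases with the two "strict inclusion" cases, so I only need to verify that the incomparable case matches the disjoint case: if $e_i$ and $e_j$ are incomparable, the subtrees rooted at $c_i$ and $c_j$ are disjoint, and any $v \in B_i \setminus A_i \cap B_j \setminus A_j$ would have all its bags in both subtrees, which is impossible. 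The three mutually exclusive cases on each side are then in bijection, which yields the desired equivalence.

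The main obstacle is the strict inclusion in the forward direction: without the minimality assumption, one could have two distinct edges of $T$ inducing separations whose "proper sides" $B \setminus A$ coincide, which would break the argument. Everything else is routine bookkeeping on the rooted tree and its subtrees.
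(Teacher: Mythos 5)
Your proof is correct. Your forward direction is essentially the paper's, with the minimality-based strictness spelled out in a bit more detail (showing $A_i = A_j$ and then invoking $A_k \cap B_k = N(B_k \setminus A_k)$ to conclude $(A_i,B_i) = (A_j,B_j)$, a contradiction). Where you genuinely diverge is the converse: the paper picks a leaf $t$ of $T$ below $t_j$, applies \cref{lem:leaf-X-nonempty} to extract a witness $y \in V_t \cap X$, uses $y \in B_j \setminus A_j \subsetneq B_i \setminus A_i$ to force $t$ to lie under $t_i$ as well, concludes $e_i,e_j$ are comparable, and then rules out the wrong ancestor direction via the forward implication. You instead set up a trichotomy on each side (proper inclusion one way, the other way, or disjoint on the set side via \cref{lem:sep-nested}; proper ancestor one way, the other way, or incomparable on the tree side since $e_i \neq e_j$), prove the ``incomparable $\Rightarrow$ disjoint'' implication directly from the subtree characterization, and let exhaustiveness plus mutual exclusivity upgrade the three implications into biconditionals. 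This bypasses \cref{lem:leaf-X-nonempty} entirely at the cost of proving one more (easy) implication; both routes are about equally long, but your bijection-style argument makes the structural content more symmetric and explicit. Note that your ``incomparable $\Rightarrow$ disjoint'' step tacitly uses $B_k \setminus A_k \neq \emptyset$ to separate the disjoint case from the inclusion cases; that is automatic since $(A_k,B_k)$ is a separation, but is worth flagging in a final write-up.
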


\begin{proof}
    Suppose that $e_i$ is a proper ancestor of $e_j$ in $T$. Let $t_i$ be the top-endpoint of $e_i$ and $t_j$ the top-endpoint of $e_j$. 
    Then, $t_j$ is a descendant of $t_i$ so $Desc(t_j) \subseteq Desc(t_i)$. 
    However, since $t_x$ is the only node of $T$ such that ${x \in V_{t_x}}$ and since $x \in A_i \setminus B_i$ then $B_i = \bigcup_{t \in Desc(t_i)}V_t$, and $A_i = \bigcup_{t \notin Desc(t_i)}V_t$.
    Similarly, $B_j = \bigcup_{t \in Desc(t_j)}V_t$, and ${A_j = \bigcup_{t \notin Desc(t_j)}V_t}$.
    Therefore, $B_j \subseteq B_i$ and $A_i \subseteq A_j$, so $B_j \setminus A_j \subseteq B_i \setminus A_i$.
    However, for $k \in \{i, j\}$, we have $A_k = V(G) \setminus (B_k \setminus A_k)$, and since the separation $(A_k, B_k)$ is minimal then $A_k \cap B_k = N(B_k \setminus A_k)$.
    Therefore, since $(A_i, B_i) \neq (A_j, B_j)$ then $B_j \setminus A_j \subsetneq B_i \setminus A_i$.

    Conversely, suppose that $B_j \setminus A_j \subsetneq B_i \setminus A_i$.
    Let $t_i$ be the top-endpoint of $e_i$ and $t_j$ the top-endpoint of $e_j$.
    Let $t$ be a descendant of $t_j$ which is a leaf of $T$. By \cref{lem:leaf-X-nonempty}, there exists $y \in V_{t} \cap X$. 
    Then, $y \in B_j \setminus A_j$ by \cref{lem:leaf-X-nonempty} so $y \in B_i \setminus A_i$. Since $V_t$ is the only bag that contains $y$ then $t$ is a descendant of $t_i$.
    Since $t_i$ and $t_j$ have a common descendant in $T$ then $t_i$ and $t_j$ are comparable in $T$, so $e_i$ and $e_j$ are as well.
    Since $(A_i, B_i) \neq (A_j, B_j)$ then $e_i \neq e_j$.
    If $e_j$ were a proper ancestor of $e_i$ then by the first implication we would have $B_i \setminus A_i \subsetneq B_j \setminus A_j$, which would be a contradiction.
    Therefore, $e_i$ is a proper ancestor of $e_j$ in $T$.
\end{proof}

We are now ready to prove \cref{prop:isomorphism-T}.

\begin{proof}[\textit{Proof of \cref{prop:isomorphism-T}}.]
    Denote by $t_0$ the root of $T$ and by $t'_0$ the root of $T'$.
    For $i \in [p]$, let $e_i$ be the edge of $T$ that induces the separation $(A_i, B_i)$, and $e'_i$ the edge of $T'$ labelled $i$.
    For $i \in [p]$, let $t_i$ be the top endpoint of $e_i$ and $t'_i$ be the top endpoint of $e'_i$.
    We claim that $\phi: t_i \mapsto t'_i$ is an isomorphism of rooted trees between $T$ and $T'$.

    Clearly, $\phi$ is a bijection from $V(T)$ to $V(T')$ which maps the root of $T$ to the root of $T'$.
    By contradiction, suppose that there is an edge $e_i = (s_i, t_i)$ of $T$ such that $(\phi(s_i), \phi(t_i))$ is not an edge of $T'$.
    We first argue that $s_i \neq t_0$. Indeed, suppose $s_i = t_0$. 
    Then, $e_i$ has no proper ancestor in $T$. 
    By \cref{lem:desc-compr-prec}, there is no $j \in [p]$ such that $B_i \setminus A_i \subsetneq B_j \setminus A_j$.
    Therefore, the edge $e'_i$ has no proper ancestor in $T'$. 
    Thus, the bottom endpoint of $e'_i$ is $t'_0 = \phi(t_0) = \phi(s_i)$. 
    Furthermore, $\phi(t_i) = t'_i$ is the top endpoint of $e'_i$ thus $e'_i = (t'_0, t'_i) = (\phi(s_i), \phi(t_i))$ is an edge of $T'$.
    This proves that $s_i \neq t_0$.
    
    Next, let $e_j$ be the unique edge of $T$ whose top endpoint is $s_i$. Then, $s_i = t_j$. 
    Since $e_j$ is a proper ancestor of $e_i$ in $T$ then $B_i \setminus A_i \subsetneq B_j \setminus A_j$ by \cref{lem:desc-compr-prec}, and thus $e'_i$ is a proper descendant of $e'_j$ in $T'$.
    Thus, $t'_i$ is a proper descendant of $t'_j$. 
    By contradiction, assume that $t'_i$ is not a child of $t'_j$ and let $t'_k$ be the unique child of $t'_j$ that is an ancestor of $t'_i$.
    Then, the edge $(t'_j, t'_k)$ is the edge $e'_k$ hence is labelled $k$. 
    Then, $B_i \setminus A_i \subsetneq B_k \setminus A_k \subsetneq B_j \setminus A_j$. Therefore, by \cref{lem:desc-compr-prec}, $e_j$ is a proper ancestor of $e_k$, which itself is a proper ancestor of $e_i$.
    Then, $t_i$ is a proper descendant of $t_k$, which is itself a proper descendant of $t_j = s_i$. This contradicts the existence of the edge $(s_i, t_i)$ in $T$.
    Thus, $t'_i$ is a child of $t'_j = \phi(t_j) = \phi(s_i)$ and thus there is an edge $(\phi(s_i), \phi(t_i))$ in $T'$, a contradiction.

    Since $\phi$ is injective, no two edges of $T$ are mapped to the same edge of $T'$ by $\phi$. Therefore, at least $p$ edges of $T'$ are the image by $\phi$ of an edge of $T$. However, $T'$ has exactly $p$ edges so every edge of $T'$ is the image by $\phi$ of an edge of $T$.
    This concludes the proof that $\phi$ is an isomorphism of rooted trees.

    For the last part of the statement, observe that $t_i$ is mapped to $t'_i$, so the top endpoint of the edge $e_i$ is mapped to the top endpoint of the edge $e'_i$. Therefore $e_i$, the edge of $T$ which induces the separation $(A_i, B_i)$, is mapped to $e'_i$, the edge labelled $i$ in $T'$.
\end{proof}

The next result will help us find, for every vertex $y \in V(G)$ which belongs to a single bag of the tree-decomposition, the corresponding node in the tree $T'$ obtained from Hasse diagram. 

\begin{lemma} \label{lem:which-node-x}
    Let $G$ be a graph and $\mathcal{S} = \{(A_1, B_1), \ldots, (A_p, B_p)\}$ be a set of nested minimal separations of $G$.
    Let $(T, \mathcal{V})$ be the tree-decomposition of $G$ induced by $\mathcal{S}$. 
    Let $X \subseteq V(G)$ be the set of vertices which only appear in a single bag of $(T, \mathcal{V})$.
    Let $x \in X$ and root $T$ at $t_x$, the only node of $T$ such that $x \in V_{t_x}$.
    Suppose that $x \in A \setminus B$ for every separation $(A,B) \in \mathcal{S}$.
    For every $y \in X$, the set $\{B_i \setminus A_i : y \in B_i\}$ is totally ordered by inclusion.
    Furthermore, $y \notin V_{t_x}$ if and only if there exists a separation $(A_i, B_i) \in \mathcal{S}$ such that $y \in B_i$.
    If $y \notin V_{t_x}$, let $(A_i, B_i) \in \mathcal{S}$ be the separation with $y \in B_i$ such that $B_i \setminus A_i$ is minimal under inclusion.
    Let $e_i$ be the edge of $T$ that induces the separation $(A_i, B_i)$ and let $t_i$ be the top endpoint of the edge $e_i$.
    Then, $y \in V_{t_i}$.
\end{lemma}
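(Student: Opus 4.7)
The plan is to prove the three assertions of the lemma in order, each time leveraging the earlier results of this appendix together with the description of the tree--decomposition in terms of descendants of the top endpoints.

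First, I would establish the total order claim. Since $y \in X$, \cref{lem:leaf-X-nonempty} implies that $y \notin A_i \cap B_i$ for every $i$, and in particular $y \in B_i$ if and only if $y \in B_i \setminus A_i$. Applying \cref{lem:sep-nested} with the vertex $x$ (which satisfies $x \in A \setminus B$ for every $(A,B) \in \mathcal{S}$ by hypothesis), any two sets $B_i \setminus A_i$ and $B_j \setminus A_j$ are either disjoint or one is a proper subset of the other. Two sets in $\{B_i \setminus A_i : y \in B_i\}$ both contain $y$ and so cannot be disjoint, hence they must be comparable under inclusion. This gives a total order.

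Next, for the equivalence, I would use the identification from the proof of \cref{lem:desc-compr-prec}: since $x \in V_{t_x}$ uniquely and $x \in A_i \setminus B_i$, the top endpoint $t_i$ of $e_i$ satisfies $B_i = \bigcup_{t \in Desc(t_i)} V_t$, and $t_x \notin Desc(t_i)$. If $y \in V_{t_x}$, then (since $y \in X$) $V_{t_x}$ is the only bag containing $y$; if some $(A_i, B_i) \in \mathcal{S}$ had $y \in B_i$, then $y$ would lie in some $V_t$ with $t \in Desc(t_i)$, forcing $t = t_x$ and contradicting $t_x \notin Desc(t_i)$. Conversely, if $y \notin V_{t_x}$, let $V_t$ be the unique bag containing $y$. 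Let $e_i$ be the first edge of the path from $t_x$ to $t$ in $T$, so the non-$t_x$ endpoint is the top endpoint $t_i$ and $t \in Desc(t_i)$, giving $y \in V_t \subseteq B_i$.

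Finally, for the localization of $y$, let $V_t$ be the unique bag containing $y$ (which exists since $y \in X$) and let $(A_i, B_i)$ be the separation of $\mathcal{S}$ with $y \in B_i$ and $B_i \setminus A_i$ minimal under inclusion. As above, $y \in B_i$ implies $t \in Desc(t_i)$. The main step is to show $t = t_i$ by contradiction. If $t \neq t_i$, let $s$ be the child of $t_i$ in $T$ lying on the path from $t_i$ to $t$. Let $e_j$ be the edge $(t_i, s)$ and $(A_j, B_j)$ the separation it induces, with top endpoint $t_j = s$, so that $B_j = \bigcup_{t' \in Desc(s)} V_{t'}$. Then $y \in V_t \subseteq B_j$ since $t \in Desc(s)$. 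Moreover $e_i$ is a proper ancestor of $e_j$ in $T$, so by \cref{lem:desc-compr-prec} we have $B_j \setminus A_j \subsetneq B_i \setminus A_i$, contradicting the minimality of $B_i \setminus A_i$. Hence $t = t_i$, which means $y \in V_{t_i}$.

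The main obstacle I anticipate is not conceptual but bookkeeping: one must keep careful track of the orientation convention (top endpoint $=$ endpoint further from the root $t_x$), and use that, because the separations are minimal, inclusion $B_j \setminus A_j \subseteq B_i \setminus A_i$ automatically becomes strict when $(A_i, B_i) \neq (A_j, B_j)$. Everything else is a straightforward application of \cref{lem:leaf-X-nonempty,lem:sep-nested,lem:desc-compr-prec}.
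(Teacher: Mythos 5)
Your proof is correct and follows essentially the same strategy as the paper's, differing only in small ways. For the total ordering claim, you invoke \cref{lem:sep-nested} directly: since $y \in X$, \cref{lem:leaf-X-nonempty} gives $y \notin A_i \cap B_i$, so $y \in B_i$ is equivalent to $y \in B_i \setminus A_i$, and any two such sets meet at $y$ and hence must be comparable. The paper instead routes through the tree: it shows $t_y$ is a descendant of both $t_i$ and $t_j$, concludes that $e_i$ and $e_j$ are comparable edges, and then applies \cref{lem:desc-compr-prec}. Your version is a bit more elementary since it bypasses the tree structure entirely; both are valid. Your treatment of the equivalence $y \notin V_{t_x} \Leftrightarrow \exists i,\, y \in B_i$ and of the localization $y \in V_{t_i}$ match the paper's argument step for step (take the first edge on the path, use $B_i = \bigcup_{t \in Desc(t_i)} V_t$, and derive a contradiction with minimality via \cref{lem:desc-compr-prec}). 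One stylistic note: the identification $B_i = \bigcup_{t \in Desc(t_i)} V_t$ is established inside the proof of \cref{lem:desc-compr-prec} rather than stated as a reusable fact, so it would be cleaner to restate the one-line justification than to cite a step buried inside another proof.
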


\begin{proof}
    Let $y \in X$ and let $t_y$ be the unique node of $T$ such that $y \in V_{t_y}$.
    Let $(A_i, B_i) \neq (A_j, B_j) \in \mathcal{S}$ be such that $y \in B_i \cap B_j$.
    Let $e_i$ (resp. $e_j$) be the edge of $T$ that induces the separation $(A_i, B_i)$ (resp. $(A_j, B_j)$) and $t_i$ (resp. $t_j$) be the top endpoint of $e_i$ (resp. $e_j$).
    Since we have that $y \in B_i$, $x \in A_i \setminus B_i$ and $t_x$ is the root of $T$ then $t_y$ is a descendant of $t_i$. Similarly, $t_y$ is a descendant of $t_j$.
    Therefore, $t_i$ and $t_j$ are comparable in $T$ so $e_i$ and $e_j$ too.
    Thus, by \cref{lem:desc-compr-prec}, one of $B_i \setminus A_i$ and $B_j \setminus A_j$ is a proper subset of the other.

    If $y \notin V_{t_x}$, let $e_y$ be the edge of $T$ whose top endpoint is $t_y$ and let $(A_i, B_i) \in \mathcal{S}$ be the separation induced by $e_y$. Since $T$ is rooted at $t_x$ and $x \in A_i \setminus B_i$ then $y \in B_i$.
    Conversely, if there exists a separation $(A_i, B_i) \in \mathcal{S}$ such that $y \in B_i$, let $e_i$ be the edge of $T$ which induces the separation $(A_i, B_i)$ and let $t_i \neq t_x$ be the top endpoint of $e_i$. Then, $t_y$ is a descendant of $t_i$ so $t_y \neq t_x$ and thus $y \notin V_{t_x}$.
    
    Suppose that $y \notin V_{t_x}$. 
    Then, $\{B_i \setminus A_i : y \in B_i\}$ is non-empty and totally ordered by inclusion. 
    Let $(A_i, B_i)$ be the separation with $y \in B_i$ such that $B_i \setminus A_i$ is minimal. Let $e_i$ be the edge of $T$ which induces the separation $(A_i, B_i)$ and let $t_i$ be the top endpoint of $e_i$.
    Let $t_y$ be the unique node of $t$ such that $y \in V_{t_y}$.
    Since $y \in B_i$ then $t_y$ is a descendant of $t_i$.
    By contradiction, suppose that $t_y$ is a proper descendant of $t_i$ and let $e_j$ be the first edge on the path from $t_i$ to $t_y$. 
    Let $t_j$ be the top endpoint of $e_j$ and $(A_j, B_j)$ be the separation induced by $e_j$. 
    Since $t_y$ is a descendant of $t_j$ then $y \in B_j$.
    Since $e_j$ is a proper descendant of $e_i$ then $B_j \setminus A_j \subsetneq B_i \setminus A_i$ by \cref{lem:desc-compr-prec}, which contradicts the minimality of $B_i \setminus A_i$.
    Thus, $t_y = t_i$, so $y \in V_{t_i}$.
\end{proof}

Combining all previous results, we can prove the first step towards \cref{thm:build-tree+torsos}, namely that we can efficiently build the tree $T$ underlying the tree-decomposition induced by $\mathcal{S}$, label its edges with the corresponding separations, and add the vertices which belong to a single bag to the correct bag.

The idea behind this algorithm is the following: we first orient all separations $(A_i, B_i) \in \mathcal{S}$ so that the sets $B_i \setminus A_i$ form a laminar set family. Then, we iterate through all vertices of $G$ to construct the Hasse diagram of this laminar set family, from which we can build $T$ by \cref{prop:isomorphism-T}, and while doing so we add the vertices of $G$ which belong to a single bag of the tree-decomposition to their corresponding bag, using \cref{lem:which-node-x}.

\begin{proposition} \label{prop:compute-tree}
    Let $G$ be a graph on vertex set $[n]$ and $\mathcal{S} = \{(A_1, B_1), \ldots, (A_p, B_p)\}$ be a set of nested minimal separations of $G$ of order at most $k$.
    Suppose that for every $(A_i, B_i) \in \mathcal{S}$ we are given a representation of $(A_i, B_i)$ of length at most $\ell$.
    Let $(T, \mathcal{V})$ be the tree-decomposition of $G$ induced by $\mathcal{S}$. 
    Let $X \subseteq V(G)$ be the set of vertices which only appear in a single bag of $(T, \mathcal{V})$.
    There is an $O(n + p(\ell + k))$-time algorithm which, given the graph $G$ and all representations $(\underline{a_i}, \underline{b_i})$, constructs the tree $T$ and labels its edges from $1$ to $p$ so that the edge labelled $i$ induces the separation $(A_i, B_i)$ in $(T, \mathcal{V})$.
    Furthermore, the algorithm constructs bags $(W_t)_{t \in V(T)}$ such that for every $t \in V(T)$, we have $W_t = V_t \cap X$.
\end{proposition}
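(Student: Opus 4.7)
My plan is to use Lemma 6.5, Proposition 6.6, and Lemma 6.7 in turn, combined with the observation that the maximal intervals appearing across the representations of the sets $C_i := B_i \setminus A_i$ form a laminar family of intervals on $[n]$.

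\emph{Computing $X$ and orienting.} By Lemma 6.5, $v \in X$ iff $v$ avoids every separator $A_i \cap B_i$. Since each separator has at most $k$ vertices and can be extracted from the representation in $O(\ell+k)$ time, I can compute $X$ and pick some $x \in X$ (guaranteed to exist) in total time $O(n + p(\ell+k))$. Then for each separation I check via the representation in $O(\ell)$ time whether $x \in A_i$ and swap $A_i, B_i$ if needed, ensuring $x \in A_i$ for all $i$. I also compute the representation of $C_i := B_i \setminus A_i$, of length $O(\ell)$, in $O(\ell)$ time per separation.

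\emph{Building the tree via interval laminarity.} By Proposition 6.6, $T$ rooted at $t_x$ is isomorphic to the rooted tree obtained by adding a root $V(G)$ to the Hasse diagram of $\{C_i\}$, with the edge labelled $i$ corresponding to $e_i$. The key observation is that the intervals appearing across all the $C_i$ form a laminar family on $[n]$: for two intervals $I \subseteq C_i$ and $J \subseteq C_j$ with $I \cap J \neq \emptyset$, laminarity of $\{C_i\}$ forces (say) $C_i \subseteq C_j$, and then the connectedness of $I$ together with the maximality of $J$ in $C_j$ forces $I \subseteq J$. There are at most $p\ell$ such intervals; I sort them lexicographically by $(a, -b, -|C_i|)$ using counting sort—so that among equal intervals the one with largest $|C_i|$ comes first and acts as the parent—and then build the laminar forest in $O(n+p\ell)$ time via a standard stack-based sweep. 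For each $C_i$ I look up the parent of any one of its intervals in the laminar forest. A straightforward case analysis (ruling out intervals of $C_k$ with $C_k \subsetneq C_i$, with $C_k$ disjoint from $C_i$, and with $C_k \supsetneq C_j$, using maximality in each case) shows that this parent must be an interval of the Hasse parent of $C_i$, or corresponds to the root $V(G)$. This gives the tree $T$ together with its edge labels.

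\emph{Assigning bags.} By Lemma 6.7, for each $v \in X$ the bag containing $v$ is determined by $\mu(v)$, the smallest $C_i$ containing $v$, which coincides with the $C_i$ whose interval in the laminar forest is innermost at position $v$. I compute $\mu$ by a linear sweep of $v = 1, \dots, n$ maintaining a stack of currently active intervals: at each $v$, pop intervals whose right endpoint is less than $v$, push the intervals starting at $v$ in decreasing order of $|C_i|$, and read $\mu(v)$ off the top of the stack. Each interval is pushed and popped exactly once, giving total time $O(n + p\ell)$. Finally, for each $v \in X$, add $v$ to the bag $W_{t_{\mu(v)}}$ (or to $W_{t_x}$ if $\mu(v) = 0$), costing $O(n)$. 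Summing, the overall running time is $O(n + p(\ell+k))$, as required. The most delicate point will be the verification of the laminarity of the interval family together with the tie-breaking rule for equal intervals; once that is in place, both the tree construction and the bag assignment reduce to standard linear-time sweeps over a laminar family of intervals, avoiding any $\alpha(n)$ factor.
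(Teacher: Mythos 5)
Your proof is correct and follows essentially the same approach as the paper: compute $X$, orient the separations so $x \in A_i$ for all $i$, sort the maximal intervals of the $C_i = B_i \setminus A_i$ by the key $(a,-b,-|C_i|)$ via counting sort (with the tie-break ensuring equal intervals from a larger $C_i$ become ancestors), and then do a stack-based sweep to build the tree and assign the $X$-vertices to bags, invoking \cref{prop:isomorphism-T} and \cref{lem:which-node-x} for correctness. The paper projects the $C_i$ to $X$ before extracting intervals (equivalent since $B_i \cap X = C_i \cap X$, as $X$ avoids all separators) and constructs the tree inline during the sweep rather than building an explicit interval laminar forest and then reading off Hasse parents, but these are organizational differences only and the running-time analysis is the same.
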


\begin{proof}
    In time $O(p(\ell + k))$, we compute a representation of all sets $A_i \setminus B_i$ and $B_i \setminus A_i$, and all sets $A_i \cap B_i$ explicitly.
    In time $O(p\ell)$, we also compute $|B_i \setminus A_i|$ for every $i \in [p]$.
    We initialize an array $InX$ of size $n$, initially filled with 1's.
    We iterate over all sets $A_i \cap B_i$ and set $InX[v] = 0$ for every $v \in A_i \cap B_i$.
    Once this is done, we have $v \in X$ if and only if $InX[v] = 1$ by \cref{lem:leaf-X-nonempty}.
    For every $v \in V(G)$, we store the next and the previous element of $X$.
    This takes time $O(n + p(\ell + k))$ overall.

    Let $x$ be the smallest element of $X$.
    For every $i \in [p]$, we check in time $O(\ell)$ whether $x \in A_i \setminus B_i$ or $x \in B_i \setminus A_i$. 
    We rename $A_i$ and $B_i$ so that $x \in A_i \setminus B_i$ for every $i \in [p]$.
    For every $i \in [p]$, we compute the representation of $B_i \cap X$ relative to $X$ (meaning the representation of $B_i \cap X$ as a disjoint union of maximal intervals of $X$) in time $O(\ell)$ using that we stored for every vertex of $G$ the next and previous element of $X$.
    Note that this new representation has length at most $\ell$. 

    For every $i \in [p]$ and every interval $[a, b]$ of the representation of $B_i \cap X$, we create a vector $w \coloneqq (a, -b, -|B_i \setminus A_i|, i)$.
    Observe that every coordinate of such a vector is between $-n-p$ and $n+p$, and there are $O(p\ell)$ such vectors. Thus, by \cref{cor:sort-lex}, we can sort the list of all such vectors lexicographically in increasing order in time $O(n + p\ell)$.

    We denote the resulting sorted list by $w_1 \leq_{lex} \ldots \leq_{lex} w_q$, and we store it using a stack $L$ with $w_1$ on top and $w_q$ at the bottom.
    Note that given a vector $w_j$, we can find the index of the separation $(A_i, B_i)$ which gave rise to the vector $w_j$ using its last coordinate. We say that $(A_i, B_i)$ \emph{generates} $w_j$. Every vector $w = (\alpha, -\beta, -|B_i \setminus A_i|, i)$ comes from an interval of the representation of $B_i \cap X$, and this interval is $[\alpha, \beta]$. We will often assimilate the vector $w$ and its corresponding interval.

    The remainder of the algorithm is described in \cref{alg:compute-forest}. 
    The intuition is the following: we iterate over all vertices ${y \in X}$ while maintaining a stack $S$ of vectors $w$, whose elements correspond exactly to the intervals $[a, b]$ of the representations of $B_i \cap X$ such that $y \in [a, b]$. 
    At any point of the algorithm, we ensure that if $w$ is below $w'$ in the stack and if $w$ is generated by $(A, B) \in \mathcal{S}$ and $w'$ by $(A', B') \in \mathcal{S}$ then $B' \setminus A' \subsetneq B \setminus A$.
    Said differently, the vectors $w$ in the stack $S$ are ordered according to the inclusion of the corresponding sets $B_i \setminus A_i$, with the vector $w$ on top corresponding to the set $B_i \setminus A_i$ furthest from the root in the Hasse diagram. 
    While doing so, we also construct a rooted forest $F$ on vertex set $[0, k]$, with its edges labelled by $[k]$. Initially, $F$ is edgeless. We will gradually add edges to $F$ and define a parent function on the nodes, so that $F$ eventually becomes a tree rooted at node 0.
    To do so, whenever we add to $S$ the vector $w$ corresponding to the first interval of the representation of $B_i \cap X$, if $w'$ is the vector just below $w$ and $w'$ is generated by a separation $(A_j, B_j) \in \mathcal{S}$, we add an edge $ij$ to $F$, label it with $i$ and set $j$ to be the parent of $i$.
    It will then follow from \cref{prop:isomorphism-T} that at the end of the algorithm, we have $F = T$.
    Furthermore, if $w$ is the vector on top of $S$ while considering $y \in X$ and if $w$ is generated by $(A_i, B_i)$, we add $y$ to the bag $W_i$. The last property of the statement will then follow immediately from \cref{lem:which-node-x}.

    \begin{algorithm}
    \caption{Algorithm for \cref{prop:compute-tree}}
    \label{alg:compute-forest}
    \begin{algorithmic}[1]
    \Require Stack $L$ containing $w_1, \ldots, w_q$ from top to bottom.

    \State $S \gets$ empty stack
    \State $F \gets$ edgeless forest on vertex set $[0, k]$
    \State $W_i \gets \emptyset$ for every $i \in [0, k]$
    \For{$y \in X$} \label{line:enter-for-X}
        \If{$L \neq \emptyset$} \label{line:enter-loop}
            \State $w \coloneqq (a, -b, -|B_i \setminus A_i|, i) \gets L.top()$ \label{line:def-w-1}
        \Else
            \State $a \gets \infty$
        \EndIf
        \While{$a = y$}  \label{line:begin-while}
            \If{$a$ is the first element of $B_i \cap X$} \label{line:test-first}
                \If{$S \neq \emptyset$}
                    \State $(\cdot, \cdot, \cdot, j) \gets S.top()$
                    \State Add an edge $ij$ in $F$, with label $i$ and set $parent(i) = j$ \label{line:add-edge}
                \Else
                    \State Add an edge $i0$ in $F$, with label $i$ and set $parent(i) = 0$
                \EndIf
            \EndIf
            \State $S.push(w)$
            \State $L.pop()$
            \If{$L \neq \emptyset$} \label{line:16}
                \State $w := (a, -b, -|B_i \setminus A_i|, i) \gets L.top()$ \label{line:def-w-2}
            \Else \label{line:18}
                \State $a \gets \infty$ \label{line:19}
            \EndIf
        \EndWhile
        \\ \label{line:end-while-insert-stack}
        \If{$S \neq \emptyset$} \label{line:enter-first-if}
            \State $(\cdot, \cdot, \cdot, i) \gets S.top()$
            \State Add $y$ to $W_{i}$
        \Else
            \State Add $y$ to $W_0$
        \EndIf
        \\
        \If{$S \neq \emptyset$} \label{line:enter-if}
            \State $w \coloneqq (a, -b, -|B_i \setminus A_i|, i) \gets S.top()$ \label{line:def-l-1}
            \While{$b = y$} \label{line:while-remove}
                \State $S.pop()$
                \If{$S \neq \emptyset$} \label{line:test-removal}
                    \State $w \coloneqq (a, -b, -|B_i \setminus A_i|, i) \gets S.top()$ \label{line:def-l-2}
                \Else
                    \State $b \gets \infty$
                \EndIf
            \EndWhile
        \EndIf
        
    \EndFor

    \State Return $F, (W_i)_{i \in [0, k]}$

    \end{algorithmic}
\end{algorithm}

    \begin{claim}
        The following invariants hold throughout the algorithm.
        \begin{itemize}
            \item At any point of time, if $w$ is below $w'$ in $S$, and if $w$ is generated by $(A_i, B_i)$ and $w'$ by $(A_j, B_j)$, then $B_j \setminus A_j \subsetneq B_i \setminus A_i$.
            \item Suppose that $w$ is generated by $(A_i, B_i) \in \mathcal{S}$. When we add $w$ to $S$, there is a vector $w'$ in $S$ generated by $(A_j, B_j)$ if and only if $B_i \setminus A_i \subsetneq B_j \setminus A_j$, and if so there is exactly one such vector. 
            Furthermore, if $w$ is the vector corresponding to the first interval of $B_i \cap X$ then we add an edge to $F$ labelled $i$, which defines the parent for the vertex $i$. 
            This edge creates a path in $F$ from the node $i$ to the node 0, and $l \in [k]$ is the label of an edge of this path if and only if either $l=i$ or $B_i \setminus A_i \subsetneq B_l \setminus A_l$.
            \item When we consider $y \in X$, at \cref{line:enter-loop} the vectors in $S$ correspond exactly to the intervals $[a, b]$ of the $B_i \cap X$ such that $y \in (a, b]$, at \cref{line:end-while-insert-stack} the elements of $S$ correspond to the intervals $[a, b]$ of the $B_i \cap X$ such that $y \in [a, b]$, and when we finish examining $y$ the elements of $S$ correspond exactly to the intervals $[a, b]$ of the $B_i \cap X$ such that $y \in [a, b)$.
        \end{itemize}
    \end{claim}
    
    \begin{subproof}
    We prove the three invariants simultaneously. 
    Initially $S$ is empty so the first invariant holds.
    The first $y \in X$ which we consider is $x$, the smallest element of $X$, and we have that $x \in A_i \setminus B_i$ for every separation $(A_i, B_i) \in \mathcal{S}$.
    Thus, at \cref{line:enter-loop}, $S$ is empty and therefore the elements of $S$ are exactly the intervals $[a, b]$ of the $B_i \cap X$ such that $y \in (a, b]$.
    Initially, the $w$ we consider is $w_1$. 
    Write $w_1 = (a, -b, -|B_i \setminus A_i|, i)$. Then, $a \in B_i \cap X$ so $a \neq x$. Thus, we do not enter the While loop at \cref{line:begin-while} and directly go to \cref{line:end-while-insert-stack}.
    Thus, at \cref{line:end-while-insert-stack} the elements of $S$ correspond to the intervals $[a, b]$ of the $B_i \cap X$ such that $x \in [a, b]$.
    Since $S = \emptyset$, we do not enter the Ifs at \cref{line:enter-first-if,line:enter-if} so when we finish examining $x$, the elements of $S$ correspond exactly to the intervals $[a, b]$ of the $B_i \cap X$ such that $x \in [a, b)$.
    Note that we didn't modify $S$ while considering $x$ so the first two invariants are also maintained.

    Suppose now that we enter the For loop at \cref{line:enter-for-X} considering some $y \in X \setminus \{x\}$. Let $y^-$ be the previous $y \in X$ which we considered, and assume that all three invariants held for all previous $y' \in X$.
    By induction hypothesis, at \cref{line:enter-loop}, the vectors in $S$ correspond exactly to the intervals $[a, b]$ of the $B_i \cap X$ such that $y^- \in [a, b)$. 
    Since $y$ is the successor of $y^-$ in $X$, these intervals are exactly the intervals $[a, b]$ of the $B_i \cap X$ such that $y \in (a, b]$. This proves the first part of the third invariant.

    Suppose that we enter the While loop at \cref{line:begin-while}. 
    Then, it cannot be that $a = \infty$ so $L$ is non-empty. Denote by ${w = (a, -b, - |B_i \setminus A_i|, i)}$ the element on top of $L$. 
    Observe that because of \cref{line:16,line:def-w-2,line:18,line:19}, this is true regardless of whether or not it is the first time we enter the While loop when considering $y$. 
    We add $w$ on top of $S$ and remove it from $L$. 
    Observe that since we enter the While loop, we have $a = y$, and thus $y$ is the first vertex of the interval corresponding to $w$.

    We now prove that the first invariant is maintained. Suppose that $w \coloneqq (\alpha, -\beta, -|B_i \setminus A_i|, i)$ is generated by $(A_i, B_i) \in \mathcal{S}$ and is added to $S$ while considering $y$. 
    We now show that if $w'$ is below $w$ in $S$ and is generated by $(A_j, B_j) \in \mathcal{S}$ then $B_i \setminus A_i \subsetneq A_j \setminus B_j$.
    If $S$ was empty then this is trivially true. 
    Otherwise, let $w'$ be the element that was on top of $S$ before adding $w$, say that $w'$ is generated by $(A_j, B_j) \in \mathcal{S}$ and write $w' = (\alpha', -\beta', -|B_j \setminus A_j|, j)$. 
    Since the invariant held before adding $w$ to $S$, it suffices to show that $B_i \setminus A_i \subsetneq B_j \setminus A_j$. 
    If $w'$ was added to $S$ before considering $y$ then $y$ is in the interval corresponding to $w'$ by the first part of the third invariant so $y \in B_j \cap X$. 
    Similarly, if $w'$ was added to $S$ while considering $y$ then $y \in B_j \cap X$ (recall that we only add a vector to $S$ when considering $y$ if $y$ is the first element of its corresponding interval). 
    In both cases, $y \in B_j \cap X$. 
    Since $w$ was added to $S$ while considering $y$ then $y \in B_i \cap X$. 
    Thus, $B_i \cap X$ and $B_j \cap X$ are not disjoint. However, by \cref{lem:leaf-X-nonempty}, we have $B_i \cap X \subseteq B_i \setminus A_i$ and $B_j \cap X \subseteq B_j \setminus A_j$ so $B_i \setminus A_i$ and $B_j \setminus A_j$ are not disjoint. 
    Note that since the intervals $[\alpha', \beta']$ and $[\alpha, \beta]$ overlap (they both contain $y$) then $i \neq j$ so $(A_i, B_i) \neq (A_j, B_j)$.
    Therefore, one of $B_i \setminus A_i$ and $B_j \setminus A_j$ is a proper subset of the other by \cref{lem:sep-nested}. 
    Since $w'$ was added to $S$ before $w$ then $w'$ was above $w$ in $L$ (since we only ever consider the element on top of $L$). 
    Thus, $(\alpha', -\beta', - |B_j \setminus A_j|, j) \leq_{lex} (\alpha, -\beta, -|B_i \setminus A_i|, i)$.
    
    If $\alpha' < \alpha$ then the predecessor of $\alpha$ in $X$ is not in $B_i \cap X$ (otherwise $[\alpha, \beta]$ would not be a maximal interval of $B_i \cap X$ in $X$) but is in $B_j \cap X$. 
    Therefore, $B_j \cap X \not\subseteq B_i \cap X$, so it cannot be that $B_j \setminus A_j \subseteq B_i \setminus A_i$, and thus $B_i \setminus A_i \subsetneq B_j \setminus A_j$. 
    If $\alpha' = \alpha$ and $-\beta' < -\beta$ then $\beta < \beta'$ so the successor of $\beta$ in $X$ is not in $B_i \cap X$ but is in $B_j \cap X$. 
    As before, this implies $B_i \setminus A_i \subsetneq B_j \setminus A_j$. 
    Otherwise, $(\alpha', \beta') = (\alpha, \beta)$, so $-|B_j \setminus A_j| \leq -|B_i \setminus A_i|$ so $|B_i \setminus A_i| \leq |B_j \setminus A_j|$. Since one of them is a proper subset of the other, this inequality must be strict, which implies $B_i \setminus A_i \subsetneq B_j \setminus A_j$. 
    We considered all possible cases, and in each of them we indeed have $B_i \setminus A_i \subsetneq B_j \setminus A_j$, which proves that the first invariant is maintained.

    We now move to the second invariant. Suppose that we add $w = (\alpha, -\beta, -|B_i \setminus A_i|, i)$ to $S$. The fact that all vectors $w'$ below $w$ in $S$ are generated by separations $(A_j, B_j)$ such that $B_i \setminus A_i \subsetneq B_j \setminus A_j$ follows from the first invariant. Furthermore, since for every such $w'$ the corresponding interval contains $y$ then there can be at most one vector $w'$ for each separation $(A_j, B_j) \in \mathcal{S}$.
    Conversely, consider a separation $(A_j, B_j) \in \mathcal{S}$ such that $B_i \setminus A_i \subsetneq B_j \setminus A_j$. 
    Then, $[\alpha, \beta] \cap X \subseteq B_i \cap X \subseteq B_j \cap X$.
    Therefore, there exists a maximal interval $[\alpha', \beta']$ of $B_j \cap X$ in $X$ such that $[\alpha, \beta] \subseteq [\alpha', \beta']$. 
    Let $w'$ be the vector $(\alpha', -\beta', -|B_j \setminus A_j|, j)$ which corresponds to this interval.
    It then follows from the fact that $B_i \setminus A_i \subsetneq B_j \setminus A_j$ that $w' \leq_{lex} w$, so $w'$ was initially above $w$ in $L$.
    Since are considering $w$, $w'$ was removed from $L$ at some point and thus added to $S$ (we only remove an element from $L$ when adding it to $S$). If $y$ is not the first vertex of $[\alpha', \beta']$ then $w'$ was in $S$ when we started considering $y$ by the first part of the third invariant. 
    Since we didn't yet remove any element from $S$ while considering $y$, it still holds that $w'$ is in $S$. 
    Otherwise, $y$ is the first vertex of $[\alpha', \beta']$ and thus $w'$ was added to $S$ when considering $y$, therefore $w'$ cannot have been removed from $S$ yet.
    In both cases, we still have $w' \in S$.
    This proves the first part of the second invariant.
    
    Suppose that $w$ is the vector corresponding to the first interval $[\alpha, \beta]$ of $B_i \cap X$ and that we add $w$ to $S$. Consider the iteration of the While loop at \cref{line:begin-while} where $w$ was added to $S$. Then, $w$ is the tuple considered by the algorithm (defined either at \cref{line:def-w-1} or at \cref{line:def-w-2}). Thus, $a = \alpha$ is the first element of $B_i \cap X$ so we enter the If at \cref{line:test-first}.
    If $S$ was empty before adding $w$ then the last part of the invariant holds since we add the edge $i0$ to $F$, label it $i$ and set $parent(i) = 0$. 
    Indeed, in that case, by the first part of the second invariant, there is no separation $(A_l, B_l) \in \mathcal{S}$ such that $B_i \setminus A_i \subsetneq B_l \setminus A_l$.
    Otherwise, let $w'$ be the element that was on top of $S$ before adding $w$, and say that $w'$ is generated by $(A_j, B_j) \in \mathcal{S}$. 
    The first invariant and the first part of the second invariant together imply that $(A_j, B_j)$ is the separation with $B_j \setminus A_j$ inclusion-wise minimal among all separations $(A_l, B_l)$ in $\mathcal{S}$ such that $B_i \setminus A_i \subsetneq B_l \setminus A_l$. At \cref{line:add-edge}, we add an edge $ij$, label this edge $i$ and set $parent(i) = j$. 
    When we added the vector corresponding to the first interval of $B_j \cap X$ to $S$, we created a path from the node $j$ to the node 0 in $F$. Therefore, adding the edge $ij$ connects the node $i$ to the node 0 in $F$.
    The labels of the edges on the path from $i$ to 0 are $i$ and the labels of the edges of the path from $j$ to 0.
    By the last part of the second invariant applied to $j$, an index $l \in [k]$ labels one of these edges if and only if either $l=j$ or $B_j \setminus A_j \subsetneq B_l \setminus A_l$. 
    Therefore, an index $l \in [k]$ labels an edge of the path from $i$ to 0 if and only if either $l = i$ or $B_i \setminus A_i \subsetneq B_l \setminus A_l$.
    This concludes the proof for the second invariant. 

    We now prove the second part of the third invariant. 
    We know that at \cref{line:enter-loop} the vectors in $S$ correspond exactly to the intervals $[a, b]$ of the $B_i \cap X$ such that $y \in (a, b]$.
    Since we only add $w$ to $S$ if $y$ is the first vertex of the interval corresponding to $w$ then every vector $w$ in $S$ at \cref{line:end-while-insert-stack} corresponds to an interval $[a, b]$ such that $y \in [a, b]$.
    Conversely, consider an interval $[\alpha, \beta]$ of a $B_i \cap X$ such that $y \in [\alpha, \beta]$. 
    Let $w$ be the vector corresponding to that interval.
    Since we did not yet remove any element from $S$ while considering $y$, if $w$ was in $S$ at \cref{line:enter-loop} then $w$ is still in $S$ at \cref{line:end-while-insert-stack}. 
    Thus, by the first part of the second invariant, if $y \in (\alpha, \beta]$ then $w$ is still in $S$ at \cref{line:end-while-insert-stack}. 
    Assume now that $y$ is the first vertex of the interval corresponding to $w$, i.e. $y = \alpha$. 
    If $L$ is empty then $w$ was added to $S$ at some point since we only remove an element from $L$ when adding it to $S$. 
    If $L$ is non-empty, let $w' = (\alpha', -\beta', -|B_j \setminus A_j|, j)$ be the top element of $L$. 
    Since $L \neq \emptyset$, the last test at \cref{line:begin-while} was performed with $w'$ (either at \cref{line:def-w-1} or at \cref{line:def-w-2}). 
    Since we are at \cref{line:end-while-insert-stack}, we didn't enter the While loop and therefore $\alpha' \neq y$. 
    If $\alpha' < y$ then by the third invariant applied to $\alpha'$, we get that $w'$ was in $S$ when we were considering $\alpha'$ and therefore was removed from $L$, which is not the case. Thus, $\alpha' > y$ so $\alpha = y < \alpha'$, thus $w$ was above $w'$ in $L$. Since $w'$ is the top element of $L$ then $w$ was removed from $L$ and thus added to $S$.
    In both cases, $w$ was added to $S$ at some point.
    Since $\alpha = y$ and because of the test at \cref{line:begin-while}, $w$ must have been added to $S$ when considering $y$ and thus cannot have been removed from $S$ yet. Thus, $w$ is still in $S$.
    This proves that at \cref{line:end-while-insert-stack} the elements of $S$ correspond to the intervals $[a, b]$ of the $B_i \cap X$ such that $y \in [a, b]$.

    To prove the last part of the third invariant, it suffices to show that we remove from $S$ exactly all the vectors $w$ such that $y$ is the last vertex of the interval corresponding to $w$, i.e. such that $y = \beta$.
    First, consider a vector $w$ which we remove from $S$. Write $w = (\alpha, -\beta, -|B_i \setminus A_i|, i)$.
    Consider the time when we enter the While loop at \cref{line:while-remove} to remove $w$. 
    At that point, $w$ is on top of $S$.
    The current value of $b$ was defined either at \cref{line:def-l-1} or at \cref{line:def-l-2} when $w$ was already on top of $S$ and thus $b = \beta$. 
    Since we enter the While loop at \cref{line:while-remove}, it holds that $\beta = y$.
    Conversely, consider a vector $w = (\alpha, -\beta, -|B_i \setminus A_i|, i)$ such that $\beta = y$. 
    By the second part of the third invariant, $w$ was in $S$ at \cref{line:end-while-insert-stack}. 
    By contradiction, suppose that we didn't remove $w$ from $S$. 
    Thus, $S \neq \emptyset$. 
    Let $w' = (\alpha', -\beta', -|B_j \setminus A_j|, j)$ be the element on top of $S$.
    By the first invariant, we have that $B_j \setminus A_j \subseteq B_i \setminus A_i$, and thus $B_j \cap X \subseteq B_i \cap X$. 
    Since $y = \beta$ is the last vertex of $w$ then the successor of $y$ in $X$ is not in $B_i \cap X$, hence not in $B_j \cap X$. 
    However, the interval corresponding to $w'$ contains $y$ (because $w'$ was in $S$ at \cref{line:end-while-insert-stack}), so $y$ is the last vertex of $w'$, i.e. $y = \beta'$.
    Thus, when we last defined $b$ (either at \cref{line:def-l-1} or at \cref{line:def-l-2}), $w'$ was on top of $S$ so $b$ was set to $\beta' = y$. 
    Therefore, we should have entered the While loop at \cref{line:while-remove} and removed $w'$ from $S$, a contradiction. 
    Thus, every vector $w = (\alpha, -\beta, -|B_i \setminus A_i|, i)$ such that $\beta = y$ was indeed removed from $S$.
    Therefore when we finish examining $y$, the elements of $S$ correspond exactly to the intervals $[a, b]$ of the $B_i \cap X$ such that $y \in [a, b)$.
    
    This concludes the proof of the invariants.
    \end{subproof}

    We now use these invariants to prove the correctness of the algorithm.
    By the third invariant, every vector $w$ is added to $S$ at some point.
    However, every $B_i \cap X$ is non-empty by \cref{lem:leaf-X-nonempty}.
    Therefore, by the second part of the second invariant, every node in $F$ is connected to the node 0. 
    Furthermore, we add an edge to $F$ only when considering the vector corresponding to the first interval of $B_i \cap X$ (because of the test at \cref{line:test-first}), so we have at most $p$ edges in $F$. 
    Since $F$ is connected and has $p+1$ nodes then $F$ is a tree.
    Since we define a parent when we add a new edge to $F$ then $F$ is a rooted tree, and its root is the only node with no parent, namely the node 0.
    The edges of $F$ are labelled from $1$ to $p$ by construction.
    By the second invariant, the edge labelled $i$ is a proper ancestor of the edge labelled $j$ if and only if $B_j \setminus A_j \subsetneq B_i \setminus A_i$.
    Thus, by \cref{prop:isomorphism-T}, there is an isomorphism of rooted trees between $F$ and $T$ that maps the edge of $T$ which induces the separation $(A_i, B_i)$ to the edge labelled $i$ in $F$.
    This proves that the tree $F$ we return is isomorphic to $T$ and its edges are labelled according to the separations induced in $(T, \mathcal{V})$.

    Finally, let $y \in X$. When we are at line \cref{line:end-while-insert-stack}, by the third invariant, the elements of $S$ correspond to the intervals $[a, b]$ of the $B_i \cap X$ such that $y \in [a, b]$.
    At that point, if $S$ is empty, we add $y$ to $W_0$, the bag of the root of $F$. However, by \cref{lem:which-node-x}, in that case we have $y \in V_{t_x}$, i.e. $y$ is in the bag of the root node of $T$ so we put $y$ in the correct bag.
    If $S$ is not empty, then $y \notin V_{t_x}$ by \cref{lem:which-node-x}. By the first invariant, the top element $w$ of $S$ is generated by the separation $(A_i, B_i)$ with $y \in B_i$ such that $B_i \setminus A_i$ is inclusion-wise minimal. We add $y$ to the bag of the node $i$, i.e. to the bag of the top endpoint of the edge labelled $i$. By the property of the isomorphism between $F$ and $T$, this node corresponds to the top endpoint $t_i$ of the edge $e_i$ of $T$ which induces the separation $(A_i, B_i)$. By \cref{lem:which-node-x}, we have $y \in V_{t_i}$, which concludes the proof of correctness.

    We now argue about the running time of this algorithm. We already explained how to perform all steps prior to \cref{alg:compute-forest} in time $O(n + p(\ell + k))$.
    Iterating over all vertices in $X$ can be done in time $O(n)$.
    Fix $y \in X$ and let us examine the time spent while considering $y$.
    Getting to \cref{line:begin-while} takes constant time.
    Then, every iteration into the While loop at \cref{line:begin-while} also takes constant time. For every such iteration we remove an element from $L$.
    Once we are done with this While loop, we get to \cref{line:while-remove} in constant time (if we get there). Once again, every iteration into this While loop takes constant time, and for every such iteration we remove an element from $S$.
    Therefore, the time spent while considering $y$ is $O(1) + O(r_y)$, where $r_y$ is the number of elements removed from $S$ and from $L$ while considering $y$.
    Since overall we remove at most $p\ell$ elements from $L$ and $p\ell$ elements from $S$, the total running time is indeed $O(n + p(\ell + k))$.
\end{proof}

The next result explains, for every vertex $y \in V(G)$ which belongs to several bags of the tree-decomposition induced by $\mathcal{S}$, how to find the corresponding nodes. 

\begin{lemma}  \label{lem:which-not-x}
    Let $G$ be a graph and $\mathcal{S} = \{(A_1, B_1), \ldots, (A_p, B_p)\}$ be a set of nested separations of $G$.
    Let $(T, \mathcal{V})$ be the tree-decomposition of $G$ induced by $\mathcal{S}$. 
    Let $X \subseteq V(G)$ be the set of vertices which only appear in a single bag of $(T, \mathcal{V})$.
    If $y \in V(G) \setminus X$ and $t \in V(T)$ then $y \in V_t$ if and only if there is an edge incident to $t$ which induces a separation $(A_i, B_i) \in \mathcal{S}$ such that $y \in A_i \cap B_i$. 
\end{lemma}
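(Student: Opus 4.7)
The statement is a clean structural fact about tree-decompositions induced by nested sets of separations, and both directions follow almost directly from the definitions once we use the basic property that in any tree-decomposition the set of bags containing a fixed vertex is connected in the decomposition tree.

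My plan is to prove the two implications separately. For the reverse direction, suppose that $t$ has an incident edge $e=(t,t')$ inducing some separation $(A_i,B_i) \in \mathcal{S}$ with $y \in A_i \cap B_i$. By the definition of the induced separation associated to an edge of a tree-decomposition, we have $A_i \cap B_i = V_t \cap V_{t'}$, and in particular $y \in V_t$. This direction does not even need the hypothesis $y \notin X$.

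For the forward direction, assume $y \in V_t$ and $y \notin X$. Since $y \notin X$, there exists another node $t' \neq t$ of $T$ with $y \in V_{t'}$. By the standard property of tree-decompositions, the set $U := \{s \in V(T) : y \in V_s\}$ induces a connected subtree of $T$. Since $U$ contains both $t$ and $t'$, and $|U| \geq 2$, the node $t$ has at least one neighbor $t''$ in $U$; that is, there is an edge $e = (t, t'')$ of $T$ with $y \in V_{t''}$. This edge induces some separation $(A_i, B_i) \in \mathcal{S}$ with separator $V_t \cap V_{t''}$, and therefore $y \in V_t \cap V_{t''} = A_i \cap B_i$, as required.

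There is no real obstacle here — the only subtlety is recognizing that we need $y \notin X$ precisely to guarantee the existence of a second bag containing $y$ (and hence a neighbor of $t$ in the connected subtree $U$). The proof should take only a few lines in the paper.
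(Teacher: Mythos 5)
Your proposal is correct and follows essentially the same approach as the paper. The only (minor) difference is in the easy direction: you invoke the fact that the separator of the edge-induced separation is exactly $V_t \cap V_{t'}$, which the paper states explicitly in its definition of induced separations, while the paper's proof re-derives this by picking nodes $t_1, t_2$ on opposite sides of the edge with $y \in V_{t_1} \cap V_{t_2}$ and applying the connectivity property — a cosmetic difference, not a different route.
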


\begin{proof}
    Suppose first that $y \in V_t$. Since $y \notin X$, there is a node $u \neq t \in V(T)$ such that $y \in V_{u}$. 
    Let $e = tt'$ be the first edge of the path from $t$ to $u$ in $T$. 
    Since $y \in V_t \cap V_u$ and $(T, \mathcal{V})$ is a tree-decomposition then $y \in V_{t'}$. 
    The edge $e$ induces a separation $(A_i, B_i) \in \mathcal{S}$ with $V_t \subseteq A_i$ and $V_{t'} \subseteq B_i$ up to renaming $A_i$ and $B_i$, and therefore $y \in A_i \cap B_i$.
    Conversely, suppose that there is an edge incident to $t$ which induces a separation $(A_i, B_i) \in \mathcal{S}$ such that $y \in A_i \cap B_i$. 
    Let $T_1, T_2$ be the two connected components of $T - e$. 
    By definition of the separations induced by $(T, \mathcal{V})$, there exist $t_1 \in V(T_1)$ and $t_2 \in V(T_2)$ such that $y \in V_{t_1} \cap V_{t_2}$. 
    Since $t_1$ and $t_2$ are not in the same connected component of $T - e$, the path between them in $T$ uses $e$ and thus $t$ is on this path. 
    Since $T$ is a tree-decomposition and $y \in V_{t_1} \cap V_{t_2}$ then $y \in V_t$.
\end{proof}

Using \cref{lem:which-not-x}, we can turn the ``partial tree-decomposition'' obtained from \cref{prop:compute-tree} into the desired tree-decomposition.

\begin{lemma} \label{lem:algo-add-not-x}
    Let $G$ be a graph and $\mathcal{S} = \{(A_1, B_1), \ldots, (A_p, B_p)\}$ be a set of nested separations of $G$.
    Let $(T, \mathcal{V})$ be the tree-decomposition of $G$ induced by $\mathcal{S}$. 
    Let $X \subseteq V(G)$ be the set of vertices which only appear in a single bag of $(T, \mathcal{V})$.
    There is an $O(n + pk)$-time algorithm which, given the graph $G$, the tree $T$ with its edges labelled so that the edge which induces the separation $(A_i, B_i)$ is labelled with $i$, the sets $A_i \cap B_i$ each of size at most $k$, and bags $(W_t)_{t \in V(T)}$ such that $W_t = V_t \cap X$ for every $t \in V(T)$, constructs the tree-decomposition $(T, \mathcal{V})$.
\end{lemma}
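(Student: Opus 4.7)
The plan is to apply \cref{lem:which-not-x} directly: a vertex $y \notin X$ belongs to $V_t$ if and only if some edge of $T$ incident to $t$ induces a separation $(A_i, B_i)$ with $y \in A_i \cap B_i$, while a vertex $y \in X$ only appears in the bag given by $W_t$. Moreover by \cref{lem:leaf-X-nonempty}, no vertex of $X$ lies in any separator, so the contributions from the two sources do not overlap.

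Concretely, I would first initialize $V_t := W_t$ for every $t \in V(T)$; since $\sum_t |W_t| \le n$ and $|V(T)| \le p+1$, this takes time $O(n+p)$. Then, using the labelling of the edges, I would traverse $E(T)$ to associate to each $i \in [p]$ the two endpoints $t_i, t_i'$ of the edge labelled $i$, in time $O(p)$. Next, I would iterate over all separations $(A_i, B_i)$ and over all $y \in A_i \cap B_i$ to build a list $L$ of pairs $(t, y)$ with $t \in \{t_i, t_i'\}$, yielding $|L| \le 2pk$ pairs constructed in time $O(pk)$.

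The only subtlety is that a given vertex $y \notin X$ may be placed in the same bag $V_t$ by several incident edges whose separators all contain $y$, so the list $L$ may contain duplicates. To obtain proper sets $V_t$ in the target running time, I would sort $L$ lexicographically using counting sort (\cref{cor:sort-lex}); since the first coordinate is bounded by $|V(T)| \le n$ and the second by $n$, this costs $O(n+pk)$. A single linear scan then removes consecutive duplicates. Finally, iterating over the cleaned list, I append each $y$ to $V_t$ in $O(1)$ per pair, so this step runs in $O(pk)$.

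The correctness is immediate from \cref{lem:which-not-x}: for $y \in X$, $y$ is placed exactly in the unique bag with $y \in W_t$; for $y \notin X$, $y$ is placed in $V_t$ precisely when some incident edge $e_i$ of $t$ satisfies $y \in A_i \cap B_i$, which is the characterization of membership. The total running time is $O(n+pk)$ as required. The only mildly delicate point is the deduplication, for which counting sort suffices because both coordinates live in $[n]$; everything else is a routine bookkeeping traversal.
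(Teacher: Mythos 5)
Your proposal is correct and takes essentially the same route as the paper: initialize $V_t := W_t$, then push each separator $A_i \cap B_i$ into the two bags at the endpoints of the edge labelled $i$, with correctness given by \cref{lem:leaf-X-nonempty,lem:which-not-x}. The one place you go beyond the paper is the deduplication step. The paper's proof simply ``adds every vertex in $A_i \cap B_i$ to $V_t$ and $V_{t'}$'' and does not remark that a vertex $y \notin X$ may lie in the separators of several edges incident to the same node $t$ (this happens precisely when $t$ has degree at least two in the subtree $T_y$ of nodes whose bags contain $y$), so a naive list-append would leave duplicates in $V_t$. Your fix --- build the $O(pk)$-length list of pairs $(t, y)$, sort it lexicographically with counting sort in $O(n+pk)$ time via \cref{cor:sort-lex}, and strip consecutive repeats --- is a clean, linear-time way to close that gap and produces genuine sets. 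So your argument is not just equivalent to the paper's; it is a slightly more careful version of it.
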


\begin{proof}
    We start by setting $V_t = W_t$ for every $t \in V(T)$.
    Then, we iterate over every edge $e = tt'$ of $T$ and look at the corresponding label $i \in [p]$. We add every vertex in $A_i \cap B_i$ to $V_t$ and $V_{t'}$.
    This finishes the description of the algorithm.

    We first argue that it runs in time $O(n + pk)$. 
    First, since every edge of $T$ induces a separation $(A_i, B_i)$ then $T$ has $p$ edges.
    Furthermore, every vertex in a $W_t$ is in $X$, hence is in a unique bag $W_t$, and thus the sum of the sizes of all $W_t$ is $O(n)$, so initializing every $V_t$ to $W_t$ takes time $O(n)$.
    Then, going over every edge of $T$ can be done in time $O(p)$ by a BFS for instance. When going over an edge, we can access its label $i$ in constant time, hence the set $A_i \cap B_i$ in constant time, and adding it to the two bags takes time $O(k)$.
    Thus, this algorithm indeed runs in time $O(n + pk)$.

    The correctness of the algorithm is immediate from \cref{lem:leaf-X-nonempty,lem:which-not-x}.
\end{proof}

The next result is a straightforward characterization of the torso edges.

\begin{lemma} \label{lem:which-edges-torso}
    Let $G$ be a graph and $\mathcal{S} = \{(A_1, B_1), \ldots, (A_p, B_p)\}$ be a set of nested separations of $G$.
    Let $(T, \mathcal{V})$ be the tree-decomposition of $G$ induced by $\mathcal{S}$. 
    Let $X \subseteq V(G)$ be the set of vertices which only appear in a single bag of $(T, \mathcal{V})$.
    Let $t \in V(T)$ and let $u, v \in V(G)$. Then, $uv$ is an edge of the torso at $t$ if and only if one of the following holds. \begin{itemize}
        \item $t$ is the unique node of $T$ such that $\{u, v\} \subseteq V_t$ and $uv \in E(G)$.
        \item There is an edge $e \in E(T)$ incident to $t$ which induces a separation $(A_i, B_i) \in \mathcal{S}$ such that $\{u, v\} \subseteq A_i \cap B_i$.
    \end{itemize}
\end{lemma}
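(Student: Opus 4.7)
The statement is essentially a repackaging of the definition of a torso edge, combined with the fact that for an edge $e = (t,t')$ of $T$ inducing the separation $(A_i,B_i) \in \mathcal{S}$, the adhesion set $V_t \cap V_{t'}$ equals the separator $A_i \cap B_i$. My plan is to prove the two directions separately, using only the definition of torso edges recalled in \cref{sec:prelim} and this identification of adhesion sets with separators.

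For the backward direction, both cases immediately ensure $\{u,v\} \subseteq V_t$. In the first case, we have the genuine edge $uv \in E(G)$, so $uv$ is a torso edge at $t$ by definition. In the second case, the edge $e = (t,t') \in E(T)$ inducing $(A_i, B_i)$ satisfies $V_t \cap V_{t'} = A_i \cap B_i \supseteq \{u,v\}$, so $uv$ is a torso edge at $t$ by the second clause of the definition of a torso edge.

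For the forward direction, suppose $uv$ is a torso edge at $t$, so $\{u,v\} \subseteq V_t$ and either $uv \in E(G)$, or there exists an edge $(t,t') \in E(T)$ with $\{u,v\} \subseteq V_t \cap V_{t'}$. In the latter sub-case, the edge $(t,t')$ induces some separation $(A_i,B_i) \in \mathcal{S}$ with $A_i \cap B_i = V_t \cap V_{t'} \supseteq \{u,v\}$, which is exactly the second case of the statement. In the former sub-case, $uv \in E(G)$; if $t$ is the unique node of $T$ containing both $u$ and $v$ in its bag then we are in the first case of the statement. Otherwise, there exists another node $t^* \neq t$ with $\{u,v\} \subseteq V_{t^*}$; letting $(t, t')$ be the first edge on the path from $t$ to $t^*$ in $T$, the tree-decomposition property forces $\{u,v\} \subseteq V_{t'}$ as well, and so the induced separation $(A_i, B_i)$ on that edge satisfies $A_i \cap B_i = V_t \cap V_{t'} \supseteq \{u,v\}$, placing us in the second case.

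There is no real obstacle here: everything reduces to (i) the definition of a torso edge, (ii) the standard fact that adhesion sets coincide with separators induced by the corresponding edges of the tree-decomposition, and (iii) the connectedness property of tree-decompositions ensuring that any two bags sharing a vertex force every intermediate bag on the tree-path to share it as well. The only minor subtlety is to note that the two cases in the statement are not mutually exclusive (a real edge $uv \in E(G)$ with $\{u,v\}$ in multiple bags lies in both), which is harmless since the statement is phrased as an ``or''.
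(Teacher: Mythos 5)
Your proof is correct and follows essentially the same route as the paper: the backward direction is immediate from the definition of torso edges (using that the adhesion set of an edge of $T$ equals the separator of the separation it induces), and the forward direction uses the connectedness property of tree-decompositions to push $\{u,v\}$ into the bag adjacent to $t$ along the path to another bag containing both. The only cosmetic difference is the order of case splitting — you branch first on whether $uv\in E(G)$ or arises from an adhesion set, while the paper branches first on whether $t$ is the unique bag containing both $u$ and $v$ — but the underlying argument is the same.
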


\begin{proof}
    The converse implication is trivial, so we focus on proving the direct implication.
    Suppose that $uv$ is an edge of the torso at $t$. Then, $\{u, v\} \subseteq V_t$. If $t$ is the unique node of $T$ such that $\{u, v\} \subseteq V_t$ then the reason why the edge $uv$ appears in the torso at $t$ is not because $u$ and $v$ appear in the same adhesion set. Thus, $uv$ is an edge of $G[V_t]$, hence an edge of $G$.
    Otherwise, there is another node $t'$ of $T$ such that $\{u, v\} \subseteq V_{t'}$. Let $e = ts$ be the first edge of the path from $t$ to $t'$ in $T$. Since $(T, \mathcal{V})$ is a tree-decomposition and $\{u, v\} \subseteq V_t \cap V_{t'}$ then $\{u, v\} \subseteq V_s$. Let $(A_i, B_i) \in \mathcal{S}$ be the separation induced by $e$. Up to renaming $A_i$ and $B_i$, we have $V_t \subseteq A_i$ and $V_s \subseteq B_i$, so $\{u, v\} \subseteq A_i \cap B_i$.
\end{proof}

The following result states that, given a tree-decomposition $(T, \mathcal{V})$ of a graph $G$, we can efficiently compute for every edge ${xy \in E(G)}$, a node $t \in V(T)$ such that $\{x, y\} \subseteq V_t$.

\begin{lemma} \label{lem:find-common-bag}
    There is an ${O\left(m + |E(T)| + \sum_{t \in V(T)}|V_t| \right)}$-time algorithm which, given as input a graph $G$ and a tree-decomposition $(T, \mathcal{V})$ of $G$, computes for every edge $e=xy \in E(G)$ a node $t_e \in V(T)$ such that $\{x, y\} \subseteq V_{t_e}$.
\end{lemma}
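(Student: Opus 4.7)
The plan is to root the tree~$T$ at an arbitrary node~$r$ and, for every vertex $v \in V(G)$, compute $top(v) \in V(T)$, the unique node of minimum depth among the $t \in V(T)$ with $v \in V_t$. This is well-defined because in any tree-decomposition $(T, \mathcal{V})$ of $G$, the set $\{t \in V(T) : v \in V_t\}$ induces a nonempty connected subtree of $T$, which therefore has a unique root.

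The key structural claim is: for every edge $e = xy \in E(G)$, if $t_e$ denotes whichever of $top(x)$ and $top(y)$ has greater depth, then $\{x, y\} \subseteq V_{t_e}$. Indeed, by definition of a tree-decomposition there exists a node~$s$ with $\{x, y\} \subseteq V_s$. Then~$s$ is a common descendant of $top(x)$ and $top(y)$, so these two nodes are comparable in~$T$. By symmetry, assume $top(x)$ is an ancestor of $top(y)$. We trivially have $y \in V_{top(y)}$. Moreover, the unique $T$-path from~$s$ to $top(x)$ lies entirely within the connected subtree $\{t : x \in V_t\}$ and passes through $top(y)$, which yields $x \in V_{top(y)}$ as well.

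The algorithm then proceeds in three steps. First, run a DFS of~$T$ from~$r$ to compute the depth of every node of~$T$, in time $O(|V(T)|) = O(|E(T)| + 1)$. Second, compute $top(v)$ for every $v \in V(G)$ by a DFS of~$T$ from~$r$: when visiting a node~$t$, iterate through $V_t$ and, for every $v \in V_t$ that has not yet been marked, set $top(v) := t$ and mark~$v$. Since $\{t : v \in V_t\}$ is a connected subtree and $T$ is traversed top-down, the first node at which~$v$ is encountered is indeed $top(v)$. This step takes time $O\bigl(|V(T)| + \sum_{t \in V(T)} |V_t|\bigr)$. Finally, iterate over all $m$ edges $xy$ of~$G$ and, for each of them, return in constant time whichever of $top(x)$ and $top(y)$ has greater depth. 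Summing the three contributions yields the claimed runtime $O\bigl(m + |E(T)| + \sum_{t \in V(T)} |V_t|\bigr)$.

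The entire argument is short; the only conceptual point is the correctness claim in the second paragraph (a direct consequence of the Helly-type property of subtrees in a tree), while everything else is a straightforward DFS-based computation. I expect no real obstacle beyond being careful with the definition of $top(\cdot)$ and with the bookkeeping for answering each edge-query in $O(1)$ after the precomputation.
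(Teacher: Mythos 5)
Your proposal is correct and follows essentially the same route as the paper: root $T$, compute for each vertex the root $top(v)$ of its (connected) bag subtree, compute depths, and answer each edge query by returning the deeper of $top(x), top(y)$, justified by the same path argument through the common bag. The only difference is cosmetic — you spell out the DFS-based marking used to compute $top(\cdot)$, while the paper leaves that step implicit.
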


\begin{proof}
    Start by rooting the tree $T$ at an arbitrary node.
    For each vertex $x \in V(G)$, denote by $r_x$ the root of the subtree $T_x$ of $T$ consisting of all nodes $t \in V(T)$ such that $x \in V_t$.
    By iterating through all sets $V_t$ for $t \in V(T)$, we can compute all roots $r_x$ for $x \in V(G)$ in time $O\left(|E(T)| + \sum_{t \in V(T)}|V_t|\right)$. 
    Then, in time $O(|E(T)|)$, we can compute for every node $t \in V(T)$ its depth in the tree $T$.

    Then, we iterate through all edges of $G$.
    Consider such an edge $xy \in E(G)$. Since $(T, \mathcal{V})$ is a tree-decomposition of $G$, there exists a node $t \in V(T)$ such that $\{x, y\} \subseteq V_t$.
    This node $t$ is a node of the subtree $T_x$ and of the subtree $T_y$, hence it is a descendant of both $r_x$ and $r_y$.
    Let $r'$ be the node among $r_x$ and $r_y$ which is furthest from the root.
    Then, $r'$ is in the path in $T$ between $r_x$ and $t$, and in the path in $T$ between $r_y$ and $t$, so $\{x, y\} \subseteq V_{r'}$.
    Therefore, it suffices to store the node $r'$ for the edge $xy$, and $r'$ can be computed in constant time.
    Overall, the total running time is indeed ${O\left(m + |E(T)| + \sum_{t \in V(T)}|V_t| \right)}$.
\end{proof}

Using \cref{lem:which-edges-torso,lem:find-common-bag}, it is simple to build the torsos of a tree-decomposition given as input.

\begin{lemma} \label{lem:algo-build-torsos}
    Let $G$ be a connected graph and $\mathcal{S} = \{(A_1, B_1), \ldots, (A_p, B_p)\}$ be a set of nested separations of $G$.
    Let $(T, \mathcal{V})$ be the tree-decomposition of $G$ induced by $\mathcal{S}$. 
    There is an $O(n + m + pk^2)$-time algorithm which, given the tree-decomposition $(T, \mathcal{V})$, a labelling of the edges of $T$ such that the edge which induces the separation $(A_i, B_i)$ is labelled with $i$, and the sets $A_i \cap B_i$ each of size at most $k$, constructs all the torsos of the tree-decomposition $(T, \mathcal{V})$.
\end{lemma}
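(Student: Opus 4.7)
The plan is to build all torsos in three phases, following the characterization of torso edges from \cref{lem:which-edges-torso}. First, for each node $t$ of $T$, initialize the torso to have vertex set $V_t$ (already supplied) and an empty edge list. Second, handle adhesion edges: for every edge $e = (t, t')$ of $T$ with label $i$, enumerate the at most $\binom{k}{2}$ unordered pairs in $A_i \cap B_i$ and append each pair to the edge list of the torso at $t$ and to the edge list of the torso at $t'$. This phase takes time $O(pk^2)$ in total, since each edge of $T$ contributes $O(k^2)$ work. Third, handle graph edges: apply \cref{lem:find-common-bag} to $(T, \mathcal{V})$ to obtain, in time $O(n + m + pk)$, for every edge $xy \in E(G)$ a node $t_{xy} \in V(T)$ with $\{x, y\} \subseteq V_{t_{xy}}$, and append $xy$ to the edge list of the torso at $t_{xy}$.

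At this stage, by \cref{lem:which-edges-torso} every genuine torso edge has been recorded at least once, but duplicates may be present. Indeed, if a graph edge $xy$ lies in more than one bag then the set of bags containing both $x$ and $y$ forms a subtree of~$T$ with at least one edge, and for each such tree-edge $e_i$ the pair $\{x,y\}$ lies in the adhesion $A_i \cap B_i$; hence phase two has already inserted $xy$ into the edge list of every torso along that subtree, including the one at $t_{xy}$, so phase three introduces a duplicate. Conversely, if $\{x,y\} \subseteq V_t$ for a unique node $t$, then $t = t_{xy}$, phase two did not insert $xy$ anywhere, and phase three correctly records $xy$ once.

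We finish with a deduplication phase. For each torso, view its edge list as a list of pairs of integers from $[1,n]$; the total length of all edge lists is $O(m + pk^2)$. Using \cref{cor:sort-lex}, we lexicographically sort each edge list and drop consecutive repeats; by an amortized argument this can be performed globally in $O(n + m + pk^2)$ time. Correctness follows from \cref{lem:which-edges-torso}, as phase two outputs exactly the adhesion contribution at each node, phase three ensures every graph edge lands in at least one appropriate torso, and the subtree observation above guarantees that each remaining graph-edge duplicate corresponds to an edge already added as an adhesion pair and is thus safely removed.

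The analysis is essentially routine once the three phases are isolated; the only mild subtlety is the interaction between phases two and three. The remark that the set of bags containing a given pair $\{x,y\}$ is a subtree, together with the fact that phase two places $xy$ at every node of this subtree whenever it has more than one node, is what allows us to avoid a per-edge uniqueness test in phase three while still obtaining the correct torsos after a single global sort.
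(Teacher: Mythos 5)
Your proof follows the same three-phase scheme as the paper's: initialize each torso's vertex set to $V_t$, distribute the graph edges via \cref{lem:find-common-bag}, and distribute the adhesion pairs over the two endpoints of each tree edge, with correctness justified by \cref{lem:which-edges-torso}. The one genuine addition in your write-up is the explicit deduplication phase and the subtree observation that explains which edges get added more than once; the paper's version simply ``adds'' edges to the graphs $G_t$ without saying what happens if the same pair is inserted twice (which can happen both between phases and within the adhesion phase, when two adhesion sets at the same node share a pair). That step is a harmless and arguably welcome clarification, not a different route, and your $O(n+m+pk^2)$ accounting for it is sound.
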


\begin{proof}
    We construct a family of graphs $(G_t)_{t \in V(T)}$ in 3 phases. \begin{enumerate}
        \item We iterate over all nodes $t \in V(T)$ and set $V(G_t) = V_t$ for each of them.
        \item We compute for all edges $e=xy \in E(G)$ a node $t_e \in V(T)$ such that $\{x, y\} \subseteq V_{t_e}$.
        We then iterate over all edges $e = xy$ of $G$ and for each of them we add the edge $xy$ to the graph $G_{t_e}$.
        \item We iterate over all edges $e = st$ of $T$. For each such edge $e$, we look at its label $i \in [p]$ and consider the set $A_i \cap B_i$. Then, we add all edges between two elements of $A_i \cap B_i$ to both graphs $G_s$ and $G_t$.
    \end{enumerate}

    We prove that for every $t \in V(T)$, the graph $G_t$ is the torso at $t$.
    First, all graphs $G_t$ have the correct vertex set, namely $V_t$. Then, let $t \in V(T)$ and let $uv$ be an edge of the torso at $t$. By \cref{lem:which-edges-torso}, either $t$ is the unique node of $T$ such that $\{u, v\} \subseteq V_t$ and $uv \in E(G)$ or there is an edge $e \in E(T)$ incident to $t$ which induces a separation $(A_i, B_i) \in \mathcal{S}$ such that $\{u, v\} \subseteq A_i \cap B_i$.
    In the first case, we have $t = t_{uv}$ and thus we added the edge $uv$ to $G_t$ in the second phase of the algorithm. 
    In the second case, we added the edge $uv$ to $G_t$ in the third phase of the algorithm.
    Conversely, suppose that we added an edge $uv$ to $G_t$ at some point of the algorithm.
    If this edge was added during the second phase then $\{u, v\} \subseteq V_t$ and $uv$ is an edge of $G$, so $uv$ is an edge of the torso at $t$.
    Otherwise, there is an edge $e \in E(T)$ incident to $t$ which induces a separation $(A_i, B_i) \in \mathcal{S}$ such that $\{u, v\} \subseteq A_i \cap B_i$. In that case, $uv$ is again an edge of the torso at $t$ by \cref{lem:which-edges-torso}.

    We now prove that this algorithm runs in time $O(n + m + pk^2)$.
    Note that $\sum_{t \in T}|V_t| = n + \sum_{i=1}^p |A_i \cap B_i| \leq n + pk$.
    Therefore, the first phase can be performed in time $O(n + pk)$.
    By \cref{lem:find-common-bag}, the second phase can be done in time $O(n+m+pk)$.
    There are $p$ edges in $T$ so the third phase takes time $O(pk^2)$. Therefore, this algorithm runs in time $O(n + m + pk^2)$ overall.
\end{proof}

Putting together \cref{prop:compute-tree,lem:algo-add-not-x,lem:algo-build-torsos}, we can finally prove \cref{thm:build-tree+torsos}.

\begin{proof}[\textit{Proof of \cref{thm:build-tree+torsos}}.]
    Computing all sets $A_i \cap B_i$ explicitly can be done in time $O(p(\ell+k))$.
    Let $X \subseteq V(G)$ be the set of vertices which only appear in a single bag of $(T, \mathcal{V})$.
    By \cref{prop:compute-tree}, there is an algorithm which constructs $T$ and labels its edges from $1$ to $p$ such that the edge labelled $i$ induces the separation $(A_i, B_i)$ in $(T, \mathcal{V})$.
    Furthermore, it constructs bags $(W_t)_{t \in V(T)}$ such that for every $t \in V(T)$, we have $W_t = V_t \cap X$.
    This algorithm runs in time $O(n + p(\ell + k))$.
    Then, by \cref{lem:algo-add-not-x}, there is an algorithm which constructs $(T, \mathcal{V})$ in time $O(n + pk)$.
    Finally, by \cref{lem:algo-build-torsos}, there is an algorithm which builds all the torsos of $(T, \mathcal{V})$ in time $O(n + m + pk^2)$.
\end{proof}

\end{document}